\documentclass[11pt]{article}
\usepackage{enumerate,bm}
\usepackage[numbers]{natbib}
\usepackage{url}
\addtolength{\oddsidemargin}{-.5in}%
\addtolength{\evensidemargin}{-.5in}%
\addtolength{\textwidth}{1in}%
\addtolength{\textheight}{1.3in}%
\addtolength{\topmargin}{-.8in}%

\usepackage{color}

\usepackage{setspace}
\usepackage[normalem]{ulem}
\usepackage{graphicx,psfrag,epsf}
\graphicspath{{./PlotsArXiv/}} 

\usepackage{sidecap}
\usepackage{subfig}
\usepackage{times}
\usepackage{multirow,pdflscape}
\usepackage{amsmath, amssymb, amsfonts, amsthm}
\usepackage{myarticle-macrosShared}
\usepackage{latexsym}
\usepackage{mathtools}
\usepackage{bbm}
\usepackage{booktabs} 
\usepackage{pbox}
\usepackage{setspace}
\usepackage{enumerate,bm}

\newcommand{\Expboot}[1]{\mathbf{E}^*\left(#1\right)}

\newcommand{\polyLog}{\textrm{polyLog}}
\newcommand{\thetaHat}{\widehat{\theta}}
\date{June 18, 2016}
\title{The bootstrap, covariance matrices and PCA in moderate and high-dimensions}
\author{Noureddine El Karoui\thanks{
    The authors gratefully acknowledge the support of grant NSF DMS-1510172. N. El Karoui thanks Profs. H-T Yau and Michael Brenner for discussions while visiting Harvard's Center of Mathematical Sciences and Applications; \textbf{Keywords}: bootstrap, principal component analysis, random matrices. \textbf{AMS 2010 MSC}: Primary 62G09; Secondary: 62H25}\hspace{.2cm}\\
    and \\
    Elizabeth Purdom \\
    Department of Statistics, University of California, Berkeley}

\begin{document}
\maketitle

\begin{abstract}
We consider the properties of the bootstrap as a tool for inference concerning the eigenvalues of a sample covariance matrix computed from an $n\times p$ data matrix $X$. We focus on the modern framework where $p/n$ is not close to 0 but remains bounded as $n$ and $p$ tend to infinity.

Through a mix of numerical and theoretical considerations, we show that the bootstrap is not in general a reliable inferential tool in the setting we consider. However, in the case where the population covariance matrix is well-approximated by a finite rank matrix, the bootstrap performs as it does in finite dimension. 
	
\end{abstract}
	
\section{Introduction}

The bootstrap \cite{EfronBootstrap1979AoS} is a central tool of applied statistics, enabling inference by assessing the variability of the statistics of interest directly from the data and without explicit appeal to asymptotic theory. The appeal of the bootstrap is especially great when asymptotic theoretical derivations are difficult and/or can be done only under quite restrictive assumptions. For instance, consider the case of Principal Components Analysis (PCA). The classic text of \cite{anderson63,anderson03} (Chapter 13) gives limit theory for the eigenvalues and eigenvectors of the sample covariance matrix when the data is drawn from a normal population. These limit results are non-trivial to derive, even in the Gaussian case, and depend, for instance, on assumptions regarding the multiplicity of the eigenvalues of the population covariance matrix. Furthermore, it is clear, using approximation arguments from \cite{KatoPerturbTheory}, that these limit results are not valid for a broad class of distributions. For instance, they do not apply to populations distributions with kurtosis not equal to 3. The modern theory of PCA which aims for better finite-sample approximations by relaxing the assumption that $p/n \rightarrow 0$ is much more difficult technically and relies on very strong assumptions about the geometry of the dataset (see \cite{imj,JohnstoneReview07,nekCorrEllipD}, follow-up papers, and Section \ref{subsec:remindersHighDCovMats} below for a short summary).

Remarkably, from a theoretical standpoint, it has been shown that in many situations the bootstrap  estimates  the distribution of the statistics of interest accurately, at least with sufficient sample sizes (see \cite{BickelFreedmanTheoryBootAoS81,HallBootstrapAndEdgeworthExpansion92} for classic references). For the specific example of estimating the eigenvalues of the sample covariance matrix and PCA, numerous papers have been written about the properties of the bootstrap \cite{BeranSrivastava85,BootstrapEigenvaluesAlemayehu88,EatonTyler91,DuembgenNondiffFuncsAndBootPTRF93,HallPaulBootstrapEigenvalues2009}. The main results of these papers is that the bootstrap works in an asymptotic regime that assumes that the sample size grows to infinity while the dimension of the data is fixed, with the additional provision that the population covariance has eigenvalues of multiplicity one. When the assumption of multiplicity equal to one does not hold, subsampling techniques \cite{PolitisRomanoWolfSubsampling99} can be used to correctly estimate the distributions of interest by resampling. We note however that these subsampling techniques also require the statistician to have subsamples of size that is infinitely large compared to the dimension of the data.

Given the limitations of existing asymptotic theory and these theoretical results on bootstrapping of eigenvalues, it is not surprising that the bootstrap is a natural tool to use in connection with PCA and inferential questions therein. The bootstrap is mostly used in this context to assess variability of eigenvalues, for instance to come up with principled cutoff selections in PCA and related methods such as factor analysis. For recent examples of an applied nature, we refer the reader to \cite{TimmermanBootstrapPC2007,FastExactBootstrapPCA14,bootstrapScreeTests2006,bootstrapCIsInPCAChemometrics13}.  Another application of the bootstrap is of course in bagging \cite{BreimanBaggingPaper96}; a well known instance of bagging related to high-dimensional covariance estimation is in resampled portfolio selection \cite{MichaudBook98}.

\paragraph{Our framework: $p/n$ not close to zero} 

The theoretical assumptions that support the use of the bootstrap make the fundamental assumption that the dimension $p$ is much smaller than  $n$ (i.e. $p/n\rightarrow 0$). The modern asymptotic theory of PCA, referenced above, has shown that relaxing that assumption -- for example by assuming that $p/n<1$ but does not tend to $0$ -- leads to dramatically different theoretical behavior of the eigenvalues and eigenvectors. This leads us to  question what effect this assumption has on the performance of the bootstrap in the situation where $p/n$ is not close to 0. 

In addition its theoretical interest, the asymptotic analysis in this framework tends to yield very accurate finite-sample approximations\cite{imj} and hence gives accurate information concerning the practical performance of the methods we consider. Furthermore, in statistical practice $p/n$ is rarely very close to 0, and hence classical approximations, which rely heavily on that assumption, may lead to theoretical results and interpretations that differ quite drastically from what is observed by practitioners. 

However, when $p/n$ is not close to 0, developing theoretical results is still quite technically difficult. It requires a large variety of tools, whether one is concerned with the properties of the bulk of the eigenvalues (\cite{mp67,wachter78,silverstein95}), or the largest ones (\cite{imj,nekGencov,LeeSchnelli14}) -- which are particularly important in Principal Component Analysis (PCA). These considerations motivate our exploration of the bootstrap as an alternative, data-driven way, to perform inferential tasks for spectral properties of large covariance matrices.

\paragraph{Contributions of the paper} 

The paper is divided into two main sections. In Section \ref{sec:SimulationStudy}, we study the performance of the bootstrap by simulations in the context of PCA. We assess whether the bootstrap recovers the sampling distribution of various statistics of interest, for instance the largest eigenvalue of a covariance matrix. We also consider the simpler problem of whether the bootstrap estimates of bias and variance can be used to accurately measure the bias and the variance of various statistics. Most of our results are negative. Only when the largest eigenvalues become quite large compared to the rest does the bootstrap provide accurate inference. Furthermore, the behavior of the bootstrap is very unpredictable: for instance, in two setups that are nearly similar from a population standpoint, the bootstrap estimate of bias is itself biased, but in one case it underestimates the true bias and in the other it overestimates it. 

In Section \ref{sec:BootTheoryMain} we provide theoretical results that help explain this behavior. Those results concern two different aspects of the bootstrap. The first results are about the behavior of the bootstrapped empirical distribution of \emph{all} the eigenvalues of the sample covariance matrix $\SigmaHat$. We show that in the framework we consider ($p/n$ not very close to zero)  the bootstrapped empirical distribution is biased and asymptotically non-random. We then consider the bootstrap behavior of only the largest eigenvalues of $\SigmaHat$. We show that when the population covariance $\Sigma$ has some very large eigenvalues, far separated from the other eigenvalues, the bootstrap distribution of those large eigenvalues correctly approximates the sampling distribution of the large eigenvalues of $\SigmaHat$. 

The results of this paper confirm that the bootstrap works when the problem is very low-dimensional or can be approximated by a very low-dimensional problem, but is untrustworthy when the problem is genuinely high-dimensional. As such, the current paper complements the findings of the paper \cite{NEKEliBootRegression15} that was concerned with the bootstrap for linear regression models.

We now give basic notation and background regarding the bootstrap and estimation of covariance matrices in high dimensions.
 
\subsection{Notations and default conventions} 
If $X$ is an $n\times p$ data matrix, we call $\SigmaHat$ its associated covariance matrix, i.e 
$$
\SigmaHat=\frac{1}{n-1}(X-\bar{X})\trsp (X-\bar{X})\;.
$$
We also use the notation $\tilde{X}\triangleq (X-\bar{X})$. 

We call empirical spectral distribution of a $p\times p$ symmetric matrix $M$ the probability measure such that $dF_p(x)=\frac{1}{p}\sum_{i=1}^p \delta_{\lambda_i(M)}$, where $\lambda_1(M)\geq \lambda_2(M)\geq \ldots \geq \lambda_p(M)$ are the ordered eigenvalues of $M$. We also use the notation $\lambda_{\max}(M)$ for the largest eigenvalue of the matrix $M$.

For $z\in\mathbb{C}^+$, i.e $z=u+iv$, where $v>0$, we call $m_p(z)$ the Stieltjes transform of the distribution $F_p$, i.e 
$$
m_p(z)=\int \frac{1}{x-z}dF_p(x)=\frac{1}{p}\trace{(\SigmaHat-z\id_p)^{-1}}\;.
$$

We use the notation $\weakCV$ to denote weak convergence of probability distributions. 

$\opnorm{M}$ is the operator norm of $M$, i.e its largest singular value. $\norm{w}_\infty=\max_{1\leq i\leq p}|w_i|$ is the $\ell_\infty$-norm of the vector $w$.  We say that the sequence $u_n=\polyLog(n)$ if $u_n$ grows at most like a polynomial in $\log(n)$. 

In doing asymptotic analysis, we work under the assumptions that $p/n\tendsto r$, $r\in(0,\infty)$.

We call the \emph{Gaussian phase transition} the value $1+\sqrt{p/n}$ (see the comments after Theorem \ref{thm:JohnstoneTW} for details).
\subsection{Review of Theoretical results about High-dimensional Covariance Matrices}\label{subsec:remindersHighDCovMats}
We review briefly two key results concerning the spectral properties of high-dimensional covariance matrices. More details and more general background results are in the Supplementary Material. 

The first result (from \cite{imj}) concerns the distribution of the largest eigenvalue of $\SigmaHat$ in the case that might be considered the ``null'' case for PCA -- the predictors are all independent with covariance matrix equal to $\id_p$. 

\begin{theorem}[Johnstone, '01]\label{thm:JohnstoneTW}
Suppose the design matrix $X$ has $X_{i,j}\iid {\cal N}(0,1)$. Call $\SigmaHat$ the sample covariance matrix of $X$. Let $\lambda_1(\SigmaHat)$ be the largest eigenvalue of $\SigmaHat$. Then, if $p/n\tendsto r$, $r\in(0,\infty)$,
\begin{align*}
n^{2/3}\frac{\lambda_1(\SigmaHat)-\mu_{n,p}}{\sigma_{n,p}}&\WeakCv \text{TW}_1\;,\\
\mu_{n,p}\tendsto (1+\sqrt{r})^2
\end{align*}	
\end{theorem}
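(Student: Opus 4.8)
The plan is to strip off the inessential features of $\SigmaHat$ and then reduce to the known soft-edge asymptotics of the real Wishart (Laguerre orthogonal) ensemble.

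First I would dispose of the centering and the normalization. Writing the centering operation as $\tilde X = HX$ with $H = I_n - \tfrac1n \mathbf 1\mathbf 1^\top$, and factoring $H = UU^\top$ for an $n\times(n-1)$ matrix $U$ with orthonormal columns, one gets $(n-1)\SigmaHat = \tilde X^\top \tilde X = X^\top H X = (U^\top X)^\top(U^\top X) = Y^\top Y$, where $Y \triangleq U^\top X$ is $(n-1)\times p$ with i.i.d.\ $\mathcal N(0,1)$ entries (an orthogonal map applied to i.i.d.\ Gaussians preserves the law). So it suffices to establish the Tracy--Widom limit for $\lambda_{\max}(Y^\top Y)/(n-1)$ with $Y$ an $m\times p$ standard Gaussian matrix, $m=n-1$, $p/m\to r$; the deterministic factor $1/(n-1)$ is absorbed into $\mu_{n,p}$ and $\sigma_{n,p}$, and the claim $\mu_{n,p}\to(1+\sqrt r)^2$ follows once $\mu_{n,p}$ is identified with the soft edge, which matches the right endpoint of the Marchenko--Pastur law.

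Next I would use the exact solvable structure. The eigenvalues of $Y^\top Y$ form the $\beta=1$ Laguerre ensemble, with joint density proportional to $\prod_i \lambda_i^{(m-p-1)/2} e^{-\lambda_i/2}\prod_{i<j}|\lambda_i-\lambda_j|$ on $[0,\infty)^{\min(m,p)}$. Consequently $\Pr(\lambda_{\max}\le t)$ is a Fredholm Pfaffian on $(t,\infty)$ of a $2\times 2$ matrix-valued kernel assembled from the Laguerre polynomials $L_k^{(|m-p|)}$ and their associated functions. One then performs the edge scaling: put $t = \mu_{m,p} + \sigma_{m,p}\,s$ with $\mu_{m,p}\asymp(\sqrt m+\sqrt p)^2$ the soft edge and $\sigma_{m,p}\asymp m^{1/3}$ the fluctuation scale, and shows that, after the appropriate conjugation, the matrix kernel converges in trace norm on $(s,\infty)$ to the matrix Airy kernel whose Fredholm Pfaffian is the GOE Tracy--Widom distribution $F_1 = \text{TW}_1$. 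The analytic engine is the Plancherel--Rotach (steepest-descent) asymptotics of Laguerre polynomials near their largest turning point, supplemented by uniform exponential tail bounds on the kernel entries that justify passing to the limit inside the Fredholm expansion. Convergence of the Fredholm Pfaffians yields $\Pr(\lambda_{\max}\le \mu_{m,p}+\sigma_{m,p}s)\to F_1(s)$ for every $s$, i.e.\ the stated weak convergence, and a bookkeeping step records the explicit $\mu_{m,p},\sigma_{m,p}$ so that they match the normalization $n^{2/3}(\lambda_1-\mu_{n,p})/\sigma_{n,p}$ and confirms $\mu_{n,p}\to(1+\sqrt r)^2$.

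The main obstacle is precisely the real ($\beta=1$) edge analysis: in contrast to the complex Wishart case, where the correlation kernel is scalar and its Airy limit is routine, here the kernel is a $2\times 2$ matrix whose entries involve derivatives and antiderivatives of the Laguerre correlation kernel, so several delicate asymptotic expansions must be controlled simultaneously and uniformly, with some apparently divergent contributions canceling only after the Pfaffian structure is exploited. An alternative that bypasses most of this is the route of Johnstone: use the known coincidence of the soft-edge scaling limits of the $\beta=1$ Laguerre and $\beta=1$ Hermite (GOE) ensembles, thereby reducing the statement to the GOE Tracy--Widom theorem, which is already available; then the residual work is again the reduction of the second paragraph together with the constant identification. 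Either way the analytic heart is the edge asymptotics of Laguerre polynomials, and the hypothesis of Gaussian entries is genuinely used here --- edge universality for non-Gaussian data is a strictly later and harder development.
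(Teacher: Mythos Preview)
Your sketch is a reasonable outline of how Johnstone's theorem is actually proved, and in particular your second route (reduce the $\beta=1$ Laguerre soft edge to the $\beta=1$ Hermite soft edge, then invoke the GOE Tracy--Widom result) is exactly Johnstone's argument in the cited paper. There is no mathematical gap in what you wrote at the level of a proof plan.

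However, you should be aware that the paper under review does \emph{not} prove this theorem at all. Theorem~\ref{thm:JohnstoneTW} is stated purely as background: it is attributed to Johnstone and cited via \cite{imj}, and the same result is restated verbatim in the supplementary material (Theorem~\ref{supp:thm:Johnstone}) again with only a citation. The authors' own contributions (Lemma~\ref{lemma:bootD}, Theorems~\ref{thm:keyApproxResults} and~\ref{thm:consistencyBootSimpleCase}) take Johnstone's result and the related random matrix literature as given inputs. So there is nothing to compare your proposal against in this paper; if the intent of the exercise was to reconstruct the paper's proof, the correct answer is simply that the paper provides none and defers to \cite{imj}.
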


$\text{TW}_1$ refers to the Tracy-Widom distribution appearing in the study of the Gaussian Orthogonal ensemble (GOE); details about its density can be found  in \cite{imj} for instance. Further details about $\mu_{n,p}$ and $\sigma_{n,p}$ are in the appendix; they both converge to finite, non-zero limit.  This result implies, among other things, that the standard estimate $\lambda_1(\SigmaHat)$ is a biased estimator of the true $\lambda_1(\Sigma)$ when $p/n$ is not close to zero, overestimating the true size of $\lambda_1(\Sigma)$. 

From the point of view of PCA, $\lambda_{1}(\Sigma)>1$ corresponds to the scenario of the ``alternative" hypothesis, and an important question is how well can we differentiate when the data came from the alternative distribution rather than the null. \cite{bbap} shows that the distribution of  $\lambda_1(\SigmaHat)$ given in Theorem \ref{thm:JohnstoneTW} also describes the distribution of $\lambda_1(\SigmaHat)$ when $\Sigma$ is a finite rank perturbations of the $\id_p$, provided none of the eigenvalues of $\Sigma$ are too separated from each other. In practical terms, this signifies that $\lambda_1(\SigmaHat)$ has - asymptotically - the exact same distribution under the null as under the alternative and therefore no ability to differentiate the null and the alternative, provided the alternative is not far away from the null. 

A related significant result also due to \cite{bbap} gives the point at which the alternative hypothesis is sufficiently removed from the null so that the distribution of $\lambda_1(\SigmaHat)$ is stochastically different from that of $\lambda_1(\SigmaHat)$ under the null. In that paper, under slightly different distributional assumptions (see Supplementary Material), the authors show that if the largest eigenvalue is changed from 1 to $\lambda_{1}(\Sigma)>1+\sqrt{p/n}$ and the other eigenvalues remain the same, then $\lambda_1(\SigmaHat)$ has Gaussian fluctuations and they are of order $n^{-1/2}$. See also \cite{debashis}.  The value $1+\sqrt{p/n}$ is therefore called the \emph{Gaussian phase transition}.
Part of our simulation study investigates whether the bootstrap is capable of capturing this statistically interesting phase transition. 

Another very important result concerns the distribution of eigenvalues of the sample covariance matrix. If we call $F_p$ the empirical spectral distribution of $\SigmaHat$, we have 
\begin{theorem}[Marchenko-Pastur, '67]
Suppose $X_{i,j}$ are i.i.d with mean 0, variance 1, and a fourth moment. Then, as $p/n\tendsto r \in (0,1)$, 
$$
F_p \WeakCv F_r\;, a.s \;.
$$
Furthermore, $F_r$, the so-called Marchenko-Pastur distribution, has the density $f_r$ with
$$	
f_r(x)=\frac{\sqrt{(r_+-x)(x-r_-)}}{2\pi r x} 1_{x\in (r_-,r_+)}\;,
$$
where $r_\pm=(1\pm \sqrt{r})^2$.
\end{theorem}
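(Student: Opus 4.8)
The plan is to argue through Stieltjes transforms. By the Stieltjes continuity theorem, the assertion $F_p \WeakCv F_r$ almost surely is equivalent to: with probability one, $m_p(z) \to m_r(z)$ for every $z \in \mathbb{C}^+$, where $m_r$ denotes the Stieltjes transform of $F_r$. Since each $m_p$ is analytic on $\mathbb{C}^+$ and bounded there by $1/\mathrm{Im}(z)$, it is enough to prove this convergence for every $z$ in a fixed countable subset of $\mathbb{C}^+$ having a limit point, and then invoke a normal-families (Montel/Vitali) argument; I therefore fix $z = u + iv$ with $v > 0$ throughout. As a preliminary reduction, writing $\tilde X = (\id_n - \tfrac1n \mathbf{1}\mathbf{1}\trsp)X$ exhibits $\SigmaHat = \tfrac{1}{n-1}\tilde X\trsp\tilde X$ as a perturbation of $S_p \triangleq \tfrac1n X\trsp X$ of rank at most one, up to a $(1+o(1))$ rescaling; by the rank inequality $\sup_x|F_{\SigmaHat}(x) - F_{S_p}(x)| \le \tfrac1p$, it suffices to prove the theorem for $S_p$, which I do from now on.

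\emph{Step 1 (concentration).} I would first show $m_p(z) - \mathbb{E}[m_p(z)] \to 0$ almost surely by a martingale-difference decomposition along the $n$ rows of $X$. Letting $\mathcal F_k$ be generated by the first $k$ rows, $m_p - \mathbb{E} m_p = \sum_{k=1}^{n} \big(\mathbb{E}[m_p \mid \mathcal F_k] - \mathbb{E}[m_p \mid \mathcal F_{k-1}]\big)$. Deleting row $k$ alters $S_p$ by a rank-one matrix, and the resolvent rank inequality gives $\big|\trace{(S_p - z\id_p)^{-1}} - \trace{(S_p^{(k)} - z\id_p)^{-1}}\big| \le 1/v$, so each increment is at most $2/(pv)$ in absolute value. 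Azuma--Hoeffding then yields $\mathbb{P}\big(|m_p(z) - \mathbb{E} m_p(z)| > t\big) \le 2\exp(-c\, p v^2 t^2)$ for a universal constant $c$, and since $p \asymp n \to \infty$, Borel--Cantelli gives the claim. Only finiteness of the variance is needed at this stage.

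\emph{Step 2 (the self-consistent equation).} It remains to identify $\lim \mathbb{E}[m_p(z)]$. With $R = (S_p - z\id_p)^{-1}$ and $c_j$ the $j$-th column of $X$, the Schur-complement formula for a diagonal entry reads
\[
R_{jj} = \Big(\tfrac1n c_j\trsp c_j - z - \tfrac{1}{n^2}\, c_j\trsp X_{(j)}\,(S_p^{(j)} - z\id_{p-1})^{-1} X_{(j)}\trsp c_j\Big)^{-1},
\]
where $X_{(j)}$ is $X$ with its $j$-th column deleted and $S_p^{(j)} = \tfrac1n X_{(j)}\trsp X_{(j)}$. Because $c_j$ is independent of $X_{(j)}$ with i.i.d.\ mean-$0$, variance-$1$ entries, one has $\tfrac1n c_j\trsp c_j \to 1$, while the trace lemma (concentration of $\tfrac1n c_j\trsp B c_j$ around $\tfrac1n\trace{B}$ for $B$ independent of $c_j$), the identity $\tfrac1n\trace{S_p^{(j)}(S_p^{(j)}-z\id_{p-1})^{-1}} = \tfrac{p-1}{n}\big(1 + z\, m_{p-1}^{(j)}(z)\big)$, and the rank-one bound $m_{p-1}^{(j)} = m_p + O\!\big(1/(pv)\big)$ together show that every $R_{jj}$ concentrates about the common value $\big(1 - \tfrac pn - z - z\tfrac pn\, m_p(z)\big)^{-1}$. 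Averaging over $j$ in $m_p(z) = \tfrac1p\sum_j R_{jj}$ then forces any limit point $m_\infty$ of $\mathbb{E}[m_p(z)]$ to satisfy the Marchenko--Pastur self-consistent equation
\[
r z\, m_\infty^2 + (z + r - 1)\, m_\infty + 1 = 0 .
\]

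\emph{Step 3 (solving and inverting), and the main difficulty.} Of the two roots I would pick the one that is the Stieltjes transform of a probability measure --- the root with $\mathrm{Im}\,m_\infty(z) > 0$ on $\mathbb C^+$ and $m_\infty(z) \sim -1/z$ at infinity --- namely $m_\infty(z) = \big(1-r-z+\sqrt{(z-1-r)^2-4r}\,\big)/(2rz)$ with the branch of the square root analytic off $[(1-\sqrt r)^2,(1+\sqrt r)^2]$. Its uniqueness upgrades the subsequential statement of Step 2 to $\mathbb E[m_p(z)]\to m_\infty(z)$, and combined with Step 1 to $m_p(z)\to m_\infty(z)$ almost surely. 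Applying the Stieltjes inversion formula to $m_\infty$ then yields exactly the density $f_r$ with support $r_\pm=(1\pm\sqrt r)^2$ --- which are precisely the branch points of the square root --- so $m_\infty = m_r$ and the proof is complete. I expect the bulk of the effort to lie in Step 2: controlling the several $o(1)$ remainders in the leave-one-column-out expansion, which is also where the fourth-moment hypothesis genuinely enters, as it bounds $\mathrm{Var}\big(\tfrac1n c_j\trsp B c_j\big) = O\!\big(\|B\|_F^2/n^2\big) = O(1/n)$ uniformly in $j$ (using $\|B\|_F^2 \le C(z)\, p$). The reductions of the first paragraph and the pointwise-to-locally-uniform passage are routine by comparison.
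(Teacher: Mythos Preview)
Your proof sketch is correct and follows the now-standard Stieltjes-transform route to the Marchenko--Pastur law (leave-one-out martingale for concentration, Schur complement plus trace lemma for the self-consistent equation, then branch selection and inversion). However, there is nothing to compare it against: the paper does \emph{not} prove this theorem. It is stated purely as background, attributed to \cite{mp67} (with further pointers to \cite{silverstein95,wachter78,bai99}), and used only as context for the paper's own results on bootstrapped spectral distributions. So your proposal is not ``the same as'' or ``different from'' the paper's proof; the paper simply quotes the result.

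That said, your argument is essentially the textbook proof one finds in the references the paper cites (Bai--Silverstein in particular), and the paper does reuse your Step~1 ingredients --- the rank-one resolvent bound and the bounded-martingale-difference / Azuma argument --- in its own Lemma~\ref{lemma:bootD}, where the filtration runs over bootstrap weights rather than rows of $X$. If anything, your write-up could note that the concentration step requires only two moments, while the fourth-moment assumption is used exactly where you flag it, in the trace-lemma variance bound of Step~2; this matches how the literature presents the hypothesis.
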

Marchenko and Pastur's paper \cite{mp67} contains many more results, including the case where the population covariance is not $\id_p$, the case where $r>1$, etc. The previous result can be interpreted as saying that the histogram of sample eigenvalues is asymptotically non-random, and its limiting shape, which depends on the ratio $p/n$, is characterized by the Marchenko-Pastur distribution.

\section{Simulation Study}\label{sec:SimulationStudy}

We investigate via simulation the behavior of the bootstrap for the top eigenvalue of the standard sample covariance matrix, $\frac{1}{n-1}\tilde{X}'\tilde{X}$, where $\tilde{X}$ refers to the matrix of centered data values. For simplicity, we consider the case where only the top eigenvalue $\lambda_1$ is allowed to vary;  we assume the remaining eigenvalues are all equal to $1$. Therefore in this simple setting, the inferential question is to determine whether the top eigenvalue $\lambda_1$ differs from $1$. 

In our simulations, we generate data $X$ for several distributions of $X$, and perform the bootstrap for the top eigenvalue of the sample covariance matrix. In what follows, we focus on when the observations $X_i$ come from either a multivariate Normal distribution or an elliptical distribution with exponential weights (see Supplementary Text, Section \ref{Supp:Simulation} for details on this and other elliptical distributions we considered). Many theoretical results on random matrices in high dimensions do not yet extend to the case of elliptical distributions where the geometry of the data is more complex than under standard setups. Therefore, simulations under the elliptical distribution, while still an idealization, represents a simulation scenario that is somewhat more realistic than that of standard models. 

Regarding inference on $\lambda_1$, we consider two different types of questions for which the bootstrap could be used. The first is to estimate basic features of the estimate $\hat{\lambda}_1$, such as its variance or bias. The second is to use the bootstrap to perform inference on $\lambda_1$, for example to create confidence intervals for $\lambda_1$; there exist several methods of creating bootstrap confidence intervals which we evaluate. The results of our simulations shows that the bootstrap performs quite badly for all of these tasks as the ratio $p/n$ grows, unless the top eigenvalue $\lambda_1$ is very large and thus highly separated from the other eigenvalues. Our theoretical work explains these phenomena.

\paragraph{Estimating Bias} As noted above in the background (see Subsection \ref{subsec:remindersHighDCovMats} and the Supplementary Material), the eigenvalues of the sample covariance are biased in high dimensions for estimating the true eigenvalues. Before considering the bootstrap estimates of bias, we first note the importance of this bias in understanding the behavior of $\hat{\lambda}_1$. The bias can be substantial unless $\lambda_1$ is quite large, and the bias is clearly evident even for low ratios of $r=p/n$ if $\lambda_1$ is close to $1$. Furthermore, this bias is more pronounced for elliptical distributions than the normal distribution (which can be explained through the results of \cite{nekGencov,Onatski08}). For example, for the null setting of $\lambda_1=1$, with a ratio of $p/n$ as low as $0.01$, we see a bias in $\hat{\lambda}_1$, overestimating the true $\lambda_1$ by $17\%$ for the normal distribution, and $49\%$ for the elliptical distribution with exponential weights (Supplemental Table \ref{tab:bootBiasNorm}-\ref{tab:bootBiasEllipUnif}). As the ratio of $p/n$ grows, the bias increases, with $\hat{\lambda}_1$ overestimating $\lambda_1$ by $1.88$ and $14.93$ when $\lambda_1=1$ for the normal distribution and the elliptical distribution with exponential weight, respectively when $p/n=0.5$. The bias declines as $\lambda_1$ grows and becomes more separated from the remaining eigenvalues, especially relative to the size of $\lambda_1$; Subsection \ref{subsec:bootExtremeEigenvalues} provides an explanation for this phenomenon. But the bias remains for large ratios of $p/n$ even for $\lambda_1$ well beyond the Gaussian phase transition $(1+\sqrt{p/n})$, especially for non-normal distributions; for $p/n=0.5$, when $\lambda_1$ is as large as $1+11\sqrt{p/n}\approx 8.78$, the bias of $\hat{\lambda}_1$ is $8.08$ when $X$ follows an elliptical distribution with exponential weights, and $2.01$ for an elliptical distribution with normal weights. Any use of $\hat{\lambda}_1$ as an estimate of $\lambda_1$ must grapple with the problem of such a highly non-consistent estimator, making bootstrap methods for estimating the bias highly relevant.

The standard bootstrap estimate of bias is given by, $\bar{\lambda}_1^*-\hat{\lambda}_1$, where $\bar{\lambda}_1^*=\frac{1}{B}\sum_{b}\hat{\lambda}^{*b}_1$ is the mean of the bootstrap estimates of $\lambda_1$. Unfortunately, we see in simulations that this bootstrap estimate of bias is not a reliable estimate of the bias of $\hat{\lambda}_1$ unless $\lambda_1$ is quite large relative to the other eigenvalues.  We demonstrate these results in Figure \ref{fig:bootBias} where we plot boxplots of the standard bootstrap estimates of bias over 1,000 simulations for different ratios of $p/n$ and values of $\lambda_1$. For $X_i$ following a normal distribution, the bootstrap estimate of bias remains poor for large $p/n$ even for $\lambda_1$ past the phase transition, e.g. $\lambda_1=1+3 \sqrt{p/n}$ ($\lambda_1\approx 3.1$ for $p/n=0.5$), and only for much larger values of $\lambda_1$ does the bootstrap estimate of bias start to approach the true bias in high dimensions (Supplementary Table \ref{tab:bootBiasNorm}). Further the bootstrap estimate of bias is inconsistent: depending on the true value of $\lambda_1$ the bootstrap either under- or over-estimates the bias. Another important feature of the bootstrap shown in our simulations is that when $X_i$ follows an elliptical distribution with exponential weights, while the mean performance of the bootstrap estimate is still poor, it is the extremely high variance of the bootstrap that is even more problematic for the bootstrap estimate of bias. Indeed, as we discuss below, the bootstrap distribution of the top eigenvalue seems to change dramatically from simulations to simulations, thus creating highly variable estimates.

\begin{figure}[t]
	\centering
	\subfloat[][$Z\sim$ Normal]{\includegraphics[width=.45\textwidth]{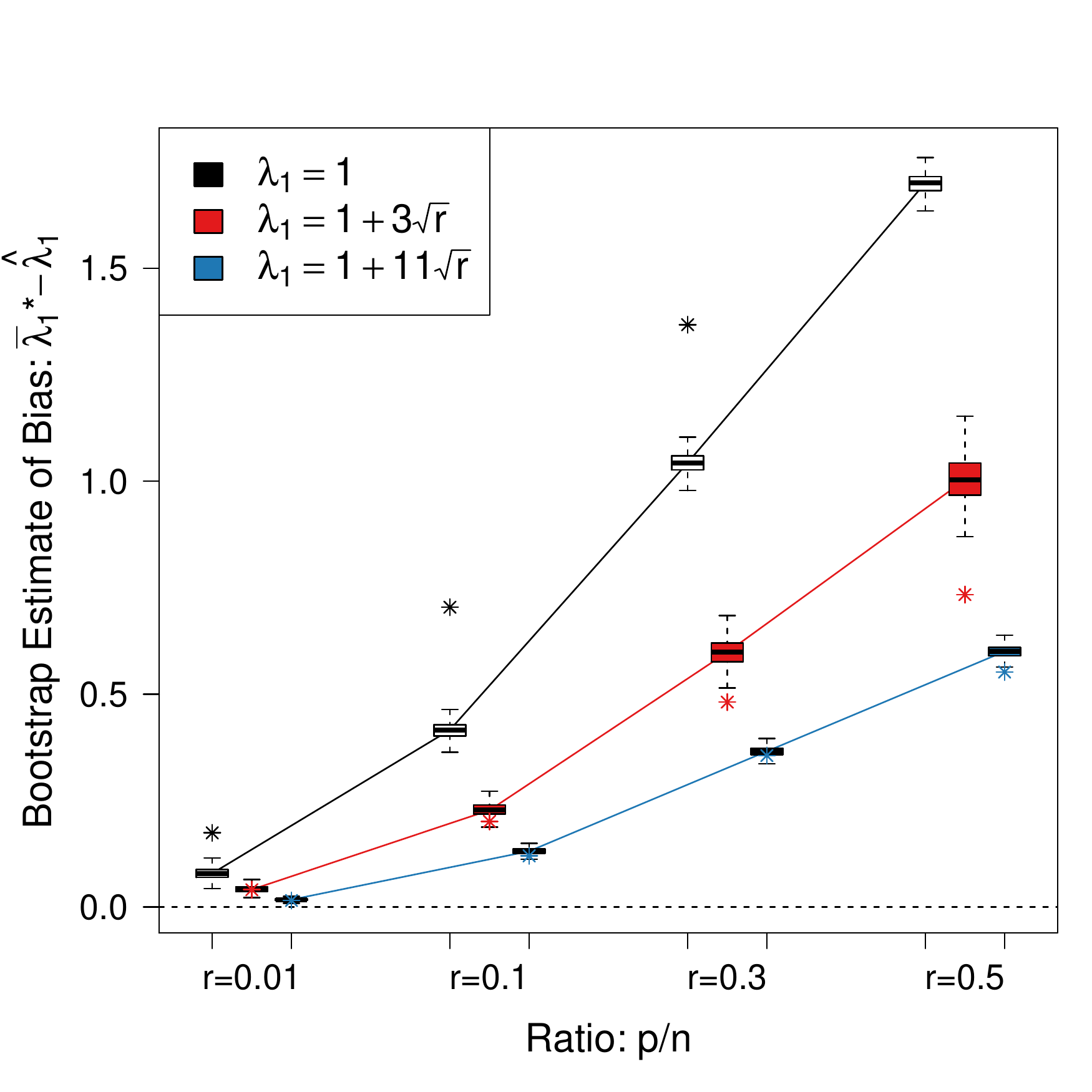} \label{subfig:bootBias:Norm} }
	\subfloat[][$Z\sim$ Ellip. Exp]{\includegraphics[width=.45\textwidth]{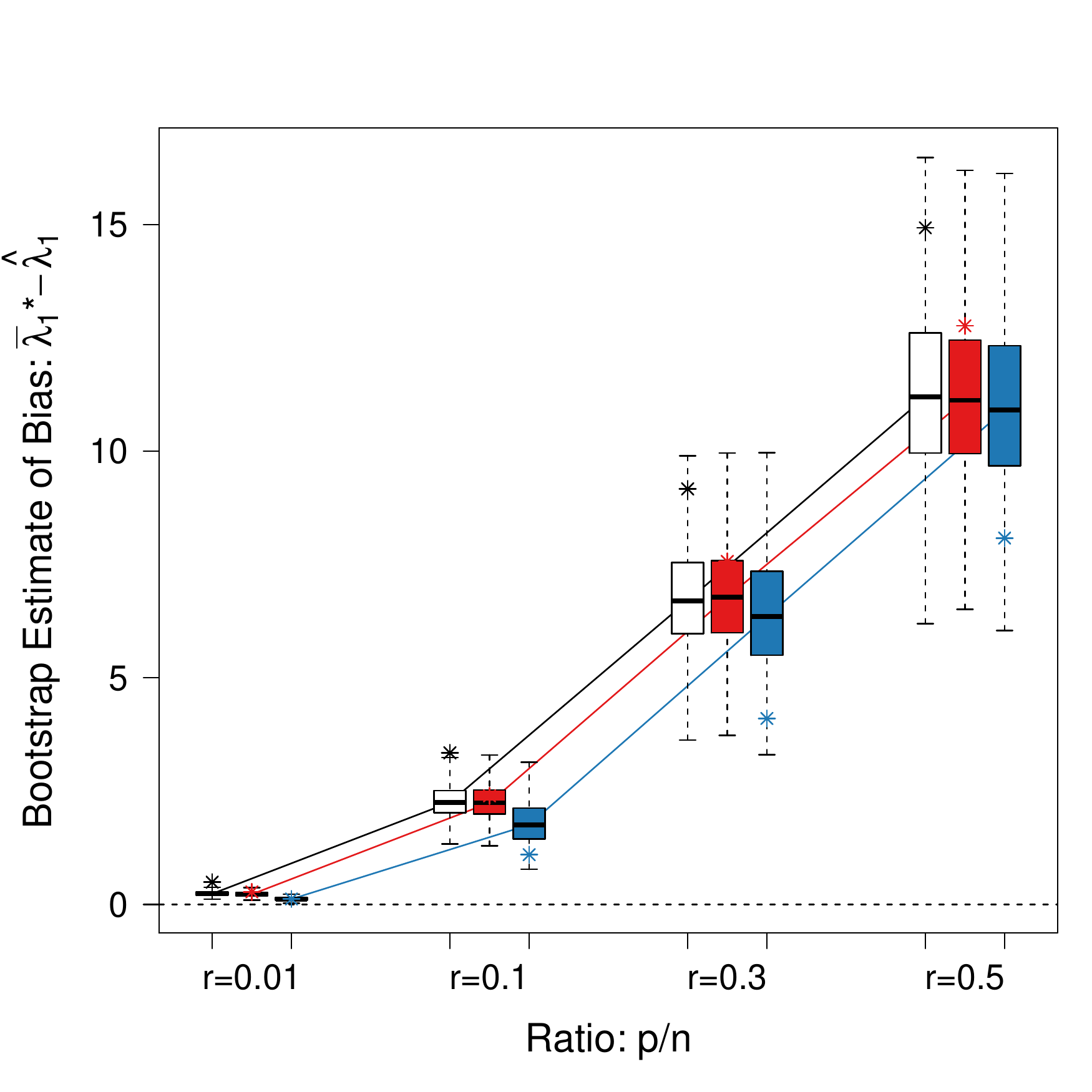} \label{subfig:bootBias:EllipExp} }
\caption{
 \textbf{Bias of Largest Bootstrap Eigenvalue, n=1,000: } Plotted are boxplots of the difference of $\bar{\lambda}_1^*$, the average bootstrap value of $\lambda_1$ over 999 bootstrap samples, minus the estimate $\hat{\lambda}_1$. This is repeated over 1000 simulations. $\bar{\lambda}_1^*-\hat{\lambda}_1$  is also the standard bootstrap estimate of bias. Each group of boxplots along the x-axis corresponds to a different ratio $r$ of $p/n$; different colors of the boxplot correspond to different values of the true $\lambda_1:$ $1$, $1+3\sqrt{r}$,$1+11\sqrt{r}$; for larger values of $\lambda_1$ see Supplementary Figure \ref{fig:bootBigBias}. The asterisk (*) in the plot corresponds to the true bias, $\hat{\lambda}_1-\lambda_1$ as evaluated over 1,000 simulations.  See Supplemental Tables \ref{tab:bootBiasNorm}-\ref{tab:bootBiasEllipUnif} for the median values of these boxplots and for those of larger $\lambda_1$ values.
}\label{fig:bootBias}
\end{figure}	

\paragraph{Estimating the Variance} We see similar problems in our simulations for the boostrap estimate of variance (Figure \ref{fig:bootVar}). Specifically, the bootstrap dramatically overestimates the variance of $\hat{\lambda}_1$ when $\lambda_1$ is close to 1. When the $X_i$'s are normally distributed and $\lambda_1=1$, the bootstrap estimates the variance to be four times larger than the true variance for $p/n=0.1$, and grows to be up to 60 times larger than the true variance when $p/n=0.5$ (Supplementary Table \ref{tab:bootVar}). Even when $\lambda_1=1+3\sqrt{p/n}$, well beyond the Gaussian phase transition and hence in a relatively easy setup, the bootstrap estimate of variance is inflated to 1.5 to 2.2 times as large as the truth, for $p/n=0.3$ and $0.5$, respectively. Only for large values of $\lambda_1$ does the bootstrap inflation of the variance become minimal. Again, when the $X_i$'s follow an elliptical distribution with exponential weights, the behavior of the bootstrap estimate of variance is dominated by the variability in the estimate, because the distribution of $\hat{\lambda}_1^*$ is so erratic.

\begin{figure}[t]
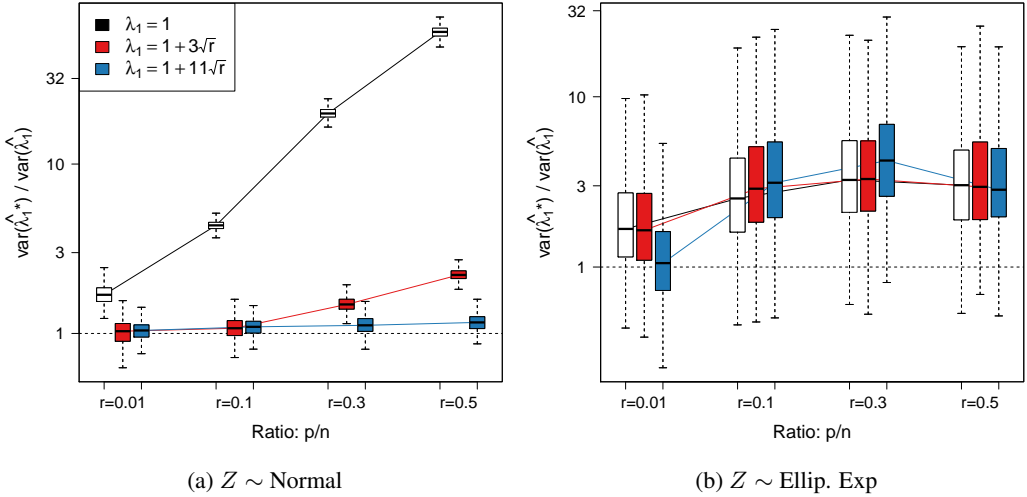

	\centering
	\subfloat[][$Z\sim$ Normal]{\includegraphics[width=.45\textwidth]{VarRatio_NormZ_TopEig1_Pairs_n=1000} \label{subfig:bootVar:Norm} }
	\subfloat[][$Z\sim$ Ellip. Exp]{\includegraphics[width=.45\textwidth]{VarRatio_EllipExpZ_TopEig1_Pairs_n=1000} \label{subfig:bootVar:EllipExp} }
\caption{
 \textbf{Ratio of Bootstrap Estimate of Variance to True Variance for Largest Eigenvalue, n=1,000:  } Plotted are boxplots of the bootstrap estimate of variance ($B=999$) as a ratio of the true variance of  $\hat{\lambda}_1$; boxplots represent the bootstrap estimate of variance over 1000 independent simulations. Each group of boxplots along the x-axis corresponds to a different ratio $r$ of $p/n$; different colors of the boxplot correspond to different values of the true $\lambda_1:$ $1$ (white), $1+3\sqrt{r}$ (red), and $1+11\sqrt{r}$ (blue); for larger values of $\lambda_1$ see Supplementary Figure \ref{fig:bootBigVar}. See Supplemental Table \ref{tab:bootVar} for the median values of boxplots.
}\label{fig:bootVar}
\end{figure}

\paragraph{Confidence Intervals for $\lambda_1$} Standard techniques for creating confidence intervals for $\lambda_1$ are clearly problematic in high dimensions since $\hat{\lambda}_1$ is biased and  not a consistent estimator for $\lambda_1$. If $\lambda_1(\Sigma)$ is beyond the Gaussian phase transition, it is nonetheless fairly straightforward to construct those intervals in the Wishart data setting. However, it is still useful to consider the performance of bootstrap confidence intervals because of what they highlight about the behavior of the bootstrap for the top eigenvalue. Another reason is that  many practitioners might use the bootstrap to make statements about whether top eigenvalues are separated from the rest of the eigenvalue spectrum.

Bootstrap confidence intervals can be created in multiple ways \cite{DavisonHinkley97}. Common techniques include 1) a simple normal confidence interval around $\hat{\lambda}_1$ using the bootstrap estimate of variance, 2) the percentile method, which uses the percentiles of the bootstrap distribution of $\hat{\lambda}_1^{*b}$, or 3) a bias-corrected confidence interval. Based on our earlier discussions, it is not surprising that none of these methods for estimating confidence intervals will have the proper coverage probability.  Examining the actual bootstrap distributions of $\hat{\lambda}_1^*-\hat{\lambda}_1$ from multiple simulations (Figure \ref{fig:bootDensityNull}), we see clearly the incorrect bias estimation and the overestimation of variance that results from using the bootstrap, features that will also invalidate confidence intervals constructed from the percentiles of the distribution of $\hat{\lambda}^*_1$. We also see that when the $X_i$'s follow an elliptical distribution with exponential weights, the bootstrap distributions do not appear to be converging to a limit for small values of $\lambda_1$, at least for $n=1000$ -- an even greater problem in using the bootstrap in these settings.

\begin{figure}[t]
	\centering
	\subfloat[][$Z\sim$ Normal, r=0.01]{\includegraphics[type=pdf,ext=.pdf,read=.pdf,width=.3\textwidth]{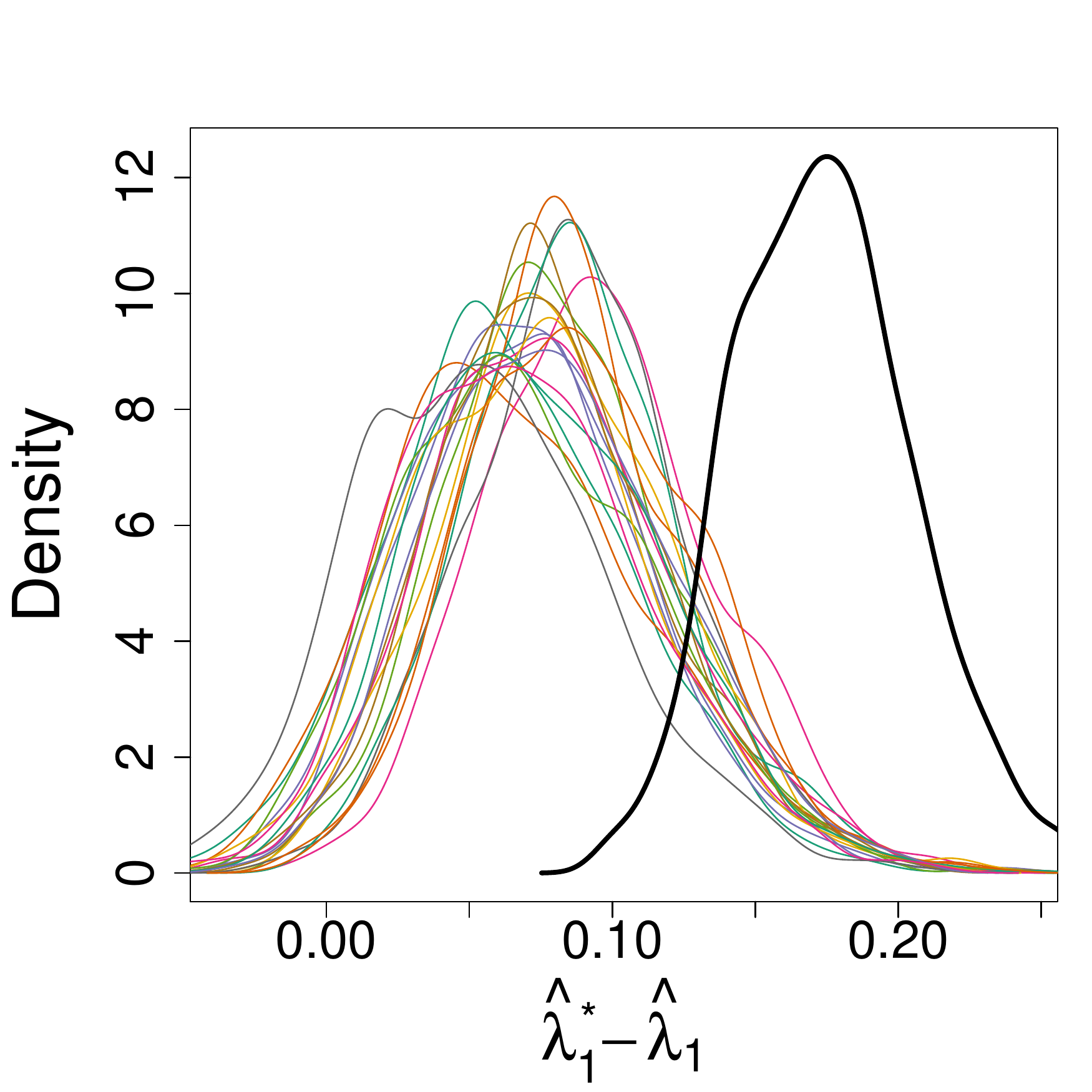} \label{subfig:bootDensityNull:Norm0.01} }
	\subfloat[][$Z\sim$ Normal, r=0.3]{\includegraphics[type=pdf,ext=.pdf,read=.pdf,width=.3\textwidth]{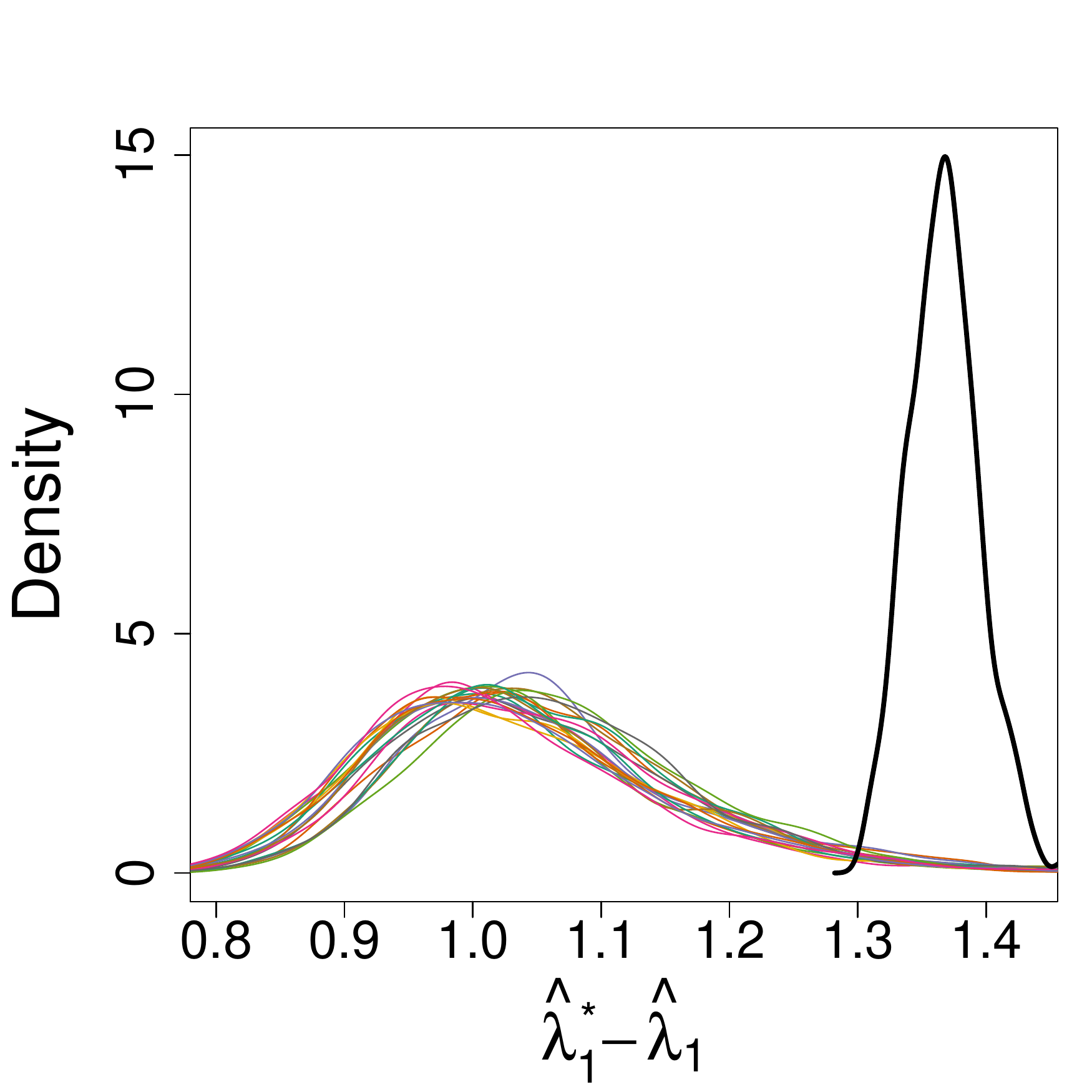} \label{subfig:bootDensityNull:Norm0.3} }
	\\
	\subfloat[][$Z\sim$ Ellip. Exp, r=0.01]{\includegraphics[type=pdf,ext=.pdf,read=.pdf,width=.3\textwidth]{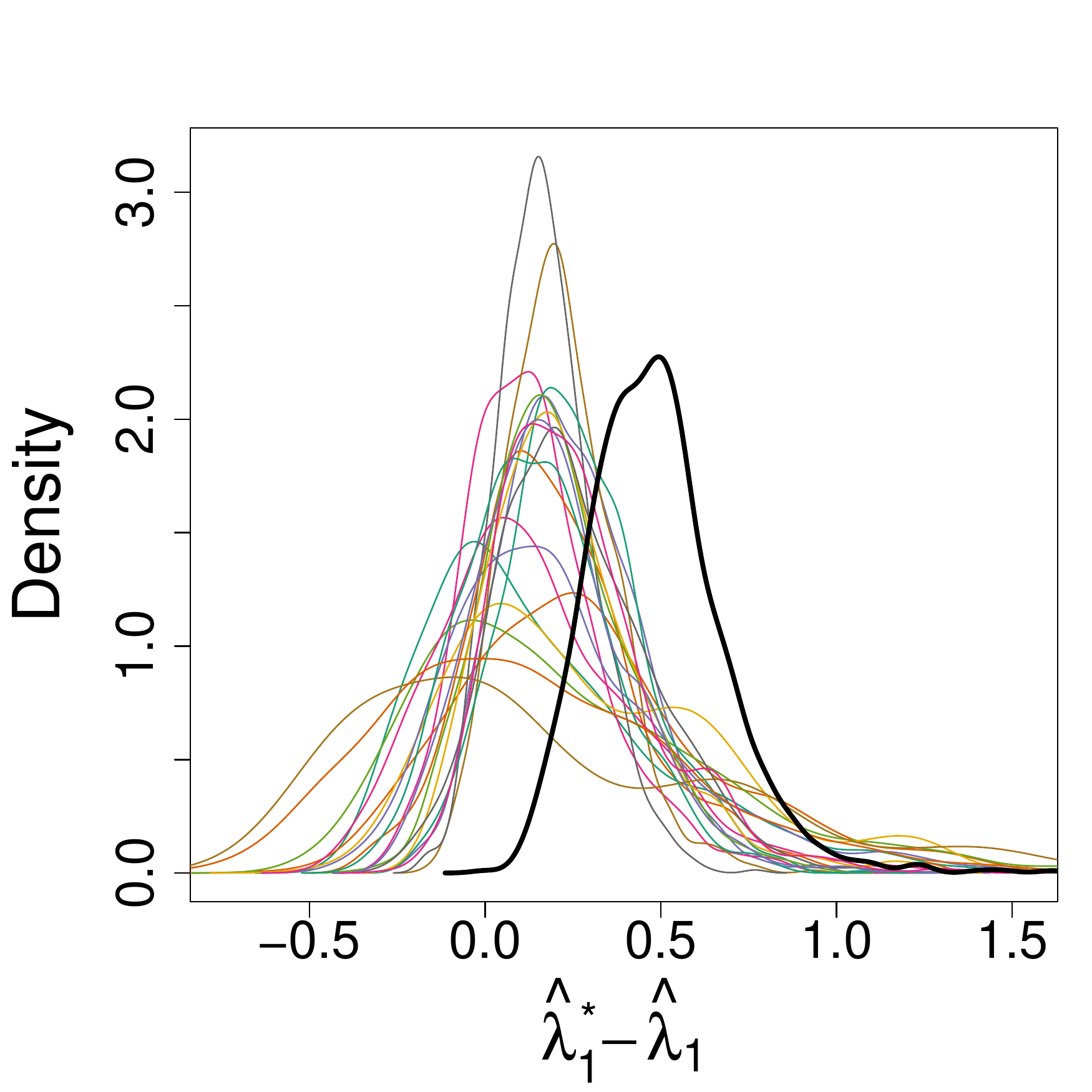} \label{subfig:bootDensityNull:EllipExp0.01} }
	\subfloat[][$Z\sim$ Ellip. Exp, r=0.3]{\includegraphics[type=pdf,ext=.pdf,read=.pdf,width=.3\textwidth]{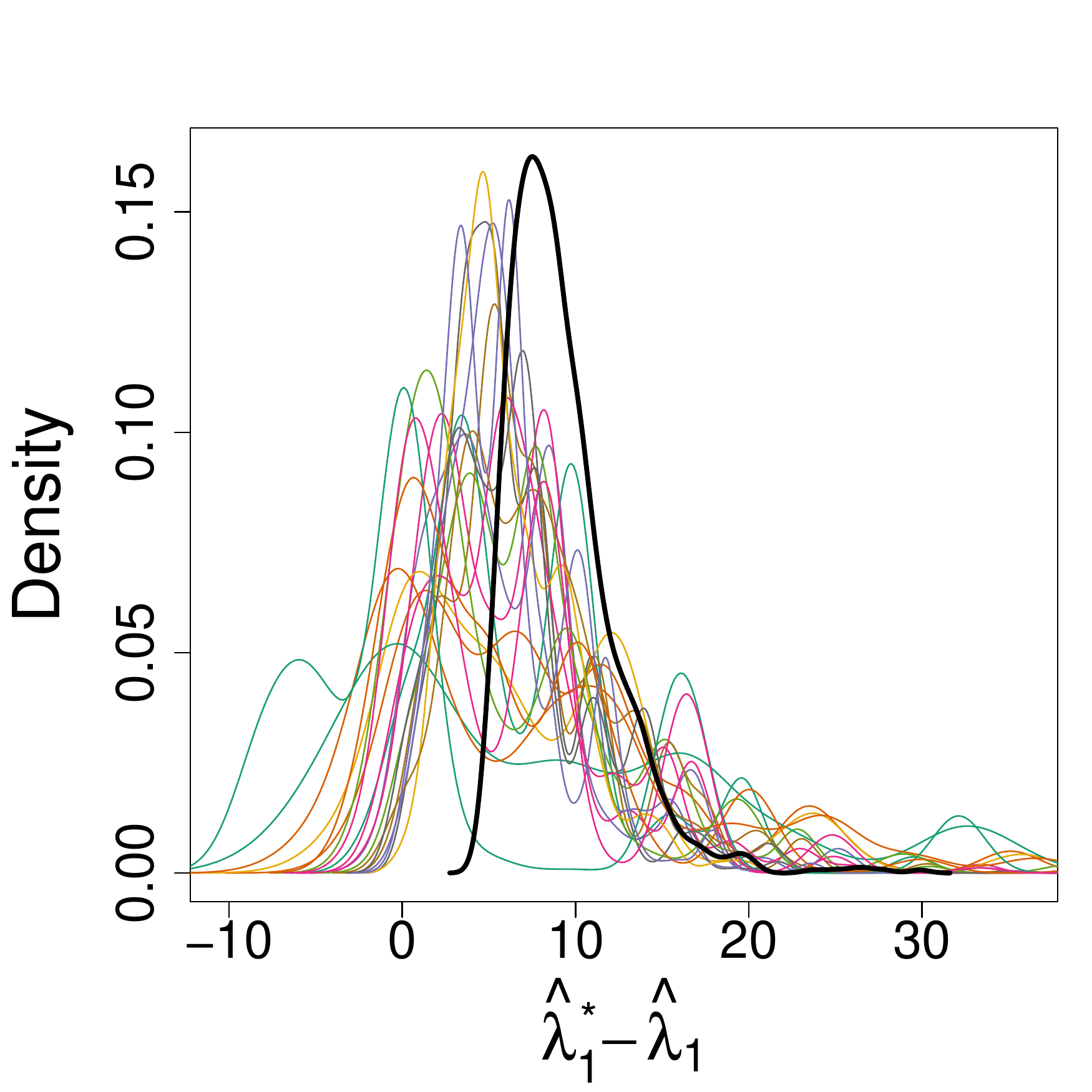} \label{subfig:bootDensityNull:EllipExp0.3} }
\caption{
 \textbf{Bootstrap distribution of $\hat{\lambda}^*_1$ under the null ($\lambda_1=1$), n=1,000:  } Plotted are the estimated density of twenty simulations of the bootstrap distribution of $\hat{\lambda}^{*b}_1-\hat{\lambda}_1$, with $b=1,\ldots,999$. The solid black line line represents the distribution of $\hat{\lambda}_1-\lambda_1$ over 1,000 simulations.  For similar figures for the larger value of $\lambda_1=1+3\sqrt{r}$, see Supplementary Figure \ref{fig:bootDensityAlt}.
}\label{fig:bootDensityNull}
\end{figure}

As expected, the resulting bootstrap confidence intervals are not useful in inference on the true value of $\lambda_1$. Bootstrap confidence intervals based on the percentile estimates do not cover the true value with any kind of reasonable probability until $\lambda_1$ becomes quite large  (Figure \ref{fig:bootCI} and Supplementary Table \ref{tab:bootCITrue}); this is undoubtedly because of the bias in the estimate of $\lambda_1$ making the percentile method inappropriate for constructing confidence intervals. Bootstrap confidence intervals based on normal intervals around $\hat{\lambda}_1$ using the bootstrap estimate of variance \emph{do} cover the true value of the $\lambda_1$ with high probability. However, this coverage is due to the fact that the bootstrap estimate of variance is much larger than the true variance of $\hat{\lambda}_1$, as seen above, and thus result in overly large confidence intervals. As a result, the normal-based bootstrap confidence intervals also incorrectly cover the putative null hypothesis ($\lambda_1=1$) with high probability when the alternative is true, particularly for $X_i$ following an elliptical distribution (Supplementary Table \ref{tab:bootCINull}). In short, such bootstrap confidence intervals based on the bootstrap estimate of variance suffer from lack of power once the distribution of $X_i$ deviates from strictly normal because of the large size of the confidence interval.

\begin{figure}[t]
	\centering
	\subfloat[][\centering$Z\sim$ Normal\par Percentile Intervals]{\includegraphics[type=pdf,ext=.pdf,read=.pdf,width=.3\textwidth]{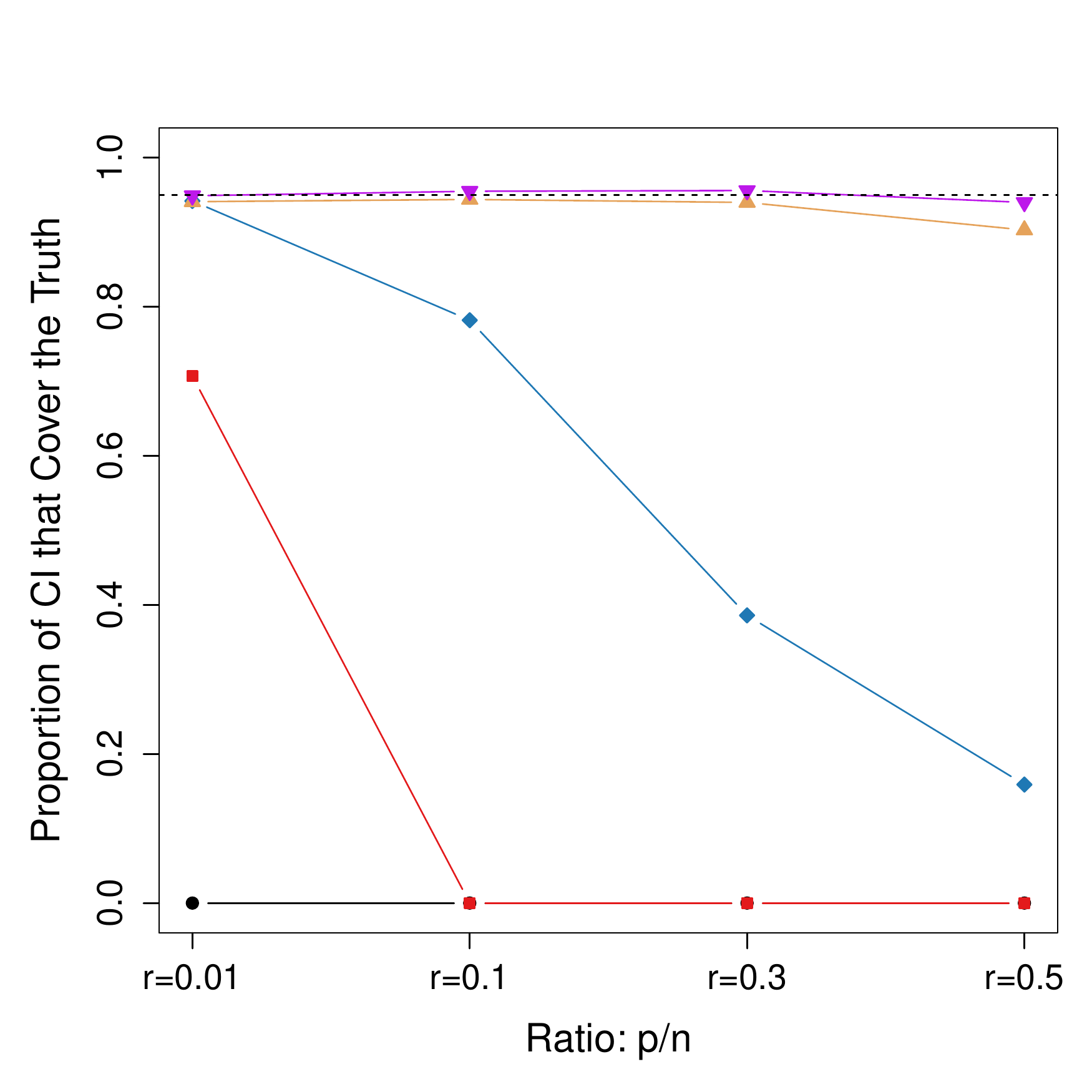} \label{subfig:bootCI:NormPerc} }
	\subfloat[][\centering$Z\sim$ Normal\par Normal-based Intervals]{\includegraphics[type=pdf,ext=.pdf,read=.pdf,width=.3\textwidth]{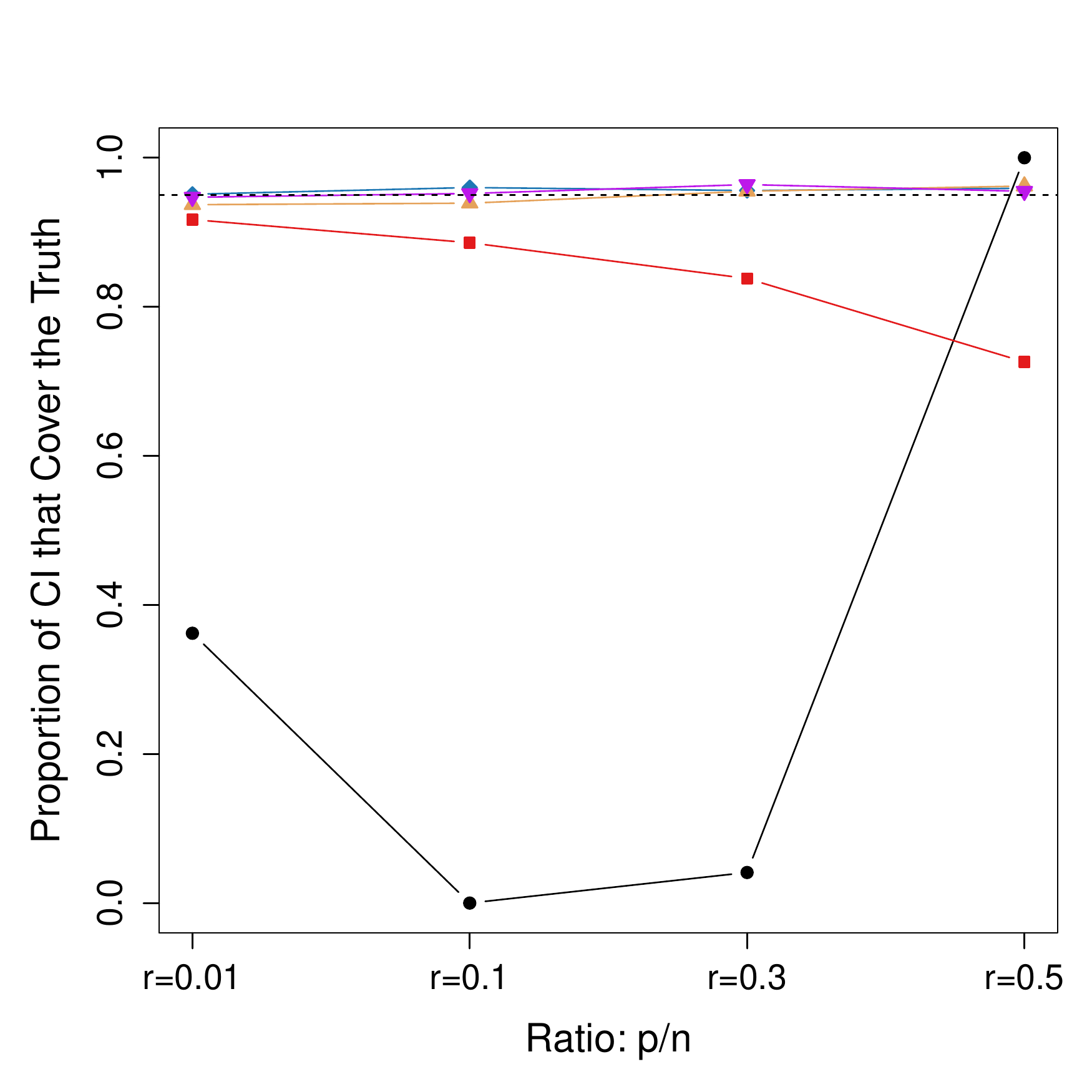} \label{subfig:bootCI:NormNormal} }
	\\
	\subfloat[][\centering$Z\sim$ Ellip. Exp\par Percentile Intervals]{\includegraphics[type=pdf,ext=.pdf,read=.pdf,width=.3\textwidth]{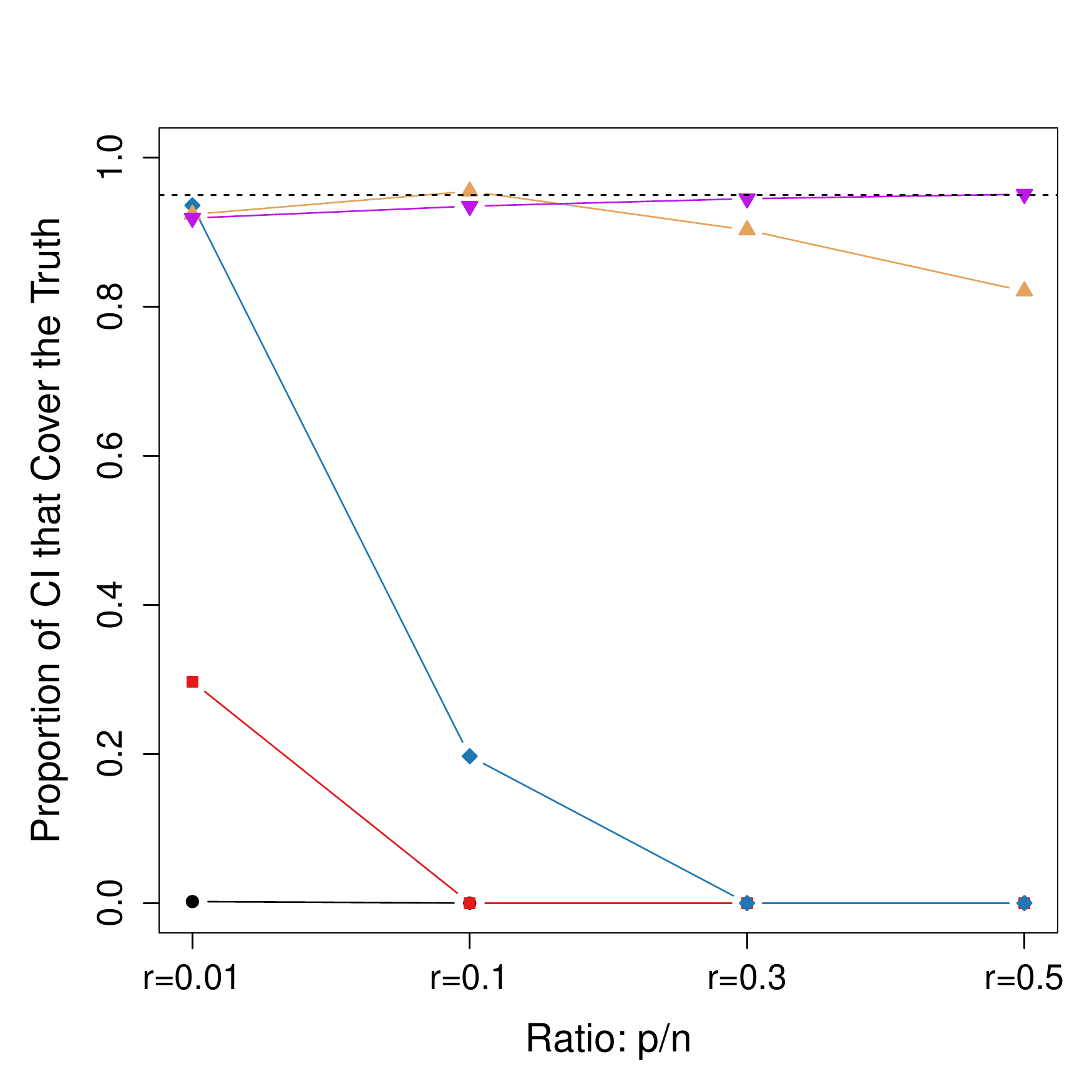} \label{subfig:bootCI:EllipExpPerc} }
	\subfloat[][\centering $Z\sim$ Ellip. Exp \par Normal-based Intervals]{\includegraphics[type=pdf,ext=.pdf,read=.pdf,width=.3\textwidth]{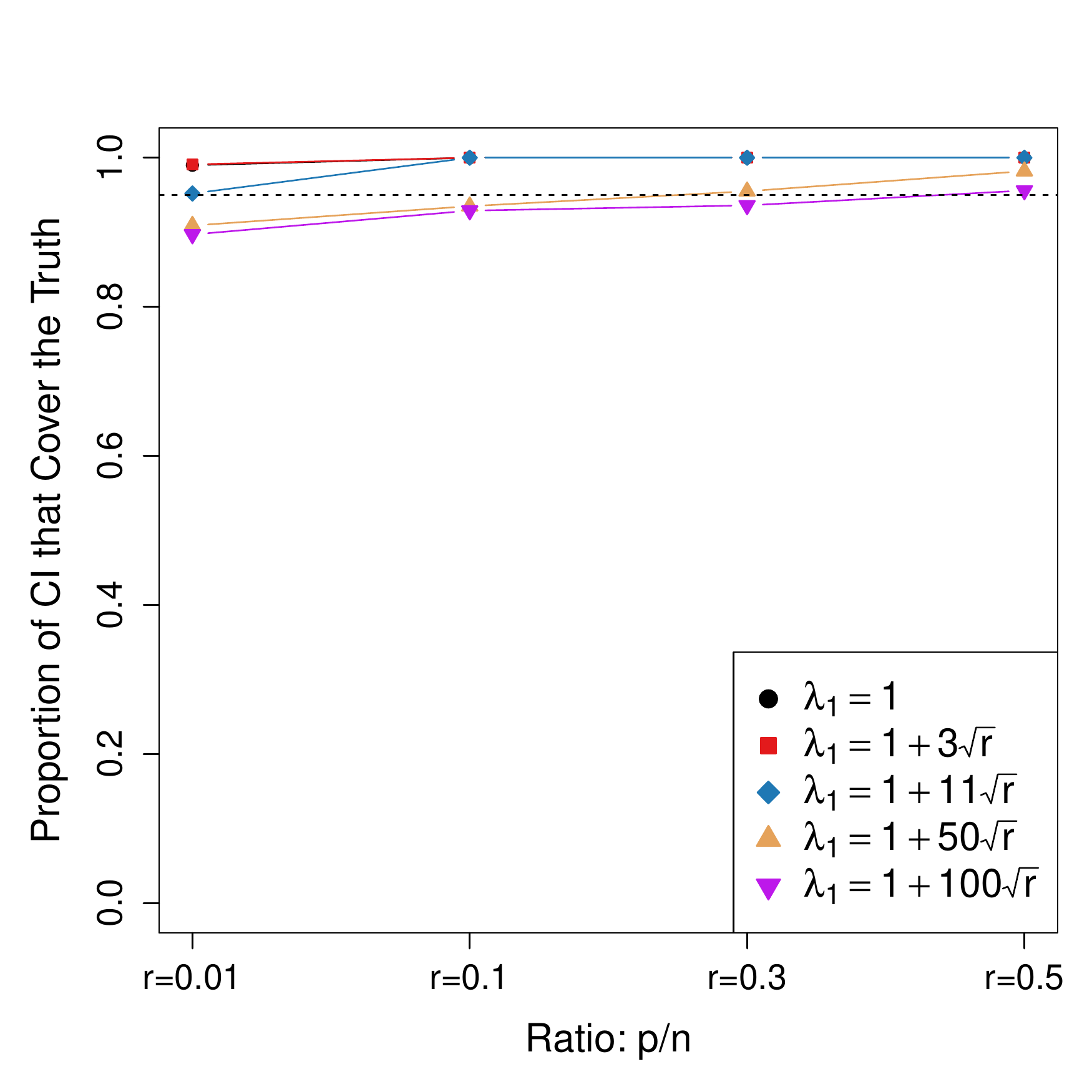} \label{subfig:bootCI:EllipExpNorm} }
\caption{
 \textbf{95\% CI Coverage, $n=1,000$:  } Plotted are the percentage of the confidence intervals that cover the true $\lambda_1$ (out of 1,000 simulations), for different values of $r=p/n$ and for different true values of $\lambda_1$. The plotted values can be found in Supplementary Table \ref{tab:bootCITrue}, and the percentage of these same intervals that cover the null value $\lambda_1=1$ can be found in Supplementary Table \ref{tab:bootCINull}. Additional elliptical distributions can be seen in \ref{fig:bootCIMoreEllip}.
}\label{fig:bootCI}
\end{figure}

\subsection{Statistics for Detecting Gaps in the Eigenvalue Spectrum}

The behavior of the top eigenvalues is often studied in theoretical work, but in practice, examination of the eigenvalues of the sample covariance matrix is largely done to find gaps in the eigenvalue spectrum. Such gaps might indicate a logical point at which to reduce the dimension of the data.  Again, we focus for simplicity on detecting the separation of just the top eigenvalue from the remainder. Then a natural statistic is the gap statistic, $\hat{\lambda}_1-\hat{\lambda}_2$, where large values of the statistic are meant to suggest that there is a large difference between the first and second eigenvalue. 

These statistics are difficult to understand theoretically, with limit distributions that are even less standard than the Tracy-Widom distribution (see \cite{SoshnikovUniversalityWigner99}, which explains joint distributional results, \cite{dieng04}, and \cite{nekGencov} for applications), which again makes them good candidates for using the bootstrap for inference. We consider the performance of the bootstrap for these statistics via the same simulation structure that we applied to the largest eigenvalue, above. 

The gap statistic $\hat{\lambda}_1-\hat{\lambda}_2$ is also a biased estimate for the true population value. And unlike $\hat{\lambda}_1$, the direction of the bias for the gap statistic differs depending on the value of $\lambda_1$. For $\lambda_1=1$, the gap statistic overestimates the true difference, while for  $\lambda_1>1$, the gap statistic underestimates the true difference (Supplementary Tables \ref{tab:bootBiasNormGap}-\ref{tab:bootBiasEllipUnifGap}); how large $\lambda_1$ needs to be before the bias becomes negative depends on the distribution of $X$. 

As in the case of the top eigenvalue, the bootstrap estimate of bias does not accurately estimate this bias (Supplementary Figure \ref{fig:bootBiasGap}). The bootstrap under-estimates the absolute size of the bias, and for elliptical distributions can misspecify the direction of the bias (Supplementary Tables \ref{tab:bootBiasNormGap}-\ref{tab:bootBiasEllipUnifGap}, Supplementary Figure \ref{fig:bootBiasGap}). As with the top eigenvalue, the disparity in the bootstrap estimate of bias improves as the top eigenvalue becomes more separated from the bulk. Estimating the variance of the gap statistic with the bootstrap shows similar problems, with the bootstrap widely over-estimating the variance of $(\lambda_1(\SigmaHat)-\lambda_2(\SigmaHat))$ in high dimensions (Supplementary Figure \ref{fig:bootVarGap}). Bootstrap confidence intervals also suffer from the same problem as those of the top eigenvalue: percentile CIs have low coverage of the truth in high dimensions and normal-based CI being much wider than necessary because of the over estimation of the variance.

\paragraph{Gap Ratio Statistic} Another alternative that tries to normalize the gap statistic is the gap ratio, $(\lambda_1-\lambda_2)/(\lambda_2-\lambda_3)$. Onatski \cite{OnatskiTalkAtMIT06} proposed tests based on these statistics to avoid having to estimate $\sigma_{n,p}$ and $\mu_{n,p}$ in Theorem \ref{thm:JohnstoneTW}, when $\Sigma$ is a multiple of the identity. In the scenario we are evaluating, this population quantity is not well defined ($\lambda_2-\lambda_3=0$), but the estimate and its distribution are well defined, and the statistic is  a tool for deciding whether $\lambda_1$ and $\lambda_2$ are well separated. Again, the bootstrap estimate gives poor estimates of various features of the actual distribution of the gap ratio statistic: the bootstrap estimate is biased and can either under or over estimate the variance, depending on the true value of $\lambda_1$ (Supplementary Figures \ref{fig:bootBiasGapRatio} and \ref{fig:bootVarGapRatio}). The bootstrap distribution does not appear to be converging, i.e the bootstrap distribution seems to change with each $X$ (Supplementary Figure \ref{fig:bootDensityGapRatioEllipExp}).

\section{Theoretical results}\label{sec:BootTheoryMain}
The problems with the bootstrap can be explained in part by the difference between the spectral behavior of weighted and unweighted covariance matrices when $p/n$ is not small. Specifically, bootstrapping the observations (rows) of $X$ is equivalent to randomly reweighting the observations which changes the spectral distribution of $\SigmaHat$. For example, if $X_i$'s are normally distributed, randomly weighting the $X_i$'s transforms the data to an elliptical distribution, which leads to a very different spectral distribution of eigenvalues when $p/n$ is not close to 0 (see Theorem \ref{supp:thm:bulkEllipDist} in the Supplementary material). Similarly, the distribution of the largest eigenvalues are dramatically affected by reweighting; the one exception to this rule is the situation where the largest eigenvalues of $\Sigma$ are very separated from the rest.

In what follows we provide theoretical results that help explain the results of our numerical simulations and also complete them. The first set of results concerns the impact of bootstrapping on the spectral distribution of a sample covariance matrix. We explain that this creates bias in the setting we consider and it helps explain some of the misbehavior of the bootstrap we observed in the numerical study. We then consider the case of extreme eigenvalues, in the case where the largest population eigenvalues are well-separated from the bulk of the eigenvalues. We show that then the bootstrap works asymptotically under certain conditions. This helps explain why the performance of the bootstrap improves in our numerical study when we increase the largest population eigenvalue.

\subsection{Bootstrapped empirical distribution}

\subsubsection{A theoretical result}
\begin{lemma}\label{lemma:bootD}
Suppose $\{X_i\}_{i=1}^n$ are fixed vectors in $\mathbb{R}^p$. Suppose $w_i$'s are independent random variables. 
Consider $S_w=\frac{1}{n}\sum_{i=1}^n w_i X_i X_i\trsp$ and call 
$$
m(z)=\frac{1}{p}\trace{(S_w-z\id_p)^{-1}}, \text{ for } z=u+iv \in \mathbb{C}^+\;.
$$

Then 
$$
P(|m(z))-\Exp{m(z)}|>t)\leq C \exp(-c p^2 v^2 t^2/n)\;,
$$
with $C=4$ and $c=1/16$ for instance.

The same result holds when $w_i$'s have $Mult(n,1/n)$ distribution. 
\end{lemma}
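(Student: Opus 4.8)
\emph{Proof strategy.} The plan is to run a bounded-differences (Azuma--Hoeffding) argument over the randomness in the weights, the only non-routine ingredient being a \emph{deterministic} rank-one resolvent estimate that is uniform in the (possibly unbounded) weights. Since $m(z)$ is complex, I would first reduce to its real and imaginary parts: from $\{|m(z)-\Exp{m(z)}|>t\}\subseteq\{|\Re(m(z)-\Exp{m(z)})|>t/\sqrt2\}\cup\{|\Im(m(z)-\Exp{m(z)})|>t/\sqrt2\}$ and a union bound, a tail bound of the stated form for $\Re m(z)$ and for $\Im m(z)$ (with constants $C',c'$) gives one for $m(z)$ with constants $(2C',c'/2)$.

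The key deterministic fact is: if $w$ and $w'$ agree in all coordinates but the $i$-th, then $|m(z;w)-m(z;w')|\le 2/(pv)$. Indeed, moving from $S_w$ to $S_{w'}$ amounts to first removing $\tfrac1n w_iX_iX_i\trsp$ and then adding $\tfrac1n w_i'X_iX_i\trsp$, each a symmetric perturbation of rank at most one; and a symmetric rank-one perturbation of $M$ changes $\trace{(M-z\id_p)^{-1}}$ by at most $1/v$. The latter bound follows from $\trace{(M-z\id_p)^{-1}}=-\tfrac{d}{dz}\log\det(M-z\id_p)$ and the matrix determinant lemma, which give, for $\alpha\in\mathbb{R}$ and a unit vector $u$,
$$
\trace{(M+\alpha uu\trsp-z\id_p)^{-1}}-\trace{(M-z\id_p)^{-1}}=-\frac{g'(z)}{g(z)},\qquad g(z)=1+\alpha\,u\trsp(M-z\id_p)^{-1}u,
$$
where $z\mapsto u\trsp(M-z\id_p)^{-1}u$ is the Stieltjes transform of a nonnegative measure; hence, for $\alpha\ge0$, $g$ maps $\mathbb{C}^+$ into $\mathbb{C}^+$ (the case $\alpha<0$ is symmetric), so $g$ is a Nevanlinna function and the Schwarz--Pick inequality for the upper half-plane gives $|g'(z)/g(z)|\le\Im g(z)/(v\,|g(z)|)\le 1/v$. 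I want to stress that this estimate involves neither $w_i$, nor $w_i'$, nor $\|X_i\|$ — the rank-one structure absorbs them all — which is exactly what makes a bounded-differences inequality usable even though the weights are not assumed bounded. (As an alternative to Schwarz--Pick one can use eigenvalue interlacing for rank-one perturbations together with $\int|x-z|^{-2}\,dx=\pi/v$, which gives the same bound up to a universal constant.)

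With this in hand, the case of independent $w_i$ follows by applying Azuma--Hoeffding to the Doob martingale of $\Re m(z)$ (and of $\Im m(z)$) formed by revealing $w_1,\dots,w_n$ one at a time: its increments are bounded by $2/(pv)$, so $P(|\Re m(z)-\Exp{\Re m(z)}|>s)\le 2\exp(-p^2v^2s^2/(8n))$, and combining the two parts as above (with $s=t/\sqrt2$) yields $P(|m(z)-\Exp{m(z)}|>t)\le 4\exp(-p^2v^2t^2/(16n))$, i.e.\ $C=4$, $c=1/16$; no attempt is made to optimize the constants. For $w\sim\mathrm{Mult}(n,1/n)$ the coordinates are dependent, so one cannot reveal them independently; instead I would use the representation $w=\sum_{k=1}^n e_{I_k}$ with $I_1,\dots,I_n$ i.i.d.\ uniform on $\{1,\dots,n\}$, so that $S_w=\tfrac1n\sum_{k=1}^n X_{I_k}X_{I_k}\trsp$ is a function of the \emph{independent} labels $I_1,\dots,I_n$; changing one label perturbs $S_w$ by a matrix of rank at most two, hence $m(z)$ by at most $2/(pv)$, and the same martingale argument gives the identical bound. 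The only step that genuinely requires care is the uniform rank-one resolvent estimate; everything downstream of it is standard.
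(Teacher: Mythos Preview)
Your proof is correct and follows essentially the same route as the paper: a rank-one resolvent perturbation bound giving martingale increments of size at most $2/(pv)$, Azuma--Hoeffding on the Doob martingale over the weights, and the same i.i.d.-label representation $S_w=\tfrac1n\sum_k X_{I_k}X_{I_k}\trsp$ to handle the multinomial case. The only cosmetic difference is that the paper obtains the rank-one bound by citing the Sherman--Morrison--Woodbury identity (as in Bai '99), whereas you derive it via the Nevanlinna/Schwarz--Pick argument; both yield the same $1/v$ bound and the same final constants.
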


Naturally, the case of $Mult(n,1/n)$ corresponds to the standard bootstrap. We explore the statistical implications of this result in Section \ref{subsubsec:StatCsqBootBulk}.

\begin{corollary}\label{coro:howBootDPerforms}
When $p/n\tendsto \kappa \in (0,\infty)$, the Stieltjes transform $m(z)$ of the independently-weighted bootstrapped covariance matrix is asymptotically deterministic. The same is true with the standard bootstrap, where the weights have a Multinomial(n,1/n) distribution.

In particular, if $f$ is a bounded continuous function, as $n$ and $p$ tend to infinity, while $p/n\tendsto \kappa$, 
$$
\frac{1}{p}\sum_{i=1}^p f(\lambda_i^*)-\frac{1}{p}\sum_{i=1}^p \Expboot{f(\lambda_i^*)}\tendsto 0 \text{ in probability}\;,  
$$
where $\lambda_i^*$ are the decreasingly ordered bootstrapped eigenvalues and $\Expboot{\cdot}$ refers to expectation under the bootstrap distribution.
\end{corollary}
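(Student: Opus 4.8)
The plan is to deduce the corollary from Lemma \ref{lemma:bootD} by first using the concentration bound to pin $m(z)$ down pointwise in $z$, and then bootstrapping that pointwise statement up to weak convergence of the whole bootstrapped spectral distribution towards its mean. (The lemma itself I would get from a bounded-differences argument: replacing one weight $w_i$ perturbs $S_w$ by the rank-one matrix $\tfrac{1}{n}(w_i'-w_i)X_iX_i\trsp$, and by the standard interlacing/resolvent inequality this moves $m(z)$ by at most $C/(pv)$ for a universal $C$; McDiarmid over the $n$ independent $w_i$ then gives the stated sub-Gaussian tail, and the multinomial case follows by a routine transfer, e.g.\ by realizing the $Mult(n,1/n)$ weights through $n$ i.i.d.\ uniform draws. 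But the corollary only uses the lemma as a black box.) For the first assertion, fix $z=u+iv$ with $v>0$: since $p/n\tendsto\kappa\in(0,\infty)$ we have $p^2v^2/n=(p/n)\,p\,v^2\tendsto\infty$, so for every fixed $t>0$ the right-hand side $C\exp(-cp^2v^2t^2/n)$ of Lemma \ref{lemma:bootD} tends to $0$; hence $m(z)-\Expboot{m(z)}\tendsto 0$ in probability (conditionally on the fixed $\{X_i\}$, and also unconditionally by dominated convergence since $|m(z)|\le 1/v$), for both the independently-weighted and the multinomial bootstrap. This is precisely the assertion that $m(z)$ is asymptotically deterministic.

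For the linear-spectral-statistics part, write $\widehat{F}_p^*=\tfrac{1}{p}\sum_{i=1}^p\delta_{\lambda_i^*}$ for the random empirical distribution of the bootstrapped eigenvalues and $\bar{F}_p^*=\Expboot{\widehat{F}_p^*}$ for its bootstrap mean; because $|(x-z)^{-1}|\le 1/v$, Fubini identifies their Stieltjes transforms with $m(z)$ and $\Expboot{m(z)}$ respectively. I would then argue along subsequences. Fix a countable dense $D\subset\mathbb{C}^+$; given any subsequence of $(n,p)$, a diagonal extraction together with the first part yields a further subsequence along which $m(z)-\Expboot{m(z)}\tendsto 0$ almost surely for all $z\in D$ simultaneously, and (using tightness, below) along which $\bar{F}_p^*\weakCV\nu$ for a deterministic probability measure $\nu$ and $\widehat{F}_p^*\weakCV\mu$ a.s.\ for a (possibly random a priori) probability measure $\mu$. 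Weak convergence within a tight family forces pointwise convergence of the Stieltjes transforms on $D$; since these functions are analytic and locally bounded on $\mathbb{C}^+$, Vitali's theorem propagates the convergence, and the identity $m_\mu=m_\nu$, to all of $\mathbb{C}^+$, whence $\mu=\nu$ a.s.\ by Stieltjes inversion. As every subsequence admits such a further subsequence, $\widehat{F}_p^*-\bar{F}_p^*\tendsto 0$ weakly, in probability, and therefore, for bounded continuous $f$,
$$
\frac{1}{p}\sum_{i=1}^p f(\lambda_i^*)-\frac{1}{p}\sum_{i=1}^p\Expboot{f(\lambda_i^*)}=\int f\,d\widehat{F}_p^*-\int f\,d\bar{F}_p^*\tendsto 0\quad\text{in probability},
$$
the remaining identity being Fubini again (legitimate as $f$ is bounded). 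This is the corollary.

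The one substantive input is tightness of $\{\widehat{F}_p^*\}$ and $\{\bar{F}_p^*\}$, i.e.\ ruling out eigenvalue mass of $S_w$ escaping to $+\infty$. A mild condition on the data delivers it, for instance $\tfrac{1}{n}\sum_{i=1}^n\|X_i\|^2=O(p)$ --- which holds in the regimes considered in this paper, where $\|X_i\|^2\asymp\mathrm{tr}(\Sigma)\asymp p$ --- because then $\Expboot{\tfrac{1}{p}\mathrm{tr}(S_w)}=\tfrac{1}{pn}\sum_i\|X_i\|^2=O(1)$, so $\bar{F}_p^*([M,\infty))=O(1/M)$ and, by Markov under the bootstrap, $\widehat{F}_p^*([M,\infty))$ is uniformly small with high bootstrap probability. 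I expect the genuinely delicate point to be the bookkeeping in this subsequence/diagonal argument: one must transport the ``in probability'' mode through the extraction while keeping in mind that $\bar{F}_p^*$ is a \emph{moving} sequence of measures --- the assertion is only that $\widehat{F}_p^*$ and $\bar{F}_p^*$ become indistinguishable, never that either one converges on its own. Everything past that is the standard equivalence between pointwise convergence of Stieltjes transforms and weak convergence inside a tight family.
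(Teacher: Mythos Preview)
Your argument is correct and follows the same route as the paper: the paper's proof is a single sentence invoking the ``well-known fact that convergence of the Stieltjes transform implies convergence of the corresponding spectral distributions'' (citing \cite{bai99,geronimohill03}), and you have simply unpacked that citation into the explicit subsequence/Vitali/tightness argument. If anything you are more careful than the paper, since you flag that tightness of $\{\widehat{F}_p^*\}$ requires a mild growth condition on $\tfrac{1}{n}\sum_i\|X_i\|^2$ --- a hypothesis implicit in the paper's applications but not stated in the lemma or corollary.
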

The corollary follows from our lemma simply by using the well-known fact that convergence of the Stieltjes transform implies convergence of the corresponding spectral distributions (see \cite{bai99}, \cite{geronimohill03}).

\subsubsection{Statistical consequences when $p/n$ is not small but remains bounded}\label{subsubsec:StatCsqBootBulk}
One of the main problem that the bootstrap exhibits in the setting we consider is bias. Let us give a concrete example.
\paragraph{Bias in bootstrap spectral distribution: the case of Gaussian data}
Suppose $X_i\iid {\cal N}(0,\Sigma)$ and suppose that $\SigmaHat=\frac{1}{n}\sum_{i=1}^n X_i X_i\trsp$. When bootstrapping, we effectively observe $\SigmaHat_w=\frac{1}{n}\sum_{i=1}^n w_i X_i X_i\trsp$, where $w_i$ is the number of times index $i$ is picked in our resample. It is known that the spectral distribution of $\SigmaHat$, $L_n(\SigmaHat)$, has a non-random limit, ${\cal L}(\Sigma)$ satisfying the so-called Marchenko-Pastur equation (see \cite{mp67},\cite{wachter78}, \cite{silverstein95},\cite{bai99} and Theorem \ref{supp:thm:MP} in the Supplementary Material). It is also known that $\SigmaHat_w$ when sampling both $w_i$'s and $X_i$'s has a non-random limiting distribution (for instance when $\{w_i\}_{i=1}^n$ are independent of $\{X_i\}_{i=1}^n$), ${\cal L}(\Sigma,w)$, which can be implicitly characterized by a pair of equations for a variant of the Stieltjes transform of $\SigmaHat_w$ (see \cite{BoutetKhorunVasilchuk96}, \cite{nekCorrEllipD} and Theorem \ref{supp:thm:bulkEllipDist} in the Supplementary Material). This limit distribution is in general hard to characterize analytically, however it is not the same as that of $\SigmaHat$, i.e ${\cal L}(\Sigma,w)\neq {\cal L}(\Sigma)$. Furthermore, it is clear by a simple conditioning argument, that, if $L_n(\SigmaHat_w)$ is the spectral distribution of $\SigmaHat_w$, 
$$
L_n(\SigmaHat_w) |\{X_i\}_{i=1}^n \weakCV {\cal L}(\Sigma,w) a.s 
$$
where the $a.s$ statement refers to the design matrix. 

$L_n(\SigmaHat_w) |\{X_i\}_{i=1}^n$ is the bootstrapped spectral distribution of $\SigmaHat$. Its limit, ${\cal L}(\Sigma,w)$ is different from that of $\SigmaHat$, ${\cal L}(\Sigma)$. \emph{Hence the bootstrapped distribution of the eigenvalues of $\SigmaHat$ is in general biased.} 

In connection with the results of \cite{PaulSilversteinExactSeparation09}, these results also help explain the problem of bias found in the bootstrapped extreme eigenvalues, when for instance $\Sigma=\id_p$, i.e no eigenvalues are well-separated from the bulk. 

\paragraph{Extensions} Much work has been done in random matrix theory to extend the domain of validity of the Marchenko-Pastur equation, which holds beyond the case of Gaussian data. The bootstrap bias problem remains the same, because the limiting properties of the matrices of interest are unaffected by the move from Gaussian to these more general models (see \cite{nekCorrEllipD}).

\paragraph{Bootstrap and geometry: an explanation of the problems with the bootstrap} At a high-level, one can intuitively think that bootstrapping moves the data in the setting considered here from a Gaussian setting to an elliptical one. It is well-known (\cite{DiaconisFreedmanProjPursuit84}, \cite{HallMarronNeemanJRSSb05}, \cite{nekCorrEllipD}) that in moderate and high-dimension elliptical distributions have completely different geometric properties than Gaussian ones and that this geometric features impact strongly the statistical behavior of many estimators, and in particular that of eigenvalues of sample covariance matrices (\cite{nekCorrEllipD}). As such, it is not that surprising that the bootstrap does not perform well: from an eigenvalue point of view, it is as if the bootstrap changed ``the geometry of the dataset" and this geometry has an important impact on their behavior. 

\emph{Bootstrapping is therefore not a good way to mimick the data generating process in this context.}

\paragraph{Other designs} Lemma \ref{lemma:bootD} and Corollary \ref{coro:howBootDPerforms} apply without restrictions on the design, which is one of their main strength. With further specifications, for instance assuming that the data is generated from an elliptical distribution, we could characterize precisely various aspects of the bootstrapped spectral distribution in relation to the empirical spectral distribution, using for instance Theorem \ref{supp:thm:bulkEllipDist}. However, this would be quite model-specific, and since we are focused on understanding more generally the problems with the bootstrap, we do not think these simple computations would serve our purpose. So we do not detail them here.

\subsection{Extreme eigenvalues}\label{subsec:bootExtremeEigenvalues}

\begin{mydef}
Suppose $\thetaHat_n(X_1,\ldots,X_n)$ is a statistic, $\thetaHat_n^*$ is its bootstrapped version. Suppose that $\thetaHat_n \weakCv T$. We say that \textit{the bootstrap is consistent in probability} if
$$
\thetaHat_n^* \weakCv_w T \; \; \text{ in } P_{X_1,\ldots,X_n}-\text{probability}.
$$
Here $\weakCv_w$ refers to weak convergence of $\thetaHat_n^*$ under the bootstrap weight distribution; the convergence in probability is with respect to the joint distribution of $X_1,\ldots,X_n$, which we denote $P_{X_1,\ldots,X_n}$. For simplicity, we often abbreviate $P_{X_1,\ldots,X_n}$ by $P$. 

We say that \textit{the bootstrap is strongly consistent} if
$$
\thetaHat_n^* \weakCv_w T \; \; \text{ a.s } P_{X_1,\ldots,X_n}.
$$

\end{mydef}

\subsubsection{Approximation results}

Call $S_n$ the sample covariance matrix of the data. We use the block notation
$$
S_n=\begin{pmatrix}
T_n& U_n\\
U_n\trsp  & n^{-\alpha} V_n
\end{pmatrix}
$$
Call $\Sigma_n$ the true covariance. We use the block notation
$$
\Sigma_n=
\begin{pmatrix}
\Sigma_{11} & \Sigma_{12}\\
\Sigma_{21} & n^{-\alpha} \Sigma_{22}
\end{pmatrix}\;.
$$
$T_n$ and $\Sigma_{11}$ are both assumed to be $q\times q$.

\vspace{2ex}
\textbf{Assumptions}
\vspace{-1ex}
\begin{itemize}
	\setlength\itemsep{-1.5em}
\item \textbf{A1} We assume that $\opnorm{\Sigma_{22}}=\gO(1)$ and that $\lambda_{min} (\Sigma_{11})>\eta>0$. We assume that $\Sigma_{11}$ is $q\times q$ with $q$ fixed. \\
\item \textbf{A2} $X_i$'s are i.i.d with $X_i=r_i Z_i$, where $Z_i\sim{\cal N}(0,\Sigma_n)$, and $0<\delta_0< r_i<\gamma_0$ is a bounded random variable independent of $Z_i$, with $\Exp{r_i^2}=1$. \\
\item \textbf{A3} The bootstrap weights $w_i$ have infinitely many moments, $\norm{w}_\infty=\gO(\polyLog(n))$ and $\Exp{w_i}=1$. These weights can either be independent or Multinomial$(n,1/n)$.\\
\item \textbf{A4} $p/n$ remains bounded as $n$ and $p$ tend to infinity. 
\end{itemize}

We then have the following theorem. 

\begin{theorem}\label{thm:keyApproxResults}
Under our assumptions A1-A4, if $\alpha>1/2+\eps$ for any $\eps>0$,
\begin{equation}\label{eq:keyApproxNonBoot}
\sup_{1\leq i \leq q} \sqrt{n} (\lambda_i(S_n)-\lambda_i(T_n))=\lo_P(1)\;.
\end{equation}

Furthermore, if $w$ denotes the vector of weights used in the bootstrap and the corresponding bootstrapped matrices are $S_n^*$ and $T_n^*$, we have

\begin{equation}\label{eq:keyApproxBoot}
\sup_{1\leq i \leq q} \sqrt{n} (\lambda_i(S^*_n)-\lambda_i(T^*_n))=\lo_{P,w}(1)\;.
\end{equation}

\end{theorem}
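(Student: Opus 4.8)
The plan is to localize the top $q$ eigenvalues of $S_n$ via a Schur-complement computation, exploiting that the coupling block $U_n$ enters the relevant $q\times q$ characteristic equation only \emph{quadratically}; a naive Weyl bound applied to the splitting $S_n=\mathrm{diag}(T_n,0)+E_n$ (with $E_n$ collecting the blocks other than $T_n$) would give only $\gO_P(\opnorm{E_n})=\gO_P(n^{-\alpha/2})$, not $\lo_P(n^{-1/2})$ when $\alpha<1$. Fix an interval $I=[\eta/2,C]$, $C$ a large constant. Under \textbf{A2}, $T_n$ is the $q\times q$ sample covariance of $\gO_P(1)$-bounded rows, so the law of large numbers gives $T_n\to\Sigma_{11}$; with \textbf{A1} this puts all eigenvalues of $T_n$ in $I$ and keeps $\lambda_q(T_n)>\eta/2$, $\lambda_1(T_n)<C$ with probability $\to 1$. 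Classical operator-norm bounds for sample covariance matrices, valid under \textbf{A2} and $p/n$ bounded (\textbf{A4}), give $\opnorm{n^{-\alpha}V_n}=\gO_P(n^{-\alpha})=\lo_P(1)$, hence $\opnorm{E_n}=\gO_P(n^{-\alpha/2})$. Thus there is an event $\Omega_n$ with $P(\Omega_n)\to 1$ on which, for every $\lambda\in I$, the matrix $\lambda\id-n^{-\alpha}V_n$ is invertible with $\opnorm{(\lambda\id-n^{-\alpha}V_n)^{-1}}\le 4/\eta$, and on which exactly the top $q$ eigenvalues of $S_n$ lie in $I$ (by a crude Weyl bound, since $\lambda_{q+1}(S_n)\le\opnorm{E_n}\to0$). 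Centering by $\bar X$ — and, later, by $\bar X^*$ — contributes only rank-one corrections of norm $\lo_P(1)$ under \textbf{A2}, so it is harmless throughout.

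On $\Omega_n$, for $\lambda\in I$ the Schur complement gives that $\lambda$ is an eigenvalue of $S_n$ iff $\det\big(T_n-\lambda\id_q+M_n(\lambda)\big)=0$, where $M_n(\lambda)=U_n(\lambda\id-n^{-\alpha}V_n)^{-1}U_n\trsp$. Writing $U_n=\tfrac1n Y\trsp W$ in the obvious block notation, $\opnorm{U_n}\le\tfrac1n\opnorm{Y}\opnorm{W}$ with $\opnorm{Y}^2=n\lambda_{\max}(T_n)=\gO_P(n)$ and $\opnorm{W}^2=n\opnorm{n^{-\alpha}V_n}=\gO_P(n^{1-\alpha})$, so $\opnorm{U_n}=\gO_P(n^{-\alpha/2})$ and
\[
\delta_n:=\sup_{\lambda\in I}\opnorm{M_n(\lambda)}\ \le\ \frac{4}{\eta}\,\opnorm{U_n}^2=\gO_P(n^{-\alpha})=\lo_P(n^{-1/2});
\]
the same estimate bounds $\sup_{\lambda\in I}\opnorm{M_n'(\lambda)}=\gO_P(n^{-\alpha})$.

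It remains to compare the $q$ zeros in $I$ of $\lambda\mapsto\det(T_n-\lambda\id_q+M_n(\lambda))$ with the zeros of $\lambda\mapsto\det(T_n-\lambda\id_q)$, i.e. the $\lambda_i(T_n)$. I would use a homotopy. For $s\in[0,1]$, $A_s(\lambda)=T_n-\lambda\id_q+sM_n(\lambda)$ is a real-symmetric family, analytic in $\lambda\in I$, with $\partial_\lambda A_s(\lambda)=-\id_q+sM_n'(\lambda)\prec0$ on $\Omega_n$. Hence each of the $q$ eigenvalue branches of $A_s(\cdot)$ is strictly decreasing on $I$, positive at $\eta/2$ and negative at $C$, so it crosses $0$ exactly once, and the crossing points in decreasing order are precisely the zeros of $\det A_s$ in $I$ — the $\lambda_i(T_n)$ at $s=0$, the top $q$ eigenvalues of $S_n$ at $s=1$. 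Since $\opnorm{sM_n(\lambda)}\le\delta_n$ on $I$, every such zero lies in $B:=\bigcup_i[\lambda_i(T_n)-2\delta_n,\lambda_i(T_n)+2\delta_n]$ (outside $B$, $\id_q+A_0(\lambda)^{-1}sM_n(\lambda)$ is invertible, so $\det A_s(\lambda)\neq0$). Each crossing point depends continuously on $s$ (implicit function theorem, via $\partial_\lambda A_s\prec0$), stays in $B$, and equals $\lambda_i(T_n)$ at $s=0$; so at $s=1$ it lies in the connected component of $B$ containing $\lambda_i(T_n)$, whose diameter is at most $2q\delta_n$. Therefore $\sup_{1\le i\le q}|\lambda_i(S_n)-\lambda_i(T_n)|\le 2q\delta_n=\gO_P(n^{-\alpha})$, and multiplying by $\sqrt n$ yields $\gO_P(n^{1/2-\alpha})=\lo_P(1)$ since $\alpha>1/2$; this is \eqref{eq:keyApproxNonBoot}.

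For \eqref{eq:keyApproxBoot} I would run the identical argument on $S_n^*=\tfrac1n\sum_i w_iX_iX_i\trsp$, with blocks $T_n^*,U_n^*,n^{-\alpha}V_n^*$, conditionally on $X_1,\dots,X_n$; only the size estimates change, and only by factors polynomial in $\log n$ coming from $\norm{w}_\infty=\gO(\polyLog(n))$ in \textbf{A3}. Concretely, $T_n^*-T_n=\tfrac1n\sum_i(w_i-1)Y_iY_i\trsp$ has $q^2$ entries each of conditional standard deviation $\gO_P(\polyLog(n)\,n^{-1/2})$, so $\opnorm{T_n^*-T_n}=\lo_{P,w}(1)$ and $\lambda_q(T_n^*)>\eta/2$, $\lambda_1(T_n^*)<C$ with high probability; $\opnorm{n^{-\alpha}V_n^*}\le\tfrac1n\norm{w}_\infty\opnorm{W}^2=\gO_P(\polyLog(n)\,n^{-\alpha})=\lo_P(1)$ and $\opnorm{U_n^*}\le\tfrac1n\norm{w}_\infty\opnorm{Y}\opnorm{W}=\gO_P(\polyLog(n)\,n^{-\alpha/2})$, using the bounds on $\opnorm{Y},\opnorm{W}$ above, whence $\delta_n^*:=\sup_{\lambda\in I}\opnorm{M_n^*(\lambda)}\le\frac4\eta\opnorm{U_n^*}^2=\gO_P(\polyLog(n)\,n^{-\alpha})$. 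The monotone-branch argument is unchanged and gives $\sqrt n\,\sup_i|\lambda_i(S_n^*)-\lambda_i(T_n^*)|=\gO_{P,w}(\polyLog(n)\,n^{1/2-\alpha})$, which is $\lo_{P,w}(1)$ as soon as $\alpha>1/2$ (the hypothesis ``$\alpha>1/2+\eps$ for any $\eps>0$'' being exactly this), the strict power saving $n^{1/2-\alpha}$ swallowing the $\polyLog$ factors. The main obstacle is the eigenvalue-matching step: Weyl's inequality alone only places each $\lambda_i(S_n)$ near the \emph{set} $\{\lambda_j(T_n)\}_j$, and pairing it with the correct index when several $\lambda_j(T_n)$ are nearly coincident is what forces the monotone-branch/homotopy device, and with it control of $\partial_\lambda M_n$ and not just $M_n$; everything else is bookkeeping once the quadratic (rather than linear) dependence on $U_n$ is recognized.
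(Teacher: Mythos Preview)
Your argument is correct, but it takes a considerably more laborious route than the paper's. The paper invokes Wielandt's theorem (Eaton--Tyler, p.~261), which in one stroke gives the \emph{indexed} two-sided inequality
\[
0\;\le\;\lambda_i(S_n)-\lambda_i(T_n)\;\le\;\frac{\lambda_{\max}(U_nU_n\trsp)}{\lambda_q(T_n)-n^{-\alpha}\lambda_{\max}(V_n)}\qquad(1\le i\le q),
\]
valid whenever $\lambda_q(T_n)>n^{-\alpha}\lambda_{\max}(V_n)$. The index pairing you flag as ``the main obstacle'' is thus handled for free; no homotopy, monotone branches, or control of $M_n'(\lambda)$ is needed. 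The paper then bounds $\lambda_{\max}(U_nU_n\trsp)$ by observing that positive semidefiniteness of $S_n$ gives the Schur-complement inequality $n^{-\alpha}V_n\succeq U_n\trsp T_n^{-1}U_n\succeq U_n\trsp U_n/\lambda_{\max}(T_n)$, whence $\lambda_{\max}(U_nU_n\trsp)\le n^{-\alpha}\lambda_{\max}(T_n)\opnorm{V_n}$. This yields $\gO_P(n^{-\alpha})$ and the result; the bootstrap case is identical with a $\polyLog(n)$ factor from $\norm{w}_\infty$ absorbed by $n^{1/2-\alpha}$.

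Your Schur-complement-of-$(S_n-\lambda\id)$ computation and the bound $\opnorm{U_n}\le n^{-1}\opnorm{Y}\opnorm{W}$ recover the same $\gO_P(n^{-\alpha})$, and your homotopy argument does pin down the indices, so nothing is wrong. What you gain is self-containment (no black-box eigenvalue inequality); what the paper gains is brevity and a sharper one-sided statement $\lambda_i(S_n)\ge\lambda_i(T_n)$. A minor quibble: the diameter of a connected component of your set $B$ is at most $4q\delta_n$, not $2q\delta_n$, but this is immaterial.
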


\subsubsection{Consequences for the bootstrap}
Recall some of the key results of \cite{BeranSrivastava85,BeranSrivastava87Correction} and \cite{EatonTyler91}, p.269: in the low-dimensional case, where $p$ is fixed and $n\tendsto \infty$, if all eigenvalues of $\Sigma$ are simple, the bootstrap distribution of the eigenvalues of $S_n$ is strongly consistent. On the other hand, it is known from these papers that the bootstrap distribution of the eigenvalues of $S_n$ is inconsistent when the eigenvalues of $\Sigma$ have multiplicities higher than 1. 

In light of these results, we have the following theorem

\begin{theorem}\label{thm:consistencyBootSimpleCase}
Suppose the eigenvalues of $\Sigma_{11}$ are simple and the assumptions A1-A4 of Theorem \ref{thm:keyApproxResults} hold. Then the bootstrap distribution of the $q$ largest eigenvalues of $S_n$ is consistent in probability. 
\end{theorem}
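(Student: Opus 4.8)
\emph{Proof strategy.} The plan is to reduce the $p$-dimensional problem to a \emph{fixed}, $q$-dimensional one by means of Theorem~\ref{thm:keyApproxResults}, and then to invoke the classical low-dimensional theory of bootstrapping eigenvalues of a sample covariance matrix with simple population spectrum (\cite{BeranSrivastava85,BeranSrivastava87Correction,EatonTyler91}). Throughout, let $\thetaHat_n$ denote the vector $\sqrt{n}\big(\lambda_i(S_n)-\lambda_i(\Sigma_{11})\big)_{1\le i\le q}$ and $\thetaHat_n^*$ its bootstrapped analogue $\sqrt{n}\big(\lambda_i(S_n^*)-\lambda_i(S_n)\big)_{1\le i\le q}$; consistency in probability in the sense of the definition above means that $\thetaHat_n^*\weakCv_w T$ in $P$ and $\thetaHat_n\weakCv T$ for a common limit $T$.

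First I would use Theorem~\ref{thm:keyApproxResults}: since $\alpha>1/2+\eps$, it gives, uniformly over $1\le i\le q$,
$$
\sqrt{n}\big(\lambda_i(S_n)-\lambda_i(T_n)\big)=\lo_P(1)\quad\text{and}\quad
\sqrt{n}\big(\lambda_i(S_n^*)-\lambda_i(T_n^*)\big)=\lo_{P,w}(1)\;.
$$
Because $\lambda_{\min}(\Sigma_{11})>\eta>0$ while the bottom-right block $n^{-\alpha}V_n$ has operator norm $\gO(n^{-\alpha})=\lo_P(1)$ (its population version is $n^{-\alpha}\Sigma_{22}$ with $\opnorm{\Sigma_{22}}=\gO(1)$ and $p/n$ bounded, so Bai--Yin type bounds apply), the $q$ largest eigenvalues of $S_n$ are, with $P$-probability tending to one, precisely the quantities tracked by the eigenvalues of $T_n$; the same holds for $S_n^*$ and $T_n^*$ under the bootstrap. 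Hence it suffices to establish consistency in probability of the bootstrap for the ordered eigenvalues of the \emph{$q\times q$} matrix $T_n$, i.e., $\sqrt{n}\big(\lambda_i(T_n^*)-\lambda_i(T_n)\big)_{i\le q}\weakCv_w T$ in $P$ and $\sqrt{n}\big(\lambda_i(T_n)-\lambda_i(\Sigma_{11})\big)_{i\le q}\weakCv T$; then adding back the $\lo_P(1)$, $\lo_{P,w}(1)$ errors and applying Slutsky gives the claim for $S_n$.

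Next I would analyze that low-dimensional problem directly. Writing $Y_i\triangleq(X_{i,1},\dots,X_{i,q})$, assumption A2 gives $Y_i=r_i(Z_i)_{[1:q]}$ with $(Z_i)_{[1:q]}\sim{\cal N}(0,\Sigma_{11})$ and $0<\delta_0<r_i<\gamma_0$, so the $Y_i$ are i.i.d., $\Exp{Y_iY_i\trsp}=\Sigma_{11}$, all moments of $\norm{Y_i}$ are finite, $T_n=\frac1n\sum_i w_iY_iY_i\trsp$ (with $w_i\equiv 1$ in the non-bootstrapped case), and $q$ is fixed. This is exactly the classical finite-dimensional setting. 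The multivariate CLT gives $\sqrt{n}\big(\frac1n\sum_iY_iY_i\trsp-\Sigma_{11}\big)\weakCv G$, a Gaussian element of the space of symmetric $q\times q$ matrices; since the eigenvalues of $\Sigma_{11}$ are simple the ordered-eigenvalue map is differentiable there (\cite{KatoPerturbTheory}), which by the delta method identifies $T$. For the bootstrap side, the multinomial case is precisely the consistency theorem of \cite{BeranSrivastava85,BeranSrivastava87Correction,EatonTyler91} (fixed dimension, finite fourth moment, simple population spectrum; we need only consistency \emph{in probability}); for general independent weights with $\Exp{w_i}=1$ and variance equal to $1$ (the natural normalization, satisfied up to $1+o(1)$ by the multinomial weights), the same conclusion follows from the multiplier-bootstrap CLT for $\frac1n\sum_i w_iY_iY_i\trsp$ together with the same differentiability. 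Either way $\sqrt{n}\big(\lambda_i(T_n^*)-\lambda_i(T_n)\big)_{i\le q}\weakCv_w T$ in $P$-probability with the same $T$, and combining with the first step (identical centering $\lambda_i(\Sigma_{11})$) and Slutsky finishes the argument.

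The step I expect to be the main obstacle is the last one: checking that the hypotheses of the classical results are genuinely in force. The data $Y_i$ is elliptical, not Gaussian, so one must quote a version of Beran--Srivastava / Eaton--Tyler assuming only finite fourth (or higher) moments — harmless here since $r_i$ is bounded — and one must decide how to handle the non-multinomial weights permitted by A3; the cleanest route is to require $\mathrm{Var}(w_i)=1$ (or to phrase that branch through an explicit multiplier-bootstrap statement) rather than re-deriving the low-dimensional CLT. By contrast the reduction in the first step is routine once Theorem~\ref{thm:keyApproxResults} is available, and the finite-dimensional CLT and delta-method computations are standard.
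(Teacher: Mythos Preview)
Your proposal is correct and follows essentially the same route as the paper: reduce to the fixed-dimensional block $T_n$ via Theorem~\ref{thm:keyApproxResults}, then invoke the classical bootstrap consistency results of Beran--Srivastava and Eaton--Tyler for sample covariance eigenvalues with simple population spectrum. The paper adds one small step you omit --- applying the same Wielandt-type bound to $\Sigma_n$ itself to show $\sqrt{n}\,|\lambda_i(\Sigma_n)-\lambda_i(\Sigma_{11})|=o(1)$, so that centering at $\Sigma_n$ or at $\Sigma_{11}$ is immaterial --- but otherwise the arguments coincide.
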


\subsubsection{Discussion of assumptions and remarks}

\paragraph{$\bm{\alpha>1/2+\eps}$} 
This assumption is not terribly restrictive: in fact under our assumptions, if $\Sigma_{22}=\id_{p-q}$, the fraction of variance explained by the top $q$ eigenvalues is, if $C=\trace{\Sigma_{11}}/q$ 
$$
\frac{\trace{\Sigma_{11}}}{\trace{\Sigma_{n}}}=\frac{qC}{qC+(p-q) n^{-\alpha}}\;.
$$
So if $1/2+\eps<\alpha<1$, in our asymptotics where $p/n$ remains bounded, this fraction of variance is asymptotically 0, since $pn^{-\alpha}\tendsto \infty$ and $qC$ is bounded. On the other hand, if $\alpha>1$, the fraction of variance explained by the top $q$ eigenvalues is approximately 1, which is the standard setting for the use of PCA. 

If $\alpha=1$, the fraction of variance varies between 0 and 1, depending on $C$. Of course, a similar analysis is possible and actually easy to carry out if $\Sigma_{22}$ is not a multiple of the identity. We leave those details to the interested reader. 
\vspace{-2ex}
\paragraph{Strong consistency of the bootstrap} We have chosen to present our results using convergence in probability statements, as we think they better reflect the questions encountered in practice. However, a quick scan through the proofs show that all the approximation results could be extended to a.s convergence: the random matrix results we rely on hold a.s, and the low-dimensional bootstrap results we use also hold a.s in low-dimension. 
\vspace{-2ex}
\paragraph{Distributional assumptions on $X_i$'s} The assumption that $Z_i\sim {\cal N}(0,\Sigma)$ is not critical: most of our arguments could be adapted to handle the case where $Z_i =\Sigma^{1/2} Y_i$, where $Y_i$ has independent, mean 0, variance 1 entries, with sufficiently many moments. This simply requires appealing to slightly different random matrix results that exist in the literature. Also, the first $q$ coordinates of $X_i$ could have a much more general distribution than the elliptical distributions we consider here, as our proof simply requires control of $\opnorm{V_n}$, which is where we appeal to random matrix theory. Doing this entails minor technical modifications to the proof, but since it might reduce clarity, we leave them to the interested reader.  
\vspace{-2ex}
\paragraph{Assumptions on $\Sigma$} The block representation assumptions are made for analytic convenience and can be easily dispensed of: eigenvalues are of course unaffected by rotations, so we simply chose to write $\Sigma$ in a basis that gave us this nice block format. As long as the ratio between the $q$-th eigenvalue of $\Sigma$ and $q+1$st is of order $n^{\alpha}$, our results hold. Furthermore, our results also handle a situation similar to ours, where for instance the top $q$ largest eigenvalues of $\Sigma$ grow like $n^{\alpha}$ and the $q+1$st is of order 1, by simple rescaling, using for instance the fact that $\trace{T_n}/\trace{\Sigma_{11}}\tendsto 1$ in probability. 
\vspace{-2ex}
\paragraph{The m-out-of-n bootstrap} As noted in \cite{BeranSrivastava87Correction,EatonTyler91}, subsampling approaches fix the problem of bootstrap inconsistency in the setting where $\Sigma_{11}$ has eigenvalues of multiplicity higher than 1. Our approximation results for the pair $(S_n^*,T_n^*)$ can be extended to subsampling approaches, and hence our results could be extended to cover these ideas. However, since this question is a bit distant from our main motivations, we do not treat it in detail.

\subsubsection{Eigenvalues that are not well separated}
Our results on the bootstrapped empirical spectral distribution apply here and in particular suggests that problems are likely to arise in practice. However, interesting questions concerning the fluctuation behavior of bootstrapped eigenvalues when the largest population eigenvalues are not well-separated from the bulk are very natural. For instance, is it the case that the bootstrap distribution of the largest eigenvalue of a sample covariance matrix is Tracy-Widom  when that is the case for the sampling distribution of the top eigenvalue? 

Though mathematically interesting, this question  does not seem so statistically important in light of the simulation study in Section \ref{sec:SimulationStudy}. For instance Figure \ref{fig:bootBias} indicates that the bootstrap estimate of bias of $\lambda_1(\SigmaHat)$ is itself very biased. Figure \ref{fig:bootDensityNull} and in particular Subfigure \ref{subfig:bootBias:EllipExp} suggests that the bootstrap distribution of our statistics is a poor approximation of the sampling distribution. It also suggests that the characteristics of the bootstrap distribution may depend strongly on the property of the design matrix, rendering their characterization mathematically intractable outside of simple situations of limited practical statistical interest. For this reason, we postpone this mathematically interesting and delicate question to possible future work.

\section{Conclusion}
We have investigated in this paper the properties of the bootstrap for spectral analysis of high-dimensional covariance matrices. We have shown numerically and through theoretical results that in general, the bootstrap does not provide accurate inferential results. The one exception of practical interest is the situation where we are interested in inference concerning only a few large eigenvalues of $\Sigma$, which are well-separated from the bulk of them and of multiplicity 1. In this case, the problem is effectively low dimensional (Theorem \ref{thm:keyApproxResults}), and the bootstrap works, because it is known to work in low-dimension when $\Sigma$ has eigenvalues of multiplicity 1. 

This confirms the findings of \cite{NEKEliBootRegression15}: in high-dimension, bootstrapping does not mimick the data generating process. Hence, standard bootstraps appear to work only for problems that are effectively low-dimensional.

\bigskip
\begin{center}
{\large\bf SUPPLEMENTARY MATERIAL}
\end{center}

\begin{description}

\item[Supplementary Text] More detailed description of simulations and proofs of the theorems stated in main text  (see below; see also authors' website for different formatting)

\item[Supplementary Figures] Supplementary Figures referenced in the main text (pdf; see also authors' website for different formatting)

\item[Supplementary Tables] Supplementary Tables referenced in the main text (pdf; see also authors' website for different formatting)

\end{description}

\clearpage

\appendix
\begin{center}
\textbf{\textsc{APPENDIX}}
\end{center}
\vspace{1cm}
\renewcommand{\thesection}{S\arabic{section}}
\renewcommand{\thelemma}{\arabic{section}-\arabic{lemma}}
\renewcommand{\thesubsection}{S\arabic{section}.\arabic{subsection}}
\renewcommand{\thesubsubsection}{S\arabic{section}.\arabic{subsection}.\arabic{subsubsection}}
\setcounter{equation}{0}  
\setcounter{lemma}{0}
\setcounter{corollary}{0}
\setcounter{section}{0}

\renewcommand{\thetable}{S\arabic{table}}   
\renewcommand{\thefigure}{S\arabic{figure}}
\setcounter{table}{0}
\setcounter{figure}{0}

\section{Description of Simulations and other Numerics} \label{Supp:Simulation}
For each of 1,000 simulations, we generate a $n\times p$ data matrix $X$. For each $X$, we calculate either the top eigenvalue (or the gap statistic) from the sample covariance matrix. Specifically, we perform the SVD of $X$ using the ARPACK numeric routines (implemented in the package \texttt{rARPACK} in R) to find the top five singular values of $X$ and get estimates $\hat{\lambda}_i$ by multiplying the singular values of $X$ by $1/n$. 

For each simulation, we perform bootstrap resampling of the $n$ rows of $X$ to get a bootstrap resample $X^{*b}$ and $\lambda_i^{*b}$; we repeat the bootstrap resampling $B=999$ times for each simulation, resulting in 999 values of $\lambda_i^{*b}$ for each simulation. 

In all simulations, we only consider the case where only $\lambda_1$ is allowed to differ from the rest of the eigenvalues. Therefore for all eigen values except the first, $\lambda_i=1$. For $\lambda_1$, we consider $\lambda_1=1+ c \sqrt{\frac{p}{n}}$ for the following $c$: 0, 0.9, 1.1, 1.5,   2.0,    3.0,    6.0,   11.0,   50.0,  100.0, and 1000.0 (not all of these values are shown in figures or tables accompanying this manuscript). The results shown in this manuscript set $n=1,000$, though $n=500$ was also simulated.

\paragraph{Generating $X$} We generate $X$ as $X=Z\Sigma$ where $\Sigma=V\Lambda V'$, and $\Lambda$ is a diagonal matrix of eigenvalues $\lambda_1\geq \ldots \geq \lambda_p$. We assume that there is no structure in the true eigenvectors, and generate $V$ as the right eigenvectors of the SVD of a nxp matrix with entries i.i.d $N(0,1)$.

$Z=DZ_0$ is a nxp matrix, with $Z_0$ having entries i.i.d. $N(0,1)$ and $D$ is a diagonal matrix $D$. If $D$ is the identity matrix, $Z$ will be i.i.d. normally distributed; otherwise $Z$ will be i.i.d with an elliptical distribution. We simulated under the following distributions for the diagonal entries of $D$ to create elliptical distribution for $Z$,
\begin{itemize}
\item $D_{ii}\sim N(0,1)$
 \item $D_{ii}\sim Unif(1/2, \frac{\sqrt{3}\sqrt{4-1/4}}{2}-\frac{1}{4})$
  \item $D_{ii}\sim Exp(\sqrt{2})$
\end{itemize}
In this manuscript, we concentrated only on $D_{ii}\sim Exp(\sqrt{2})$, the ``Elliptical Exponential'' distribution. This was because its behavior resulted in an elliptical distribution for $Z$ with properties the most different from when $Z$ is normal. The remaining choices for the distribution of $D$ result in elliptical distributions between that of the Elliptical Exponential and the Normal. The results from when $D_{ii}\sim Unif$ were generally fairly similar to when $Z$ is normal and results from when $D_{ii}\sim N(0,1)$ were more different, though not as extreme as the exponential weights.

\section{Review of existing results in random matrix theory}

\paragraph{Notations} We call $\lambda_1(M)$ or $\lambda_{max}(M)$ the largest eigenvalue of a symmetric matrix $M$. We call $\lambda_1(M)\geq \lambda_2(M)\geq \lambda_3(M)\geq \lambda_p(M)$ the ordered eigenvalues of the $p\times p$ matrix $M$. If $Z\sim {\cal N}_{\mathbb{C}}(0,\Sigma)$, $Z$ has a complex normal distribution, i.e $Z=\frac{1}{\sqrt{2}} (Z_1+i Z_2)$ where $Z_1$ and $Z_2$ are independent with $Z_i\sim {\cal N}(0,\Sigma)$. We call $\mathbb{C}^+$ the set of complex numbers with positive imaginary part.

\subsection{Bulk results}\label{supp:subsecRemindersBulkResults}
Bulk results are concerned with the spectral distribution of $\SigmaHat$, i.e the (random) probability measure with distribution
$$
dF_p(x)=\frac{1}{p}\sum_{i=1}^n \delta_{\lambda_i(\SigmaHat)}(x)\;.
$$

An efficient way to characterize the limiting behavior of $F_p$ is through its Stieltjes transform:
$$
\text{ for } z=u+iv \text{ with } v>0, \; \; m_p(z)=\frac{1}{p}\trace{(\SigmaHat-z\id_p)^{-1}}\;.
$$
Note that $m_p(z): \mathbb{C}^+\mapsto \mathbb{C}^+$. We have of course 
$$
m_{p}(z)=\int \frac{dF_p(\lambda)}{\lambda-z} =\frac{1}{p}\sum_{i=1}^p \frac{1}{\lambda_i(\SigmaHat)-z}\;.
$$

An important result in this area is the so-called Marchenko-Pastur equation \cite{mp67,silverstein95}, which states the following
\begin{theorem}\label{supp:thm:MP}
Suppose $X_i\iid \Sigma^{1/2}Z_i$, where $Z_i$ has $i.i.d$ entries, with mean 0, variance 1 and 4 moments. Suppose that the spectral distribution of $\Sigma$ has a limit $H$ in the sense of weak convergence of probability measures and $p/n\tendsto r\in (0,1)$. Then 
$$
F_p\weakCv F \text{a.s}\;,
$$
where $F$ is a deterministic probability distribution. 

Call $v_{p}(z)=(1-p/n)\frac{-1}{z}+\frac{p}{n} m_{p}(z)$. Then $v_p(z)\tendsto v_F(z)$ a.s. The Stieltjes transform of $F$ can be characterized through the equation 
$$
-\frac{1}{v_{F}(z)}=z-r\int \frac{\lambda
dH_{\infty}(\lambda)}{1+\lambda v_{F}(z)} \;, \forall z \in
\mathbb{C}^+
$$
\end{theorem}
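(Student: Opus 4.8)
The plan is to follow the resolvent / Stieltjes-transform route of Marchenko--Pastur and Silverstein (\cite{mp67,silverstein95,bai99}): fix $z=u+\mathrm{i}\,\Im z\in\mathbb{C}^+$, prove everything at that fixed $z$, and only at the end invoke analyticity (a normal-families / Vitali argument) to obtain statements that hold simultaneously for all $z$ and to pass to weak convergence. Two harmless reductions come first. Since eigenvalues are rotation-invariant I may take $\Sigma=\mathrm{diag}(\tau_1,\dots,\tau_p)$ (replacing $Z$ by $ZV$ keeps the entries uncorrelated with unit variance and bounded fourth moments, which is all that is used below); and by the standard truncation-and-recentering argument --- truncating the entries of $Z$ at level $\varepsilon_n\sqrt n$ --- I may assume the entries are bounded, since the change this induces in $F_p$ vanishes by the rank and norm perturbation inequalities for spectral distributions. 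It is convenient to work with the $n\times n$ companion matrix $\underline{B}_n=\tfrac1n XX^\top$ (same nonzero eigenvalues as $\SigmaHat=\tfrac1n X^\top X$), whose Stieltjes transform is exactly $v_p(z)=-(1-p/n)/z+(p/n)\,m_p(z)$; I will show $v_p(z)\to v_F(z)$ and transfer back to $m_p$ through this identity, which simultaneously delivers the claimed equation and the convergence $F_p\weakCv F$.

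\emph{Step 1 (self-averaging).} I would show $v_p(z)-\mathbf{E}\,v_p(z)\to 0$ almost surely. Revealing the rows $X_1,\dots,X_n$ one at a time and writing the telescoping sum $m_p(z)-\mathbf{E}\,m_p(z)=\sum_{k=1}^n(\mathbf{E}_k-\mathbf{E}_{k-1})\big(m_p(z)-m_p^{(k)}(z)\big)$, where $m_p^{(k)}$ drops row $k$, the rank-one resolvent bound $|m_p(z)-m_p^{(k)}(z)|\le 1/(p\,\Im z)$ exhibits this as a sum of $n$ bounded martingale differences. Burkholder's inequality then gives $\mathbf{E}|m_p(z)-\mathbf{E}\,m_p(z)|^4\le C\,n^2/(p^4(\Im z)^4)=O(n^{-2})$ since $p/n\to r>0$, and Markov plus Borel--Cantelli give the a.s.\ statement for each fixed $z$, hence simultaneously on a countable dense subset of $\mathbb{C}^+$.

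\emph{Step 2 (self-consistent equation).} Write $\underline{B}_n=\tfrac1n\sum_{k=1}^p\tau_k\zeta_k\zeta_k^\top$ with $\zeta_k\in\mathbb{R}^n$ the columns of $Z$, and take traces in $I_n=(\underline{B}_n-zI_n)(\underline{B}_n-zI_n)^{-1}$ to obtain $n=\sum_{k=1}^p\tfrac{\tau_k}{n}\zeta_k^\top(\underline{B}_n-zI_n)^{-1}\zeta_k-z\,n\,v_p(z)$. Sherman--Morrison turns the $k$-th summand into $\tau_k a_k/(1+\tau_k a_k)$ with $a_k=\tfrac1n\zeta_k^\top(\underline{B}_n^{(k)}-zI_n)^{-1}\zeta_k$, where $\underline{B}_n^{(k)}$ deletes the $k$-th term and is independent of $\zeta_k$. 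The quadratic-form (trace) lemma --- the one place where the fourth-moment hypothesis enters, via a $\mathbf{E}|\cdot|^2$ bound so that the $p$ individual errors still sum to $o(1)$ --- together with the rank-one resolvent bound gives $a_k=v_p(z)+o(1)$ uniformly in $k$, while $|1+\tau_k v_p(z)|$ is bounded below using $\Im v_p(z)>0$. Dividing by $n$ and then by $v_p(z)$ yields $-1/v_p(z)=z-(p/n)\int \tau\,dH_p(\tau)/(1+\tau v_p(z))+o(1)$, where $H_p$ is the spectral distribution of $\Sigma$. Along any subsequence on which the bounded quantity $v_p(z)$ converges, using $p/n\to r$ and $H_p\weakCv H$, the limit $v(z)\in\mathbb{C}^+$ satisfies $-1/v(z)=z-r\int\lambda\,dH(\lambda)/(1+\lambda v(z))$.

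\emph{Step 3 (uniqueness, identification, conclusion).} I would invoke (or reprove, by subtracting two putative solutions, factoring out their difference, and bounding the remaining factor in modulus by a quantity strictly less than $1$) Silverstein's lemma that for each $z\in\mathbb{C}^+$ this equation has a unique solution in the relevant class of Stieltjes transforms; this forces the subsequential limit to be the same $v_F(z)$ for every subsequence, so $v_p(z)\to v_F(z)$ in probability and, combining with Step 1, almost surely --- first for each $z$ in the countable dense set, then locally uniformly on $\mathbb{C}^+$ by normal families. Then $m_p(z)=\big(v_p(z)+(1-p/n)/z\big)/(p/n)\to m_F(z):=\big(v_F(z)+(1-r)/z\big)/r$ a.s.; one checks $-\mathrm{i}y\,m_F(\mathrm{i}y)\to1$ as $y\to\infty$ (a consequence of the limiting equation), so $m_F$ is the Stieltjes transform of a probability measure $F$, and the Stieltjes continuity theorem (\cite{bai99}) upgrades a.s.\ pointwise convergence of $m_p$ to $F_p\weakCv F$ a.s. The main obstacle is Step 2: making the chain of approximations rigorous --- the fourth-moment quadratic-form concentration, the rank-one resolvent swaps, and in particular the uniform-in-$k$ control of the denominators and the summation of $p$ small errors --- is where essentially all of the work sits, with the truncation reduction and the uniqueness lemma as the other nontrivial ingredients.
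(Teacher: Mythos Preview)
The paper does not prove this theorem. Theorem~\ref{supp:thm:MP} appears in the section ``Review of existing results in random matrix theory'' and is stated as a known result with citations to \cite{mp67,silverstein95}; no proof is given or attempted in the paper. Your sketch is essentially the standard Stieltjes-transform/resolvent argument from those references (and from \cite{bai99}), so in that sense it matches what the paper implicitly relies on, but there is no ``paper's own proof'' to compare against.

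One technical remark on your sketch: the reduction to diagonal $\Sigma$ by replacing $Z$ with $ZV$ is not free in the non-Gaussian case. After an orthogonal rotation the entries of $ZV$ are uncorrelated with mean~0 and variance~1, but they are no longer independent, and the quadratic-form (trace) lemma you invoke in Step~2 is typically stated for vectors with independent components. The proofs in \cite{silverstein95,bai99} avoid this by working directly with general $\Sigma$ rather than diagonalizing; the Sherman--Morrison expansion is done on the columns of $X=Z\Sigma^{1/2}$, and $\Sigma$ enters the self-consistent equation through $\tfrac1p\trace{\Sigma(\cdot)}$ terms rather than as individual diagonal entries $\tau_k$. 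This is a fixable point, not a fatal one, but as written your Step~2 would need either a more general trace lemma or the direct (non-diagonalized) route.
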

At an intuitive level, this result means that the histogram of eigenvalues of $\SigmaHat$ is asymptotically non-random. Its shape is characterized by $F$ and its Stieltjes transform, $m$.

A generalization of this result to the case of elliptical predictors was obtained in \cite{nekCorrEllipD}. For the purpose of the current paper, the main result of \cite{nekCorrEllipD} states the following:
\begin{theorem}\label{supp:thm:bulkEllipDist}
Suppose $X_i\iid \Sigma^{1/2}Z_i$, where $Z_i$ has $i.i.d$ entries, with mean 0, variance 1 and 4 moments. Suppose that the spectral distribution of $\Sigma$ has a limit $H$, that $H$ has one moment and $p/n\tendsto r \in (0,\infty)$. Consider the matrix 
$$
B_n=\frac{1}{n}\sum_{i=1}^n w_i X_i X_i\trsp\;.
$$
Assume that the weights $\{w_i\}_{i=1}^n$ are independent of $X_i$'s. Call $\nu_n$ the empirical distribution of the weights $w_i$'s and suppose that $\nu_n\WeakCv \nu$. 
	
Then $B_n\WeakCv B$ a.s, where $B$ is a deterministic probability distribution; furthermore the Stieltjes transform of $B$, $m$, satisfies the system 
\begin{align*}
m(z)&=\int \frac{dH(\tau)}{\tau \int \frac{ w^2}{1+r
w^2 \gamma(z)}d\nu(w)-z} \; \text{ and }\\
\gamma(z)&=\int \frac{\tau dH(\tau)}{\tau \int
\frac{ w^2}{1+r
w^2 \gamma(z)}d\nu(w)-z} \;.
\end{align*}
where $\gamma(z)$ is the only solution of this equation mapping $\mathbb{C}^+$ into $\mathbb{C}^+$. 
\end{theorem}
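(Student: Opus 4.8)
The statement is the generalized Marchenko--Pastur law for the weighted (``elliptical'') Gram matrix $B_n=\frac1n\sum_i w_i X_iX_i\trsp=\frac1n\Sigma^{1/2}Z\trsp W Z\Sigma^{1/2}$, with $W=\mathrm{diag}(w_i)$ and $Z$ the $n\times p$ matrix of the $Z_i$; the plan is to follow the classical Stieltjes-transform route, using the concentration machinery already employed for Lemma \ref{lemma:bootD}. Fix $z\in\mathbb{C}^+$, write $R=R(z)=(B_n-z\id_p)^{-1}$ and $m_n(z)=\frac1p\trace{R}$. By the Stieltjes continuity theorem (\cite{bai99}) together with a countable-dense-set and normal-families argument, it is enough to prove that $m_n(z)\to m(z)$ almost surely (jointly in $Z$ and $w$), for every such $z$, where $m(z)$ solves the displayed system; producing this almost-sure limit is exactly what upgrades the in-probability weak convergence to the a.s. statement in the conclusion.

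First I would establish concentration, $m_n(z)-\Exp{m_n(z)}\to0$ a.s. The rows $(w_i,Z_i)$ are independent, and replacing one of them perturbs $B_n$ by a matrix of rank at most two, hence changes $m_n(z)$ by $O(1/(pv))$ (resolvent rank bound); McDiarmid/Azuma then yields $P(|m_n(z)-\Exp{m_n(z)}|>t)\le 2\exp(-c\,p^2v^2t^2/n)$, the same bound as in Lemma \ref{lemma:bootD}. Since $p/n\to r>0$ the exponent is of order $p\,t^2$, summable in $n$, so Borel--Cantelli gives a.s. convergence to the (deterministic limit of the) expectation. If the weights are unbounded I would first truncate them at a slowly growing level, controlling the error via $\nu_n\WeakCv\nu$ and a moment bound; alternatively, when $Z$ is Gaussian one may use Gaussian concentration of the map $Z\mapsto m_n(z)$, which is Lipschitz on the high-probability event $\{\opnorm{Z}\le C\sqrt n\}$ --- this is El Karoui's route in \cite{nekCorrEllipD}.

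Next I would identify $\lim\Exp{m_n(z)}$ by a leave-one-out analysis. From $zR=-\id_p+\frac1n\sum_i w_i X_iX_i\trsp R$ and the Sherman--Morrison identity $X_i\trsp RX_i=X_i\trsp R_iX_i/(1+\frac1n w_iX_i\trsp R_iX_i)$, with $R_i$ the resolvent of $B_n-\frac1n w_iX_iX_i\trsp$, one obtains after taking $\frac1p\trace{}$
$$z\,m_n(z)=-1+\frac1p\sum_{i=1}^n\frac{\frac1n w_iX_i\trsp R_iX_i}{1+\frac1n w_iX_i\trsp R_iX_i}\;.$$
Since $X_i=\Sigma^{1/2}Z_i$ with $Z_i$ independent of $R_i$, mean $0$, unit variance and four moments, the quadratic form concentrates: $\frac1n X_i\trsp R_iX_i=\frac1n\trace{\Sigma R_i}+o(1)=\frac1n\trace{\Sigma R}+o(1)$ uniformly in $i$, where boundedness of $\frac1n\trace{\Sigma R}$ uses the one-moment assumption on $H$ and $\opnorm{R}\le1/v$. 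Writing $\gamma_n(z)=\frac1p\trace{\Sigma R}$, this expresses $m_n(z)$ asymptotically through $\gamma_n(z)$ and $\nu_n$. A parallel leave-one-out computation --- done either in the eigenbasis of $\Sigma$, where the rows of $Z$ remain i.i.d. standard vectors, or via the $p\times p\leftrightarrow n\times n$ companion identity followed by a second Sherman--Morrison step --- produces a closed fixed-point relation for $\gamma_n(z)$ in terms of the spectral measure of $\Sigma$ and of the same weight functional. Passing to the limit ($p/n\to r$, $\nu_n\WeakCv\nu$, and the spectral measure of $\Sigma$ converging to $H$, with uniform integrability from the moment hypotheses) sends $(\Exp{m_n(z)},\Exp{\gamma_n(z)})$ to a solution $(m(z),\gamma(z))$ of the displayed system, in which $\beta(z)=\int w^2\bigl(1+rw^2\gamma(z)\bigr)^{-1}d\nu(w)$ is precisely the effective weight term coming from the Sherman--Morrison denominators.

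The last step is existence and uniqueness: the $\gamma$-equation $\gamma=\int\tau\,dH(\tau)\bigl(\tau\beta(\gamma)-z\bigr)^{-1}$ with $\beta(\gamma)=\int w^2\bigl(1+rw^2\gamma\bigr)^{-1}d\nu(w)$ has a unique solution mapping $\mathbb{C}^+$ into $\mathbb{C}^+$, shown by checking that the right-hand side maps a suitable subdomain of $\mathbb{C}^+$ into itself and is a strict contraction for the pseudo-hyperbolic metric, using $\Im z>0$ as in the uniqueness part of Marchenko--Pastur (Theorem \ref{supp:thm:MP}, \cite{silverstein95}); this determines $\gamma(z)$, hence $m(z)=\int dH(\tau)(\tau\beta(z)-z)^{-1}$, and the algebraic identity $\int(z-\tau\beta)(\tau\beta-z)^{-1}dH=-1$ confirms $m$ is the Stieltjes transform of a probability measure, which defines $B$ and makes the limit unique. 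I expect the main obstacle to be the leave-one-out step: propagating the approximations through all $n$ deletions while $\opnorm{\Sigma}$ (and possibly $\opnorm{W}$) is unbounded --- only $H$ has a single moment --- so the truncations and accumulated errors need care, and in particular obtaining the second fixed-point equation for $\gamma_n$ genuinely requires either the $n\times n$ companion matrix or the eigenbasis of $\Sigma$ once both $\Sigma$ and $W$ are nontrivial; because the limiting system is a double fixed point, the uniqueness argument also needs more than a one-line contraction estimate.
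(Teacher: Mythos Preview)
The paper does not supply its own proof of this theorem: it is quoted in the review section as ``the main result of \cite{nekCorrEllipD}'' and is used as a black box. So there is no in-paper argument to compare against; the relevant benchmark is the proof in \cite{nekCorrEllipD}.

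Your outline is essentially that proof. In \cite{nekCorrEllipD} the argument proceeds exactly along the Stieltjes-transform route you describe: (i) concentration of $m_n(z)$ around its mean via rank-one resolvent perturbation and Azuma/McDiarmid on the i.i.d.\ rows $(w_i,Z_i)$, which is the same device used in this paper's Lemma~\ref{lemma:bootD}; (ii) a leave-one-out/Sherman--Morrison computation, with the quadratic-form concentration $\frac1n X_i\trsp R_i X_i \approx \frac1n\trace{\Sigma R}$ producing the auxiliary functional $\gamma_n(z)=\frac1p\trace{\Sigma R}$; (iii) passage to the limit and uniqueness of the fixed point in $\mathbb{C}^+$. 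Your remark that one must generate a \emph{second} self-consistent equation for $\gamma_n$ (via the companion $n\times n$ resolvent or by working in the eigenbasis of $\Sigma$) is exactly the extra step that distinguishes the weighted case from the classical Marchenko--Pastur proof, and it is how \cite{nekCorrEllipD} closes the system. The caveats you flag --- truncation of unbounded weights and of $\Sigma$ when only a first moment of $H$ is assumed, and care in summing the $n$ leave-one-out errors --- are precisely the technical points handled there. So your plan is correct and matches the original source; there is simply nothing in the present paper to contrast it with.
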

Theorem \ref{supp:thm:bulkEllipDist} is interesting statistically because it shows that the limiting spectral distribution of weighted covariance matrices is completely different from that of unweighted covariance matrices, even when $\Exp{w_i}=1$. This is in very sharp contrast with the low-dimensional case. (Note that as shown in \cite{nekGencov}, Theorem \ref{supp:thm:bulkEllipDist} holds for many other distributions for $X_i$'s that the one mentioned in our statement.)

In the context of the current paper, this result is especially useful since bootstrapping a covariance matrix amounts to moving from an unweighted to a weighted covariance matrix.

\subsection{Edge results}\label{supp:subsecRemindersEdgeResults}
Edge results are concerned with the fluctuation behavior of the eigenvalues that are at the edge of the spectrum of the matrices of interest.

A now standard result is due to Johnstone \cite{imj}. 
\begin{theorem}\label{supp:thm:Johnstone}
Suppose $X_i\iid {\cal N}(0,\id_p)$. Assume that $p/n\tendsto r \in (0,1)$. Then 
$$
n^{2/3}\frac{\lambda_{max}(\SigmaHat)-\mu_{n,p}}{\sigma_{n,p}}\WeakCv TW_1\;.
$$		
\end{theorem}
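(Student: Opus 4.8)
To establish this (Johnstone's theorem \cite{imj}), I would work through the exact integrable structure of the real Wishart ensemble. First I would set $W=\tilde X\trsp\tilde X=(n-1)\SigmaHat$; since $\tilde X=X-\bar X$ is the image of $X$ under the projection killing the sample mean, its rows span an $(n-1)$-dimensional space, so $W$ is distributed as a real Wishart matrix $W_p(\nu,\id_p)$ with $\nu=n-1$ --- this is precisely where the ``$n-1$'' in $\mu_{n,p}$ originates. The overall plan is: (i) exploit the exact eigenvalue law of $W$; (ii) rescale at the soft edge and do orthogonal-polynomial asymptotics; (iii) upgrade pointwise kernel convergence to convergence of a Fredholm Pfaffian; (iv) identify the limit as $TW_1$.

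For (i): because $\Sigma=\id_p$ and the data are Gaussian, the ordered eigenvalues $\lambda_1>\cdots>\lambda_p$ of $W$ form the Laguerre orthogonal ($\beta=1$) ensemble, with joint density proportional to $\prod_{i<j}|\lambda_i-\lambda_j|\,\prod_i\lambda_i^{(\nu-p-1)/2}e^{-\lambda_i/2}$ on $(0,\infty)^p$. For $\beta=1$ the $k$-point correlation functions are Pfaffians of a $2\times2$ matrix kernel assembled from Laguerre polynomials and their skew-orthogonalization, so $P(\lambda_{\max}(W)\le t)$ equals a Fredholm Pfaffian of that kernel on $(t,\infty)$.

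For (ii)--(iv): I would substitute $t=(n-1)\bigl(\mu_{n,p}+n^{-2/3}\sigma_{n,p}\,s\bigr)$, with $(n-1)\mu_{n,p}$ at the soft edge $(\sqrt{n-1}+\sqrt p)^2$ and $\sigma_{n,p}$ the matching Johnstone fluctuation scale, and then run Plancherel--Rotach steepest-descent asymptotics for the Laguerre polynomials $L_m^{(\alpha)}$ with degree $m$ and parameter $\alpha=\nu-p$ both tending to infinity at comparable rates, uniformly in a neighborhood of the turning point at the soft edge. This yields entrywise convergence of the rescaled $\beta=1$ matrix kernel to the Airy matrix kernel of the GOE edge; combining it with uniform super-exponential decay of the rescaled kernel for large $s$ (so the operators are uniformly trace class on a half-line) and equicontinuity gives convergence of the Fredholm Pfaffian itself. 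The limiting Pfaffian of the Airy matrix kernel is, by Tracy and Widom's analysis of the GOE edge, exactly the distribution function $F_1$ of $TW_1$ (expressible through the Hastings--McLeod solution of Painlev\'e~II), which gives $n^{2/3}(\lambda_{\max}(\SigmaHat)-\mu_{n,p})/\sigma_{n,p}\Rightarrow TW_1$; and $\mu_{n,p}\to(1+\sqrt r)^2$ since $(\sqrt{n-1}+\sqrt p)^2/(n-1)=(1+\sqrt{p/(n-1)})^2\to(1+\sqrt r)^2$.

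The hard part will be steps (ii)--(iii): the Plancherel--Rotach analysis of Laguerre polynomials with \emph{both} degree and parameter growing at comparable rates, carried out with error bounds uniform in a shrinking window around the soft edge, and the subsequent passage to trace-norm convergence of a matrix-valued ($\beta=1$) kernel. A delicate point inside this is that the $O(n^{-2/3})$ uncertainty in the true edge location must not survive multiplication by $n^{2/3}$, which is exactly why $\mu_{n,p}$ must carry the ``$n-1$'' and why $\sigma_{n,p}$ has its particular form. An alternative that bypasses the orthogonal-polynomial machinery would be to invoke edge universality for sample covariance matrices --- a local law near the right edge plus a Green-function (Lindeberg-type) comparison against the Gaussian model --- which reduces the claim to the already-known $TW_1$ limit at the GOE edge; there the work shifts into establishing the local law and the comparison, while the identification of the limiting law is automatic.
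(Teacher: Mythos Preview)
The paper does not prove this statement at all: Theorem~\ref{supp:thm:Johnstone} sits in the background review section and is simply quoted as ``a now standard result due to Johnstone \cite{imj}'' with no accompanying argument. There is therefore nothing in the paper to compare your proposal against.

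That said, your outline is a faithful sketch of the route taken in the original reference \cite{imj}: identify the eigenvalue law as the $\beta=1$ Laguerre ensemble, express the largest-eigenvalue distribution as a Fredholm Pfaffian of the associated matrix kernel, perform the soft-edge scaling and Plancherel--Rotach asymptotics for Laguerre polynomials with both degree and parameter going to infinity, and identify the limiting Airy matrix kernel with $TW_1$. Your remarks about the $n-1$ versus $n$ centering and the need for uniform trace-class bounds are on point. The alternative you mention (local law plus Green-function comparison) is a later, different technology and is not what \cite{imj} does, but it is a legitimate modern route to the same conclusion.
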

We have for instance $\mu_{n,p}=(1+\sqrt{p/n})^2$ and $\sigma_{n,p}=(1+\sqrt{p/n}) (1+\sqrt{n/p})^{1/3}$. The result for the case $r\in(1,\infty)$ follows immediately by changing the role of $p$ and $n$; see \cite{imj} for details.

Following a question of Johnstone, Baik, Ben-Arous and P\'ech\'e obtained the following result \cite{bbap}.
\begin{theorem}\label{supp:thm:BBP}
Suppose $X_i\iid {\cal N}_{\mathbb{C}}(0,\Sigma)$. Suppose that $\lambda_{1}(\Sigma)=1+\eta \sqrt{p/n}$ and $\lambda_{i}(\Sigma)=1$ for $i>1$. 
\begin{enumerate}
\item If $0<\eta<1$, then 
$$
n^{2/3}\frac{\lambda_{1}(\SigmaHat)-\mu_{n,p}}{\sigma_{n,p}}\WeakCv TW_2\;.
$$		
\item On the other hand, if $\eta>1$, then 
$$
\sqrt{n}\frac{\lambda_{1}(\SigmaHat)-\mu_{\eta,n,p}}{\sigma_{\eta,n,p}}\WeakCv {\cal N}(0,1)\;.
$$
\end{enumerate}
\end{theorem}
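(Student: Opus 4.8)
The plan is to exploit the fact that, because $X_i\iid{\cal N}_{\mathbb C}(0,\Sigma)$, the eigenvalues of $\SigmaHat$ form a determinantal point process, and to run a steepest-descent analysis of its correlation kernel. First I would reduce to a rank-one multiplicative deformation of the complex Wishart (Laguerre unitary) ensemble: writing $X=Z\Sigma^{1/2}$ with $Z$ having i.i.d.\ ${\cal N}_{\mathbb C}(0,1)$ entries and using unitary invariance to take $\Sigma=\id_p+\theta\,e_1e_1\trsp$ with $\theta=\eta\sqrt{p/n}$ and $\ell:=1+\theta$, the nonzero eigenvalues of $\SigmaHat=\frac1n\Sigma^{1/2}Z\trsp Z\Sigma^{1/2}$ are those of a rank-one multiplicatively perturbed complex Wishart matrix (with the case $p>n$ reduced to the full-rank case as in Theorem~\ref{supp:thm:Johnstone}). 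I would then represent its correlation kernel as a double contour integral
$$
K_n(x,y)=\frac{1}{(2\pi i)^2}\oint_{\Gamma}dz\oint_{\gamma}dw\,\frac{e^{n(g(z)-g(w))}}{w-z}\,\frac{w-a}{z-a}\,\Phi_n(x;z)\,\Psi_n(y;w),
$$
obtained either from the deformed Laguerre orthogonal polynomials or from the rank-one Harish-Chandra--Itzykson--Zuber integral; here $g$ is the usual Laguerre phase function, whose double critical point $z_\ast$ sits at the soft edge $b=(1+\sqrt{p/n})^2$ in the unspiked case, and the spike enters only through the rational factor $(w-a)/(z-a)$, with $a=a(\ell)$ an explicit decreasing function of $\ell$.

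The crux is then a bookkeeping of the position of the pole $a$ relative to the steepest-descent contours through $z_\ast$: one checks that $a$ stays on the inert side precisely when $\ell<1+\sqrt{p/n}$, i.e.\ $\eta<1$, and crosses $z_\ast$ exactly at $\eta=1$. \emph{Subcritical case $\eta<1$.} After the edge rescaling $x=\mu_{n,p}+\sigma_{n,p}n^{-2/3}u$ (and similarly $y$), the factor $(w-a)/(z-a)$ is analytic near the saddle and tends there to a bounded nonvanishing constant, which drops out of the Fredholm determinant after a harmless conjugation of the kernel; the standard Laguerre-unitary steepest descent through the double critical point then gives uniform convergence of the rescaled kernel to the Airy kernel, with the usual exponential tail bounds. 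Convergence of the Fredholm determinant on $(s,\infty)$ yields $P(n^{2/3}(\lambda_1(\SigmaHat)-\mu_{n,p})/\sigma_{n,p}\le s)\to F_2(s)$, the distribution function of $TW_2$, after a short no-outlier argument (for instance interlacing of $\lambda_1(\SigmaHat)$ with the largest unspiked eigenvalue, which is itself $b+O_P(n^{-2/3})$, using that $\Sigma-\id_p$ has rank one). This is part~1.

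\emph{Supercritical case $\eta>1$.} Now the pole $a$ has crossed $z_\ast$, and deforming $\Gamma$ across it produces a residue at $z=a$ plus a ``regular'' double integral on the moved contours. The regular part localizes at a genuine simple critical point $\tilde z(\ell)$, corresponding to the classical outlier location $\rho(\ell)=\ell(1+\tfrac{p/n}{\ell-1})$, which lies a distance $\Theta(1)$ above $b$, and it governs only the spectrum near the edge; the residue part is a single contour integral whose phase is quadratic at its critical point, so after the scaling $x=\mu_{\eta,n,p}+\sigma_{\eta,n,p}n^{-1/2}u$, with $\mu_{\eta,n,p}\to\rho(\ell_\infty)$, it converges to a Gaussian integral and the one-point limit at the top is that of a single Gaussian variable, giving $\sqrt n(\lambda_1(\SigmaHat)-\mu_{\eta,n,p})/\sigma_{\eta,n,p}\WeakCv{\cal N}(0,1)$; one also checks from the unchanged Airy part that $\lambda_2(\SigmaHat)=b+O_P(n^{-2/3})$, so $\lambda_1$ is isolated. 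A more elementary route to part~2 bypasses the residue analysis entirely: the matrix determinant lemma applied to the rank-one $\Sigma$ shows that $\lambda$ is an eigenvalue of $\SigmaHat$ off the spectrum of the unspiked $S=\frac1n Z\trsp Z$ iff $1=\theta\,q_n(\lambda)$, where $q_n(\lambda)$ is a quadratic form in $e_1$ built from the resolvent of $S$; it concentrates on a deterministic $q_\infty(\lambda)$ expressed through the Marchenko--Pastur Stieltjes transform (Theorem~\ref{supp:thm:MP}), whose unique root above $b$ is $\rho(\ell_\infty)$, and a central limit theorem for the quadratic form together with the delta method transfers the $O_P(n^{-1/2})$ Gaussian fluctuation of $q_n$ to $\lambda_1$ and identifies $\sigma_{\eta,n,p}$.

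The main obstacle is the uniform steepest-descent analysis of the double contour integral: choosing admissible contours, controlling the non-local (tail) contributions, and --- in the supercritical regime --- pinning down exactly which residue is captured and proving that the residual single integral, after the correct $n^{-1/2}$ centering and scaling, converges to a Gaussian (this is what determines $\mu_{\eta,n,p}$ and $\sigma_{\eta,n,p}$). Obtaining the spiked kernel in closed form in the first place, via deformed Laguerre polynomials or the rank-one Harish-Chandra--Itzykson--Zuber integral, is also nontrivial bookkeeping; by contrast the remaining ingredients --- tightness, absence of extra outliers, and the passage from kernel convergence to convergence of the law of $\lambda_1(\SigmaHat)$ --- are standard.
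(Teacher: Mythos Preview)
Your proposal outlines a correct approach --- essentially the determinantal/steepest-descent strategy of the original Baik--Ben~Arous--P\'ech\'e paper, together with the alternative resolvent/master-equation route to the supercritical Gaussian fluctuations --- but there is nothing to compare it against: the paper does not prove this theorem. Theorem~\ref{supp:thm:BBP} appears in the supplementary section ``Review of existing results in random matrix theory'' and is stated there as a quoted result attributed to \cite{bbap}; the paper's own proofs section contains only the proofs of Lemma~\ref{lemma:bootD} and Theorems~\ref{thm:keyApproxResults}--\ref{thm:consistencyBootSimpleCase}.

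So the task you set yourself is not one the paper undertakes. If you want feedback on the sketch itself: the subcritical part is sound in outline (the spike factor is analytic near the double saddle and conjugates away, yielding the Airy kernel and $TW_2$), and both of your supercritical arguments --- the residue picked up when deforming $\Gamma$ across the pole, and the rank-one resolvent identity plus a CLT for the quadratic form --- are the two standard routes to the Gaussian fluctuations. The genuine work, as you note, is in the contour bookkeeping and the uniform tail estimates; none of that is done in this paper, which simply invokes the result.
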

Here, 
$$
\mu_{\eta,n,p}=\lambda_1 (1+\frac{\sqrt{p/n}}{\eta}) \text{ and } \sigma_{\eta,n,p}= \lambda_1 \sqrt{1-\eta^{-2}}\;.
$$
We note that we can rewrite the previous quantities solely as functions of $\lambda_1$, specifically 
$$
\mu_{\eta,n,p}=\lambda_1 \left(1+\frac{p/n}{\lambda_1-1}\right) \text{ and } \sigma_{\eta,n,p}= \lambda_1 \sqrt{1-\frac{n}{p}(\lambda_1-1)^{-2}}\;.
$$
This representation shows that $\mu_{\eta,n,p}$ is an increasing function of $\lambda_1$ on $(1+\sqrt{p/n},\infty)$ and therefore it would be easy to estimate $\lambda_1(\Sigma)$ from $\lambda_1(\SigmaHat)$. In particular, it is very simple to build confidence intervals in this context. 

\paragraph{Interpretation of Theorem \ref{supp:thm:BBP}} In other words, there is a phase-transition: if $\lambda_1$ is sufficiently large, i.e larger than $1+\sqrt{p/n}$ the largest eigenvalue of $\SigmaHat$ has Gaussian fluctuations. If it is not large enough, i.e smaller than $1+\sqrt{p/n}$, the fluctuations are Tracy-Widom, and in fact are the same if $\lambda_1(\Sigma)=1$. Statistically, it is hard to build confidence intervals for $\lambda_1$ in the latter case - but it is very easy to do so in the first case where $\lambda_1$ is sufficiently large. 

Similar results were obtained in \cite{nekGencov} for general $\Sigma$ in the complex Gaussian case and extended to the real case in \cite{LeeSchnelli14}. \cite{nek04} showed that Theorem \ref{supp:thm:Johnstone} holds when $p/n\tendsto 0$ and $p\tendsto \infty$ at any rate. See also the interesting \cite{debashis} and \cite{JohnstoneReview07}. We finally note that the main result in \cite{bbap} is slightly more general than Theorem \ref{supp:thm:BBP} but we just need that version for the current paper.

\section{Proofs}

\subsection{Proof of Lemma \ref{lemma:bootD}}

\begin{proof}[Proof of Lemma \ref{lemma:bootD}]
We recall the following result from a simple application of the Sherman-Morrison-Woodbury formula (see \cite{hj} and \cite{bai99}): if $M$ is a symmetric matrix, $q$ is a real vector $v>0$ and and $z=u+iv\in\mathbb{C}^+$, 
$$
|\trace{(M+qq\trsp -z\id_p)^{-1}}-\trace{(M-z\id_p)^{-1}}|\leq \frac{1}{v}\;.
$$

We use bounded martingale difference arguments as in \cite{GoetzeTikhomirov05}, \cite{PajorPasturPub09}, \cite{nekCorrEllipD}. 

$\bullet$ \textbf{Case 1: independent weights $w_i$}\\
Consider the filtration $\{{\cal F}_i\}_{i=0}^n$, with ${\cal F}_i=\sigma(w_1,\ldots,w_i)$ - the $\sigma$-field generated by $w_1,\ldots,w_i$ - and ${\cal F}_0=\emptyset$.

Call $S^{(i)}_w=S_w-\frac{1}{n}w_i X_i X_i\trsp-z\id_p$. In light of the result we just mentioned, 
$$
\frac{1}{p}|\trace{[S^{(i)}_w]^{-1}}-\trace{[S_w]^{-1}}|\leq \frac{1}{pv}\;.
$$
In particular, this implies, since $\Exp{\trace{[S^{(i)}_w]^{-1}}|{\cal F}_i}=\Exp{\trace{[S^{(i)}_w]^{-1}}|{\cal F}_{i-1}}$ that 
$$
\frac{1}{p}|\Exp{\trace{[S_w]^{-1}}|{\cal F}_i}-\Exp{\trace{[S_w]^{-1}}|{\cal F}_{i-1}}|\leq \frac{2}{pv}\;.
$$
Hence, $d_i=\frac{1}{p}\Exp{\trace{[S_w]^{-1}}|{\cal F}_i}-\frac{1}{p}\Exp{\trace{[S_w]^{-1}}|{\cal F}_{i-1}}$ is a bounded martingale-difference sequence. We can therefore apply Azuma's inequality (\cite{ledoux2001}, p. 68), to get 
$$
P(|m_p(z)-\Exp{m_p(z)}|>t)\leq C \exp(-c \frac{p^2 v^2 t^2}{n})\;.
$$
In \cite{nekCorrEllipD} it is shown that we can take $C=4$ and $c=1/16$. 

$\bullet$ \textbf{Case 2: multinomial weights}\\
In this case, the previous result cannot be applied directly because the weights are not independent, since they must sum to $n$. However, to draw according to a Multinomial($n,1/n$), we can simply pick an index from $\{1,\ldots,n\}$ uniformly and repeat the operation $n$ times independently. Let $I(k)$ be the value of the index picked on the $k$-th draw from our sampling scheme.
Clearly, the bootstrapped covariance matrix can be written as 
$$
S_w=\frac{1}{n}\sum_{k=1}^n X_{I(k)}X_{I(k)}\trsp\;.
$$ 
Consider the filtration $\{{\cal F}_i\}_{i=0}^n$, with ${\cal F}_i=\sigma(I(1),\ldots,I(i))$ - the $\sigma$-field generated by $I(1),\ldots,I(i)$ - and ${\cal F}_0=\emptyset$. Clearly $S_w$ is a sum of rank-1, independent matrices. So, if $S_w(k)=S_w-X_{I(k)}X_{I(k)}\trsp/n$, 
$$
\Exp{\trace{[S^{(i)}_w]^{-1}}|{\cal F}_i}=\Exp{\trace{[S^{(i)}_w]^{-1}}|{\cal F}_{i-1}}\;.
$$
The same argument as above therefore applies and the theorem is shown. 
\end{proof}

\subsection{Proof of Theorem \ref{thm:keyApproxResults} and \ref{thm:consistencyBootSimpleCase} }
\begin{proof}[Proof of Theorem \ref{thm:keyApproxResults}]
Recall that Wielandt's Theorem (see p.261 in \cite{EatonTyler91}) gives 
$$
\sup_{1\leq i \leq q} \, \, 0\leq \lambda_i(S_n)-\lambda_i(T_n)\leq \frac{\lambda_{max}(U_n U_n\trsp)}{\lambda_q(T_n)-n^{-\alpha}\lambda_{max}(V_n)}\;,
$$	
provided $\lambda_q(T_n)>n^{-\alpha}\lambda_{max}(V_n)$.

Recall that the Schur complement formula gives
$$
n^{-\alpha} V_n \succeq U_n\trsp T_n^{-1} U_n \succeq U_n\trsp U_n /\lambda_{max}(T_n)\;,
$$
where the second inequality is a standard application of Lemma V.1.5 in \cite{bhatia97}. Since $\lambda_{max}(U_n U_n\trsp)=\lambda_{max}(U_n\trsp U_n)$ by simply writing the singular value decomposition of $U_n$,  we conclude that 
$$
\lambda_{max}(T_n) n^{-\alpha} \opnorm{V_n}\geq \lambda_{max}(U_n U_n\trsp)\;.
$$

So we conclude that provided $\lambda_q(T_n)>n^{-\alpha}\lambda_{max}(V_n)$,
\begin{equation}\label{eq:keyBound}
\sup_{1\leq i \leq q} \, \, 0\leq \lambda_i(S_n)-\lambda_i(T_n)\leq n^{-\alpha} \frac{\lambda_{max}(V_n)}{\lambda_q(T_n)-n^{-\alpha}\lambda_{max}(V_n)}\;.
\end{equation}

$\bullet$ \textbf{Proof of Equation \eqref{eq:keyApproxNonBoot}}
Note that under assumption \textbf{A2}, standard results in random matrix theory \cite{geman80,silverstein85,silverstein89} guarantee that $\opnorm{V_n}=\gO_P(1)$. Furthermore, standard results in classic multivariate analysis \cite{anderson03} show that 
$\lambda_q(T_n)\tendsto \lambda_q(\Sigma_{11})$ in probability. Hence, we have 
$$
\frac{\lambda_{max}(V_n)}{\lambda_q(T_n)-n^{-\alpha}\lambda_{max}(V_n)}=\gO_P(1)\;.
$$
We therefore have 
$$
\sup_{1\leq i \leq q}\sqrt{n} (\lambda_i(S_n)-\lambda_i(T_n))=\gO_P(n^{1/2-\alpha})\;.
$$

$\bullet$ \textbf{Proof of Equation \eqref{eq:keyApproxBoot}} We note that if $D_w$ is the diagonal matrix with the bootstrap weights on the diagonal, we have
$$
S_n^*=\frac{1}{n}X\trsp D_w X\;.
$$

Therefore, we see that $\opnorm{T_n^*}=\gO_{P,w}(1)$, $\lambda_q(T_n^*)\rightarrow_{P,w} \lambda_q(\Sigma_{11})$ by the law of large numbers (provided $\Exp{w_i}=1$; the case of Multinomial($n,1/n$) weights is also easy to deal with by the technique described in the previous subsection for instance) and $\opnorm{V^*_n}=\gO_{P,w}(\polyLog (n))$ provided $\norm{w}_\infty=\polyLog (n)$.

We can then conclude that
$$
\sup_{1\leq i \leq q} \sqrt{n} (\lambda_i(S^*_n)-\lambda_i(T^*_n))=\lo_{P,w}(1)\;.
$$

\end{proof}

\begin{proof}[Proof of Theorem \ref{thm:consistencyBootSimpleCase}]
The results from Theorem \ref{thm:keyApproxResults} imply that 
\begin{equation}\label{eq:approxBetweenBootDistributions}
\sup_{1\leq i \leq q} \left|\left[\sqrt{n}(\lambda_i(S_n^*)-\lambda_i(S_n))\right]-\left[\sqrt{n}(\lambda_i(T_n^*)-\lambda_i(T_n))\right]\right|=
\lo_{P,w}(1)\;.
\end{equation}

The arguments used in the proof of Theorem \ref{thm:keyApproxResults} also apply to $\Sigma$ and show that 
$$
\sup_{1\leq i \leq q} |\lambda_i(\Sigma_n)-\lambda_i(\Sigma_{11})|\leq n^{-\alpha}\frac{\lambda_{max}(\Sigma_{22})\lambda_{max}(\Sigma_{11})}{\lambda_q(\Sigma_{11})-n^{-\alpha}\lambda_{max}(\Sigma_{22})}
$$
Hence, when $\alpha>1/2+\eps$, we have 
$$
\sqrt{n} \sup_{1\leq i \leq q} |\lambda_i(\Sigma_n)-\lambda_i(\Sigma_{11})|=\lo(1)\;.
$$
Therefore, 
$$
\sup_{1\leq i \leq q} \sqrt{n}\left|\lambda_i(S_n)-\lambda_i(\Sigma_{n})-[\lambda_i(T_n)-\lambda_i(\Sigma_{11})]\right|=\lo_P(1)\;.
$$
Hence, the $q$ largest eigenvalues of $S_n$ have the same limiting fluctuation behavior as the $q$ largest eigenvalues of $T_n$ (classical results \cite{anderson03} show that $\sqrt{n}$ is the correct order of fluctuations). The same is true for the bootstrapped version of their distributions, according to Equation \eqref{eq:approxBetweenBootDistributions}. 

Since the results of \cite{BeranSrivastava85,BeranSrivastava87Correction,EatonTyler91} show consistency of the bootstrap distribution of the eigenvalues of $T_n$, this result carries over to the bootstrap distribution of $S_n$. 
\end{proof}

\clearpage
\begin{center}
\textbf{\textsc{Supplementary Figures}}
\end{center}
\vspace{1cm}

\newlength{\doubleFig}
\setlength{\doubleFig}{.45\textwidth}
\begin{figure}[t]
	\centering
	\subfloat[][$\lambda_1^{Alt}=1+3\sqrt{r}$, r=0.01]{\includegraphics[type=pdf,ext=.pdf,read=.pdf,width=\doubleFig]{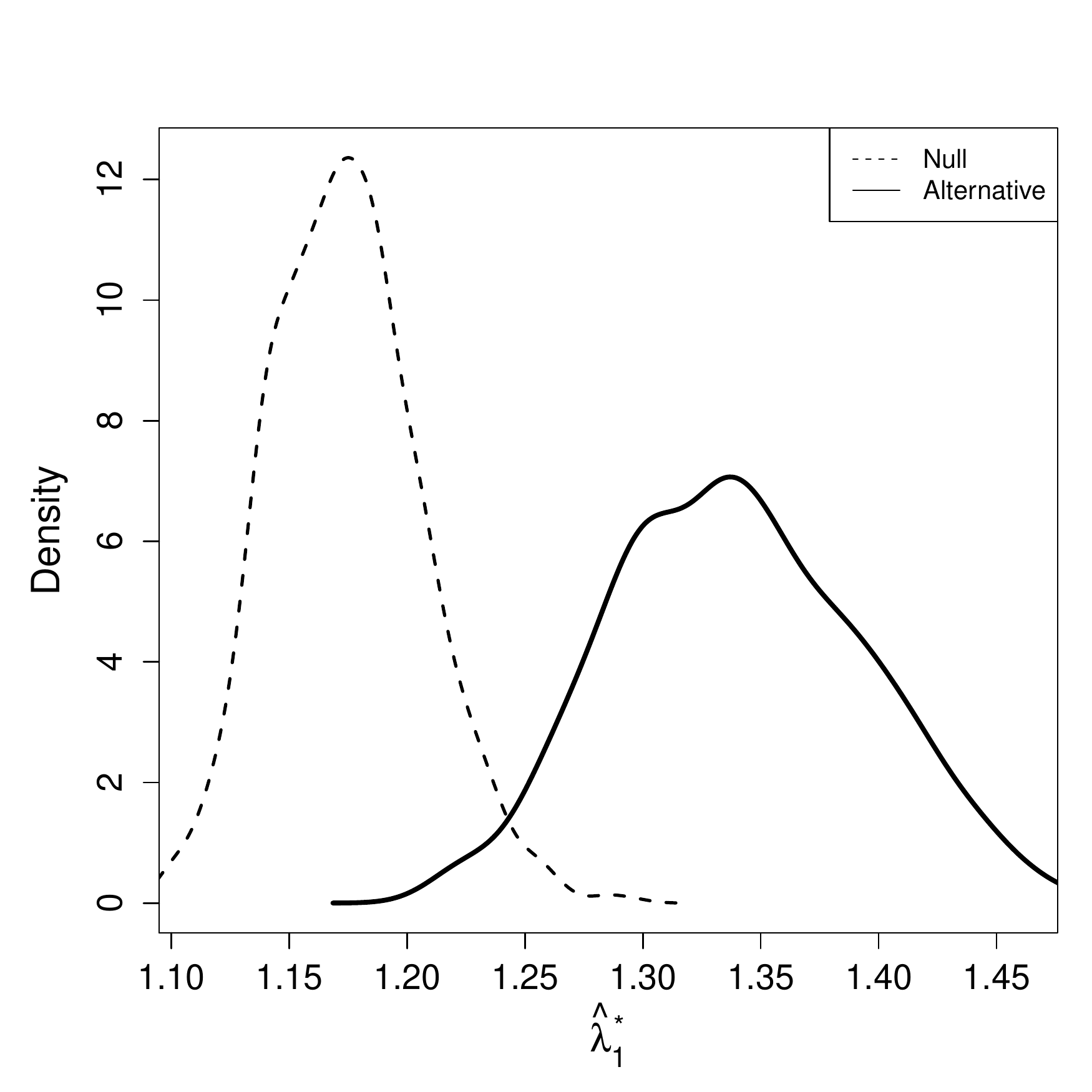} \label{subfig:trueBiasNormZ:0.01_3} }
	\subfloat[][$\lambda_1^{Alt}=1+11\sqrt{r}$, r=0.01]{\includegraphics[type=pdf,ext=.pdf,read=.pdf,width=\doubleFig]{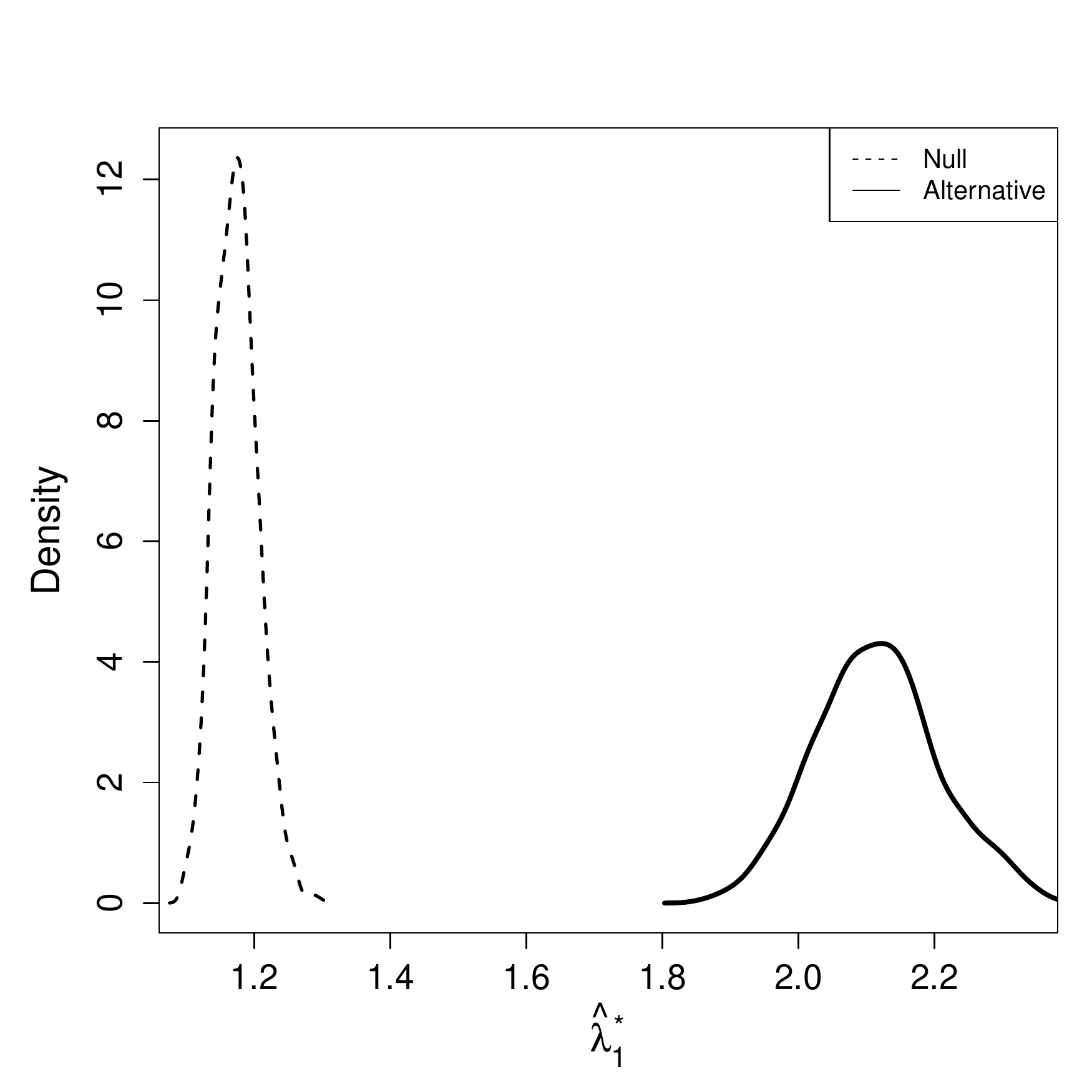}  \label{subfig:trueBiasNormZ:0.01_11} }
	\\
	\subfloat[][$\lambda_1^{Alt}=1+3\sqrt{r}$, r=0.03]{\includegraphics[type=pdf,ext=.pdf,read=.pdf,width=\doubleFig]{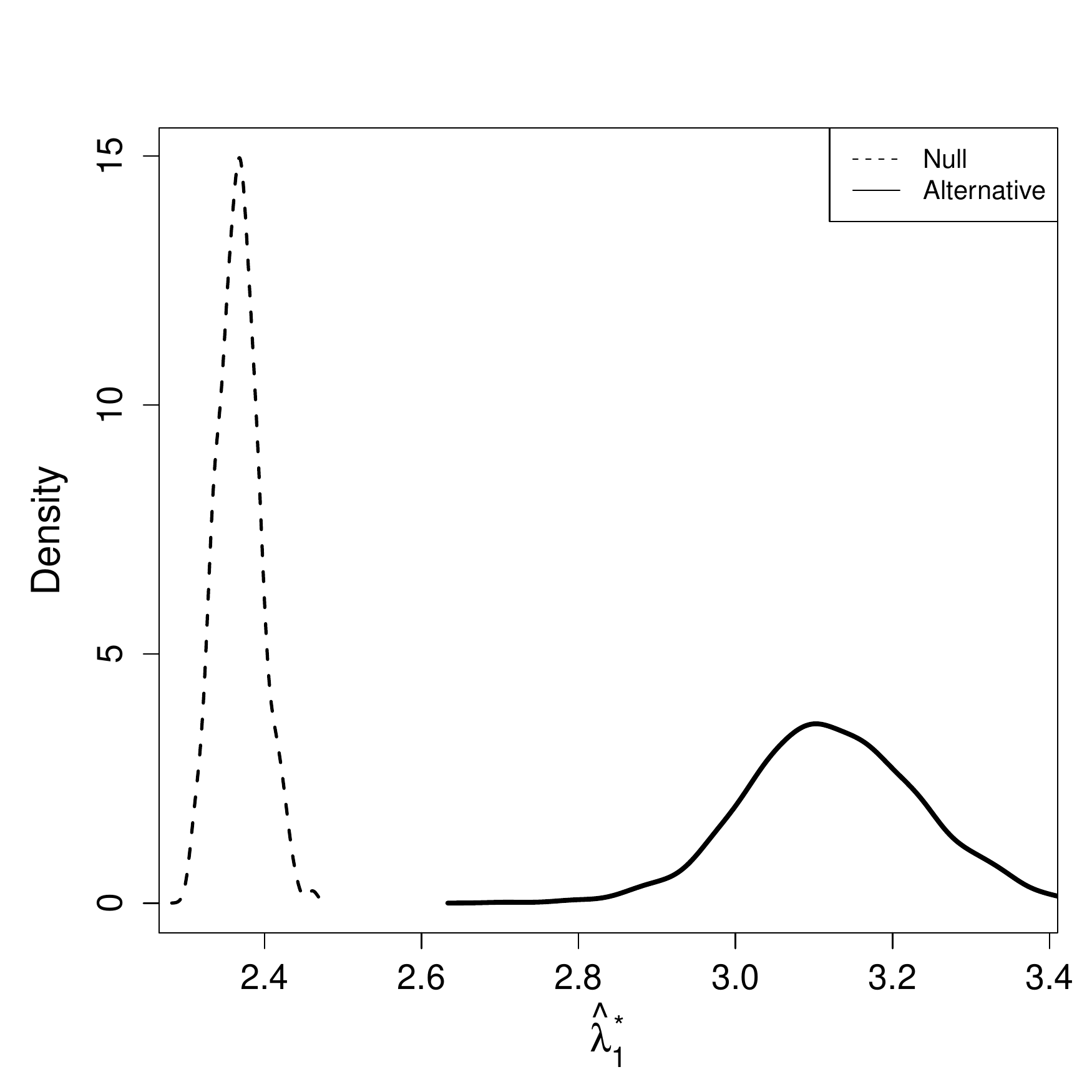}  \label{subfig:trueBiasNormZ:0.3_3} }
	\subfloat[][$\lambda_1^{Alt}=1+11\sqrt{r}$, r=0.3]{\includegraphics[type=pdf,ext=.pdf,read=.pdf,width=\doubleFig]{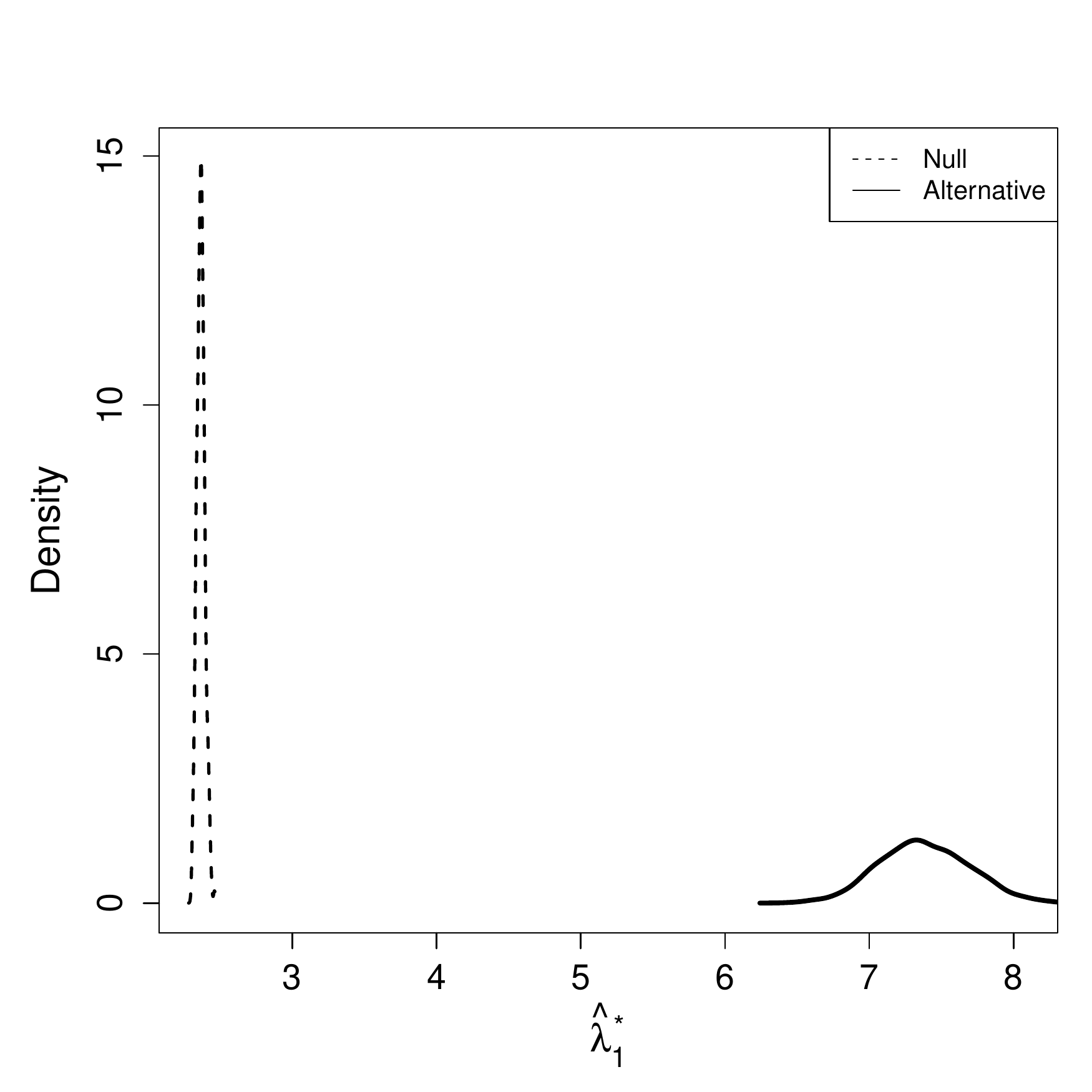} \label{subfig:trueBiasNormZ:0.3_11} }
\caption{
 \textbf{Top Eigenvalue: Distribution of Largest Eigenvalue, Null versus Alternative, $X_i\sim$ Normal } 
}\label{fig:trueBiasNormZ}
\end{figure}

\begin{figure}[t]
	\centering
	\subfloat[][$\lambda_1^{Alt}=1+3\sqrt{r}$, r=0.01]{\includegraphics[type=pdf,ext=.pdf,read=.pdf,width=\doubleFig]{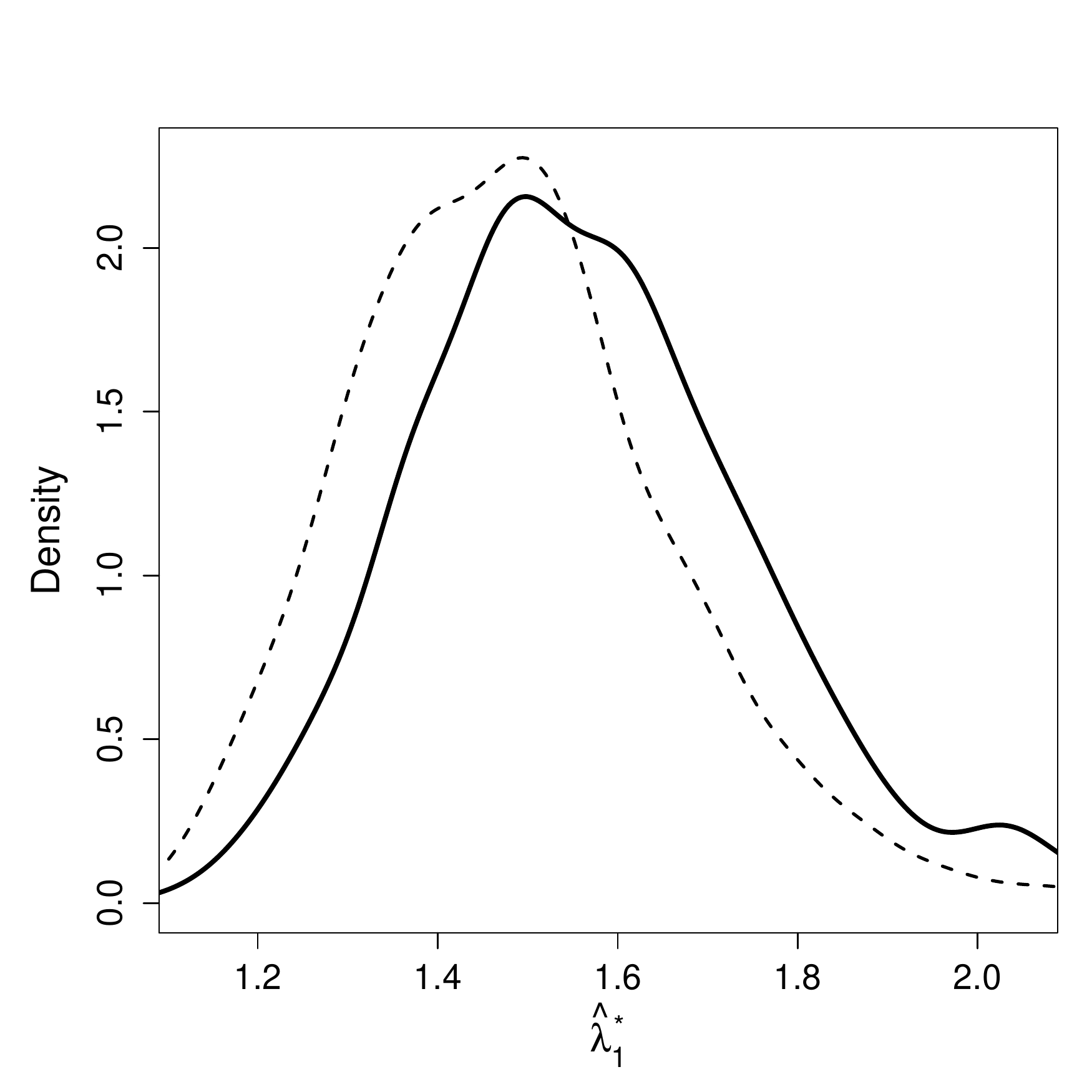} \label{subfig:trueBiasNormZ:0.01_3} }
	\subfloat[][$\lambda_1^{Alt}=1+11\sqrt{r}$, r=0.01]{\includegraphics[type=pdf,ext=.pdf,read=.pdf,width=\doubleFig]{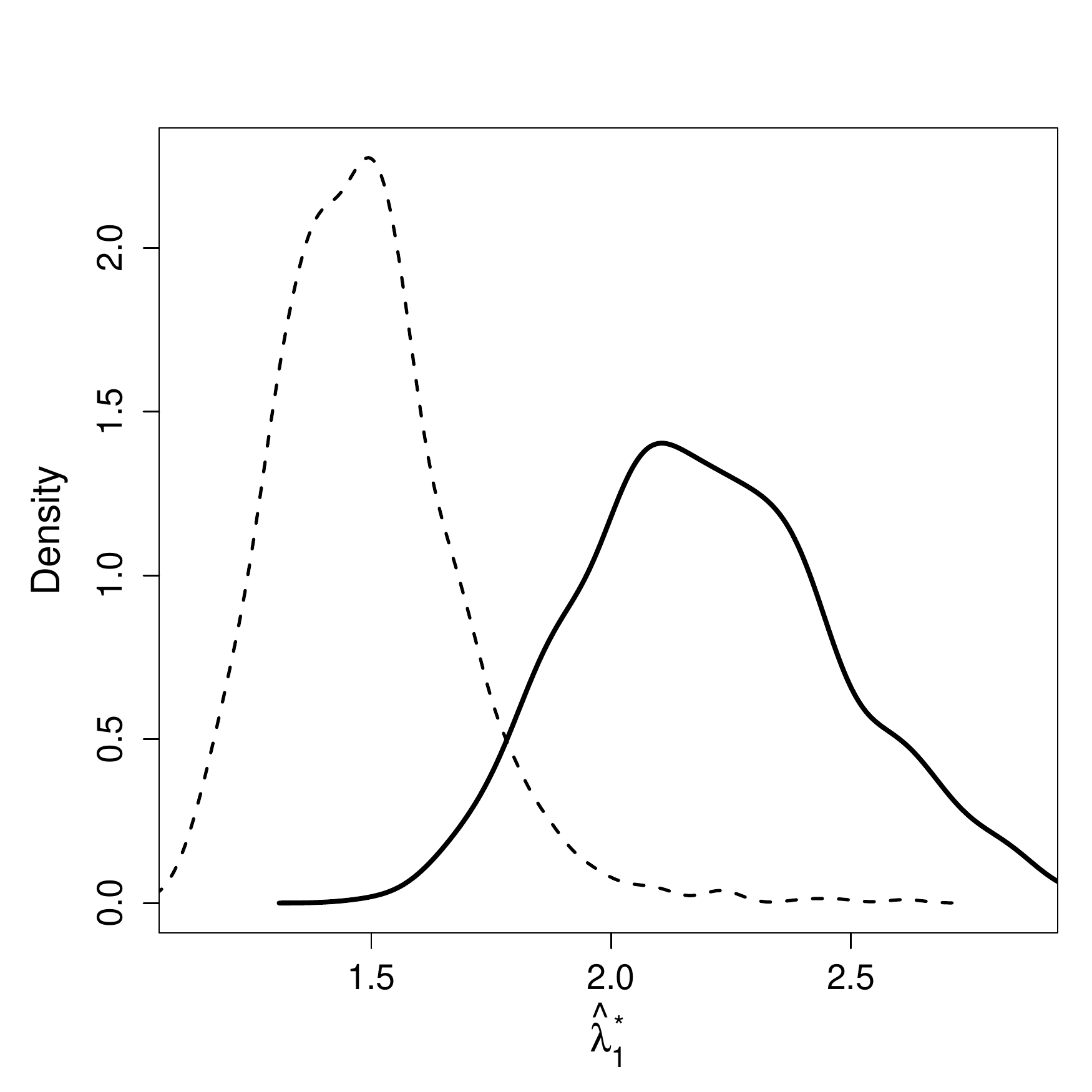}  \label{subfig:trueBiasNormZ:0.01_11} }
	\\
	\subfloat[][$\lambda_1^{Alt}=1+3\sqrt{r}$, r=0.03]{\includegraphics[type=pdf,ext=.pdf,read=.pdf,width=\doubleFig]{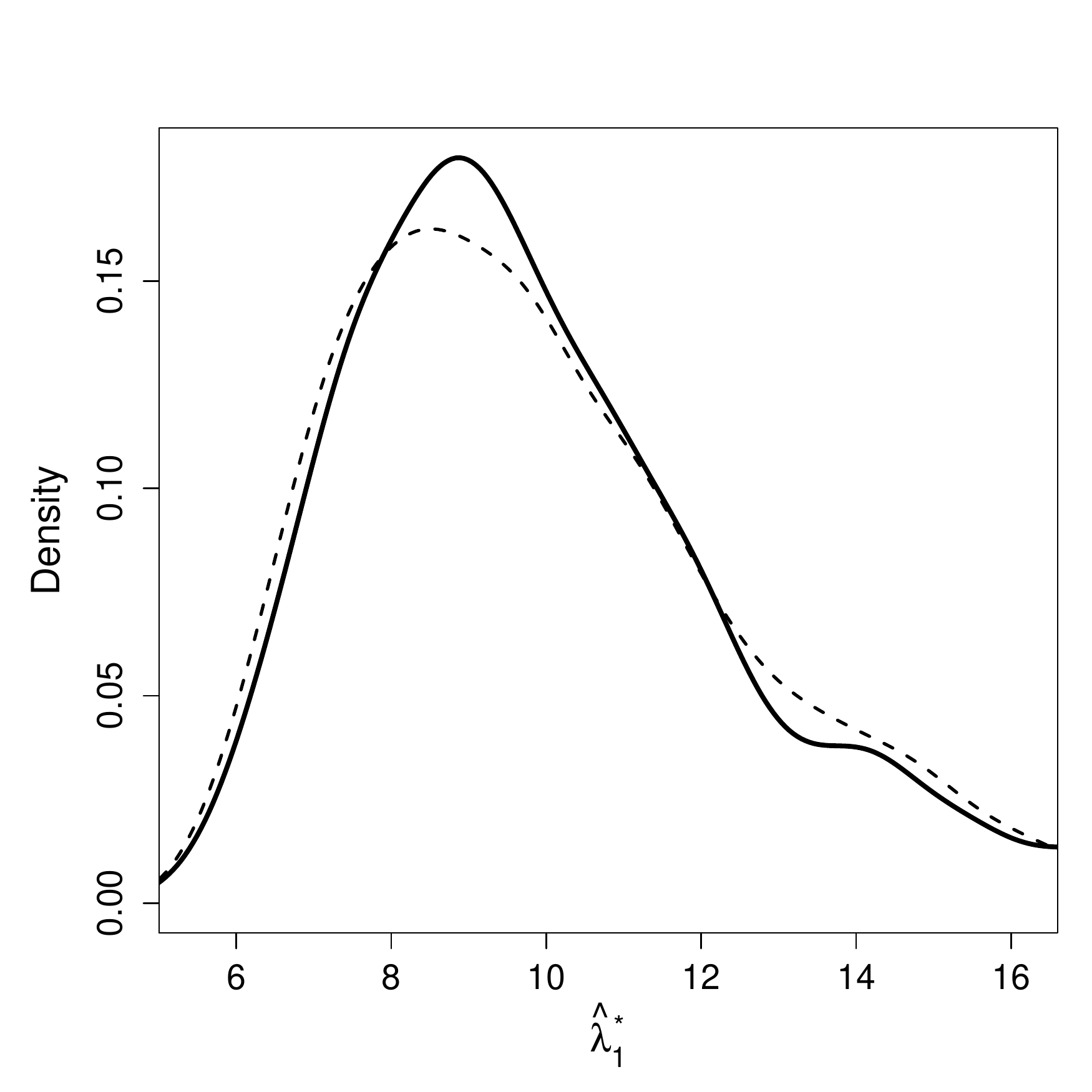}  \label{subfig:trueBiasNormZ:0.3_3} }
	\subfloat[][$\lambda_1^{Alt}=1+11\sqrt{r}$, r=0.3]{\includegraphics[type=pdf,ext=.pdf,read=.pdf,width=\doubleFig]{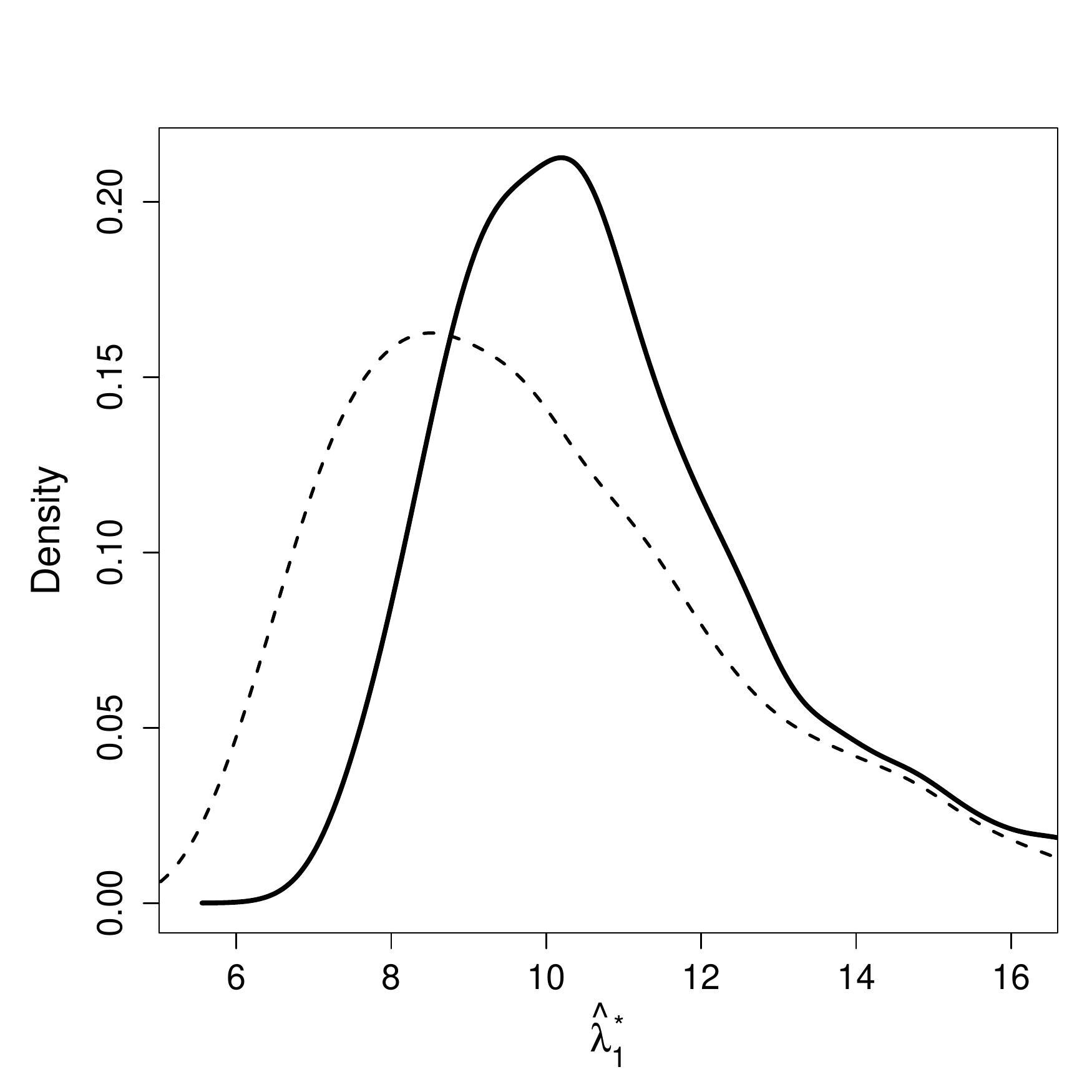} \label{subfig:trueBiasNormZ:0.3_11} }
\caption{
 \textbf{Top Eigenvalue: Distribution of Largest Eigenvalue,  Null versus Alternative, $X_i\sim$ Ellip. Exp } 
}\label{fig:trueBiasNormZ}
\end{figure}	

%%%%%%%%%%%%%%%%%%%%%%%%%%%%%%%%%%%%%%%%%%%%%%%%%%%%%%%%%
%%%%%%%%%%%%%% Additional Top Eig %%%%%%%%%%%%%%%%%%%
%%%%%%%%%%%%%%%%%%%%%%%%%%%%%%%%%%%%%%%%%%%%%%%%%%%%%%%%%

\begin{figure}[t]
	\centering
	\subfloat[][$X_i\sim$ Normal]{\includegraphics[width=\doubleFig]{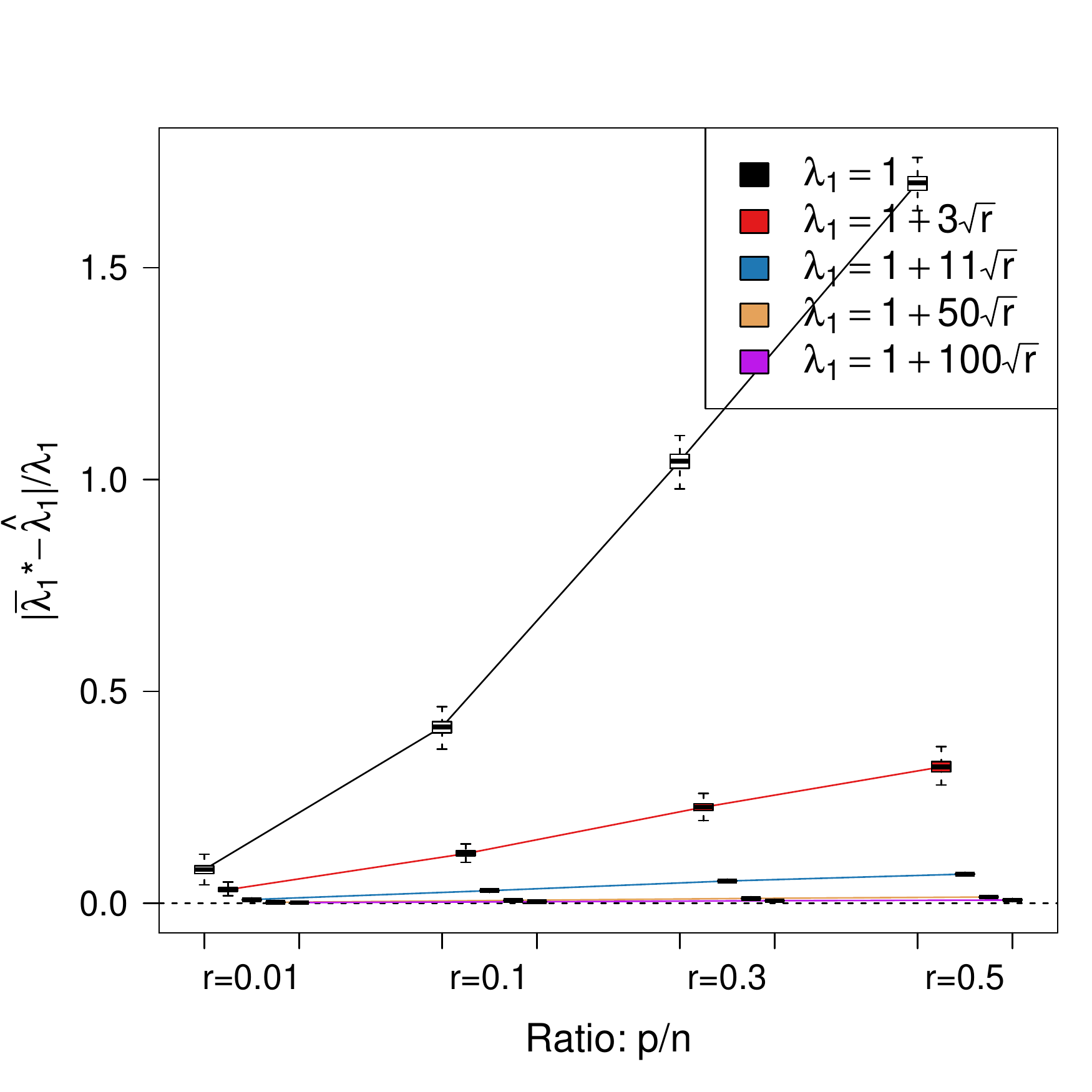} \label{subfig:bootBigBias:Norm} }
	\subfloat[][$X_i\sim$ Ellip. Exp]{\includegraphics[width=\doubleFig]{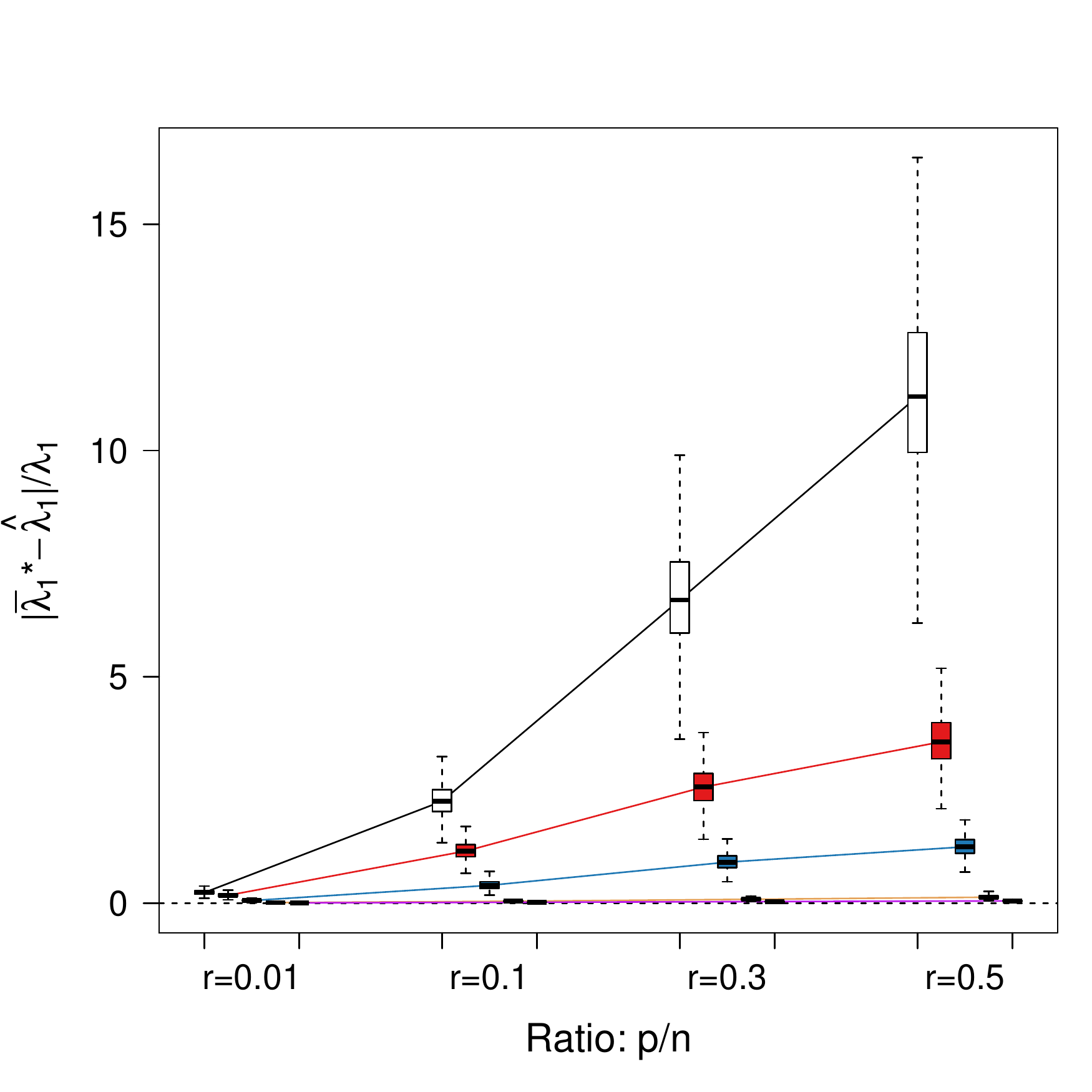} 
	\label{subfig:bootBigBias:EllipExp} }\\
	\subfloat[][$X_i\sim$ Ellip. Normal]{\includegraphics[width=\doubleFig]{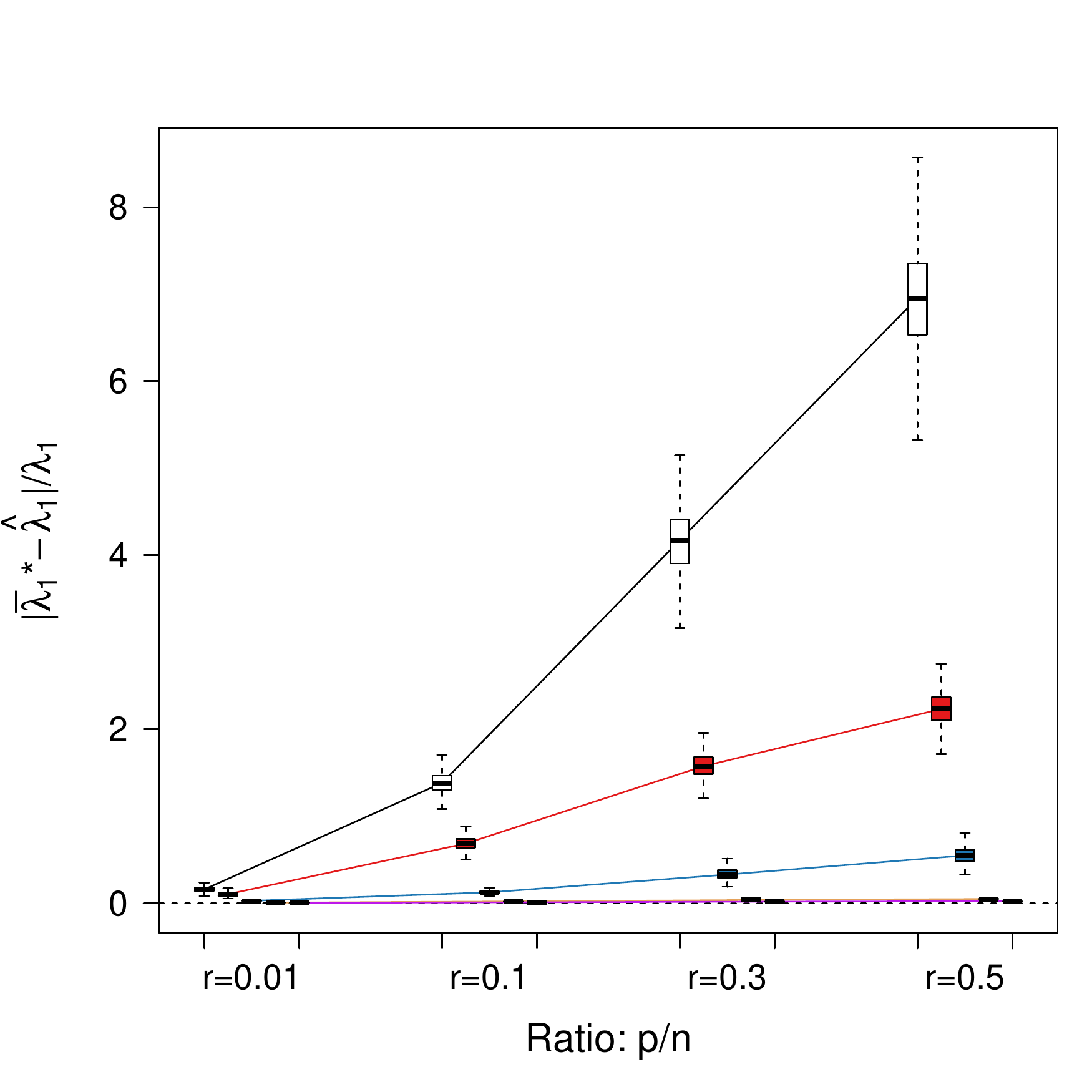} 
	\label{subfig:bootBigBias:EllipNorm} }
	\subfloat[][$X_i\sim$ Ellip. Uniform]{\includegraphics[width=\doubleFig]{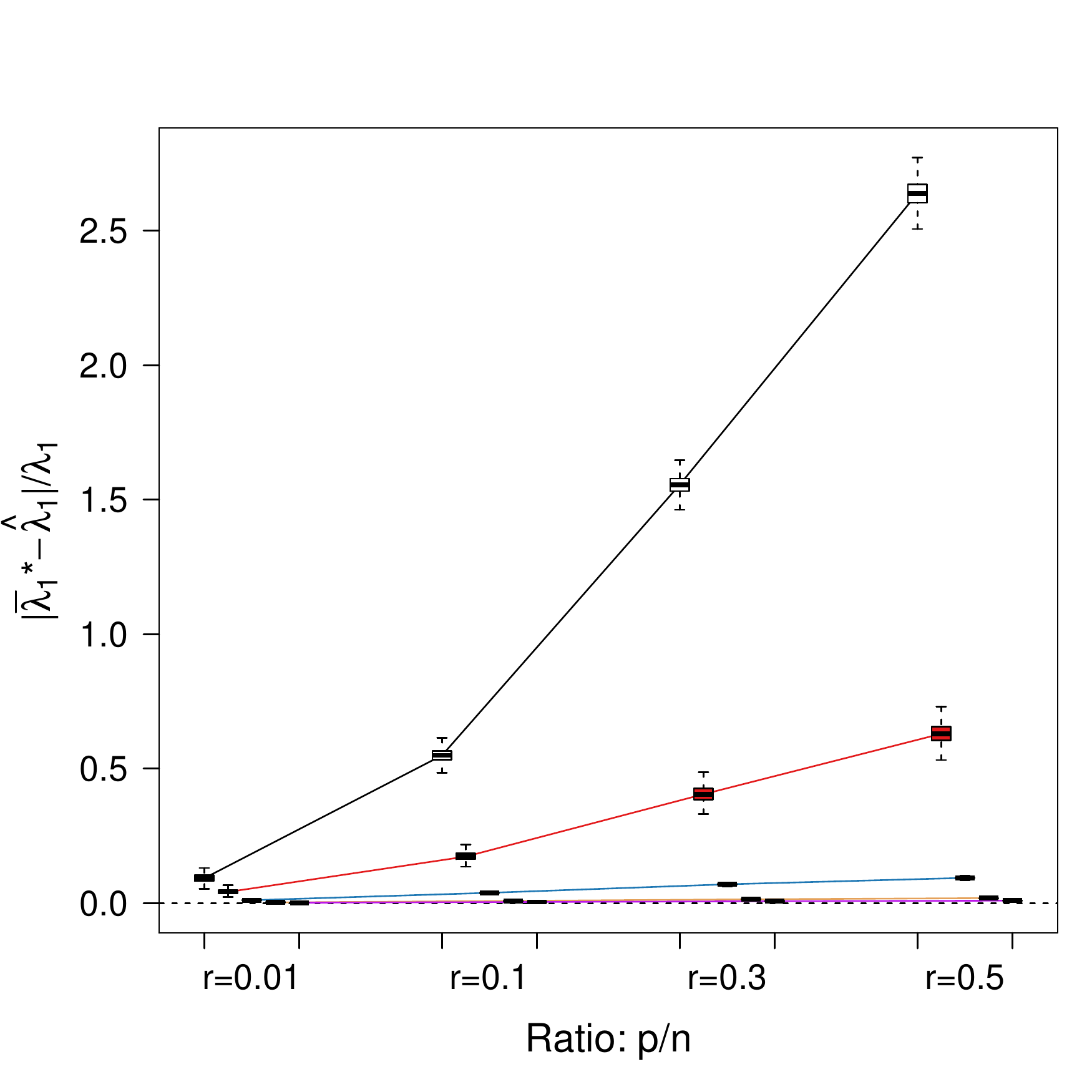} 
	\label{subfig:bootBigBias:EllipUnif} }
\caption{
 \textbf{Top Eigenvalue: Bias of Largest Bootstrap Eigenvalue, n=1,000: } Plotted are boxplots of the bootstrap estimate of bias covering larger values of $\lambda_1$ than shown in the main text. Unlike the main text, here we scale the bias by the true $\lambda_1$ so as to make the comparisons more comparable ($|\bar{\lambda}_1^*-\hat{\lambda}_1|/\lambda_1$). See the legend of Figure \ref{fig:bootBias} in the main text for more details.
}\label{fig:bootBigBias}
\end{figure}

\begin{figure}[t]
	\centering
	\subfloat[][$X_i\sim$ Normal]{\includegraphics[width=\doubleFig]{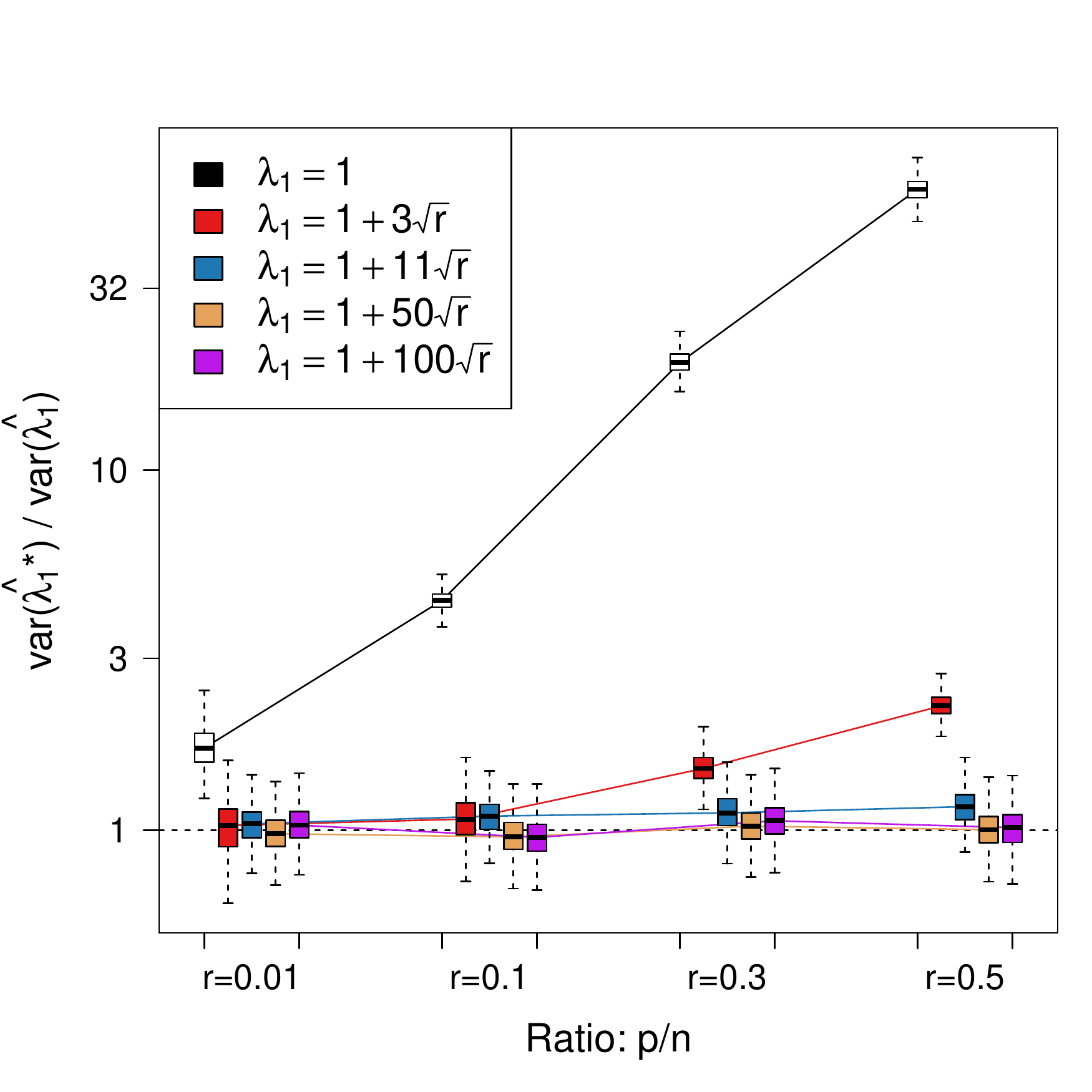} \label{subfig:bootBigVar:Norm} }
	\subfloat[][$X_i\sim$ Ellip. Exp]{\includegraphics[width=\doubleFig]{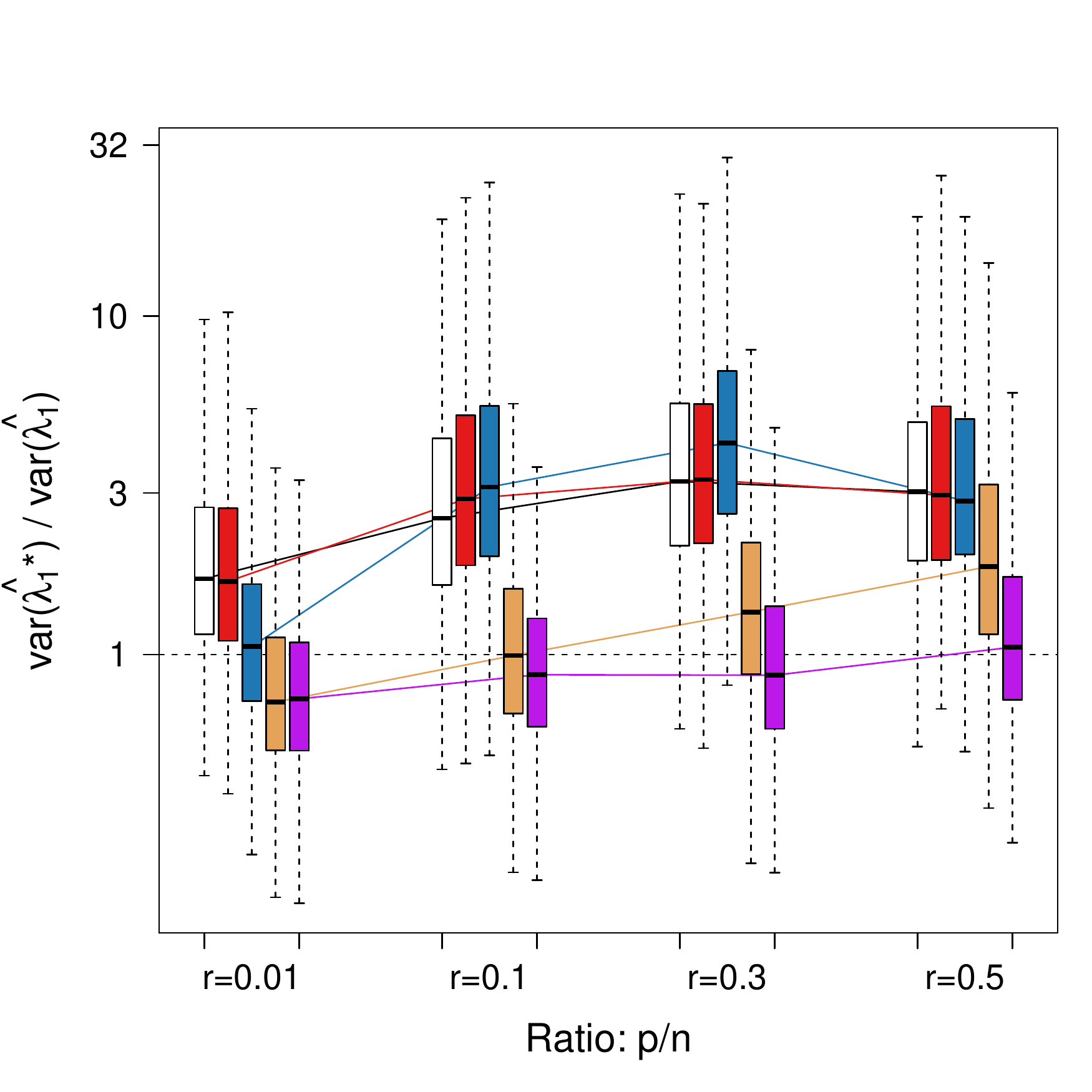} \label{subfig:bootBigVar:EllipExp} }\\
	\subfloat[][$X_i\sim$ Ellip Normal]{\includegraphics[width=\doubleFig]{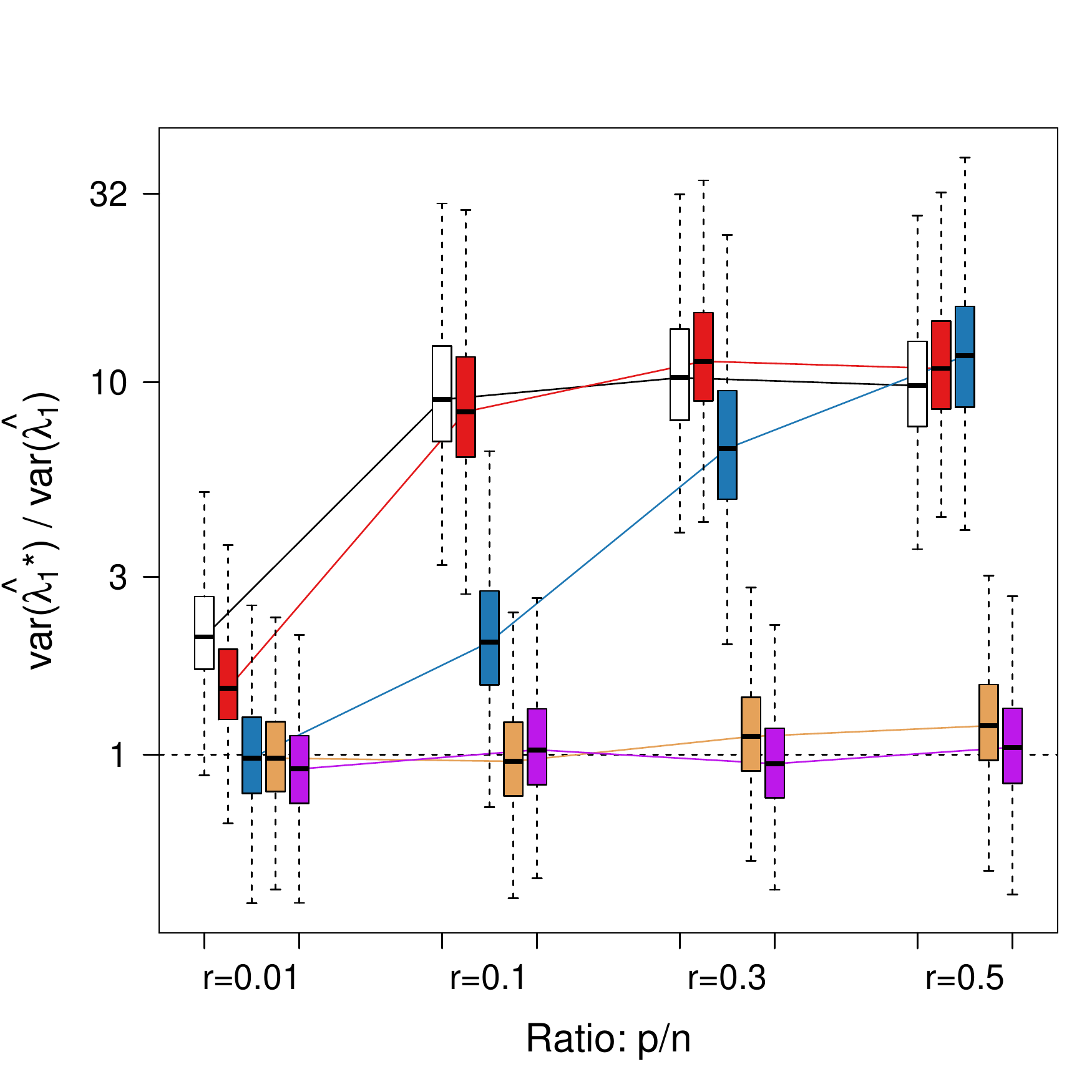} \label{subfig:bootBigVar:Norm} }
	\subfloat[][$X_i\sim$ Ellip. Uniform]{\includegraphics[width=\doubleFig]{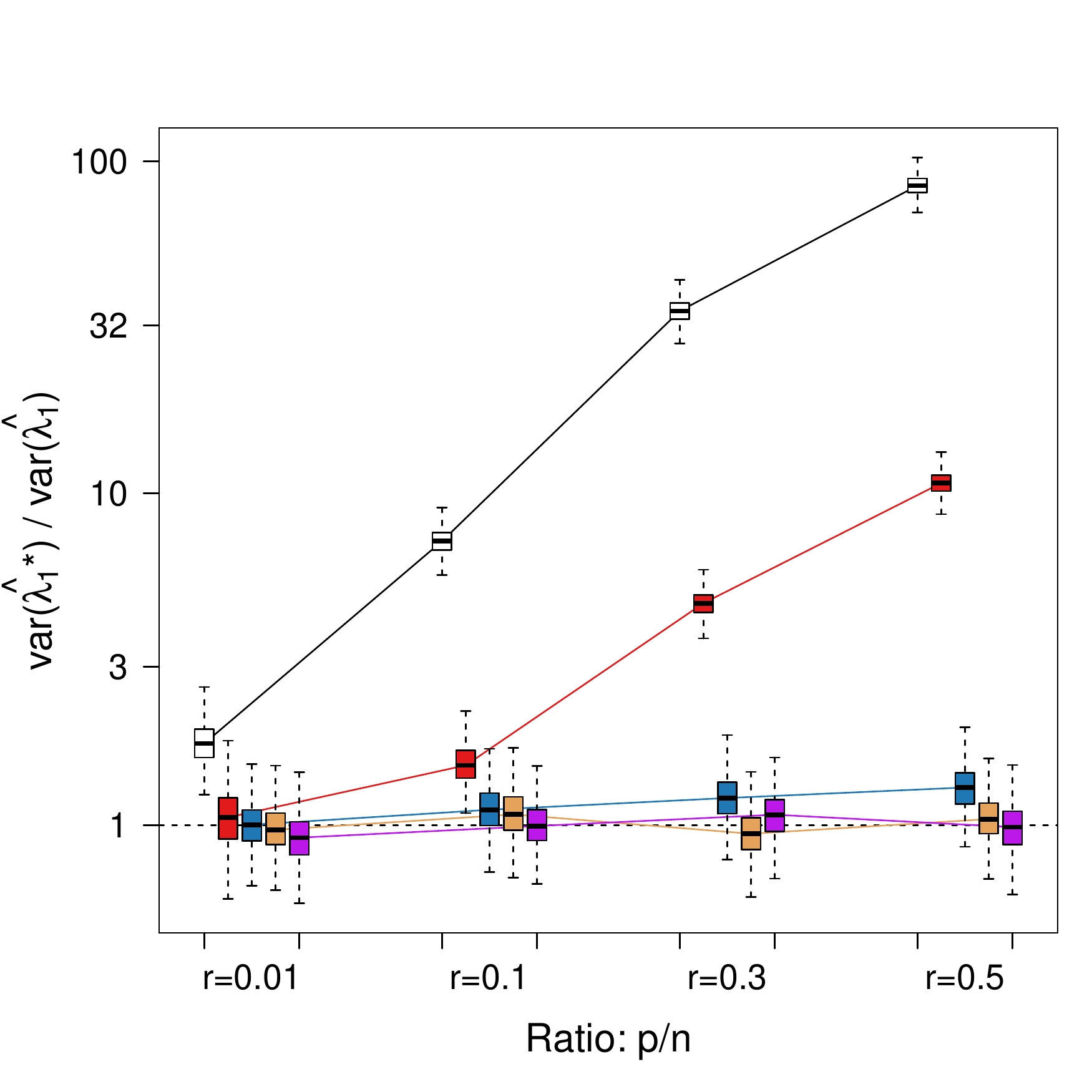} \label{subfig:bootBigVar:EllipExp} }\\
\caption{
 \textbf{Top Eigenvalue: Ratio of Bootstrap Estimate of Variance to True Variance for Largest Eigenvalue, n=1,000:  } Plotted are boxplots of the bootstrap estimate of variance, showing larger values of $\lambda_1$ than shown in the main text. See the legend of Figure \ref{fig:bootVar} in the main text for more details.
}\label{fig:bootBigVar}
\end{figure}

\begin{figure}[t]
	\centering
	\subfloat[][\centering$X_i\sim$ Ellip Normal\par Percentile Intervals]{\includegraphics[type=pdf,ext=.pdf,read=.pdf,width=\doubleFig]{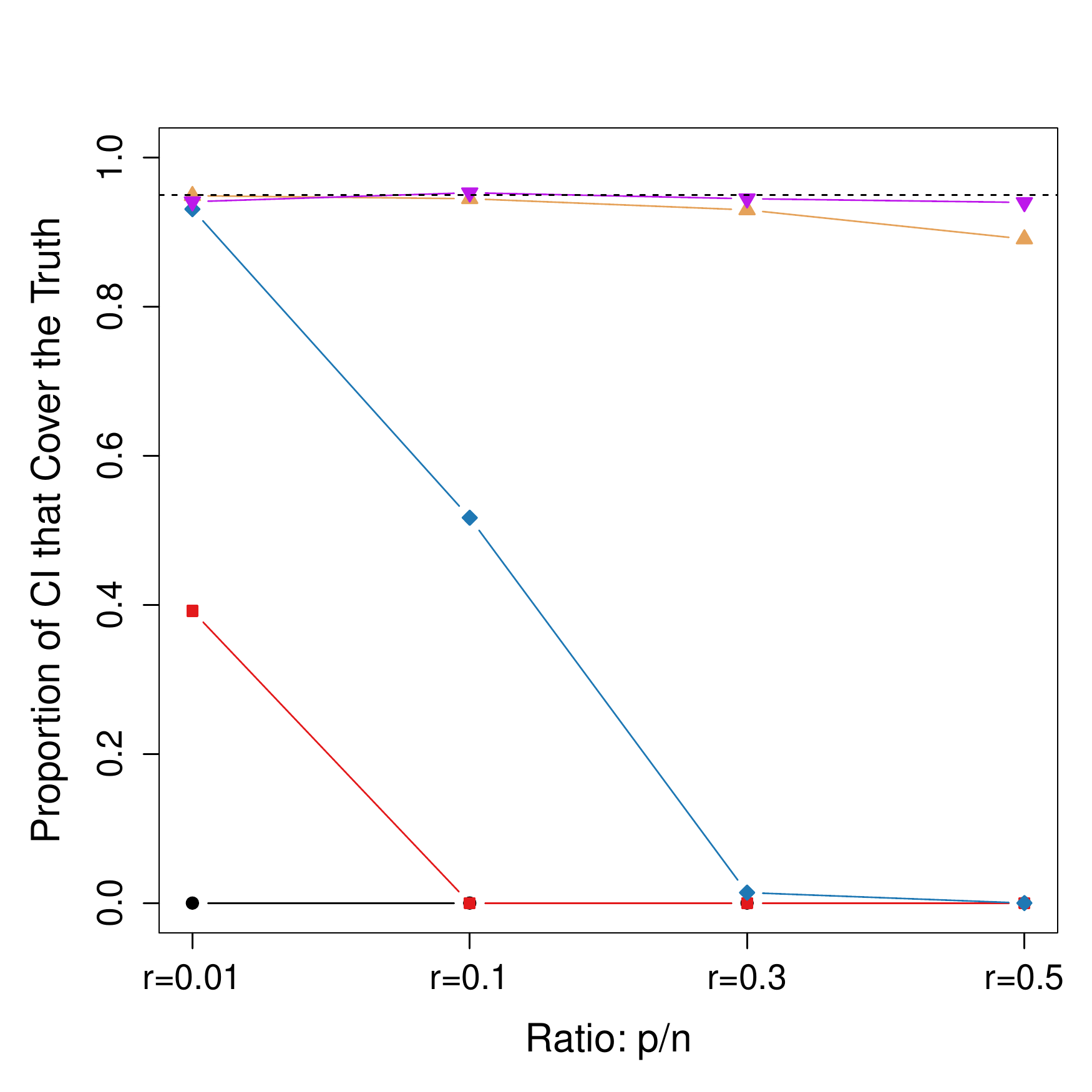} \label{subfig:bootCI:NormPerc} }
	\subfloat[][\centering$X_i\sim$ Ellip Normal\par Normal-based Intervals]{\includegraphics[type=pdf,ext=.pdf,read=.pdf,width=\doubleFig]{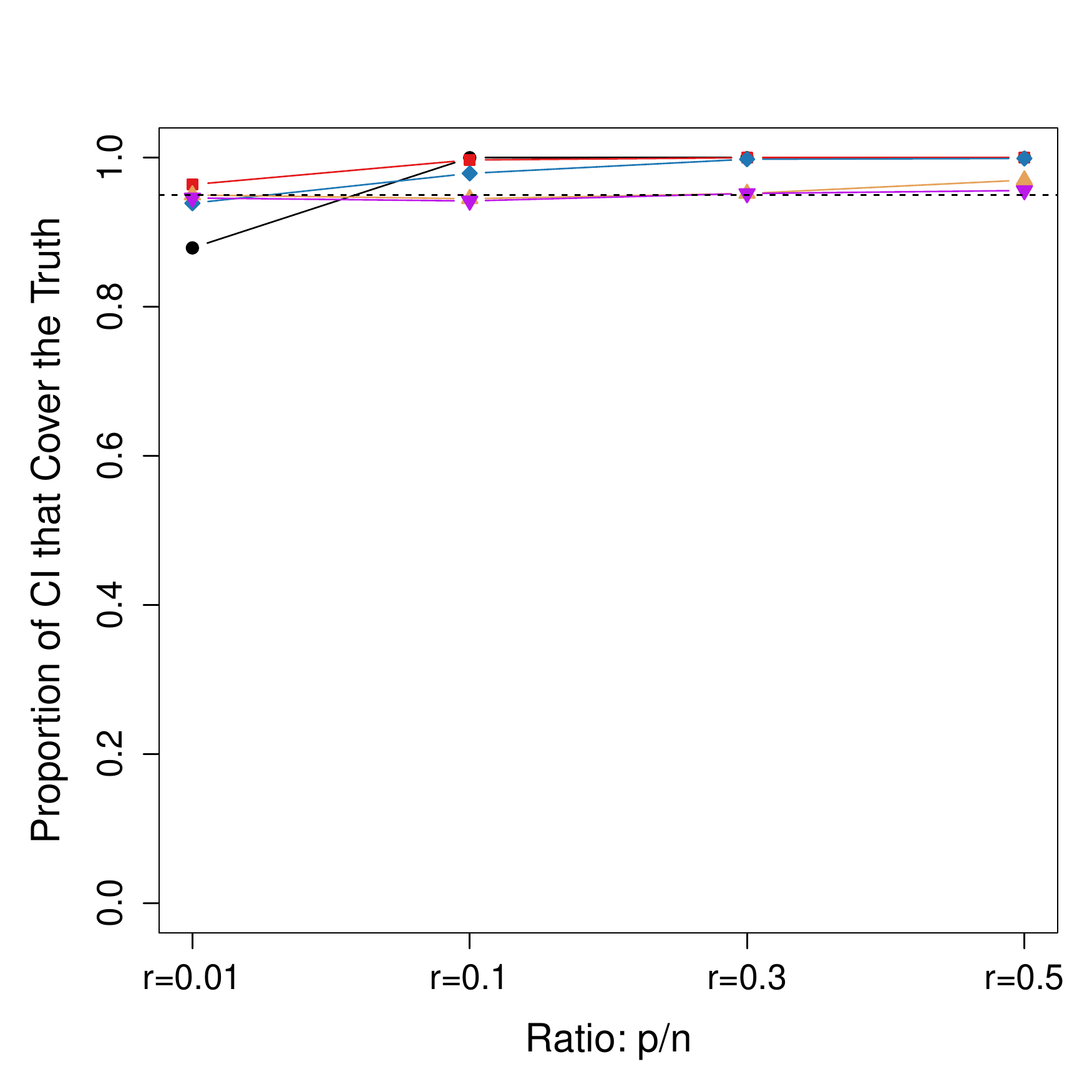} \label{subfig:bootCI:NormNormal} }
	\\
	\subfloat[][\centering$X_i\sim$ Ellip Uniform\par Percentile Intervals]{\includegraphics[type=pdf,ext=.pdf,read=.pdf,width=\doubleFig]{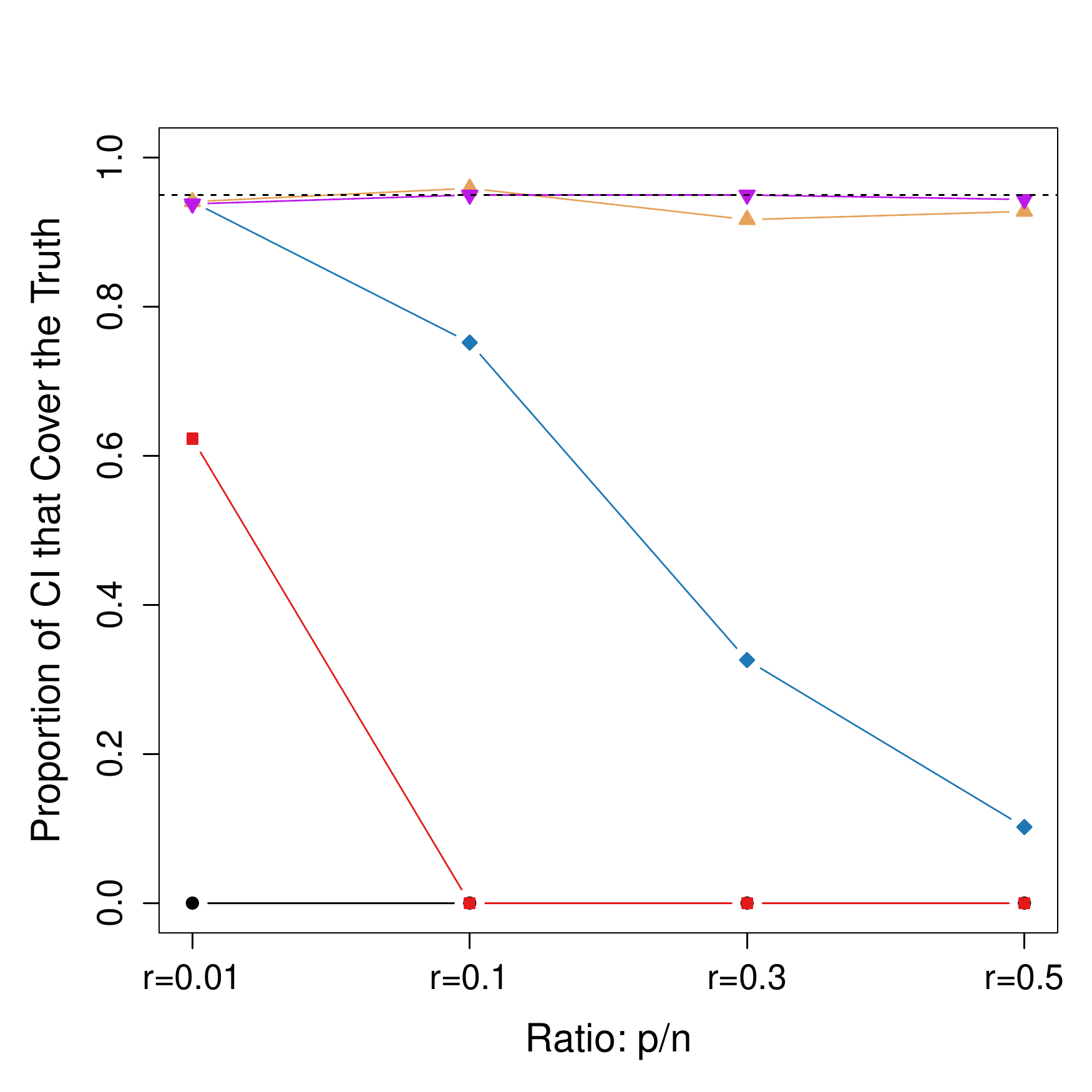} \label{subfig:bootCI:EllipExpPerc} }
	\subfloat[][\centering $X_i\sim$ Ellip Uniform \par Normal-based Intervals]{\includegraphics[type=pdf,ext=.pdf,read=.pdf,width=\doubleFig]{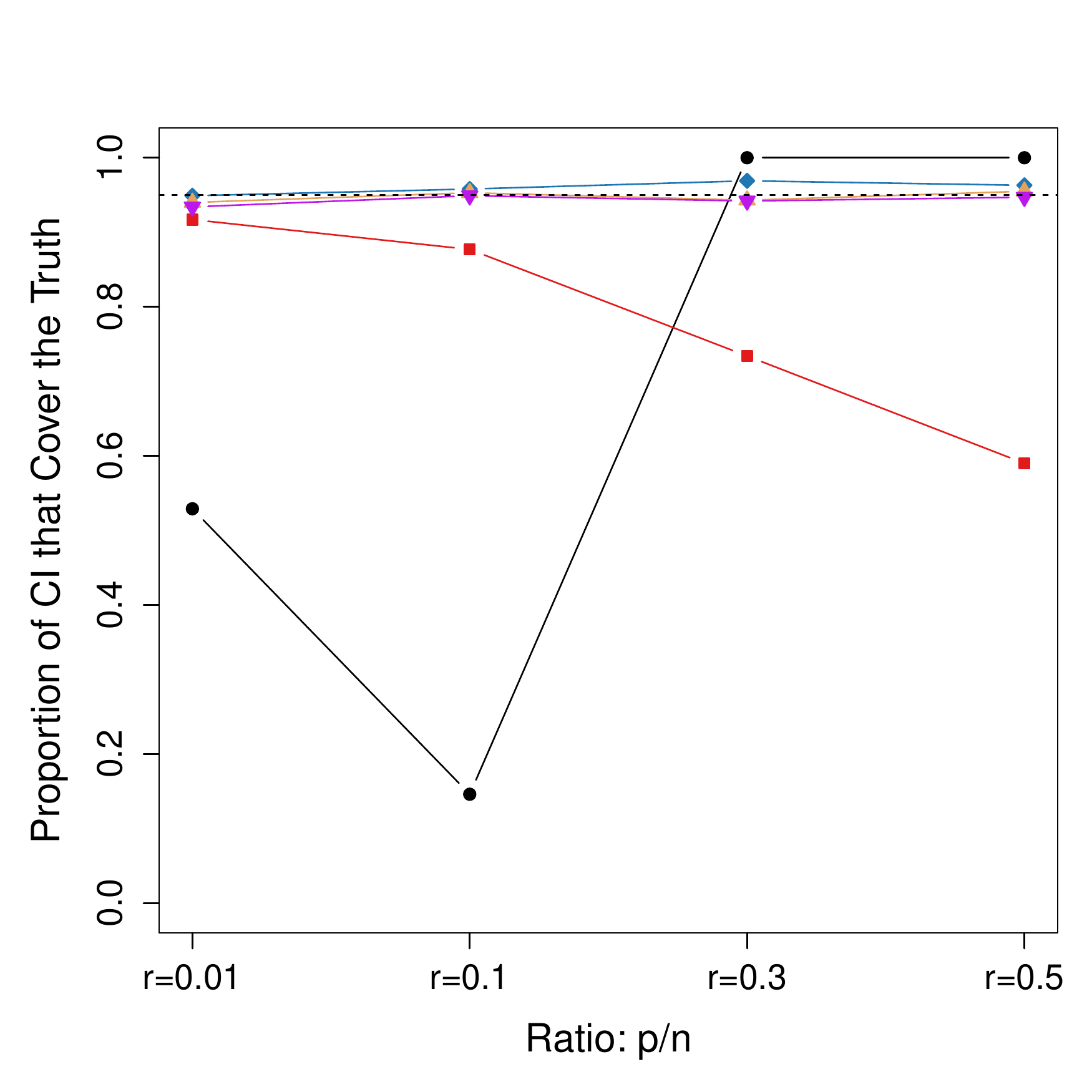} \label{subfig:bootCI:EllipExpNorm} }
\caption{
 \textbf{Top Eigenvalue: 95\% CI Coverage, $n=1,000$ for additional distributions:  } Plotted are the corresponding CI Coverage plots for when $X_i$ follows an elliptical distribution with Normal and Uniform weights. See Figure \ref{fig:bootCI} for more details. }\label{fig:bootCIMoreEllip}
\end{figure}

\begin{figure}[t]
	\centering
	\subfloat[][\centering $X_i\sim$ Normal \par r=0.01]{\includegraphics[type=pdf,ext=.pdf,read=.pdf,width=\doubleFig]{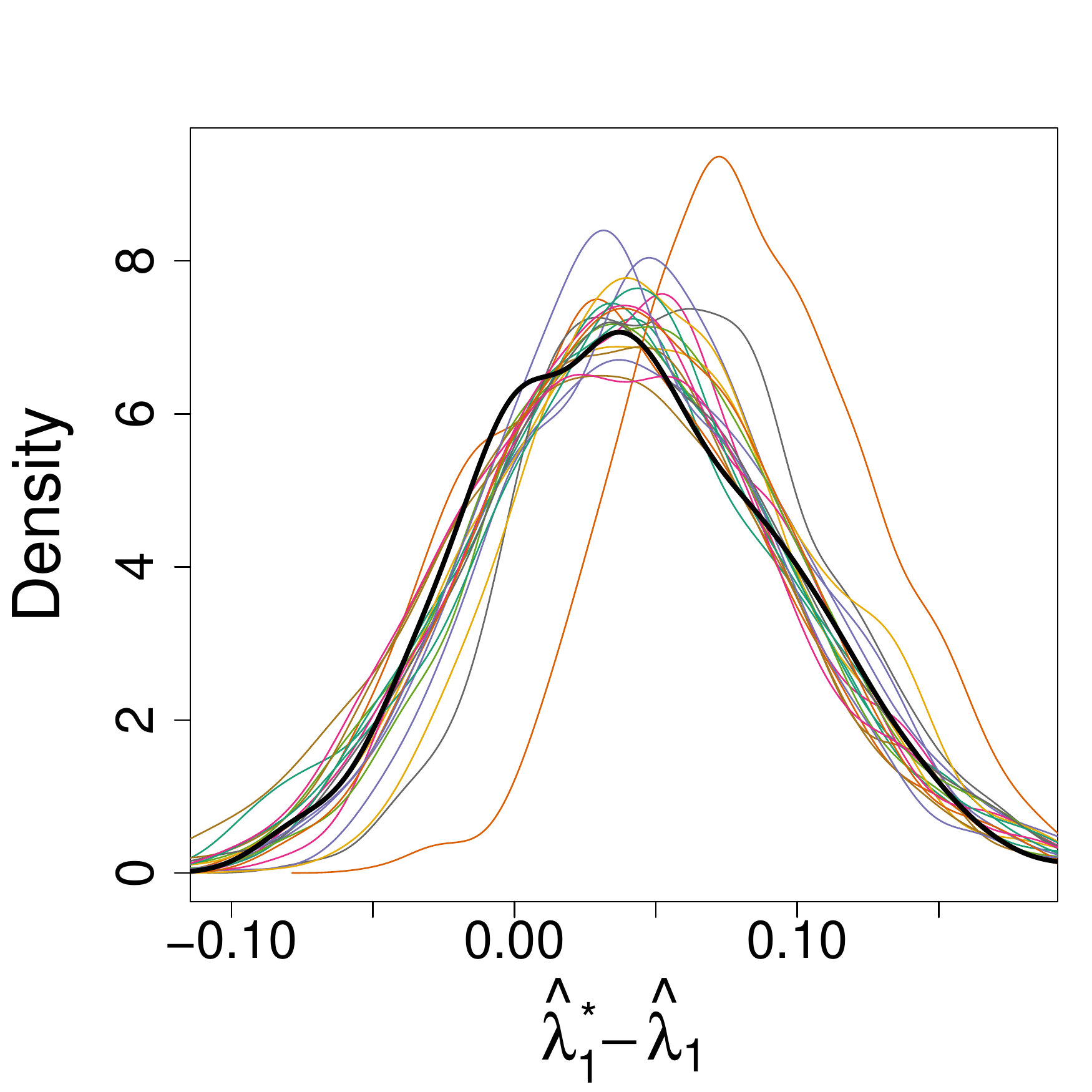} \label{subfig:bootDensityAlt:Norm0.01} }
	\subfloat[][\centering$X_i\sim$ Normal \par r=0.3]{\includegraphics[type=pdf,ext=.pdf,read=.pdf,width=\doubleFig]{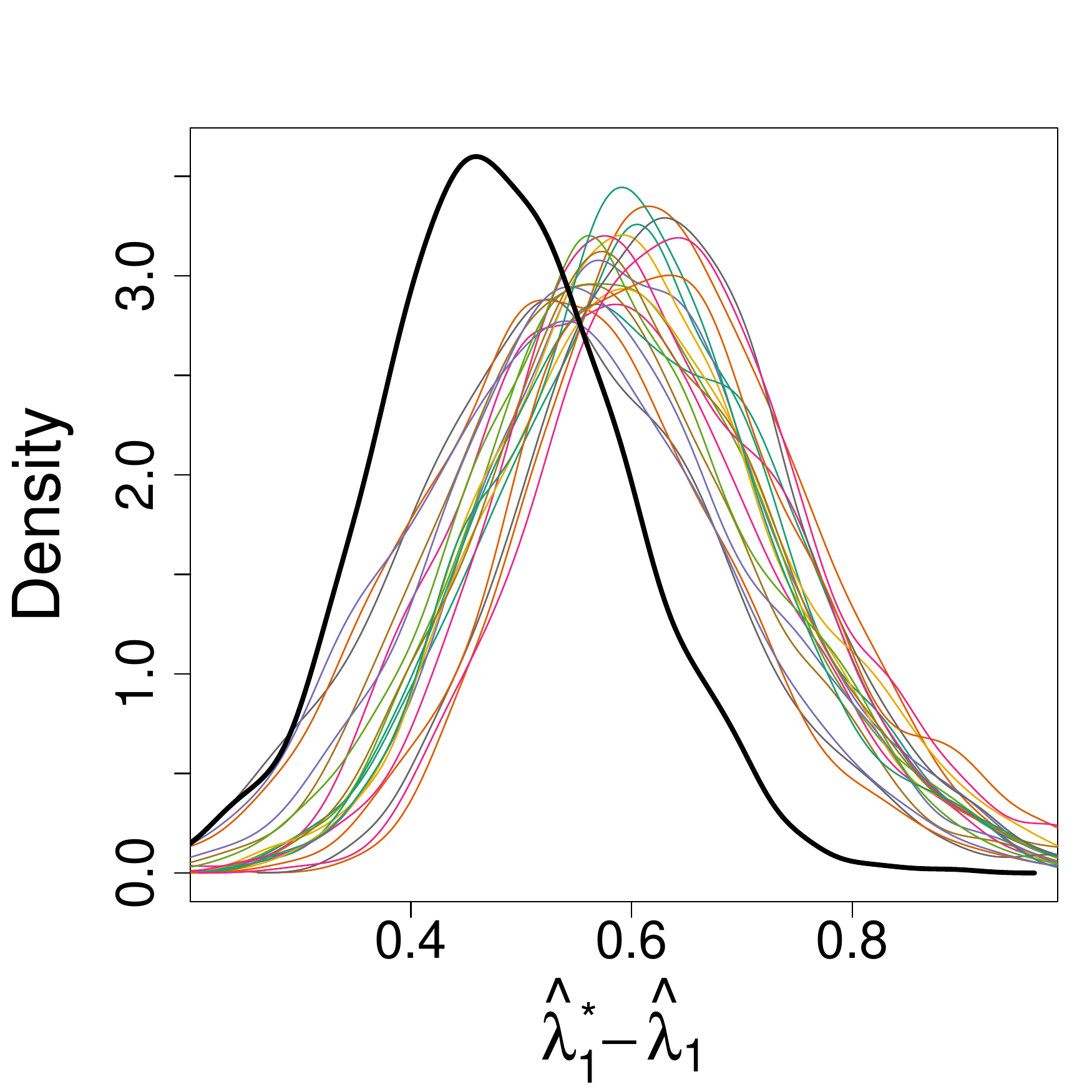} \label{subfig:bootDensityAlt:Norm0.3} }
	\\
	\subfloat[][\centering$X_i\sim$ Ellip. Exp \par r=0.01]{\includegraphics[type=pdf,ext=.pdf,read=.pdf,width=\doubleFig]{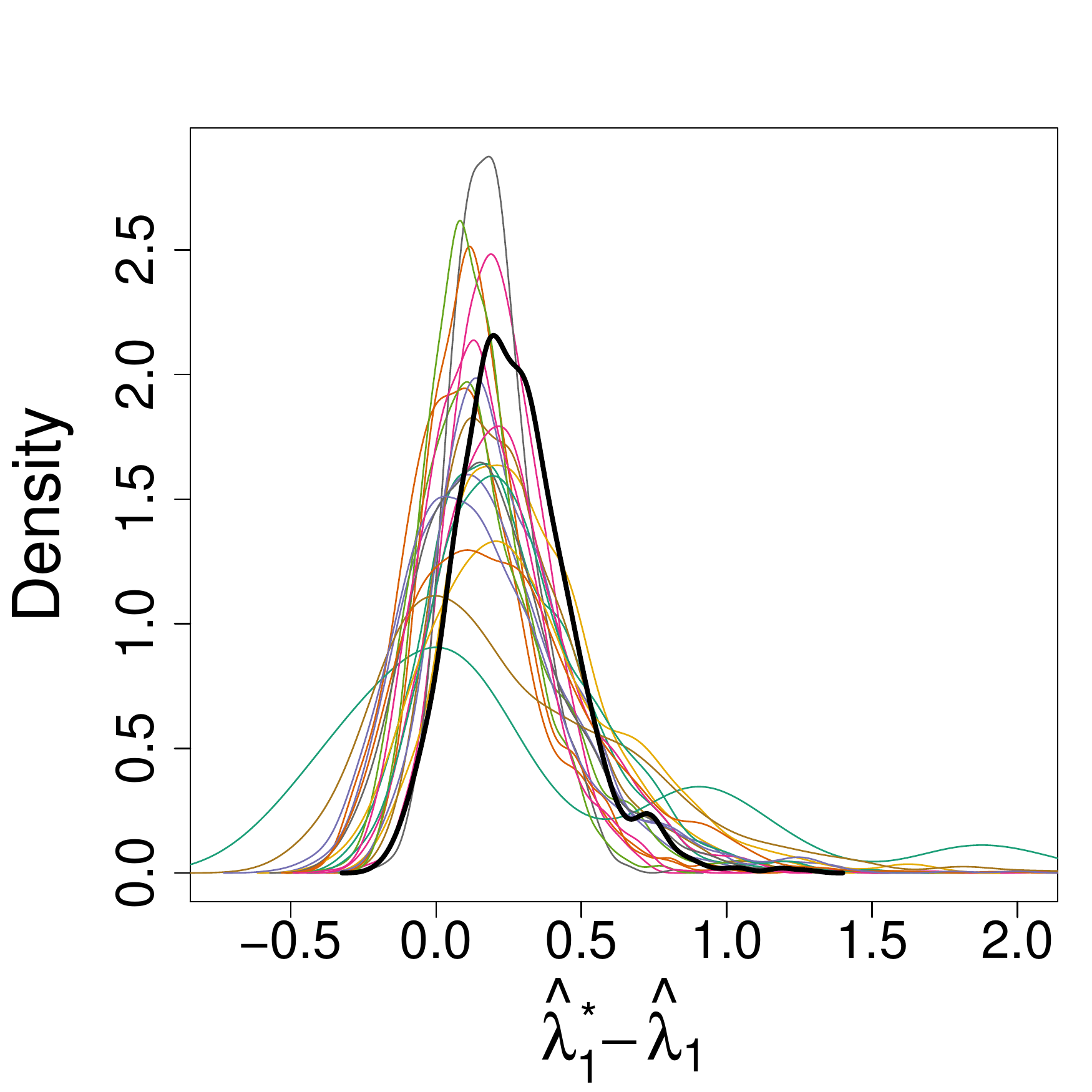} \label{subfig:bootDensityAlt:EllipExp0.01} }
	\subfloat[][\centering$X_i\sim$ Ellip. Exp \par r=0.3]{\includegraphics[type=pdf,ext=.pdf,read=.pdf,width=\doubleFig]{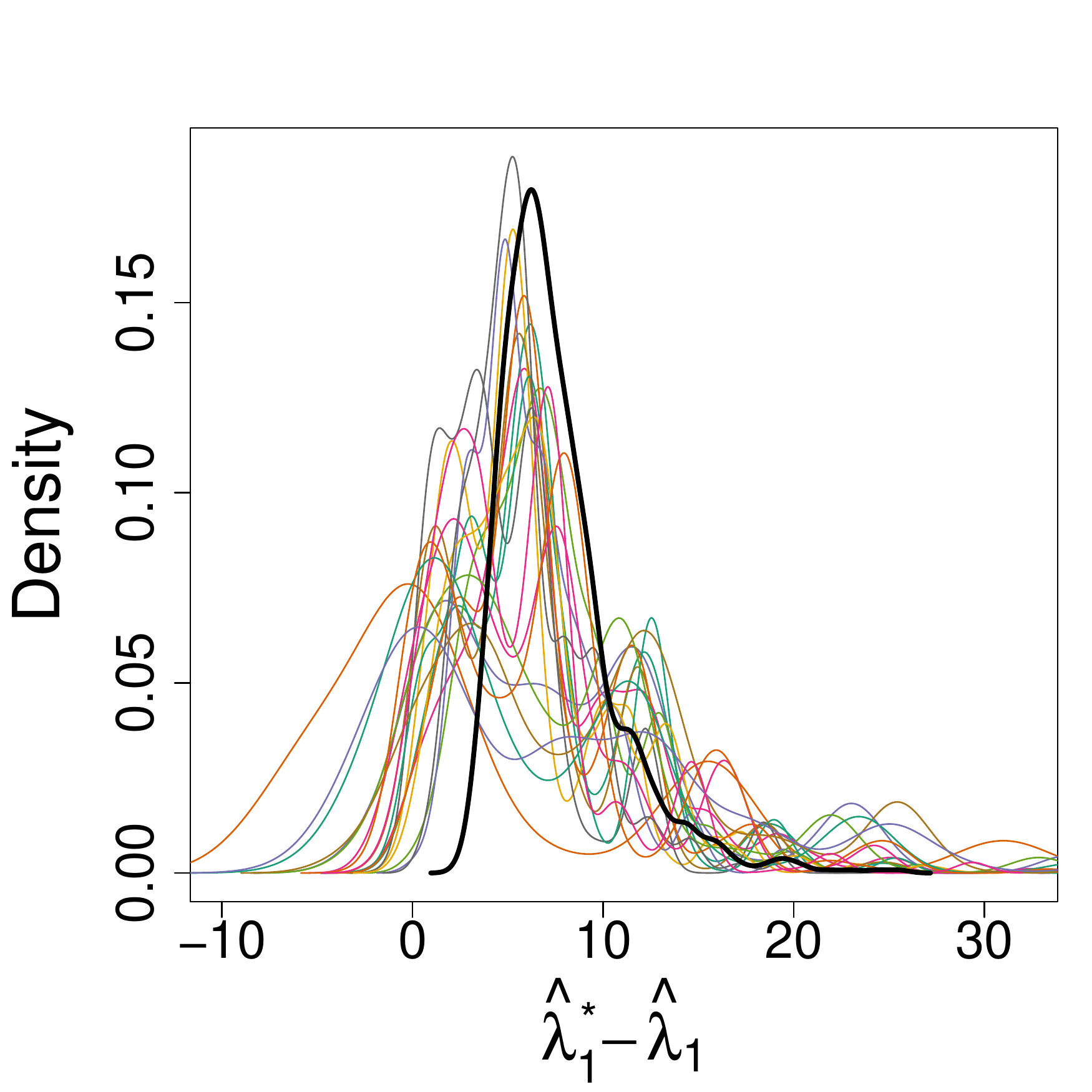} \label{subfig:bootDensityAlt:EllipExp0.3} }
\caption{
 \textbf{Top Eigenvalue: Bootstrap distribution of $\hat{\lambda}^*_1$ when $\lambda_1=1+3\sqrt{r}$, n=1,000:  } Plotted are the estimated density of twenty simulations of the bootstrap distribution of $\hat{\lambda}^{*b}_1-\hat{\lambda}_1$, with $b=1,\ldots,999$. The solid black line line represents the distribution of $\hat{\lambda}_1-\lambda_1$ over 1,000 simulations. 
}\label{fig:bootDensityAlt}
\end{figure}	

%%%%%%%%%%%%%%%%%%%%%%%%%%%%%%%%%%%%%%%%%%%%%%%%%%%
%%%%%%%%%%%%%% GAP STATISTIC %%%%%%%%%%%%%%%%%%%
%%%%%%%%%%%%%%%%%%%%%%%%%%%%%%%%%%%%%%%%%%%%%%%%%%%%%%%%%

\begin{figure}[t]
	\centering
	\subfloat[][$X_i\sim$ Normal]{\includegraphics[width=\doubleFig]{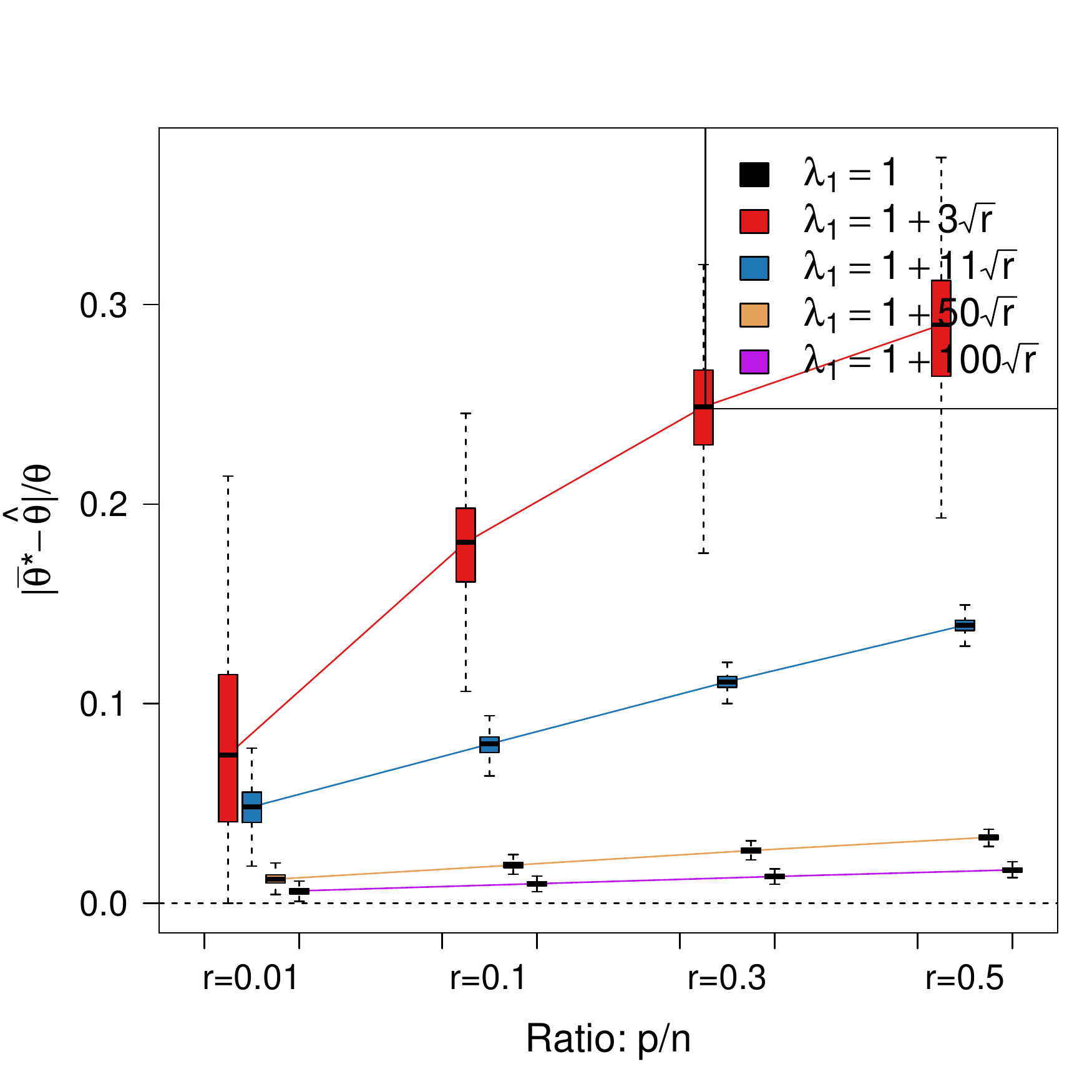} \label{subfig:bootBiasGap:Norm} }
	\subfloat[][$X_i\sim$ Ellip. Exp]{\includegraphics[width=\doubleFig]{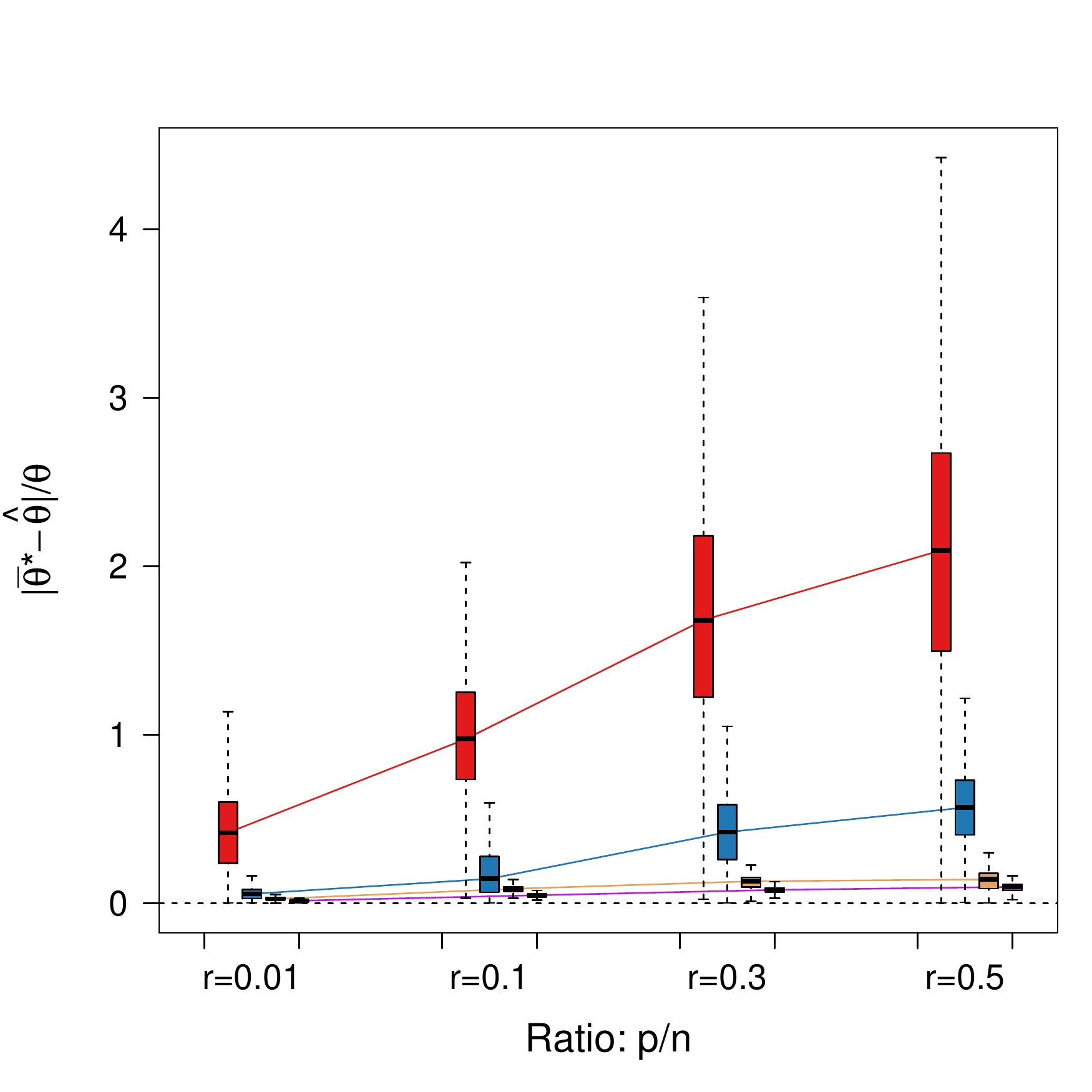} \label{subfig:bootBiasGap:EllipExp} }\\
	\subfloat[][$X_i\sim$ Ellip. Norm]{\includegraphics[width=\doubleFig]{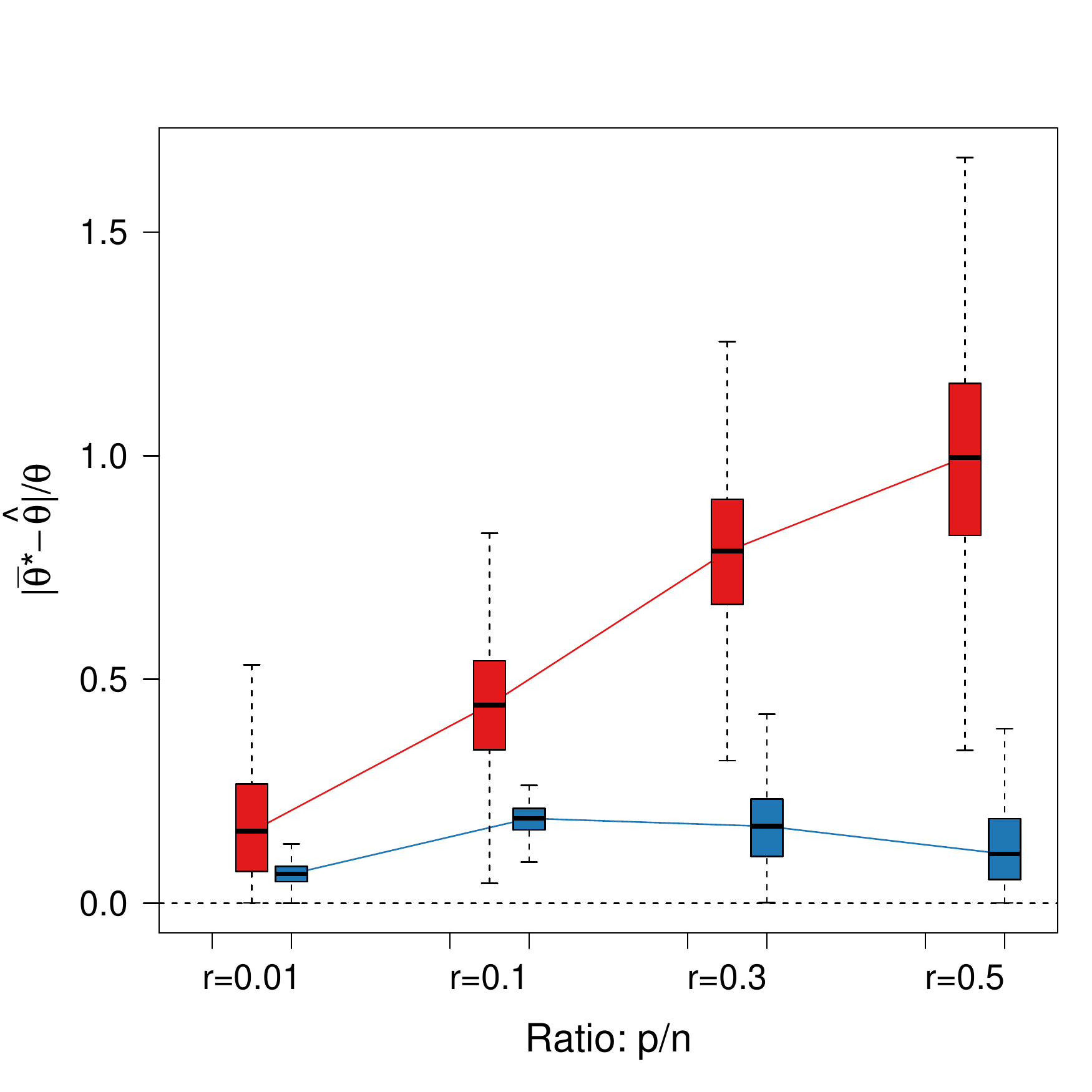} \label{subfig:bootBiasGap:EllipNorm} }
	\subfloat[][$X_i\sim$ Ellip. Unif]{\includegraphics[width=\doubleFig]{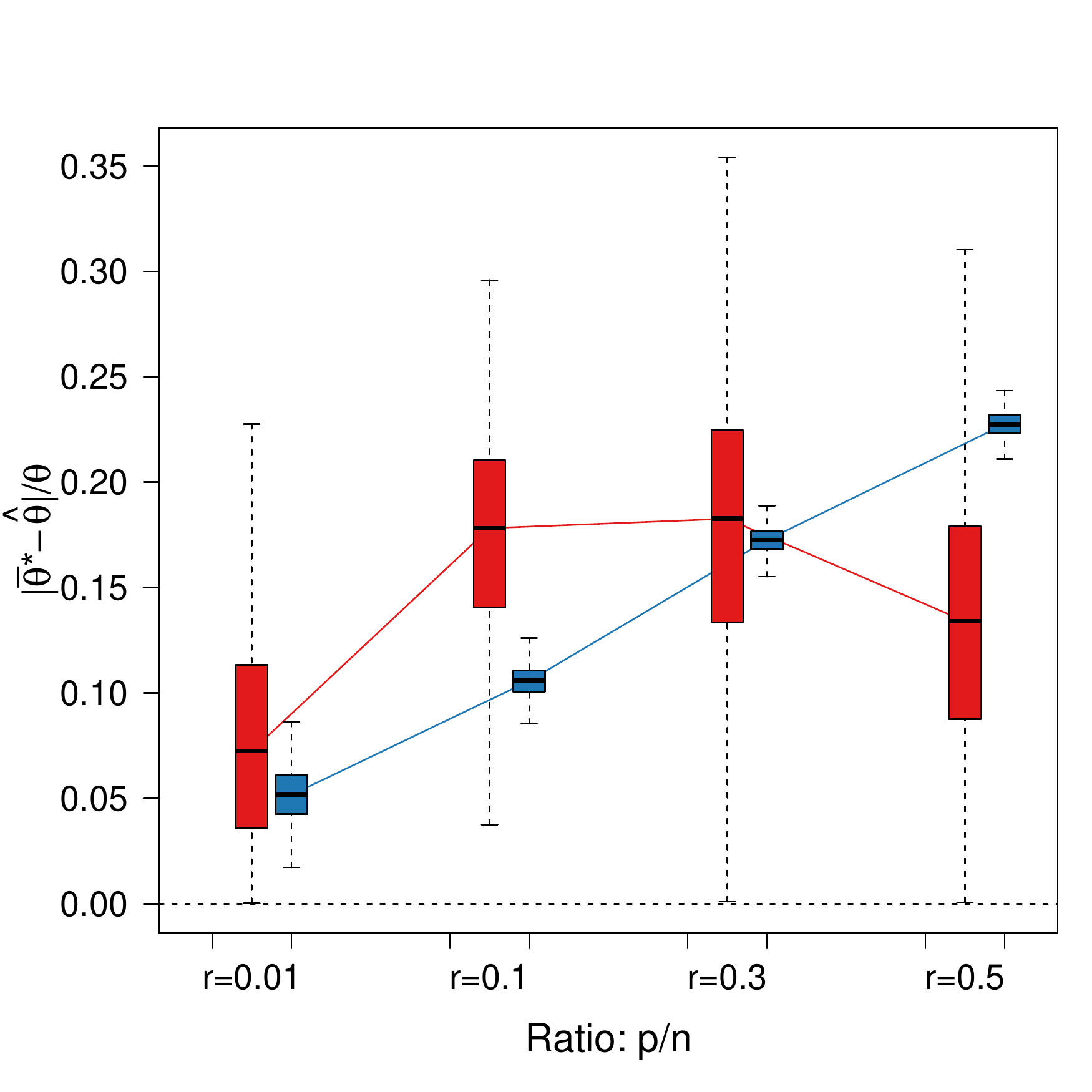} \label{subfig:bootBiasGap:EllipUnif} }
	
\caption{
 \textbf{Gap Statistic: Ratio of Bias of Bootstrap to true Gap Statistic. } Note that the true Gap Statistic when $\lambda_1=1$ is zero, so that the ratio is not well defined and hence not plotted . See also Supplementary Figures \ref{tab:bootBiasNormGap}-\ref{tab:bootBiasEllipUnifGap} for the median bias values and the legend of Figure \ref{fig:bootBias} in the main text for more information about this plot. Note that the y-axis is different for each of the distributions, and differs from that of Supplementary Figure \ref{fig:bootBigBias}. 
}\label{fig:bootBiasGap}
\end{figure}	
%\comEAP[Doesn't it make more sense to compare it to the actual bias, rather than the parameter?]
%\comEAP[Need to remove $\lambda_1=1$ from legend of the plot!]
\begin{figure}[t]
	\centering
	\subfloat[][$X_i\sim$ Normal]{\includegraphics[width=\doubleFig]{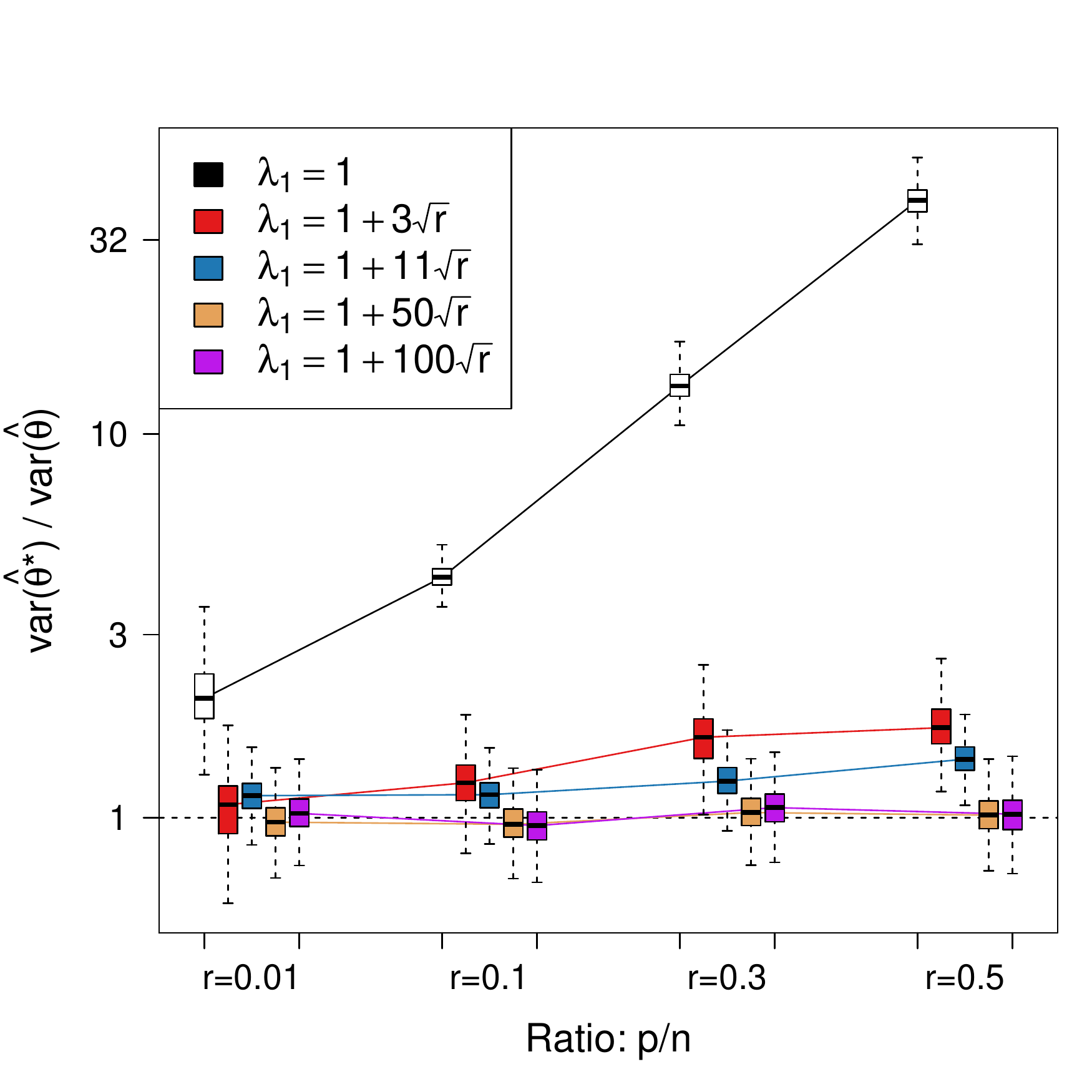} \label{subfig:bootVarGap:Norm} }
	\subfloat[][$X_i\sim$ Ellip. Exp]{\includegraphics[width=\doubleFig]{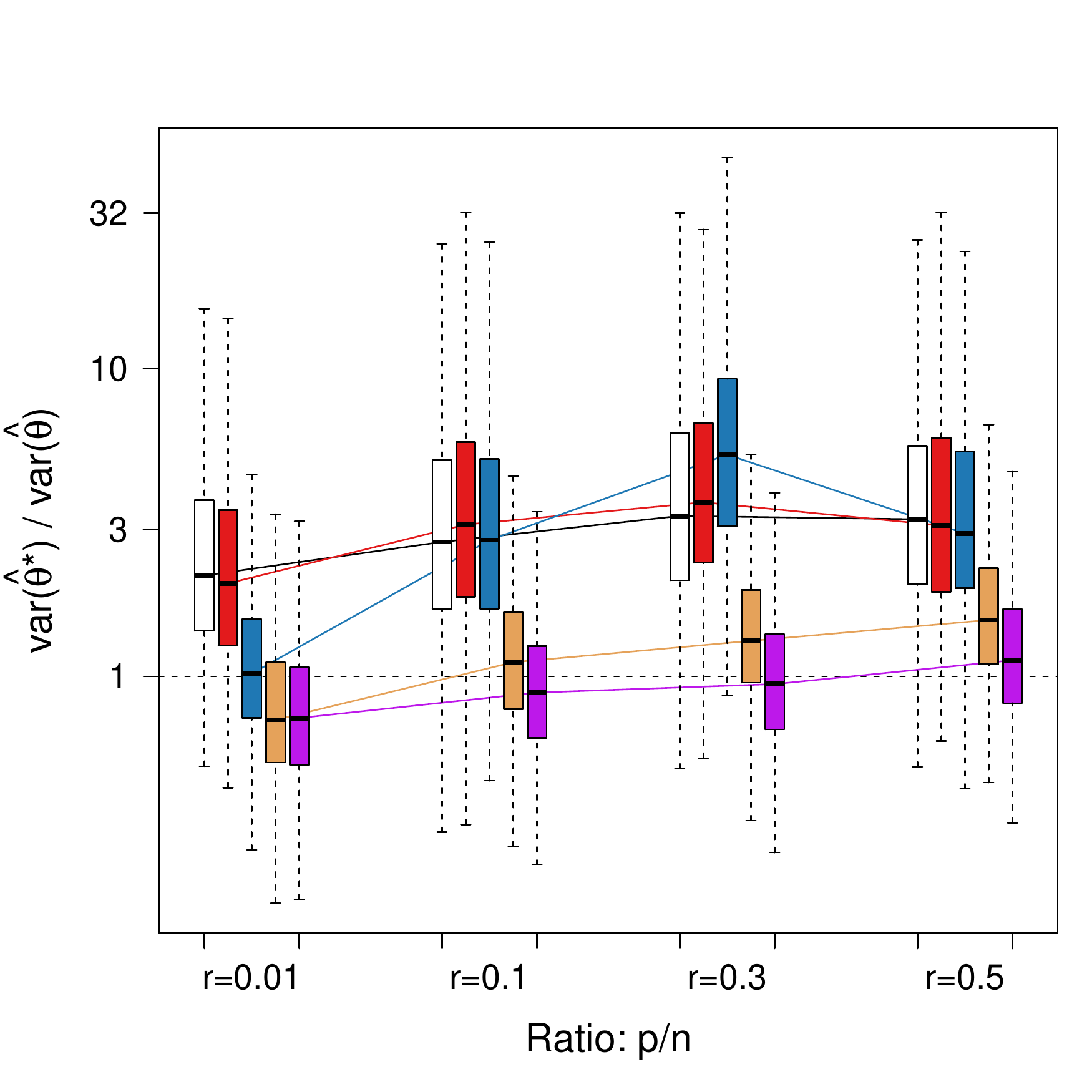} \label{subfig:bootVarGap:EllipExp} }\\
	\subfloat[][$X_i\sim$ Ellip. Normal]{\includegraphics[width=\doubleFig]{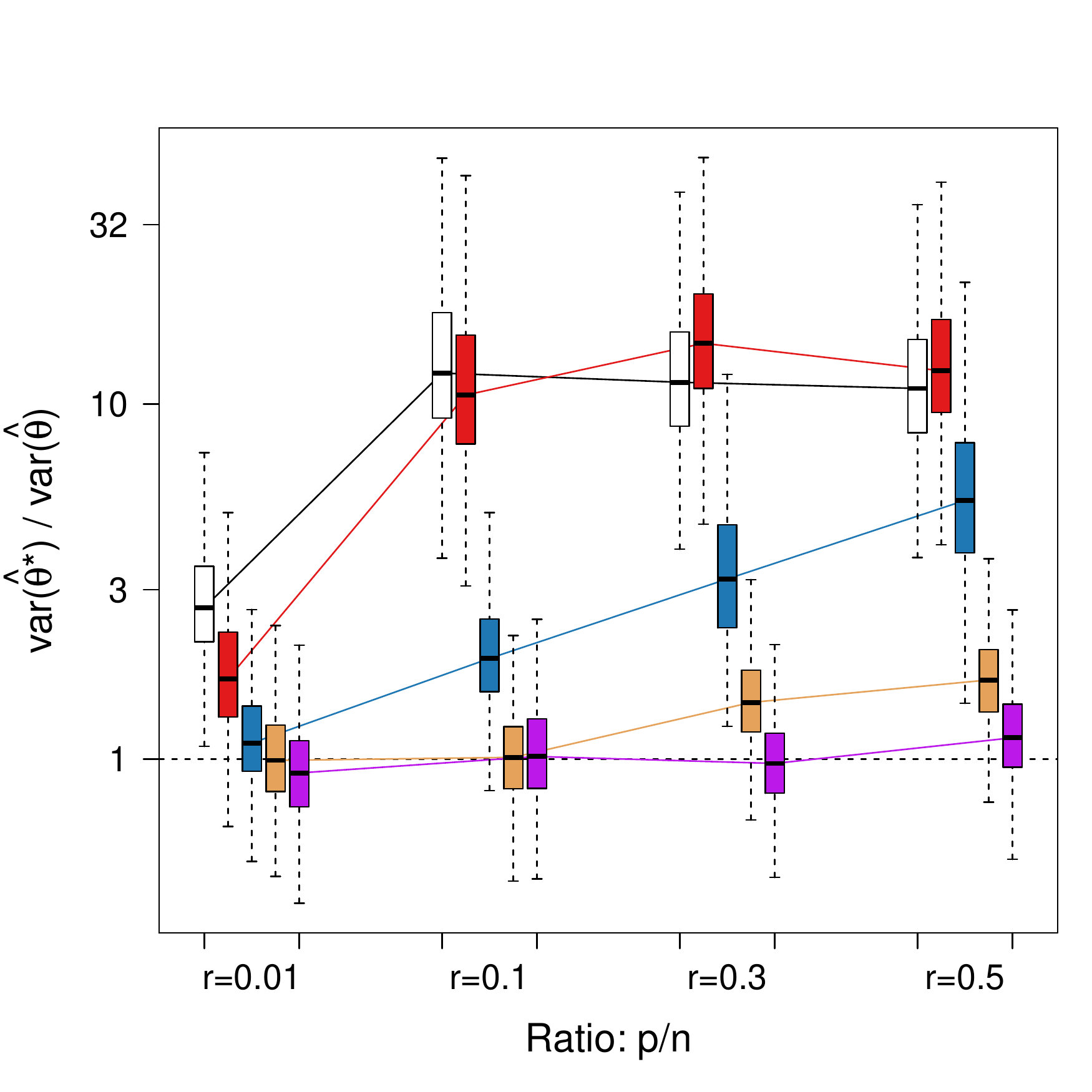} \label{subfig:bootVarGap:EllipNorm} }
	\subfloat[][$X_i\sim$ Ellip. Uniform]{\includegraphics[width=\doubleFig]{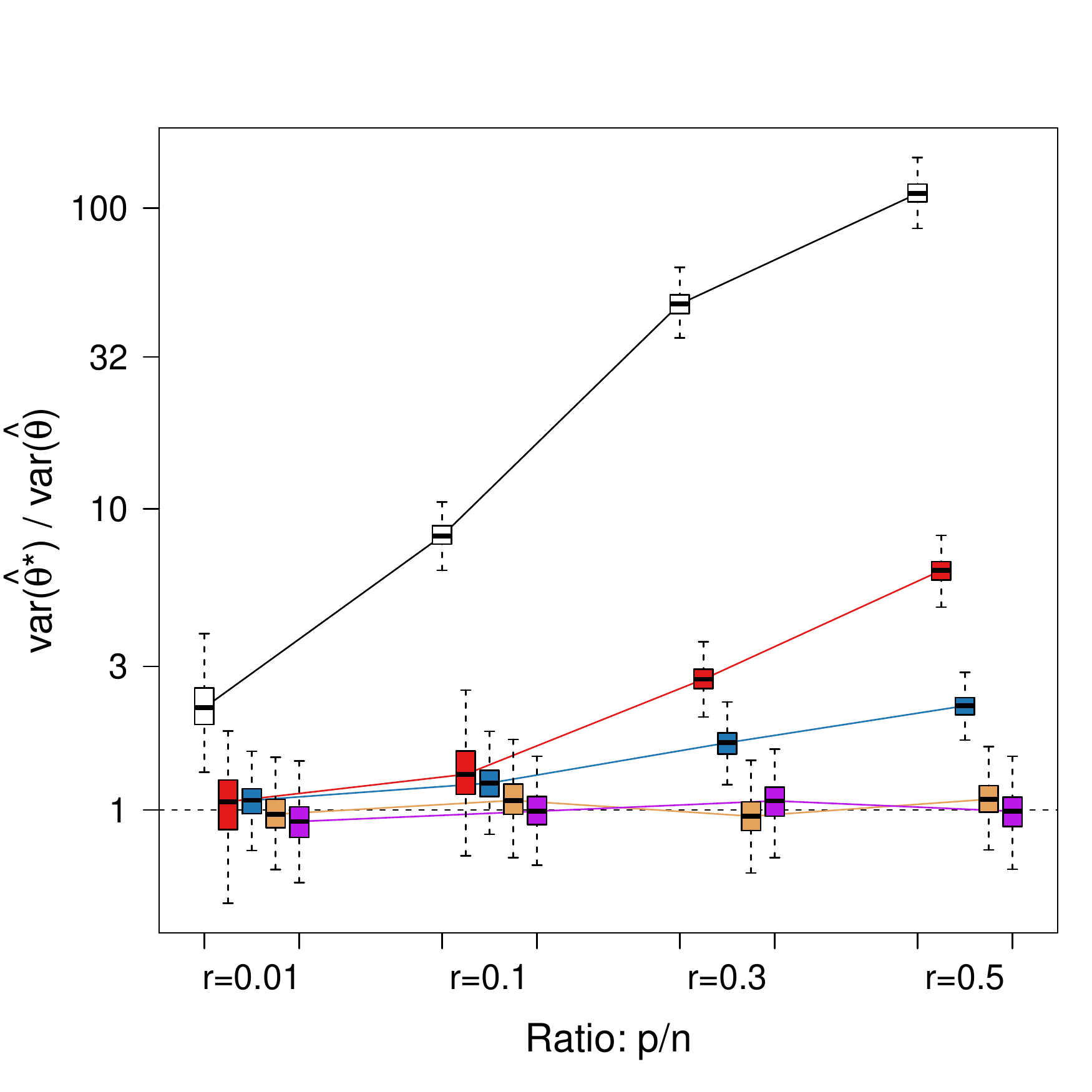} \label{subfig:bootVarGap:EllipUnif} }
	
\caption{
 \textbf{Gap Statistic: Ratio of Bootstrap Estimate of Variance to True Variance.}
}\label{fig:bootVarGap}
\end{figure}	
\newpage

\begin{figure}[t]
	\centering
	\subfloat[][\centering$X_i\sim$ Normal\par Percentile Intervals]{\includegraphics[type=pdf,ext=.pdf,read=.pdf,width=\doubleFig]{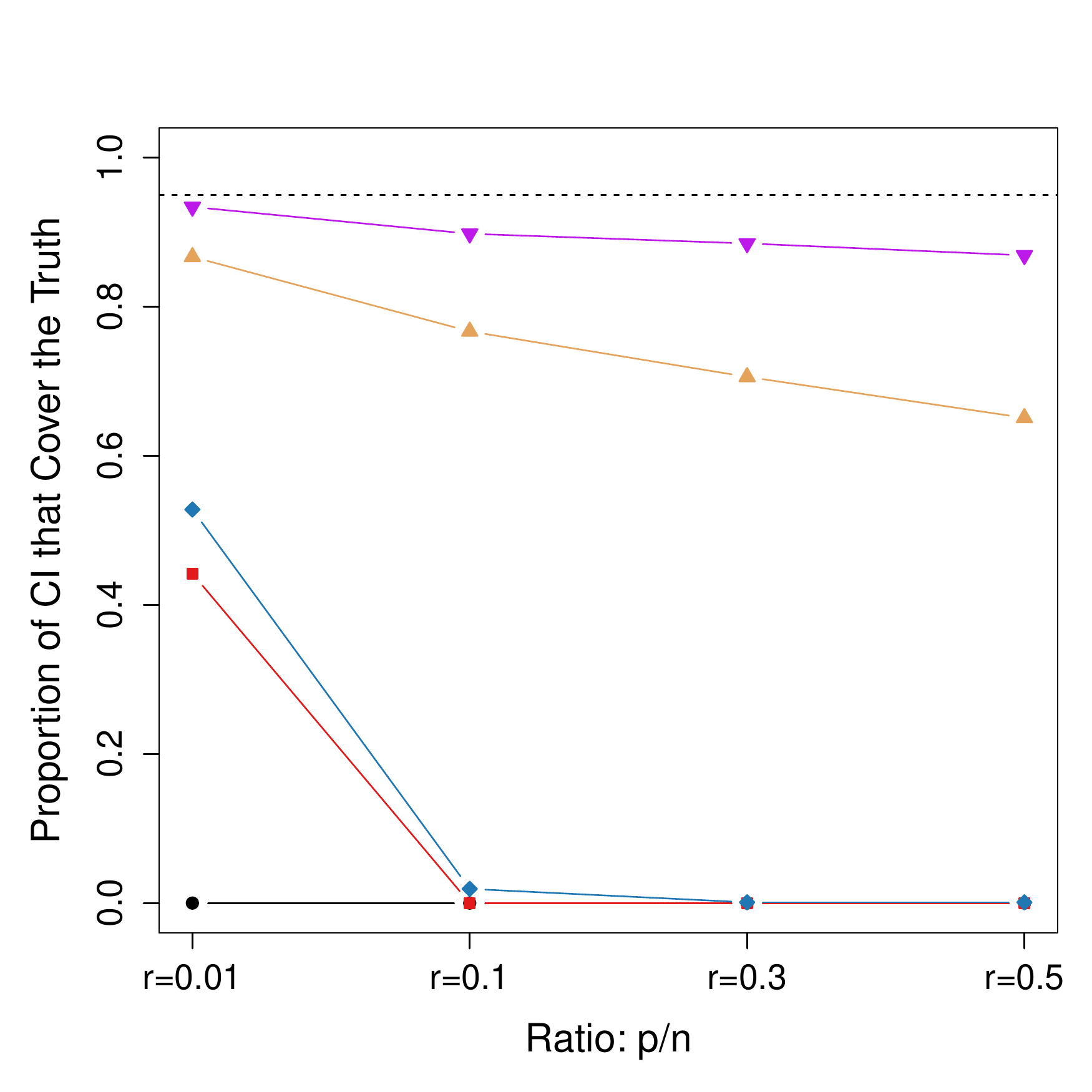} \label{subfig:bootCI:NormPerc} }
	\subfloat[][\centering$X_i\sim$ Normal\par Normal-based Intervals]{\includegraphics[type=pdf,ext=.pdf,read=.pdf,width=\doubleFig]{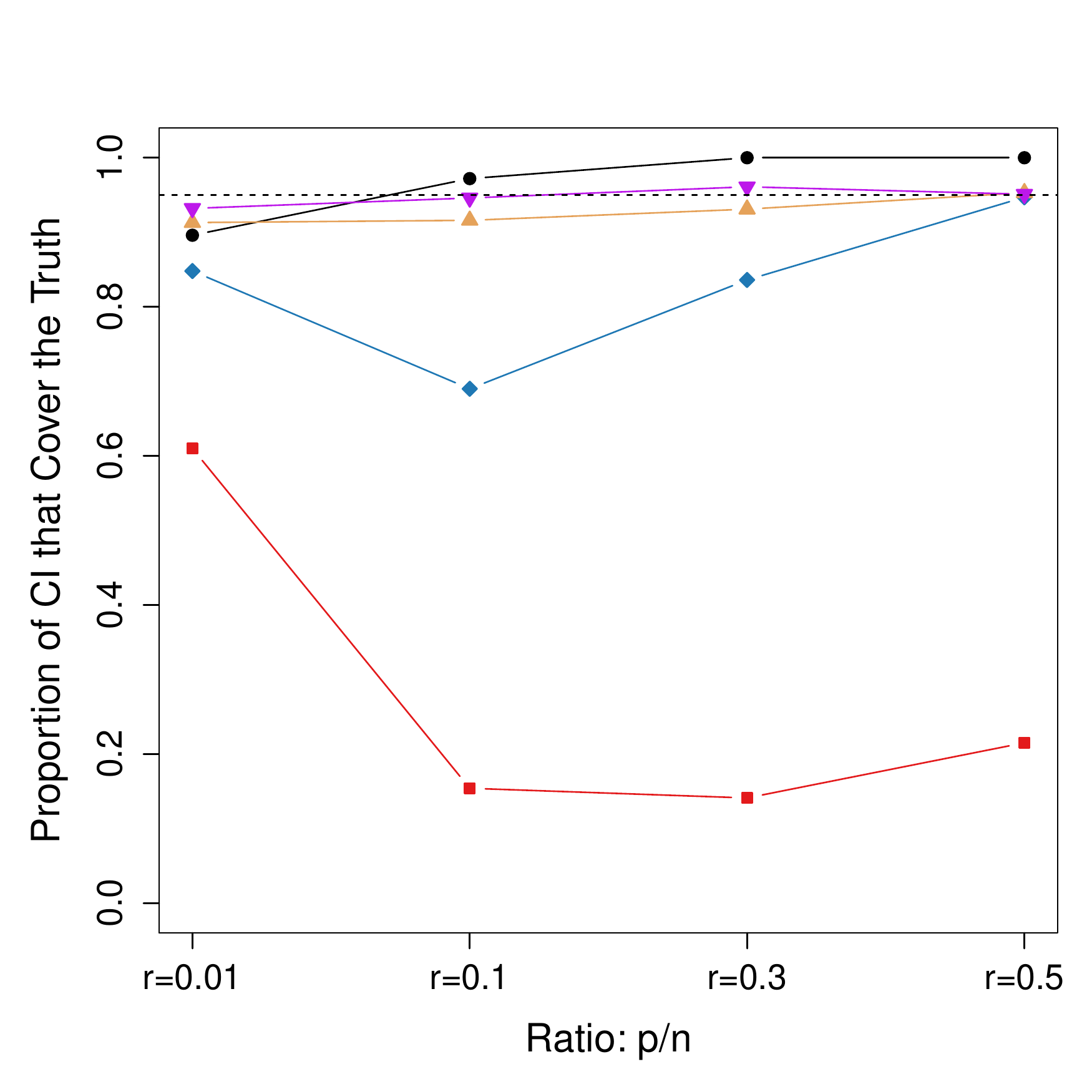} \label{subfig:bootCI:NormNormal} }
	\\
	\subfloat[][\centering$X_i\sim$ Ellip Exp\par Percentile Intervals]{\includegraphics[type=pdf,ext=.pdf,read=.pdf,width=\doubleFig]{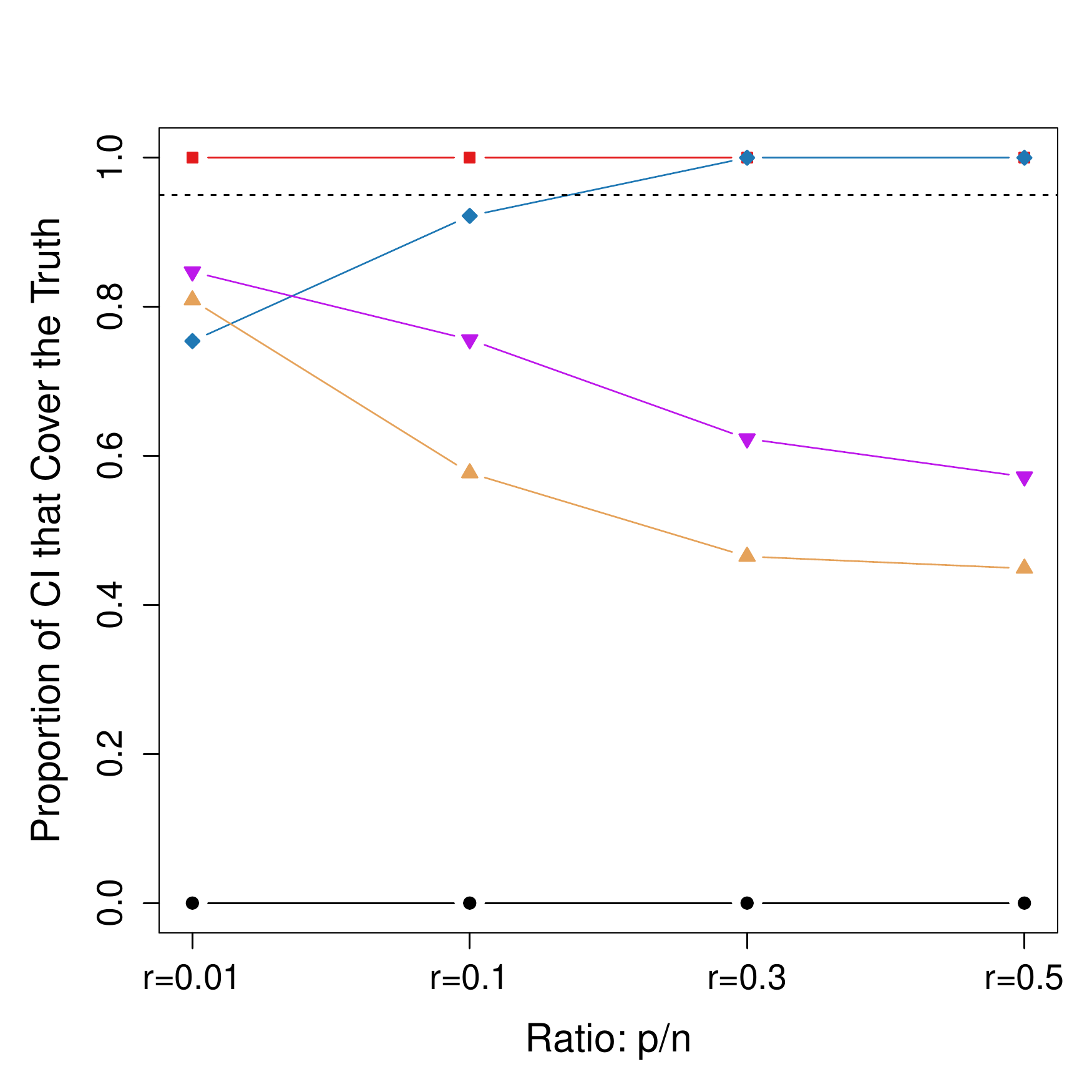} \label{subfig:bootCI:EllipExpPerc} }
	\subfloat[][\centering $X_i\sim$ Ellip Exp \par Normal-based Intervals]{\includegraphics[type=pdf,ext=.pdf,read=.pdf,width=\doubleFig]{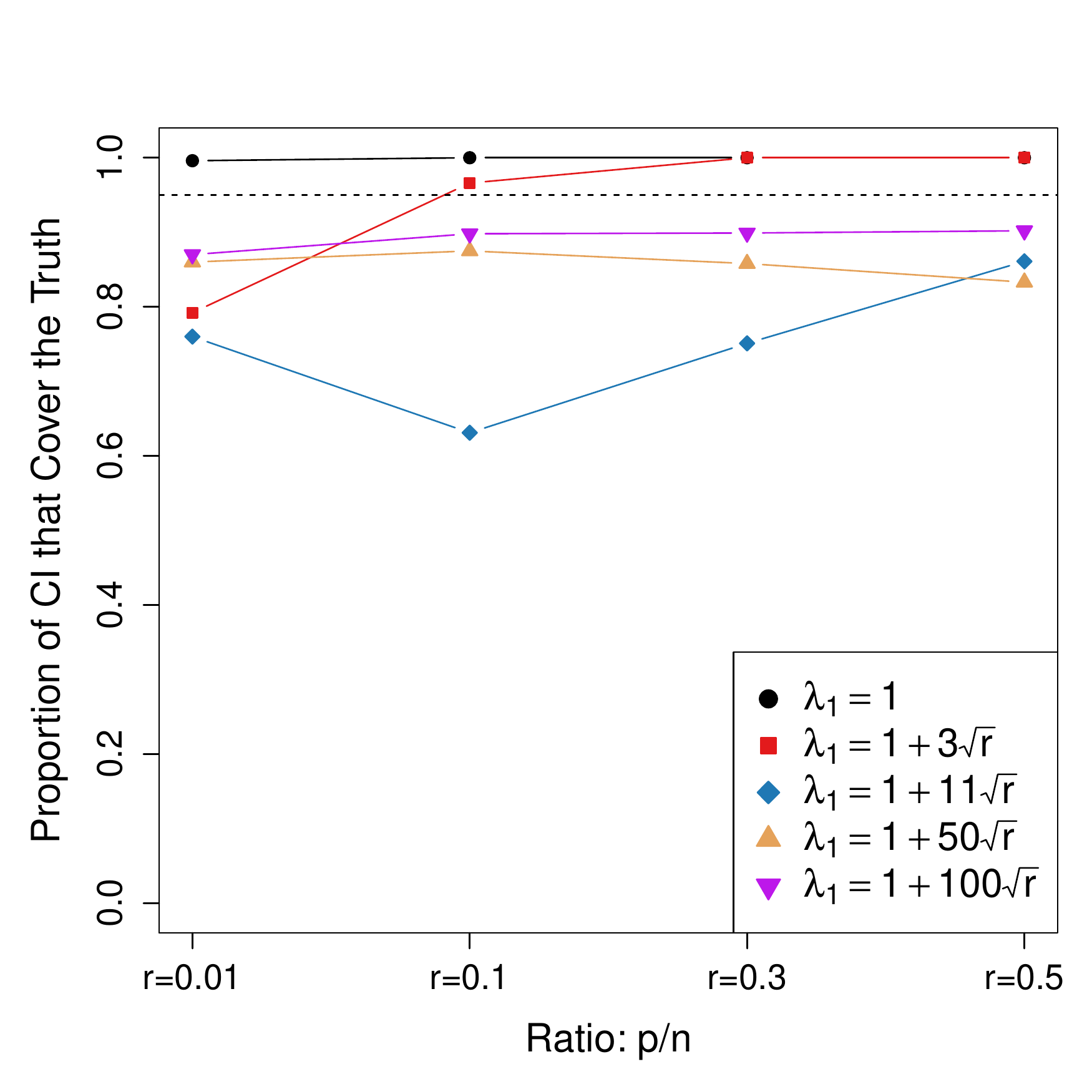} \label{subfig:bootCI:EllipExpNorm} }
\end{figure}
\begin{figure}
	\ContinuedFloat
	\subfloat[][\centering$X_i\sim$ Ellip Normal\par Percentile Intervals]{\includegraphics[type=pdf,ext=.pdf,read=.pdf,width=\doubleFig]{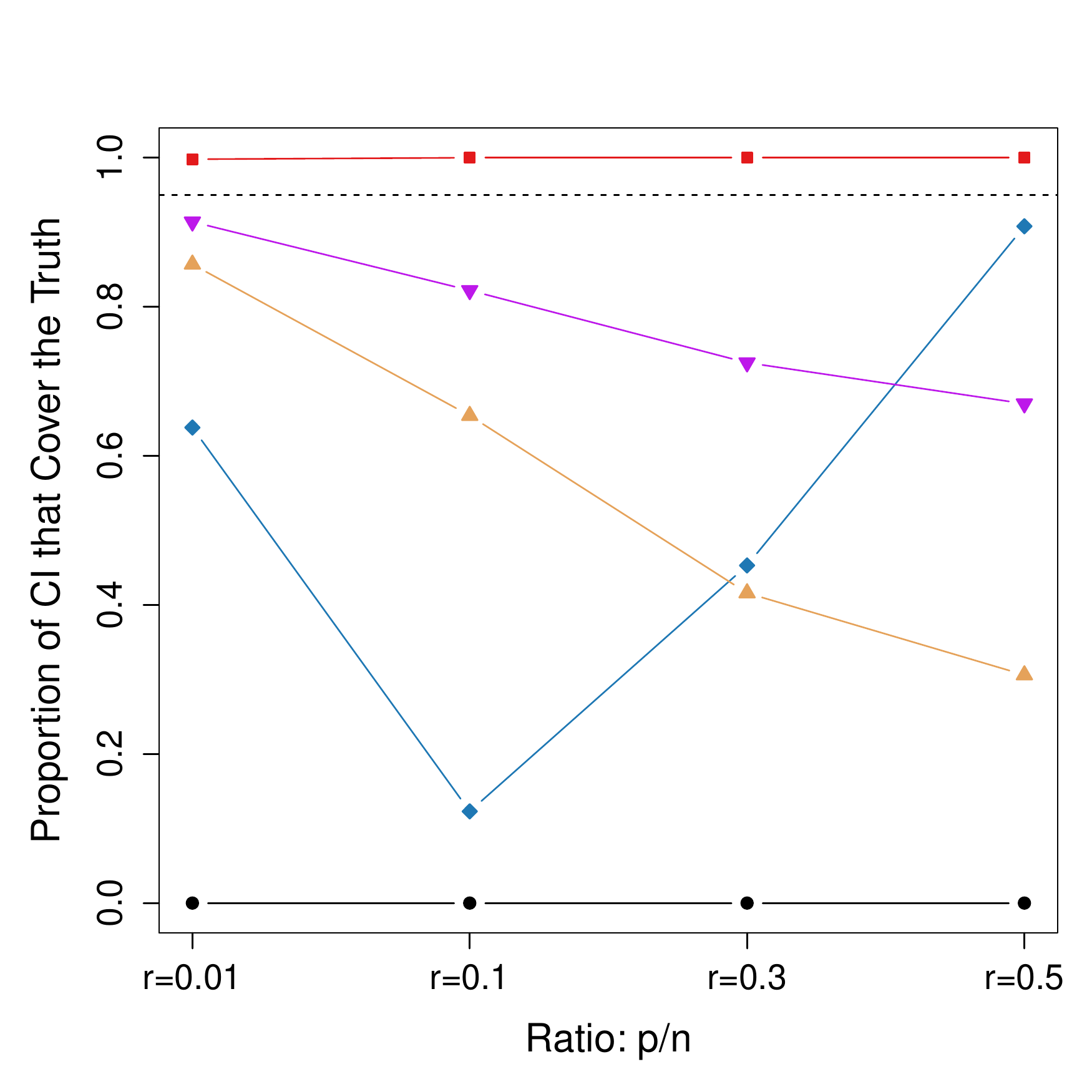} \label{subfig:bootCI:NormPerc} }
	\subfloat[][\centering$X_i\sim$ Ellip Normal\par Normal-based Intervals]{\includegraphics[type=pdf,ext=.pdf,read=.pdf,width=\doubleFig]{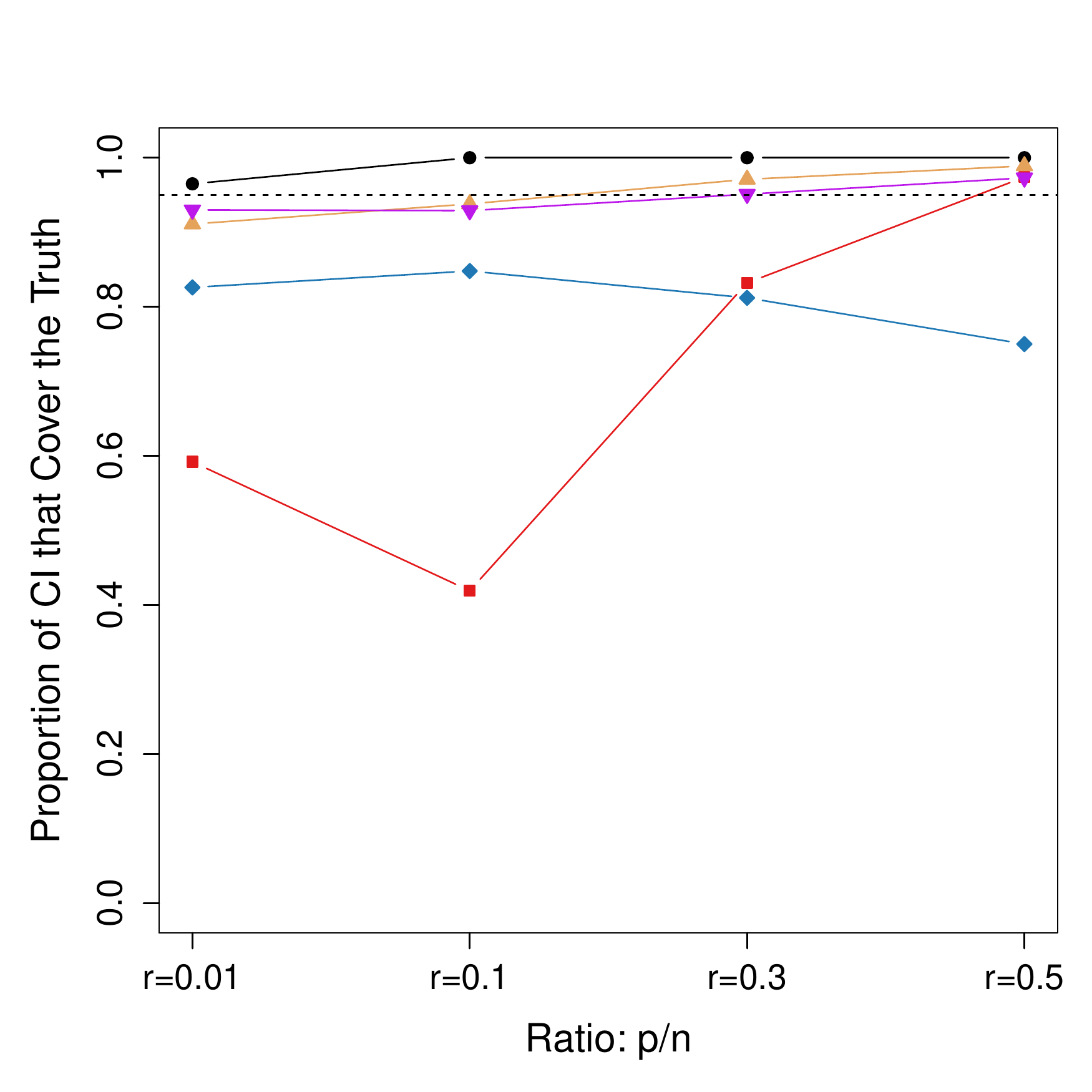} \label{subfig:bootCI:NormNormal} }
	\\
	\subfloat[][\centering$X_i\sim$ Ellip Uniform\par Percentile Intervals]{\includegraphics[type=pdf,ext=.pdf,read=.pdf,width=\doubleFig]{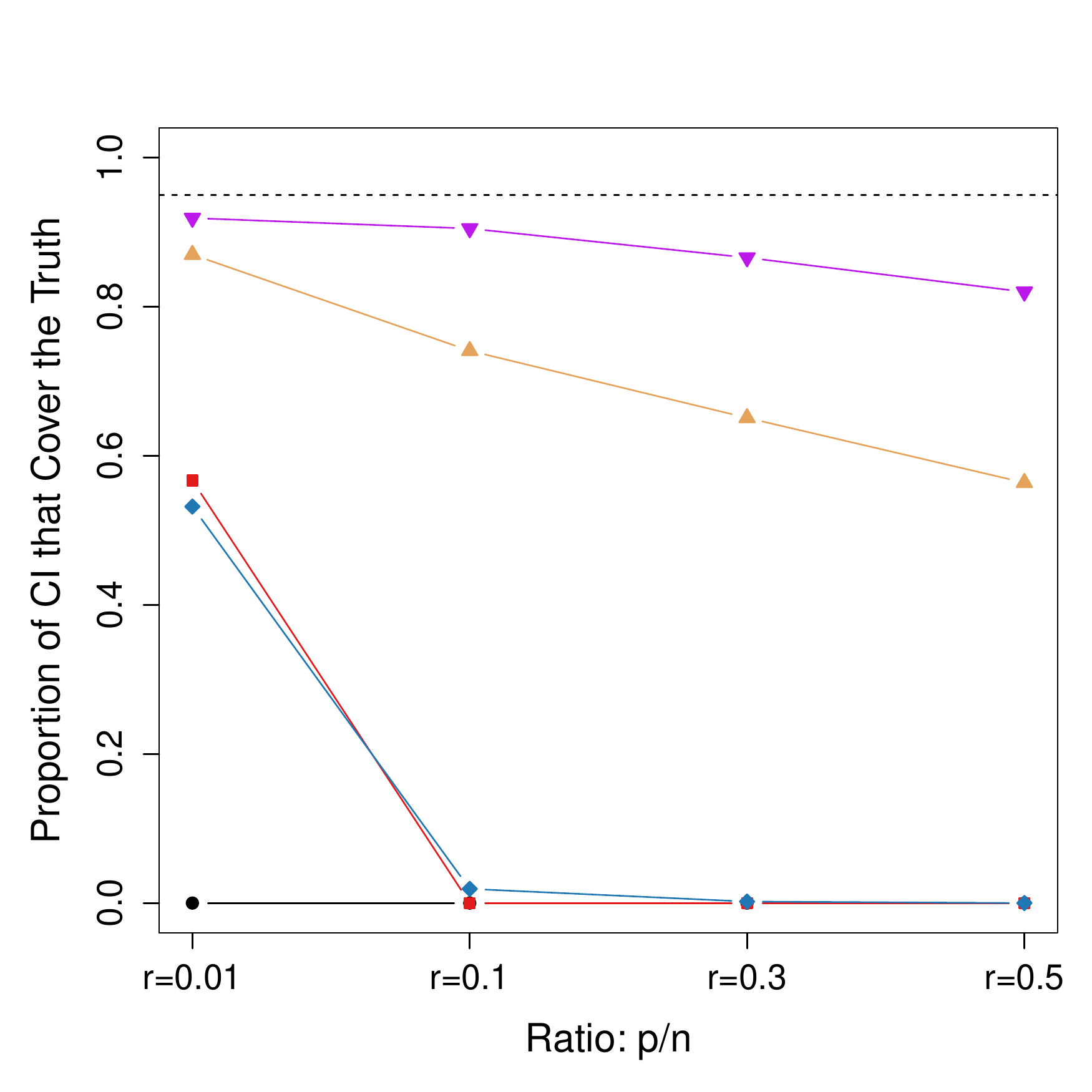} \label{subfig:bootCI:EllipExpPerc} }
	\subfloat[][\centering $X_i\sim$ Ellip Uniform \par Normal-based Intervals]{\includegraphics[type=pdf,ext=.pdf,read=.pdf,width=\doubleFig]{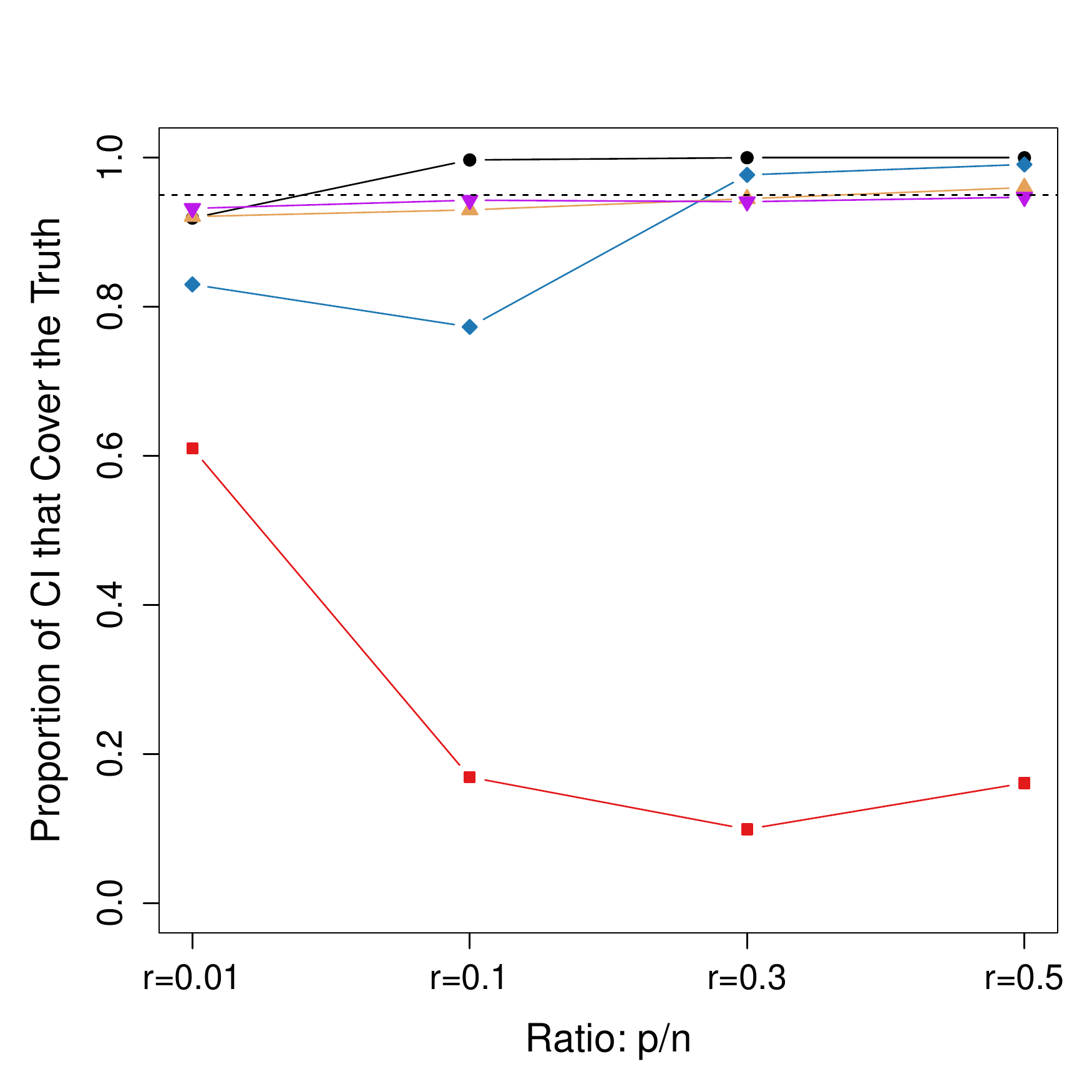} \label{subfig:bootCI:EllipExpNorm} }
\caption{
 \textbf{Gap Statistic: 95\% CI Coverage, $n=1,000$:  } \label{fig:bootCIGap}}
\end{figure}

\begin{figure}[t]
	\centering
	\subfloat[][\centering $\lambda_1=1$\par r=0.01 ]{\includegraphics[type=pdf,ext=.pdf,read=.pdf,width=\doubleFig]{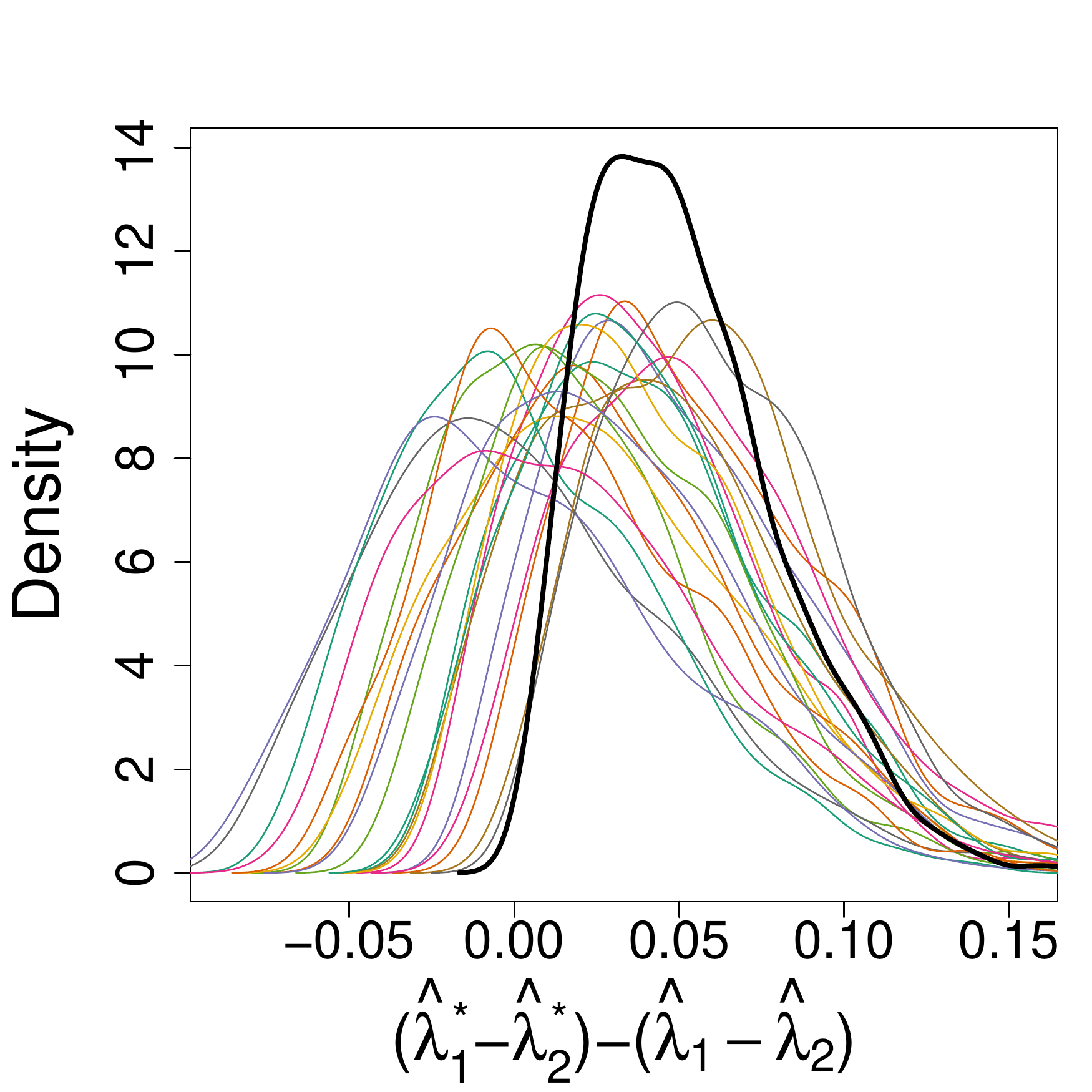} \label{subfig:bootDensityGapNorm:0.01_0} }
	\subfloat[][\centering  $\lambda_1=1$ \par r=0.3 ]{\includegraphics[type=pdf,ext=.pdf,read=.pdf,width=\doubleFig]{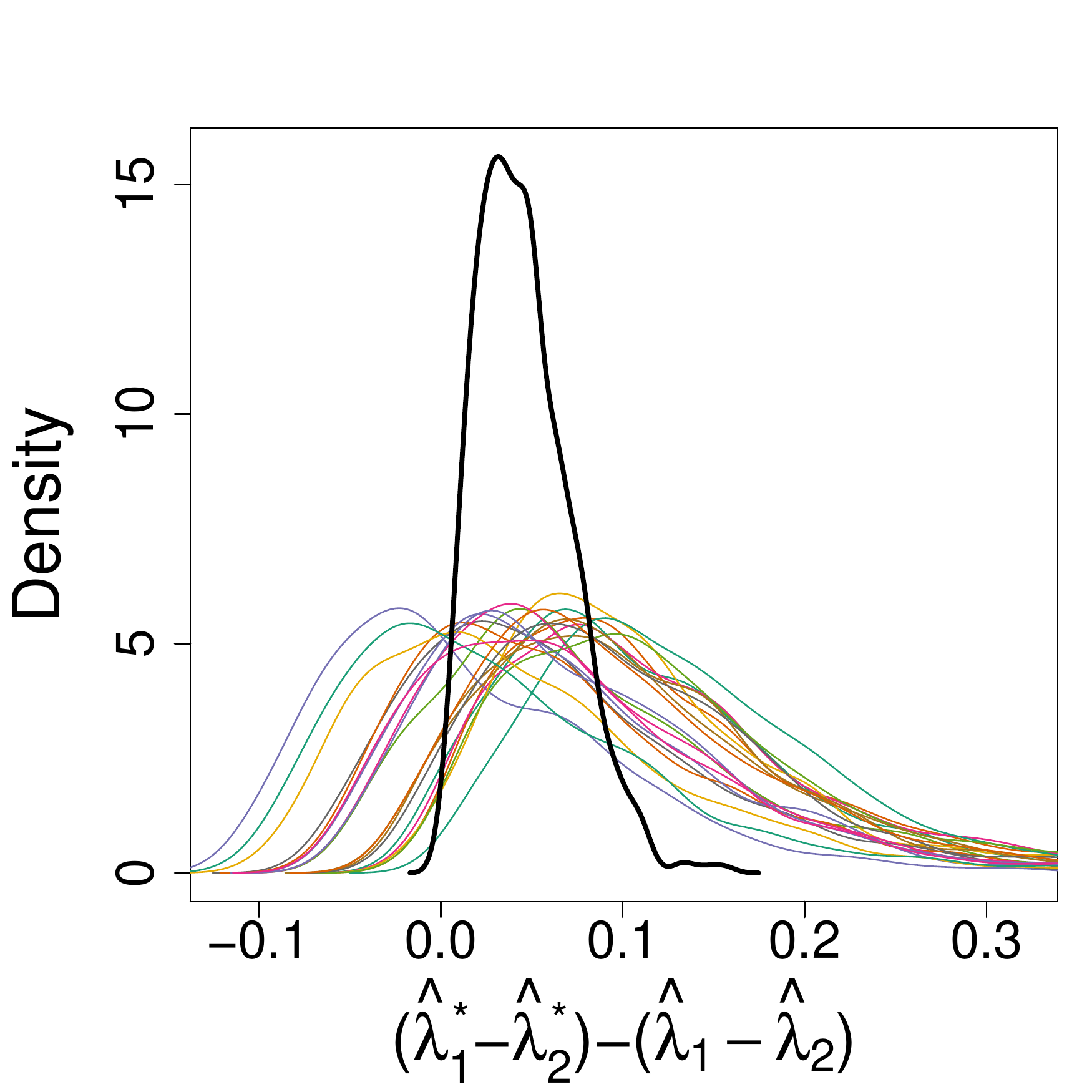} \label{subfig:bootDensityGapNorm:0.03_0} }
	\\
	\subfloat[][\centering  $\lambda_1=1+3\sqrt{r}$ \par r=0.01 ]{\includegraphics[type=pdf,ext=.pdf,read=.pdf,width=\doubleFig]{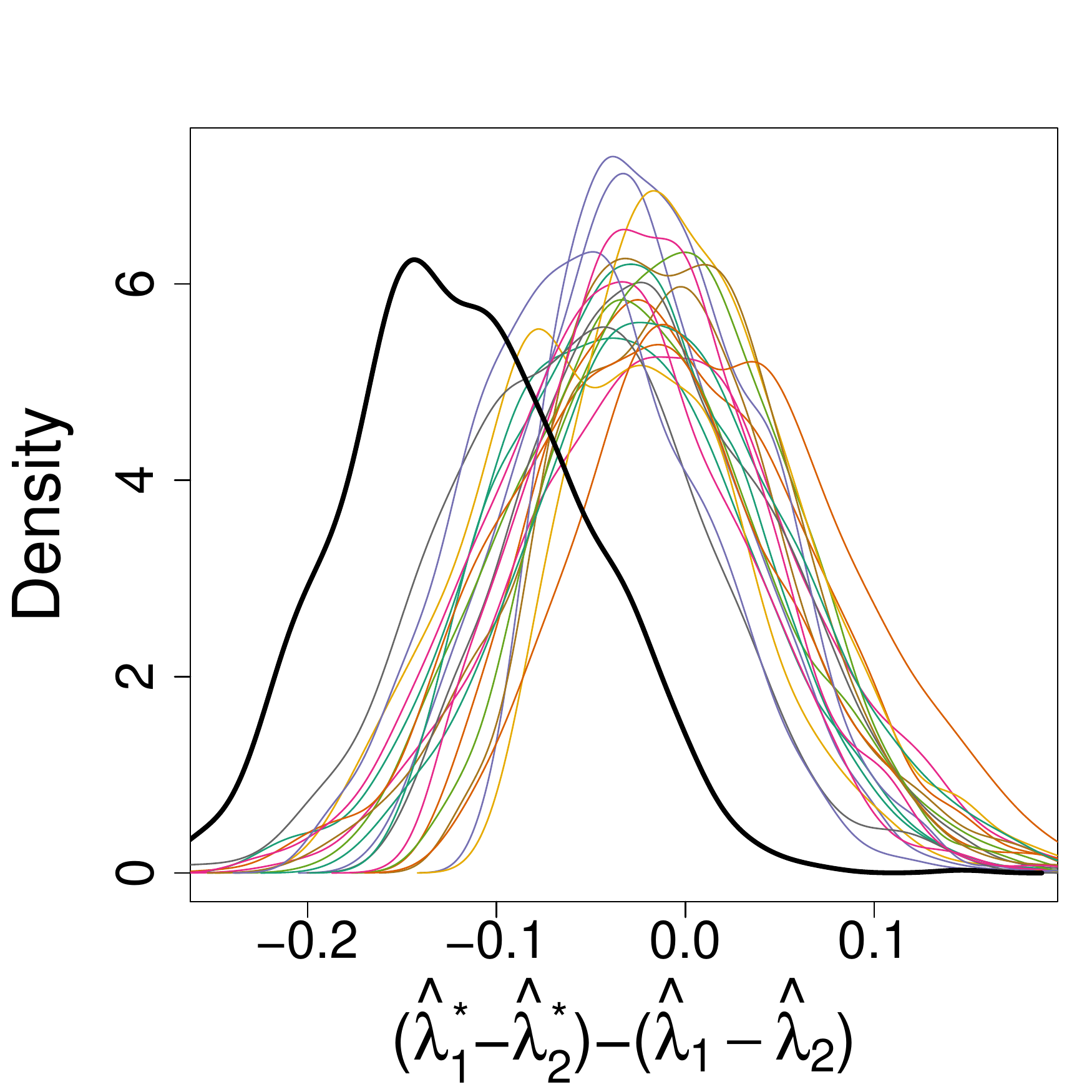} \label{subfig:bootDensityGapNorm:0.01_3} }
	\subfloat[][\centering  $\lambda_1=1+3\sqrt{r}$ \par r=0.3 ]{\includegraphics[type=pdf,ext=.pdf,read=.pdf,width=\doubleFig]{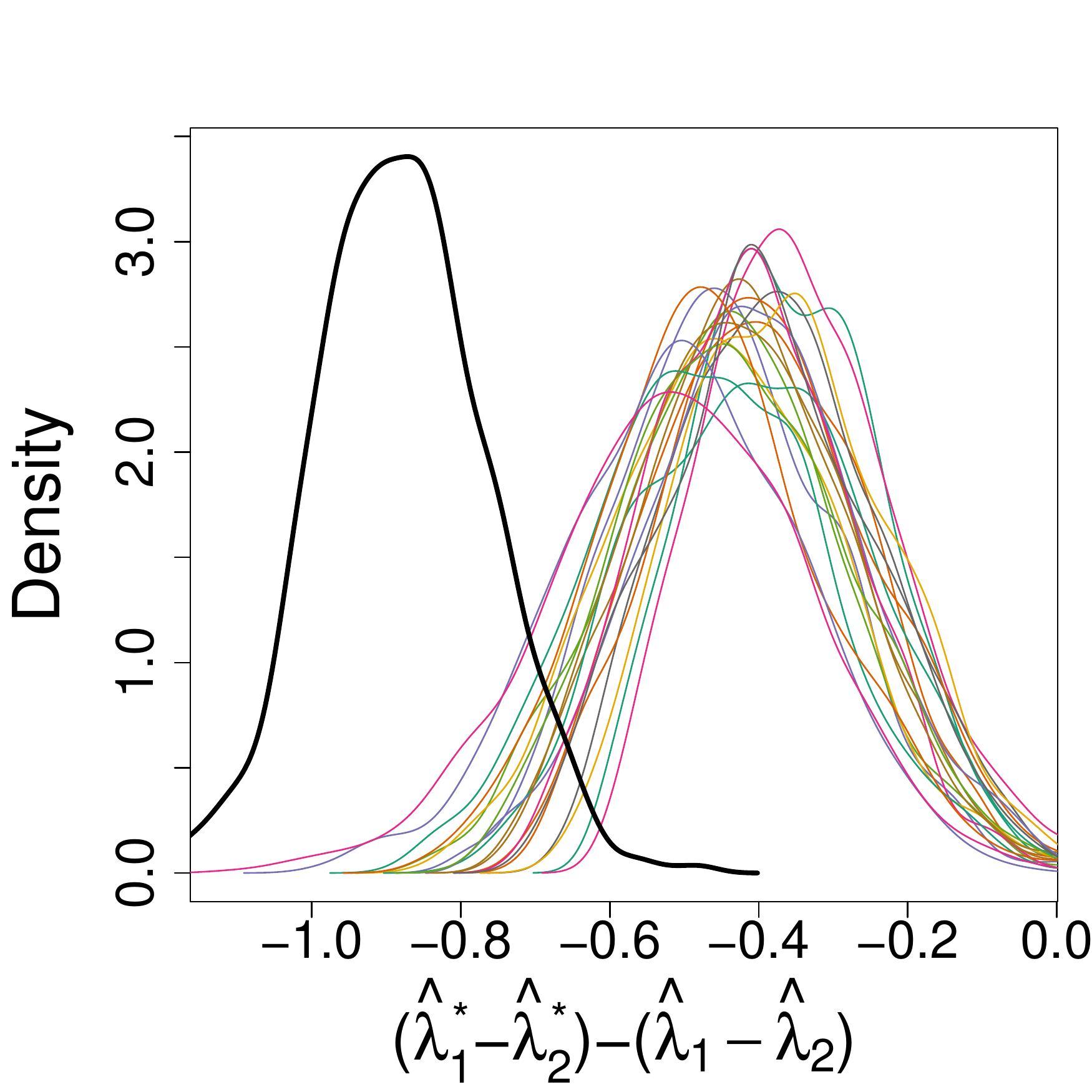} \label{subfig:bootDensityGapNorm:0.03_3} }
\caption{
 \textbf{Gap Statistic: Bootstrap distribution, $X_i\sim$ Normal, n=1,000:  } Plotted are the estimated density of twenty simulations of the bootstrap distribution of $(\hat{\lambda}^{*b}_1-\hat{\lambda}^{*b}_2)-(\hat{\lambda}_1-\hat{\lambda}_2)$, with $b=1,\ldots,999$. The solid black line line represents the distribution of $(\hat{\lambda}_1-\hat{\lambda}_2)-(\lambda_1-\lambda_2)$ over 1,000 simulations. 
}\label{fig:bootDensityGapNorm}
\end{figure}

\begin{figure}[t]
	\centering
	\subfloat[][\centering $\lambda_1=1$\par r=0.01]{\includegraphics[type=pdf,ext=.pdf,read=.pdf,width=\doubleFig]{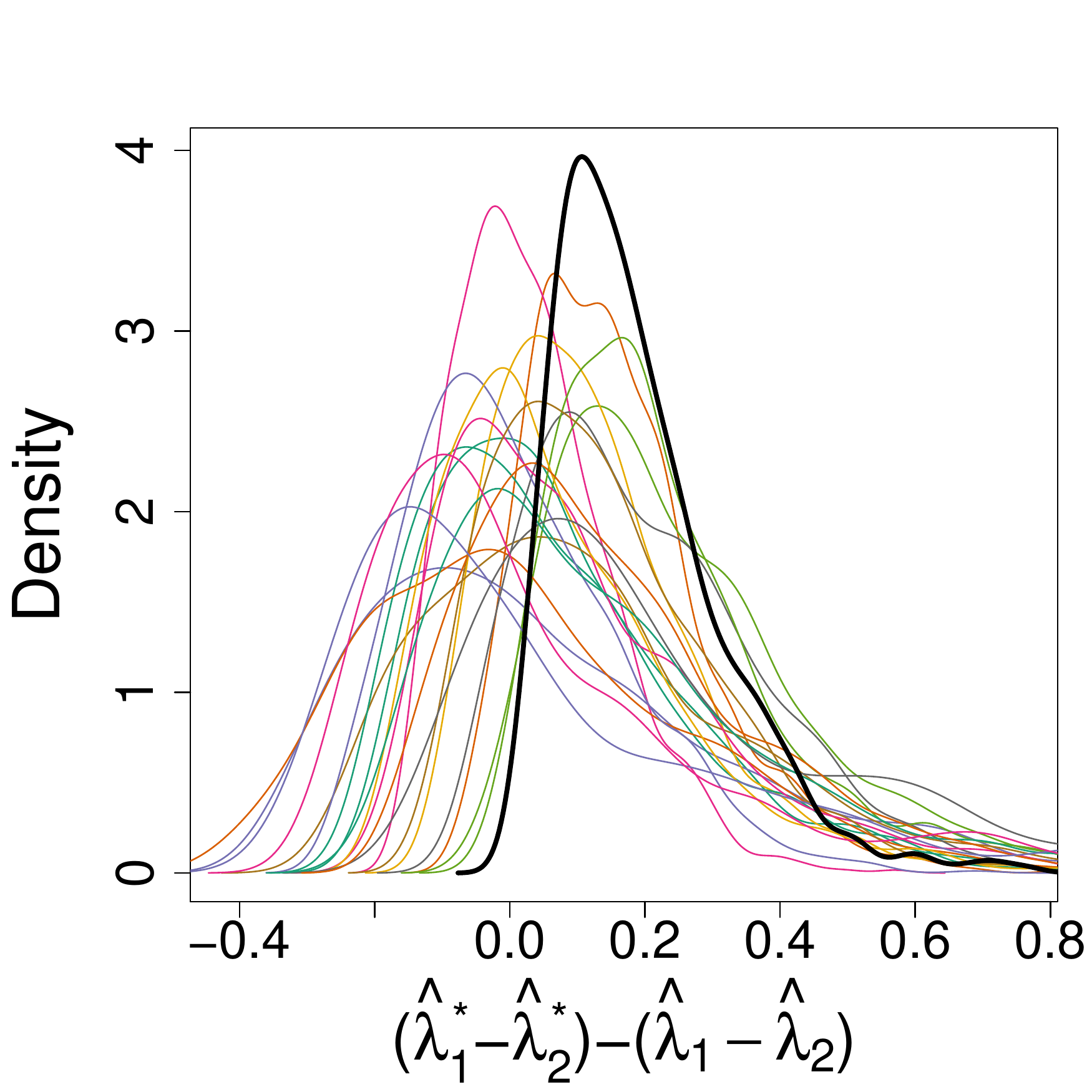} \label{subfig:bootDensityGapEllipExp:0.01_0} }
	\subfloat[][\centering  $\lambda_1=1$ \par r=0.3]{\includegraphics[type=pdf,ext=.pdf,read=.pdf,width=\doubleFig]{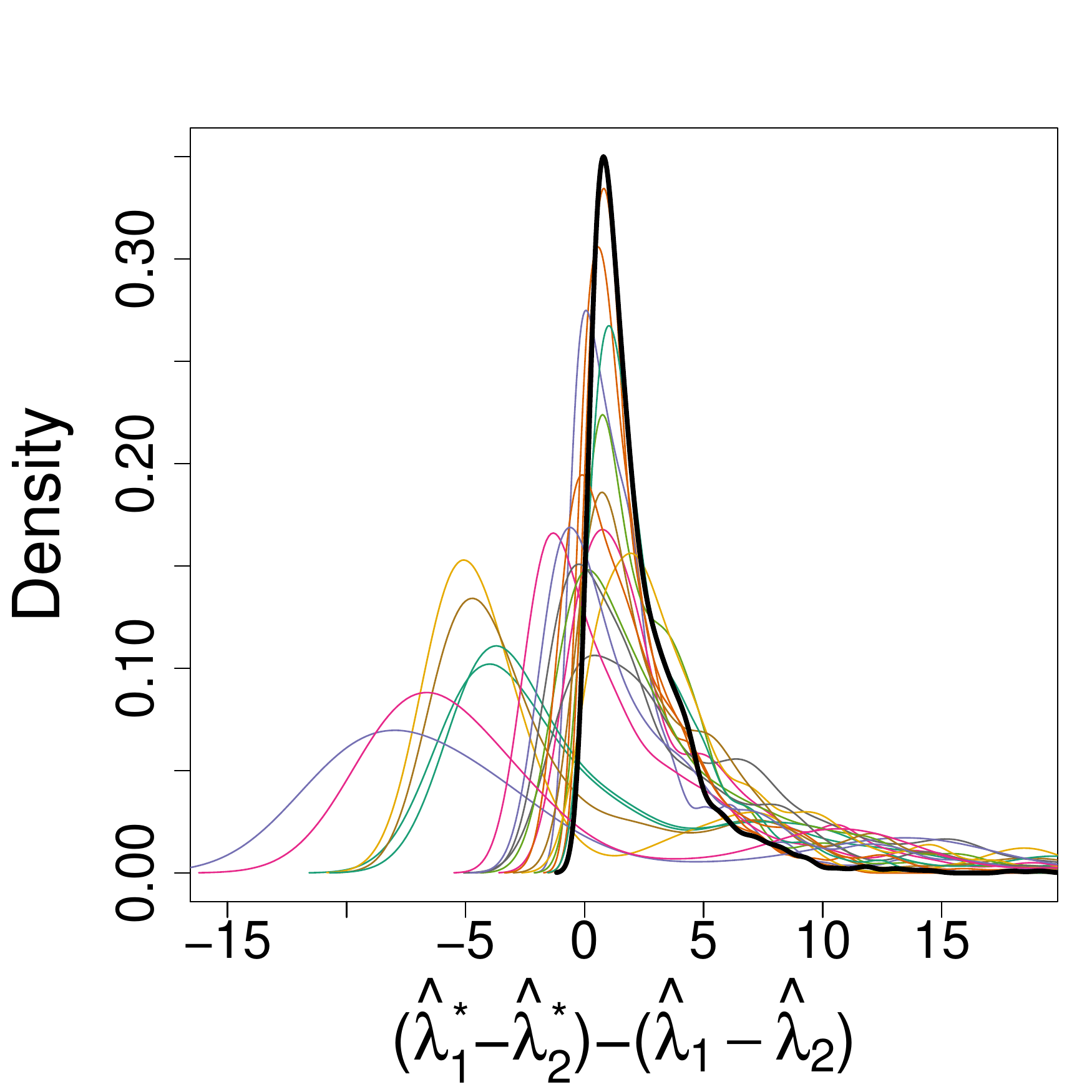} \label{subfig:bootDensityGapEllipExp:0.03_0} }
	\\
	\subfloat[][\centering  $\lambda_1=1+3\sqrt{r}$ \par r=0.01]{\includegraphics[type=pdf,ext=.pdf,read=.pdf,width=\doubleFig]{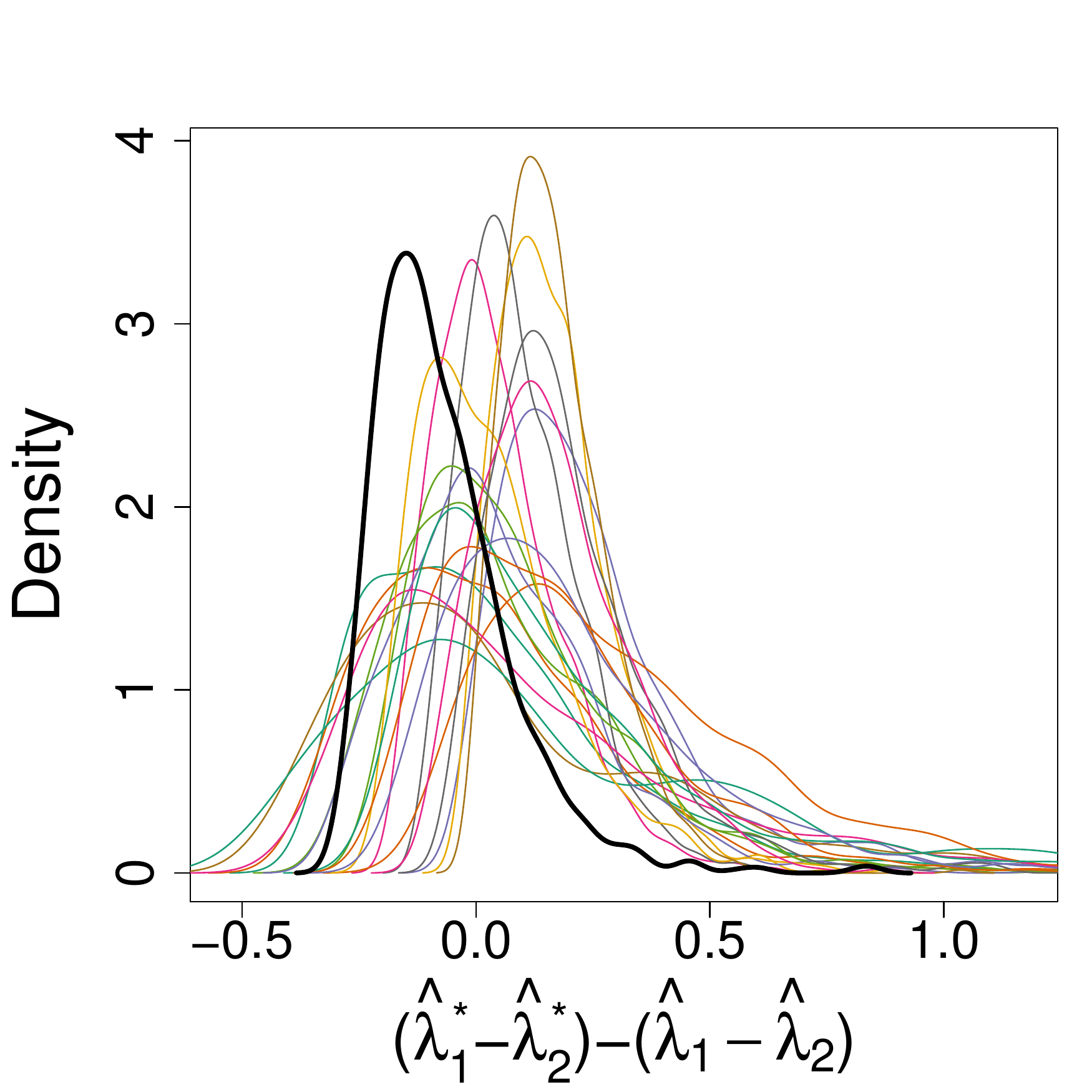} \label{subfig:bootDensityGapEllipExp:0.01_3} }
	\subfloat[][\centering  $\lambda_1=1+3\sqrt{r}$ \par r=0.3]{\includegraphics[type=pdf,ext=.pdf,read=.pdf,width=\doubleFig]{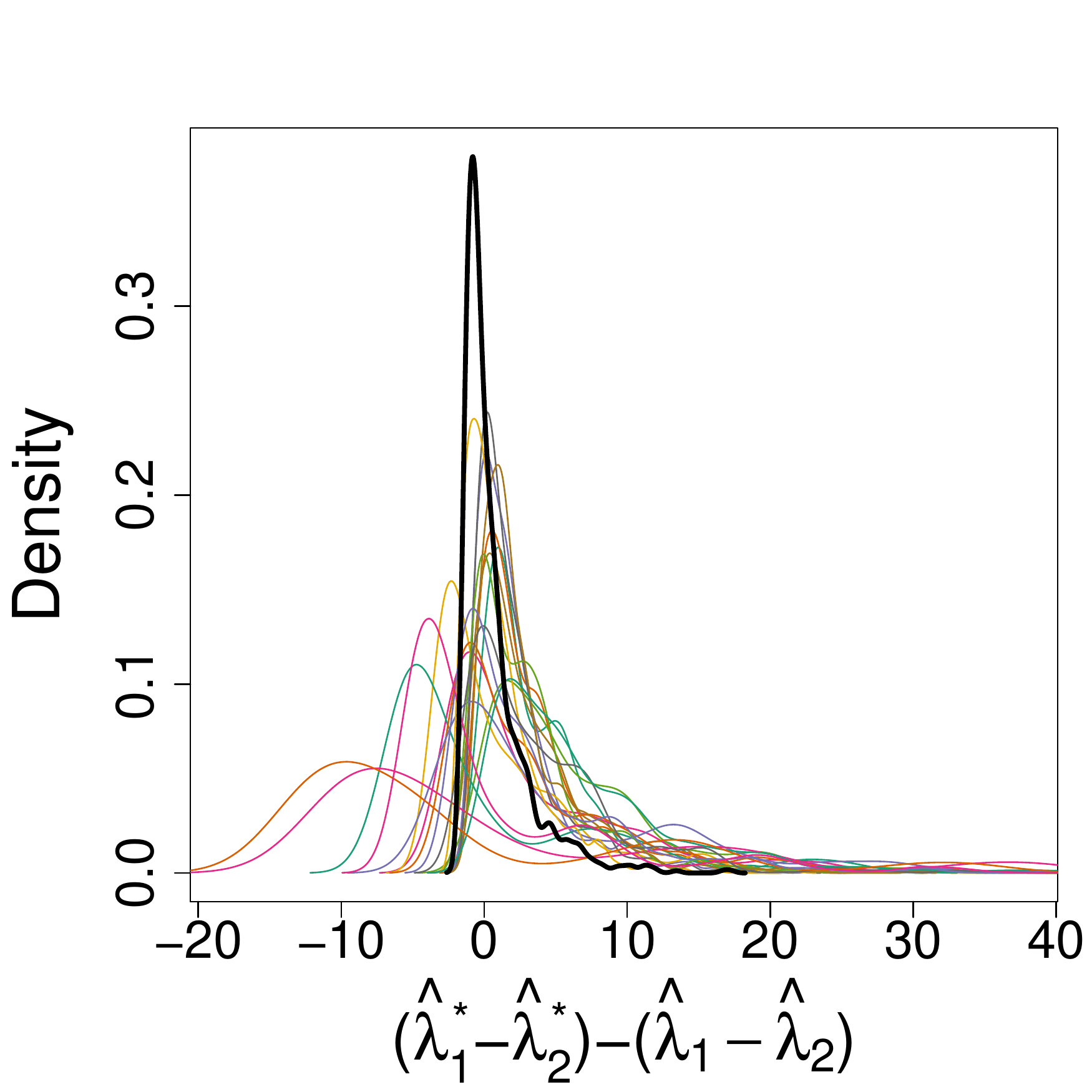} \label{subfig:bootDensityGapEllipExp:0.03_3} }
\caption{
 \textbf{Gap Statistic: Bootstrap distribution, $X_i\sim$ Ellip Exp, n=1,000:  } Plotted are the estimated density of twenty simulations of the bootstrap distribution of $(\hat{\lambda}^{*b}_1-\hat{\lambda}^{*b}_2)-(\hat{\lambda}_1-\hat{\lambda}_2)$, with $b=1,\ldots,999$. The solid black line line represents the distribution of $(\hat{\lambda}_1-\hat{\lambda}_2)-(\lambda_1-\lambda_2)$ over 1,000 simulations. 
}\label{fig:bootDensityGapEllipExp}
\end{figure}

%%%%%%%%%%%%%%%%%%%%%%%%%%%%%%%%%%%%%%%%%%%%%%%%%%%%%%%%%
%%%%%%%%%%%%%% GAP Ratio STATISTIC %%%%%%%%%%%%%%%%%%%
%%%%%%%%%%%%%%%%%%%%%%%%%%%%%%%%%%%%%%%%%%%%%%%%%%%%%%%%%

\begin{figure}[t]
	\centering
	\subfloat[][$X_i\sim$ Normal]{\includegraphics[width=\doubleFig]{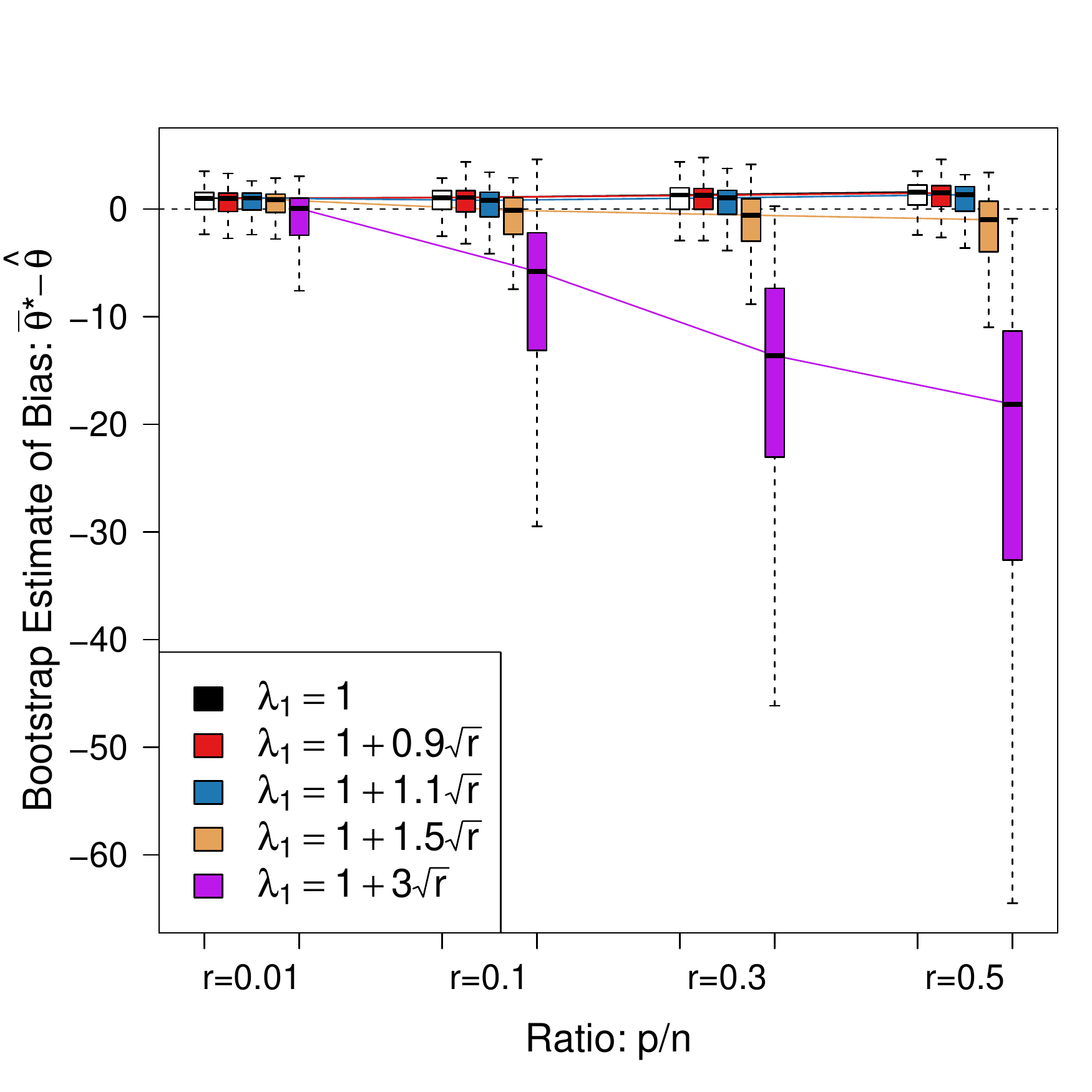} \label{subfig:bootBiasGapRatio:Norm} }
	\subfloat[][$X_i\sim$ Ellip. Exp]{\includegraphics[width=\doubleFig]{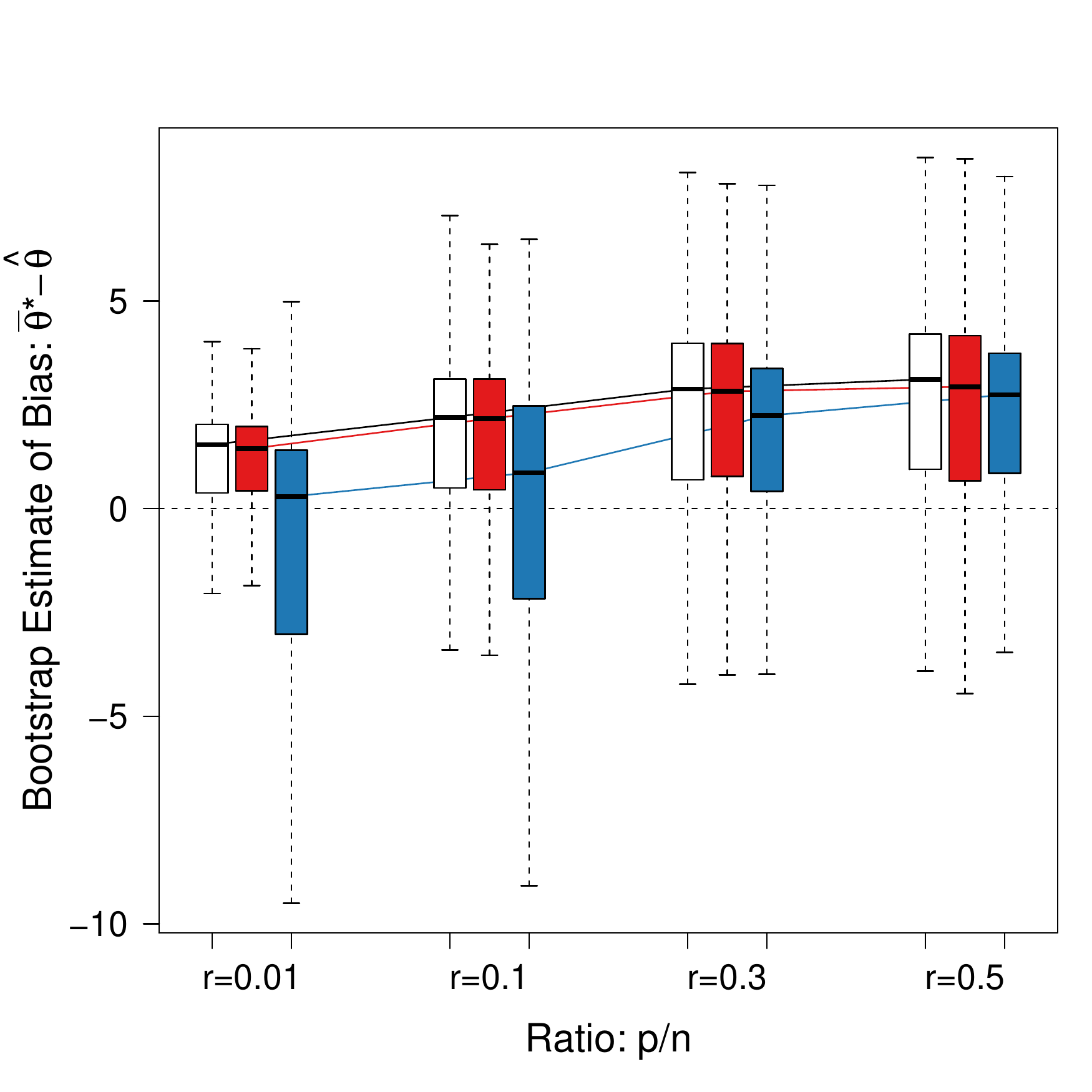} \label{subfig:bootBiasGapRatio:EllipExp} }\\
	\subfloat[][$X_i\sim$ Ellip. Norm]{\includegraphics[width=\doubleFig]{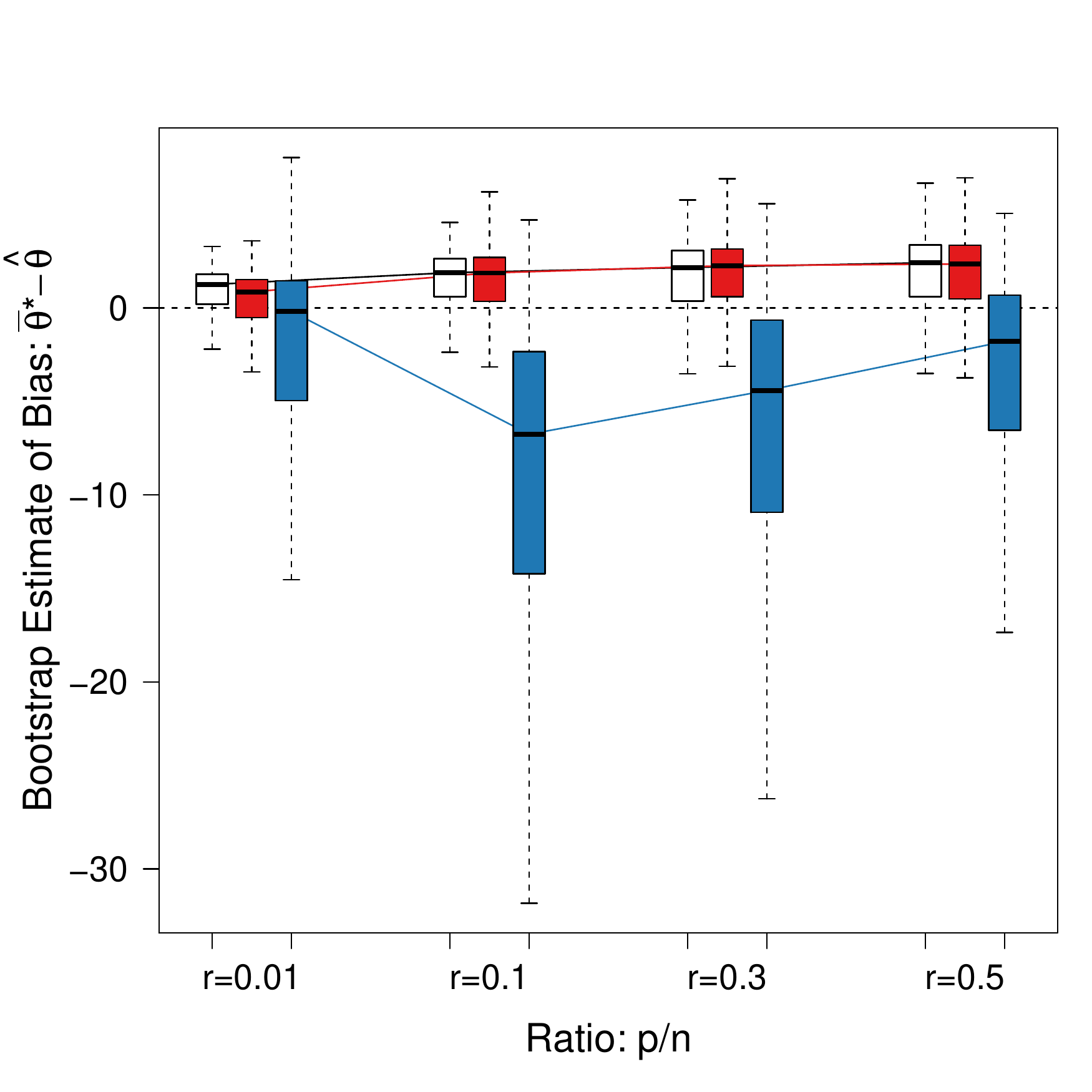} \label{subfig:bootBiasGapRatio:EllipNorm} }
	\subfloat[][$X_i\sim$ Ellip. Unif]{\includegraphics[width=\doubleFig]{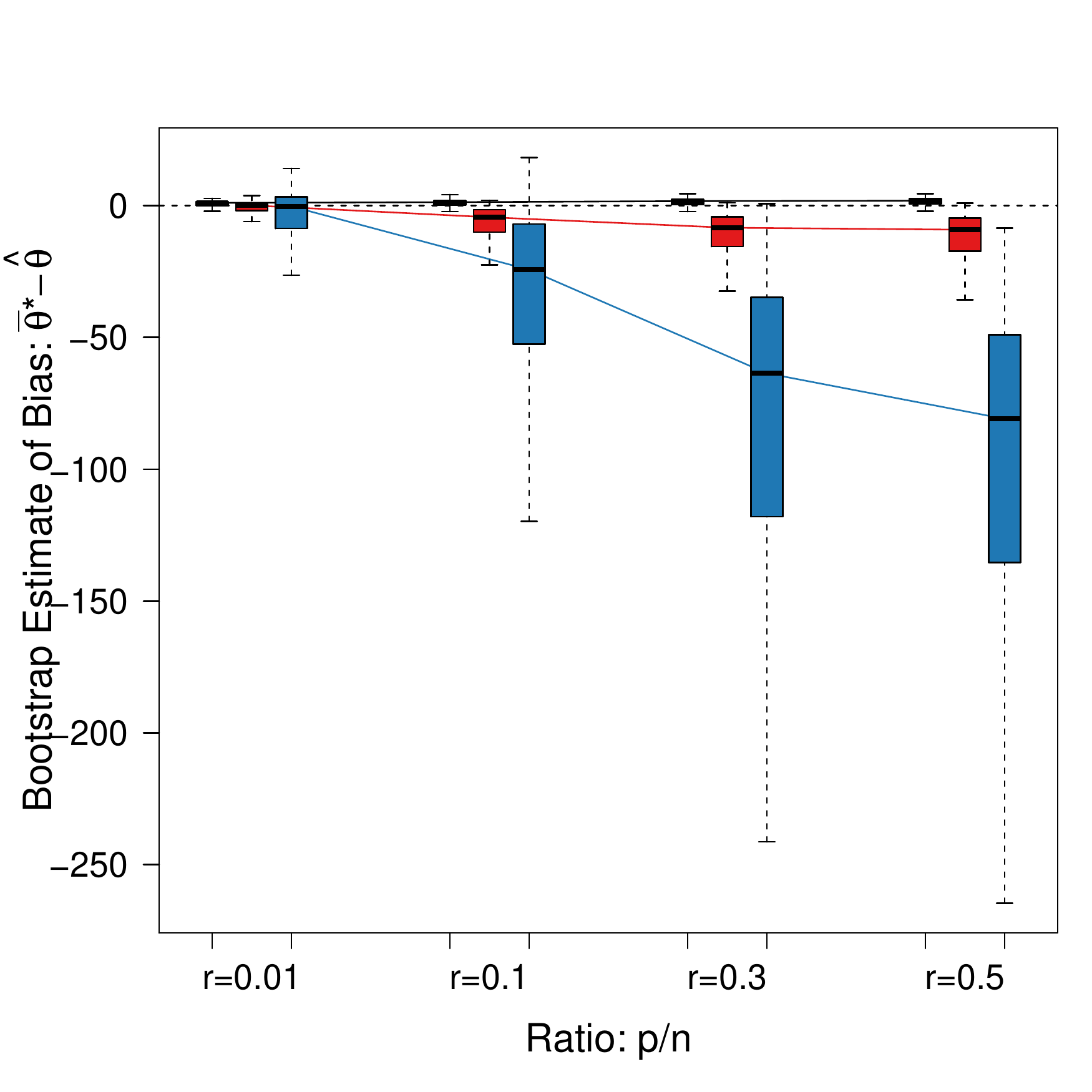} \label{subfig:bootBiasGapRatio:EllipUnif} }
	
\caption{
 \textbf{Gap Ratio Statistic: Bias of Bootstrap. }  Note that the Gap Ratio is not well defined in the population for this simulation (since $\lambda_2=\lambda_3$) so we can not scale the Gap Ratio by the true value of the Gap Ratio as was done for other plots in the Supplementary Figures. Instead we plot the actual bias, as in Figure \ref{fig:bootBias}.  For this reason, we only show the smaller values of $\lambda_1$ (otherwise, without scaling, the plot is dominated by the bias of large values of $\lambda_1$, even though the relative value of the bias is small). Similarly, the true bias of the estimate $\hat{\lambda}_1-\hat{\lambda}_2$ is not a well-defined quantity and hence is not plotted. 
}\label{fig:bootBiasGapRatio}
\end{figure}

\begin{figure}[t]
	\centering
	\subfloat[][$X_i\sim$ Normal]{\includegraphics[width=\doubleFig]{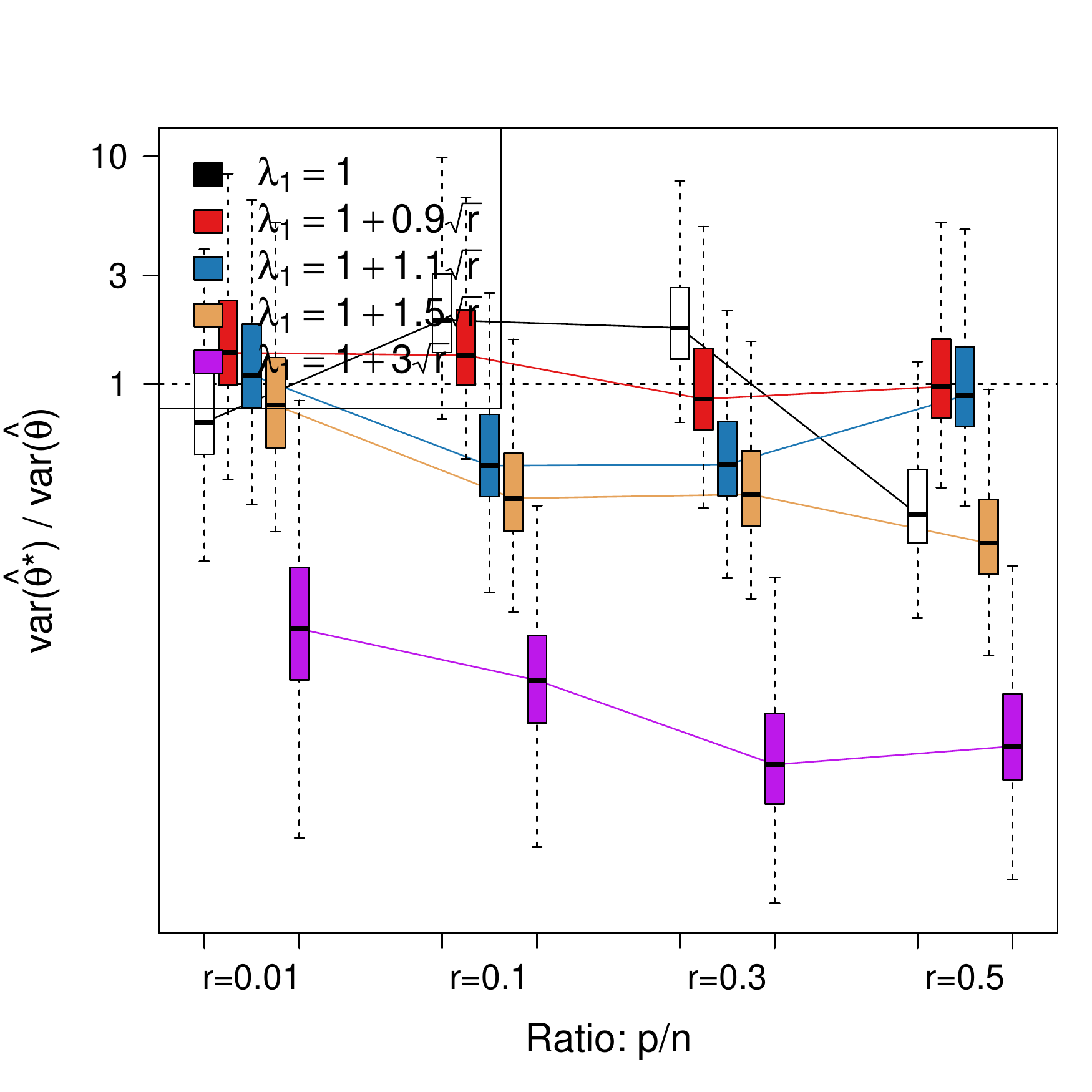} \label{subfig:bootVarGapRatio:Norm} }
	\subfloat[][$X_i\sim$ Ellip. Exp]{\includegraphics[width=\doubleFig]{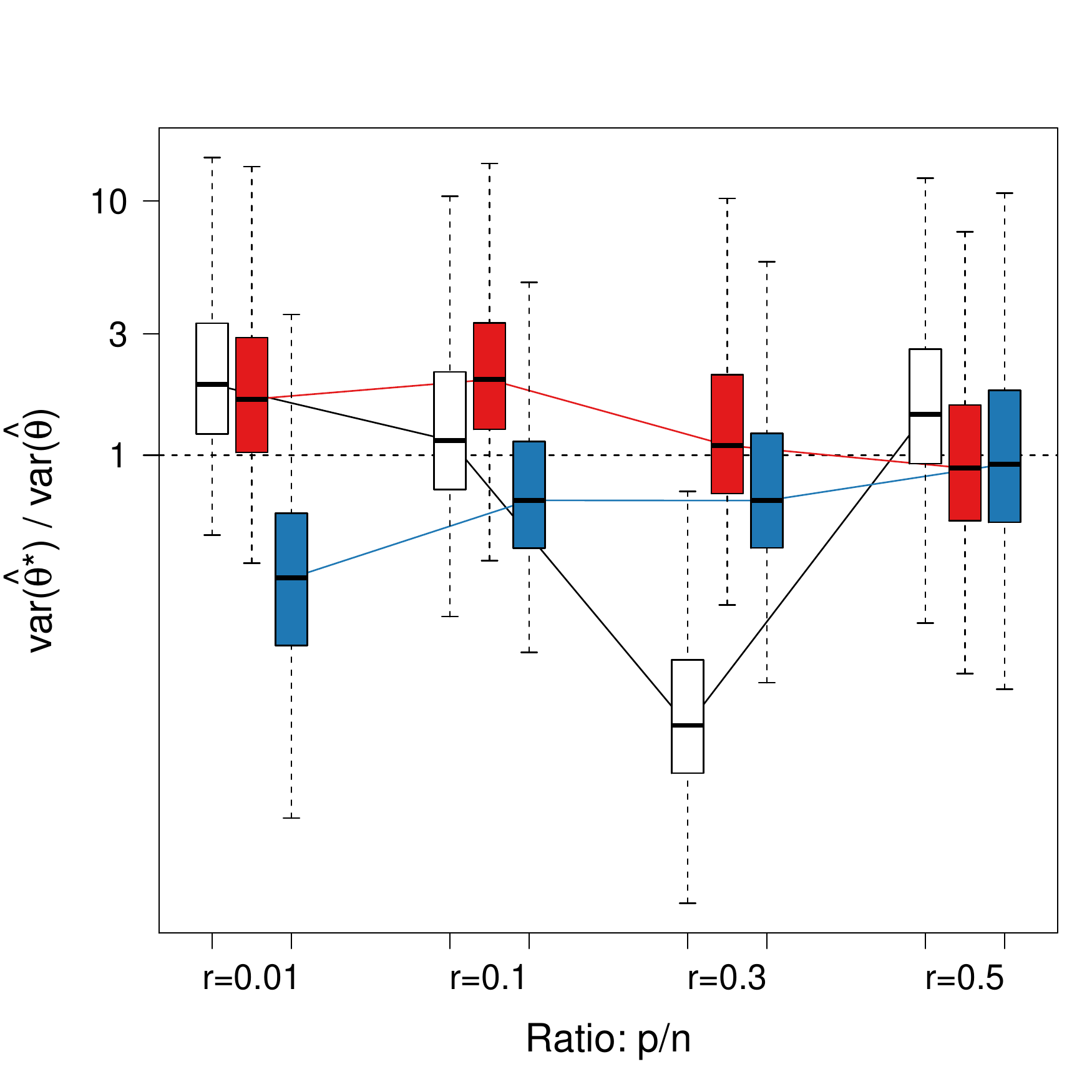} \label{subfig:bootVarGapRatio:EllipExp} }\\
	\subfloat[][$X_i\sim$ Ellip. Normal]{\includegraphics[width=\doubleFig]{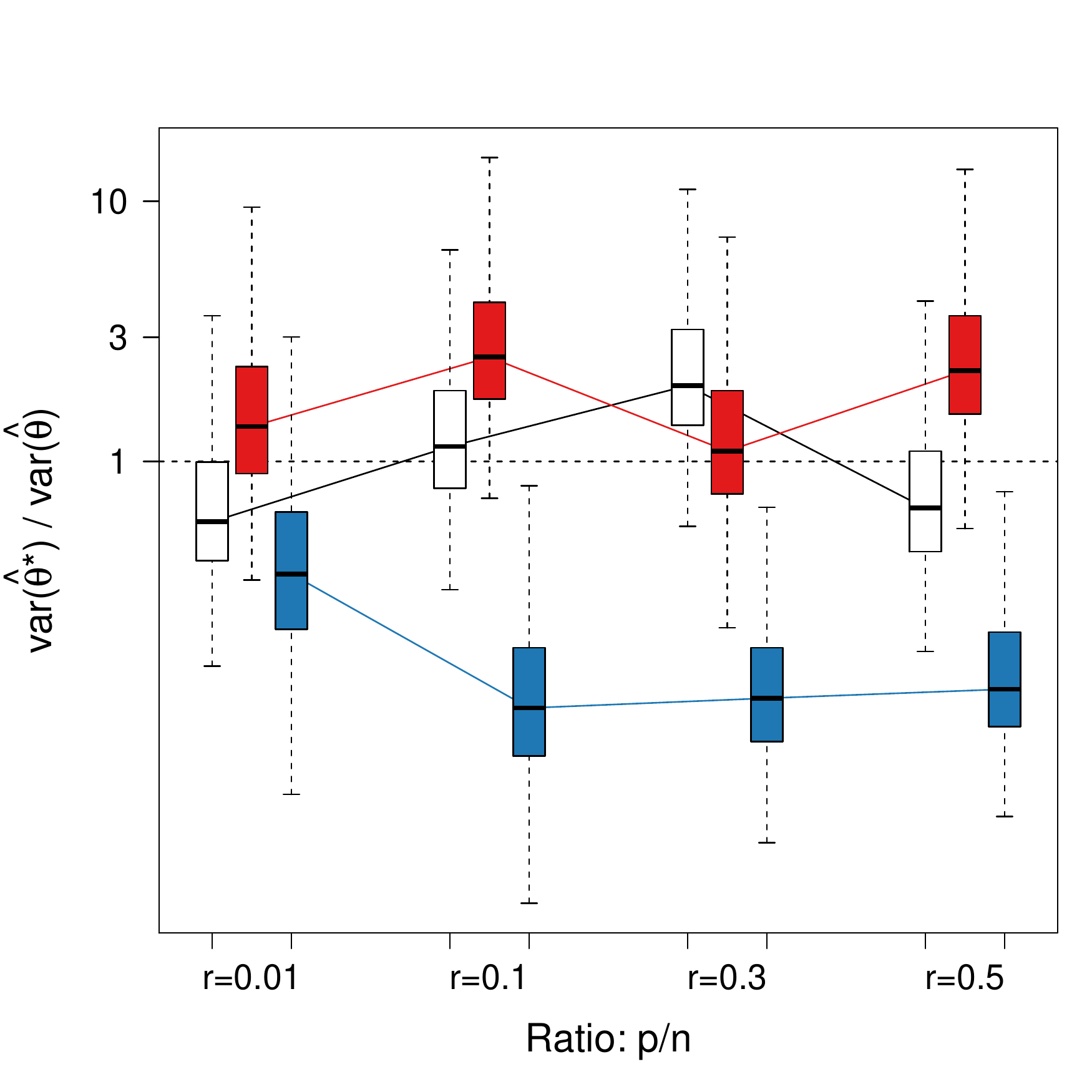} \label{subfig:bootVarGapRatio:EllipNorm} }
	\subfloat[][$X_i\sim$ Ellip. Uniform]{\includegraphics[width=\doubleFig]{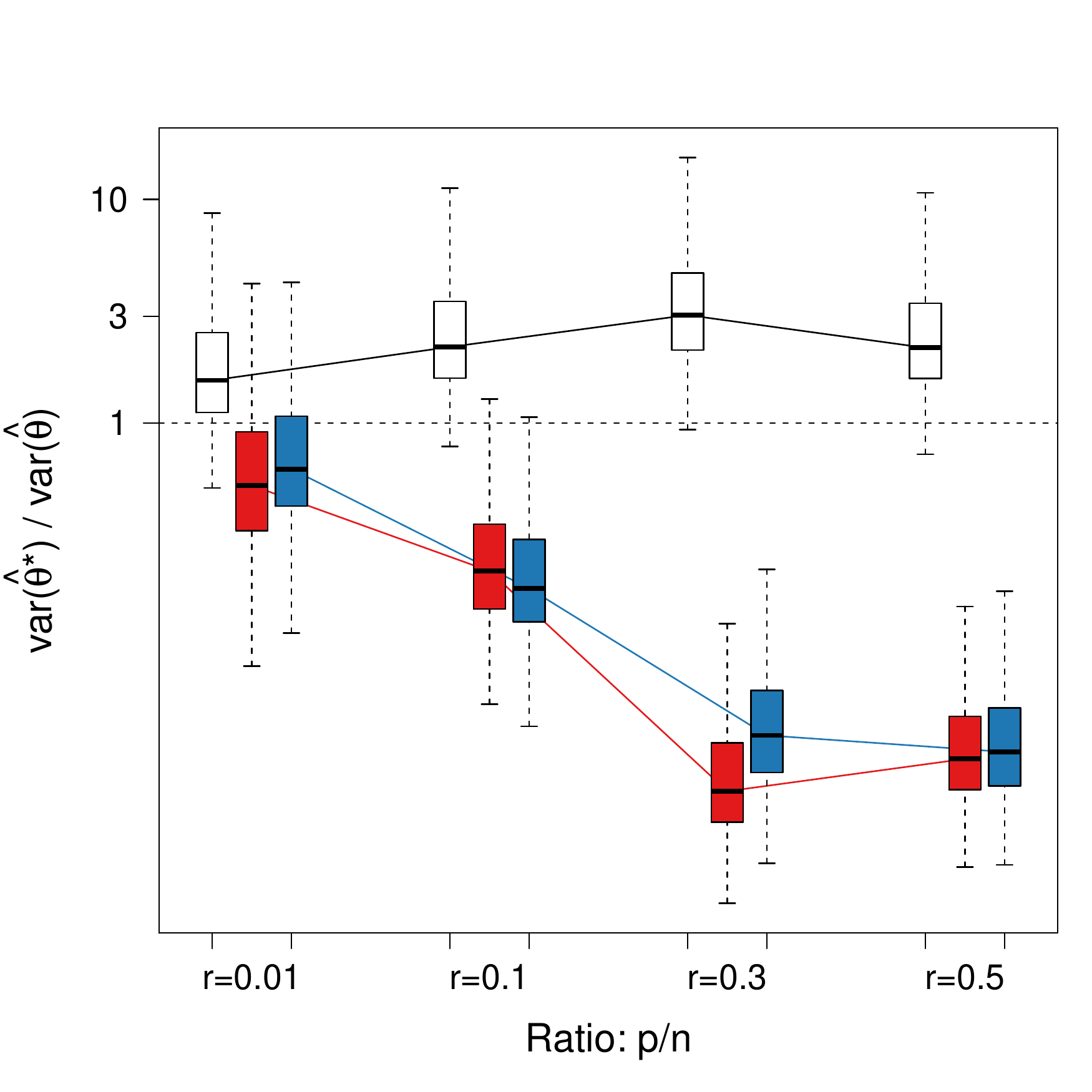} \label{subfig:bootVarGapRatio:EllipUnif} }
	
\caption{
 \textbf{Gap Ratio Statistic: Ratio of Bootstrap Estimate of Variance to True Variance.}
}\label{fig:bootVarGapRatio}
\end{figure}

\begin{figure}[t]
	\centering
	\subfloat[][\centering $\lambda_1=1$\par r=0.01]{\includegraphics[type=pdf,ext=.pdf,read=.pdf,width=.3\textwidth]{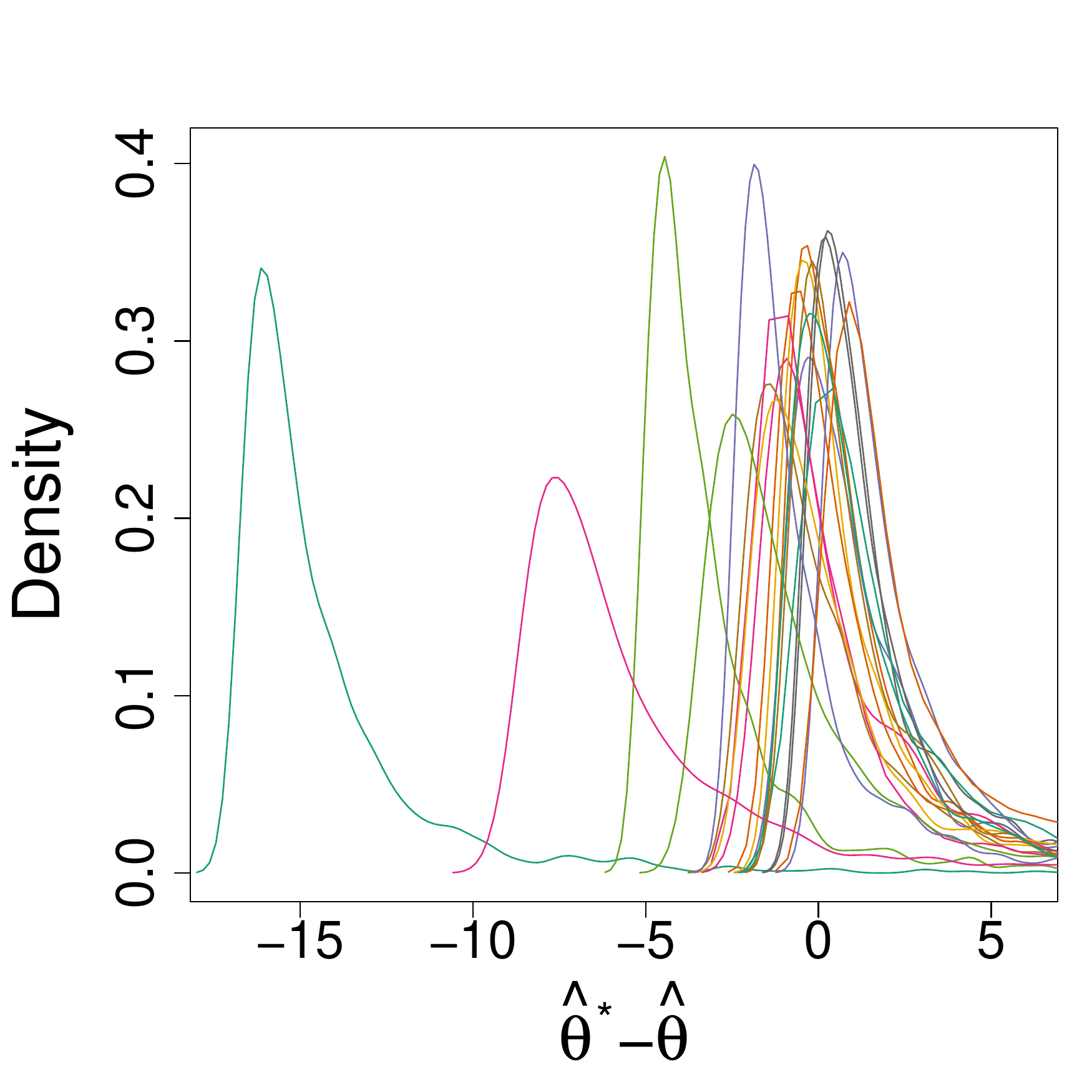} \label{subfig:bootDensityGapRatioEllipExp:0.01_0} }
	\subfloat[][\centering  $\lambda_1=1$ \par r=0.3]{\includegraphics[type=pdf,ext=.pdf,read=.pdf,width=.3\textwidth]{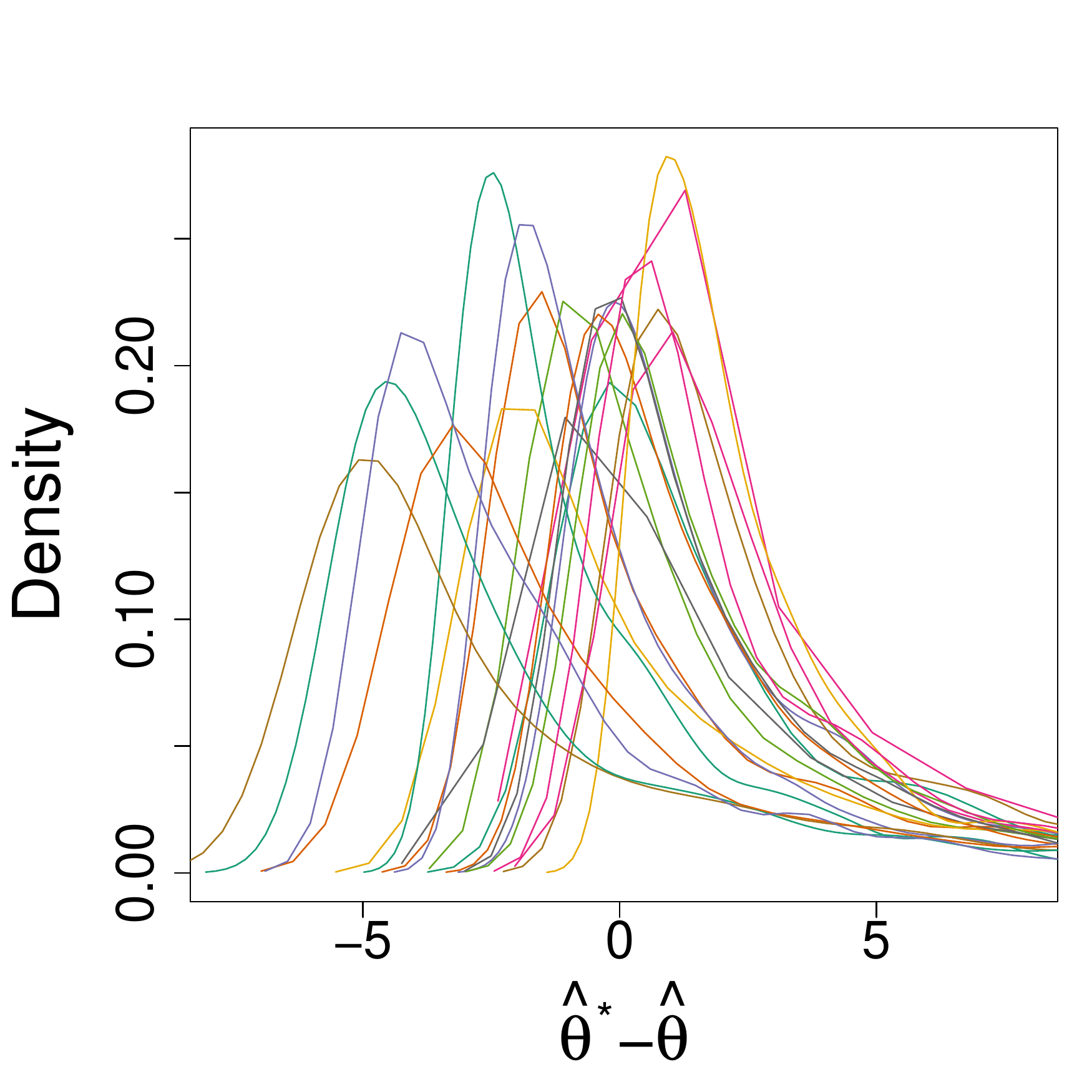} \label{subfig:bootDensityGapRatioEllipExp:0.03_0} }
	\\
	\subfloat[][\centering  $\lambda_1=1+3\sqrt{r}$ \par r=0.01]{\includegraphics[type=pdf,ext=.pdf,read=.pdf,width=.3\textwidth]{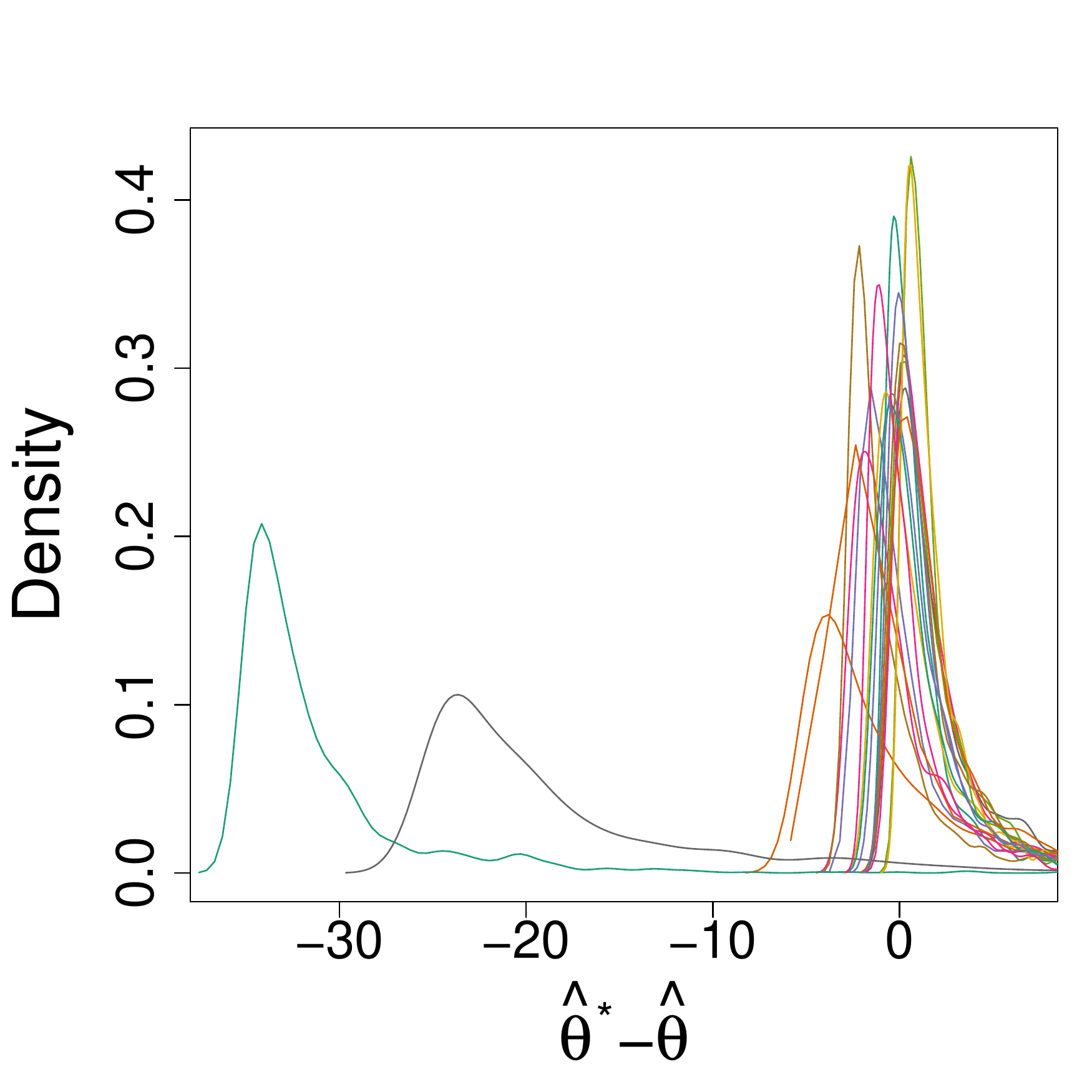} \label{subfig:bootDensityGapRatioEllipExp:0.01_3} }
	\subfloat[][\centering  $\lambda_1=1+3\sqrt{r}$ \par r=0.3]{\includegraphics[type=pdf,ext=.pdf,read=.pdf,width=.3\textwidth]{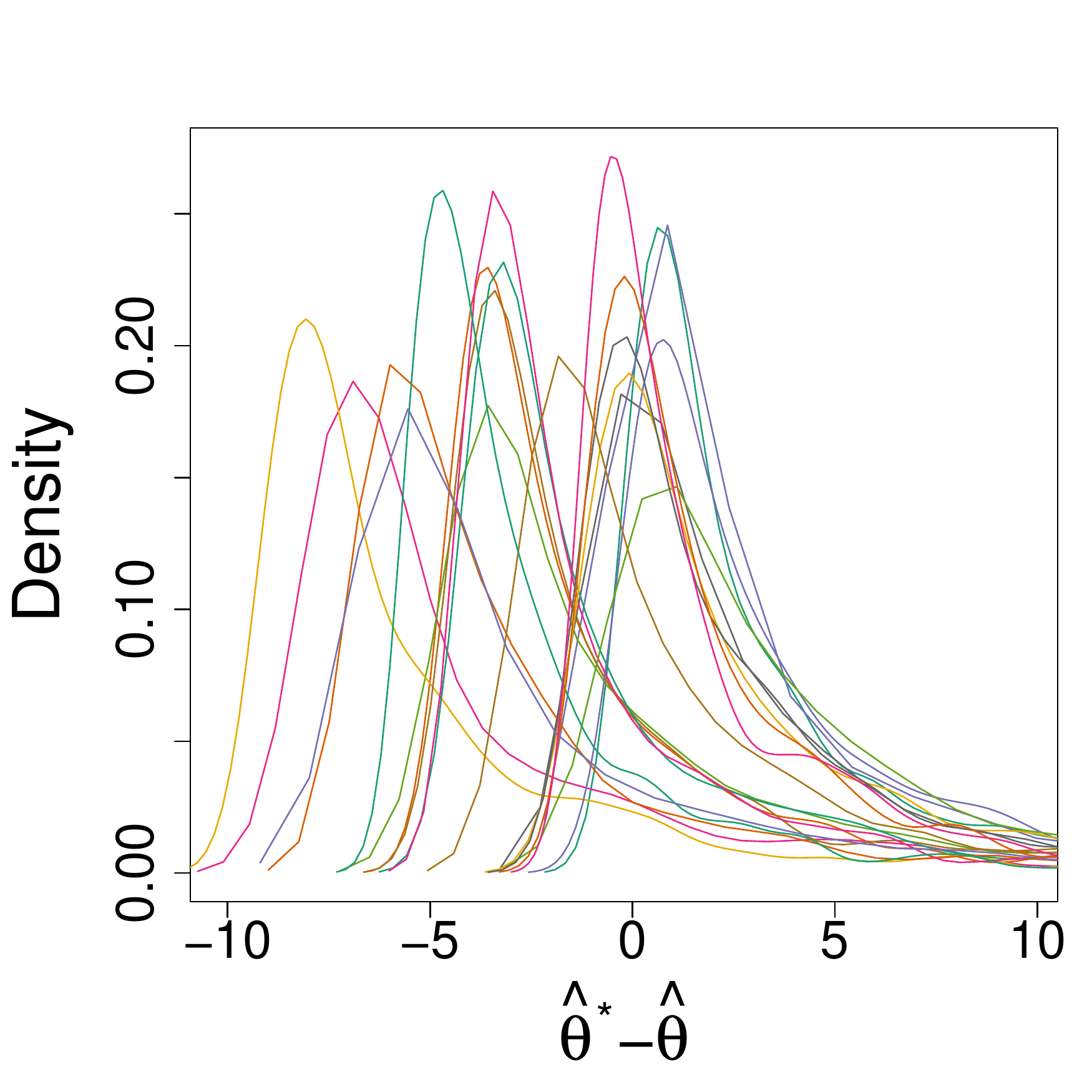} \label{subfig:bootDensityGapRatioEllipExp:0.03_3} }
\caption{
 \textbf{Gap Ratio Statistic: Bootstrap distribution, $X_i\sim$ Ellip Exp, n=1,000:  } Plotted are the estimated density of twenty simulations of the bootstrap distribution of $(\hat{\lambda}^{*b}_1-\hat{\lambda}^{*b}_2)-(\hat{\lambda}_1-\hat{\lambda}_2)$, with $b=1,\ldots,999$. The solid black line line represents the distribution of $(\hat{\lambda}_1-\hat{\lambda}_2)-(\lambda_1-\lambda_2)$ over 1,000 simulations. 
}\label{fig:bootDensityGapRatioEllipExp}
\end{figure}

\clearpage

\begin{center}
\textbf{\textsc{Supplementary Tables}}
\end{center}
\vspace{1cm}

\begin{table}
	\centering
	\subfloat[][Bootstrap Median Estimate of Bias]{
\begin{tabular}{rrrrrr}
  \hline
 & $\lambda_1=1$ & $\lambda_1=1+ 3 \sqrt{r}$ & $\lambda_1=1+ 11 \sqrt{r}$ & $\lambda_1=1+ 50 \sqrt{r}$ & $\lambda_1=1+ 100 \sqrt{r}$ \\ 
  \hline
$ r=0.01 $ & 0.08 & 0.04 & 0.02 & 0.01 & 0.01 \\ 
  $ r=0.1 $ & 0.42 & 0.23 & 0.13 & 0.11 & 0.10 \\ 
  $ r=0.3 $ & 1.04 & 0.60 & 0.37 & 0.31 & 0.31 \\ 
  $ r=0.5 $ & 1.70 & 1.00 & 0.60 & 0.52 & 0.51 \\ 
   \hline
\end{tabular}
		
	 }\\
	\subfloat[][True Bias]{
\begin{tabular}{rrrrrr}
  \hline
 & $\lambda_1=1$ & $\lambda_1=1+ 3 \sqrt{r}$ & $\lambda_1=1+ 11 \sqrt{r}$ & $\lambda_1=1+ 50 \sqrt{r}$ & $\lambda_1=1+ 100 \sqrt{r}$ \\ 
  \hline
$ r=0.01 $ & 0.17 & 0.04 & 0.02 & 0.02 & 0.02 \\ 
  $ r=0.1 $ & 0.70 & 0.20 & 0.12 & 0.12 & 0.12 \\ 
  $ r=0.3 $ & 1.37 & 0.48 & 0.36 & 0.28 & 0.38 \\ 
  $ r=0.5 $ & 1.88 & 0.73 & 0.55 & 0.48 & 0.45 \\ 
   \hline
\end{tabular}
		
	 }
\caption{\textbf{Top Eigenvalue: Median value of Bootstrap and True values of Bias, $Z\sim$ Normal} This tables give the median values of the boxplots plotted in Figure \ref{fig:bootBias}, as well as the true bias values (*) in the plots. See figure caption for more details. }
\label{tab:bootBiasNorm}
\end{table}

\begin{table}
	\centering
	\subfloat[][Bootstrap Median Estimate of Bias]{
\begin{tabular}{rrrrrr}
  \hline
 & $\lambda_1=1$ & $\lambda_1=1+ 3 \sqrt{r}$ & $\lambda_1=1+ 11 \sqrt{r}$ & $\lambda_1=1+ 50 \sqrt{r}$ & $\lambda_1=1+ 100 \sqrt{r}$ \\ 
  \hline
$ r=0.01 $ & 0.23 & 0.22 & 0.12 & 0.06 & 0.05 \\ 
  $ r=0.1 $ & 2.25 & 2.24 & 1.75 & 0.67 & 0.58 \\ 
  $ r=0.3 $ & 6.70 & 6.78 & 6.35 & 2.37 & 1.88 \\ 
  $ r=0.5 $ & 11.19 & 11.12 & 10.91 & 4.70 & 3.37 \\ 
   \hline
\end{tabular}

	 }\\
	\subfloat[][True Bias]{
\begin{tabular}{rrrrrr}
  \hline
 & $\lambda_1=1$ & $\lambda_1=1+ 3 \sqrt{r}$ & $\lambda_1=1+ 11 \sqrt{r}$ & $\lambda_1=1+ 50 \sqrt{r}$ & $\lambda_1=1+ 100 \sqrt{r}$ \\ 
  \hline
$ r=0.01 $ & 0.49 & 0.27 & 0.12 & 0.06 & 0.01 \\ 
  $ r=0.1 $ & 3.34 & 2.39 & 1.10 & 0.54 & 0.75 \\ 
  $ r=0.3 $ & 9.17 & 7.57 & 4.10 & 1.92 & 1.69 \\ 
  $ r=0.5 $ & 14.93 & 12.76 & 8.08 & 3.69 & 3.45 \\ 
   \hline
\end{tabular}
		
	 }
	 \caption{\textbf{Top Eigenvalue: Median value of Bootstrap and True values of Bias, $Z\sim$ Ellip Exp} This tables give the median values of the boxplots plotted in Figure \ref{fig:bootBias}, as well as the true bias values (*) in the plots. See figure caption for more details. }
	 \label{tab:bootBiasEllipExp}
	 \end{table}

	 \begin{table}
	 	\centering
	 	\subfloat[][Bootstrap Median Estimate of Bias]{
\begin{tabular}{rrrrrr}
  \hline
 & $\lambda_1=1$ & $\lambda_1=1+ 3 \sqrt{r}$ & $\lambda_1=1+ 11 \sqrt{r}$ & $\lambda_1=1+ 50 \sqrt{r}$ & $\lambda_1=1+ 100 \sqrt{r}$ \\ 
  \hline
$ r=0.01 $ & 0.16 & 0.13 & 0.05 & 0.03 & 0.03 \\ 
  $ r=0.1 $ & 1.38 & 1.33 & 0.55 & 0.32 & 0.30 \\ 
  $ r=0.3 $ & 4.17 & 4.16 & 2.30 & 1.00 & 0.93 \\ 
  $ r=0.5 $ & 6.95 & 6.96 & 4.80 & 1.72 & 1.57 \\ 
   \hline
\end{tabular}

	 	 }\\
	 	\subfloat[][True Bias]{
\begin{tabular}{rrrrrr}
  \hline
 & $\lambda_1=1$ & $\lambda_1=1+ 3 \sqrt{r}$ & $\lambda_1=1+ 11 \sqrt{r}$ & $\lambda_1=1+ 50 \sqrt{r}$ & $\lambda_1=1+ 100 \sqrt{r}$ \\ 
  \hline
$ r=0.01 $ & 0.32 & 0.13 & 0.05 & 0.04 & 0.07 \\ 
  $ r=0.1 $ & 1.65 & 0.83 & 0.41 & 0.35 & 0.13 \\ 
  $ r=0.3 $ & 4.14 & 2.52 & 1.20 & 0.78 & 0.92 \\ 
  $ r=0.5 $ & 6.57 & 4.45 & 2.01 & 1.64 & 1.76 \\ 
   \hline
\end{tabular}
		
	 	 }\\
\caption{\textbf{Top Eigenvalue: Median value of Bootstrap and True values of Bias, $Z\sim$ Ellip Norm}}
\label{tab:bootBiasEllipNorm}
\end{table}

	 \begin{table}
	 	\centering
	 	\subfloat[][Bootstrap Median Estimate of Bias]{
\begin{tabular}{rrrrrr}
  \hline
 & $\lambda_1=1$ & $\lambda_1=1+ 3 \sqrt{r}$ & $\lambda_1=1+ 11 \sqrt{r}$ & $\lambda_1=1+ 50 \sqrt{r}$ & $\lambda_1=1+ 100 \sqrt{r}$ \\ 
  \hline
$ r=0.01 $ & 0.09 & 0.05 & 0.02 & 0.01 & 0.01 \\ 
  $ r=0.1 $ & 0.55 & 0.34 & 0.17 & 0.14 & 0.13 \\ 
  $ r=0.3 $ & 1.56 & 1.07 & 0.49 & 0.40 & 0.39 \\ 
  $ r=0.5 $ & 2.64 & 1.96 & 0.82 & 0.67 & 0.65 \\ 
   \hline
\end{tabular}

	 	 }\\
	 	\subfloat[][True Bias]{
\begin{tabular}{rrrrrr}
  \hline
 & $\lambda_1=1$ & $\lambda_1=1+ 3 \sqrt{r}$ & $\lambda_1=1+ 11 \sqrt{r}$ & $\lambda_1=1+ 50 \sqrt{r}$ & $\lambda_1=1+ 100 \sqrt{r}$ \\ 
  \hline
$ r=0.01 $ & 0.20 & 0.05 & 0.02 & 0.00 & 0.02 \\ 
  $ r=0.1 $ & 0.83 & 0.27 & 0.16 & 0.12 & 0.22 \\ 
  $ r=0.3 $ & 1.66 & 0.62 & 0.45 & 0.45 & 0.28 \\ 
  $ r=0.5 $ & 2.33 & 0.98 & 0.72 & 0.63 & 0.63 \\ 
   \hline
\end{tabular}
		
	 	 }\\
\caption{\textbf{Top Eigenvalue: Median value of Bootstrap and True values of Bias, $Z\sim$ Ellip Uniform}}
\label{tab:bootBiasEllipUnif}
\end{table}

\begin{table}
	\centering
	\subfloat[][ $Z\sim$ Normal]{
\begin{tabular}{rrrrrr}
  \hline
 & $\lambda_1=1$ & $\lambda_1=1+ 3 \sqrt{r}$ & $\lambda_1=1+ 11 \sqrt{r}$ & $\lambda_1=1+ 50 \sqrt{r}$ & $\lambda_1=1+ 100 \sqrt{r}$ \\ 
  \hline
$ r=0.01 $ & 1.69 & 1.03 & 1.04 & 0.98 & 1.03 \\ 
  $ r=0.1 $ & 4.35 & 1.07 & 1.09 & 0.96 & 0.95 \\ 
  $ r=0.3 $ & 19.88 & 1.48 & 1.12 & 1.03 & 1.06 \\ 
  $ r=0.5 $ & 60.27 & 2.21 & 1.16 & 1.00 & 1.02 \\ 
   \hline
\end{tabular}

	 }
\\
	\subfloat[][$Z\sim$ Ellip Exp]{
\begin{tabular}{rrrrrr}
  \hline
 & $\lambda_1=1$ & $\lambda_1=1+ 3 \sqrt{r}$ & $\lambda_1=1+ 11 \sqrt{r}$ & $\lambda_1=1+ 50 \sqrt{r}$ & $\lambda_1=1+ 100 \sqrt{r}$ \\ 
  \hline
$ r=0.01 $ & 1.67 & 1.64 & 1.05 & 0.72 & 0.74 \\ 
  $ r=0.1 $ & 2.53 & 2.88 & 3.13 & 0.99 & 0.87 \\ 
  $ r=0.3 $ & 3.25 & 3.29 & 4.22 & 1.34 & 0.87 \\ 
  $ r=0.5 $ & 3.02 & 2.96 & 2.84 & 1.82 & 1.05 \\ 
   \hline
\end{tabular}

	 }\\
	 	\subfloat[][$Z\sim$ Ellip Norm]{
\begin{tabular}{rrrrrr}
  \hline
 & $\lambda_1=1$ & $\lambda_1=1+ 3 \sqrt{r}$ & $\lambda_1=1+ 11 \sqrt{r}$ & $\lambda_1=1+ 50 \sqrt{r}$ & $\lambda_1=1+ 100 \sqrt{r}$ \\ 
  \hline
$ r=0.01 $ & 2.07 & 1.51 & 0.98 & 0.98 & 0.92 \\ 
  $ r=0.1 $ & 8.98 & 8.33 & 2.01 & 0.96 & 1.03 \\ 
  $ r=0.3 $ & 10.27 & 11.37 & 6.63 & 1.12 & 0.95 \\ 
  $ r=0.5 $ & 9.78 & 10.87 & 11.76 & 1.20 & 1.05 \\ 
   \hline
\end{tabular}

	 	 }
\\
	 	\subfloat[][$Z\sim$ Ellip Uniform]{
\begin{tabular}{rrrrrr}
  \hline
 & $\lambda_1=1$ & $\lambda_1=1+ 3 \sqrt{r}$ & $\lambda_1=1+ 11 \sqrt{r}$ & $\lambda_1=1+ 50 \sqrt{r}$ & $\lambda_1=1+ 100 \sqrt{r}$ \\ 
  \hline
$ r=0.01 $ & 1.76 & 1.05 & 1.00 & 0.97 & 0.92 \\ 
  $ r=0.1 $ & 7.17 & 1.51 & 1.11 & 1.08 & 0.99 \\ 
  $ r=0.3 $ & 35.38 & 4.66 & 1.21 & 0.94 & 1.07 \\ 
  $ r=0.5 $ & 84.48 & 10.72 & 1.30 & 1.04 & 0.99 \\ 
   \hline
\end{tabular}

	 	 }

\caption{\textbf{Top Eigenvalue: Median value of ratio of bootstrap estimate of variance to true variance for $n=1000$} This tables give the median values of the boxplots plotted in Figure \ref{fig:bootVar}.}
\label{tab:bootVar}
\end{table}

\begin{landscape}
\begin{table}
	\centering
	\subfloat[][ $Z\sim$ Normal]{
\begin{tabular}{rrrrrr|rrrrr}
  \hline
 & \multicolumn{5}{c}{Percentile}&\multicolumn{5}{c}{Normal}\\ & $\lambda_1=1$ & $\lambda_1=3\sqrt{r}$ & $\lambda_1=11\sqrt{r}$ & $\lambda_1=50\sqrt{r}$ & $\lambda_1=100\sqrt{r}$ & $\lambda_1=1$ & $\lambda_1=3\sqrt{r}$ & $\lambda_1=11\sqrt{r}$ & $\lambda_1=50\sqrt{r}$ & $\lambda_1=100\sqrt{r}$ \\ 
  \hline
$r=0.01$ & 0.00 & 0.71 & 0.94 & 0.94 & 0.95 & 0.36 & 0.92 & 0.95 & 0.94 & 0.95 \\ 
  $r=0.1$ & 0.00 & 0.00 & 0.78 & 0.94 & 0.95 & 0.00 & 0.89 & 0.96 & 0.94 & 0.95 \\ 
  $r=0.3$ & 0.00 & 0.00 & 0.39 & 0.94 & 0.96 & 0.04 & 0.84 & 0.96 & 0.95 & 0.96 \\ 
  $r=0.5$ & 0.00 & 0.00 & 0.16 & 0.90 & 0.94 & 1.00 & 0.73 & 0.96 & 0.96 & 0.95 \\ 
   \hline
\end{tabular}

	 }
\\
	\subfloat[][$Z\sim$ Ellip Exp]{
\begin{tabular}{rrrrrr|rrrrr}
  \hline
 & \multicolumn{5}{c}{Percentile}&\multicolumn{5}{c}{Normal}\\ & $\lambda_1=1$ & $\lambda_1=3\sqrt{r}$ & $\lambda_1=11\sqrt{r}$ & $\lambda_1=50\sqrt{r}$ & $\lambda_1=100\sqrt{r}$ & $\lambda_1=1$ & $\lambda_1=3\sqrt{r}$ & $\lambda_1=11\sqrt{r}$ & $\lambda_1=50\sqrt{r}$ & $\lambda_1=100\sqrt{r}$ \\ 
  \hline
$r=0.01$ & 0.00 & 0.30 & 0.94 & 0.92 & 0.92 & 0.99 & 0.99 & 0.95 & 0.91 & 0.90 \\ 
  $r=0.1$ & 0.00 & 0.00 & 0.20 & 0.95 & 0.94 & 1.00 & 1.00 & 1.00 & 0.94 & 0.93 \\ 
  $r=0.3$ & 0.00 & 0.00 & 0.00 & 0.90 & 0.94 & 1.00 & 1.00 & 1.00 & 0.95 & 0.94 \\ 
  $r=0.5$ & 0.00 & 0.00 & 0.00 & 0.82 & 0.95 & 1.00 & 1.00 & 1.00 & 0.98 & 0.96 \\ 
   \hline
\end{tabular}
		
	 }\\
	 	\subfloat[][$Z\sim$ Ellip Norm]{
\begin{tabular}{rrrrrr|rrrrr}
  \hline
 & \multicolumn{5}{c}{Percentile}&\multicolumn{5}{c}{Normal}\\ & $\lambda_1=1$ & $\lambda_1=3\sqrt{r}$ & $\lambda_1=11\sqrt{r}$ & $\lambda_1=50\sqrt{r}$ & $\lambda_1=100\sqrt{r}$ & $\lambda_1=1$ & $\lambda_1=3\sqrt{r}$ & $\lambda_1=11\sqrt{r}$ & $\lambda_1=50\sqrt{r}$ & $\lambda_1=100\sqrt{r}$ \\ 
  \hline
$r=0.01$ & 0.00 & 0.39 & 0.93 & 0.95 & 0.94 & 0.88 & 0.96 & 0.94 & 0.95 & 0.95 \\ 
  $r=0.1$ & 0.00 & 0.00 & 0.52 & 0.94 & 0.95 & 1.00 & 1.00 & 0.98 & 0.94 & 0.94 \\ 
  $r=0.3$ & 0.00 & 0.00 & 0.01 & 0.93 & 0.94 & 1.00 & 1.00 & 1.00 & 0.95 & 0.95 \\ 
  $r=0.5$ & 0.00 & 0.00 & 0.00 & 0.89 & 0.94 & 1.00 & 1.00 & 1.00 & 0.97 & 0.96 \\ 
   \hline
\end{tabular}

	 	 }
\\
	 	\subfloat[][$Z\sim$ Ellip Uniform]{
\begin{tabular}{rrrrrr|rrrrr}
  \hline
 & \multicolumn{5}{c}{Percentile}&\multicolumn{5}{c}{Normal}\\ & $\lambda_1=1$ & $\lambda_1=3\sqrt{r}$ & $\lambda_1=11\sqrt{r}$ & $\lambda_1=50\sqrt{r}$ & $\lambda_1=100\sqrt{r}$ & $\lambda_1=1$ & $\lambda_1=3\sqrt{r}$ & $\lambda_1=11\sqrt{r}$ & $\lambda_1=50\sqrt{r}$ & $\lambda_1=100\sqrt{r}$ \\ 
  \hline
$r=0.01$ & 0.00 & 0.62 & 0.94 & 0.94 & 0.94 & 0.53 & 0.92 & 0.95 & 0.94 & 0.93 \\ 
  $r=0.1$ & 0.00 & 0.00 & 0.75 & 0.96 & 0.95 & 0.15 & 0.88 & 0.96 & 0.95 & 0.95 \\ 
  $r=0.3$ & 0.00 & 0.00 & 0.33 & 0.92 & 0.95 & 1.00 & 0.73 & 0.97 & 0.94 & 0.94 \\ 
  $r=0.5$ & 0.00 & 0.00 & 0.10 & 0.93 & 0.94 & 1.00 & 0.59 & 0.96 & 0.95 & 0.95 \\ 
   \hline
\end{tabular}

	 	 }
\caption{\textbf{Top Eigenvalue: Median value of 95\% CI Coverage of true $\lambda_1$ for $n=1000$.} This tables give the percentage of CI intervals (out of 1,000 simulations) that cover the true $\lambda_1$ as plotted in Figure \ref{fig:bootCI}.}
\label{tab:bootCITrue}
\end{table}
\end{landscape}

\begin{landscape}
\begin{table}
	\centering
	\subfloat[][ $Z\sim$ Normal]{
\begin{tabular}{rrrrrr|rrrrr}
  \hline
 & \multicolumn{5}{c}{Percentile}&\multicolumn{5}{c}{Normal}\\ & $\lambda_1=1$ & $\lambda_1=3\sqrt{r}$ & $\lambda_1=11\sqrt{r}$ & $\lambda_1=50\sqrt{r}$ & $\lambda_1=100\sqrt{r}$ & $\lambda_1=1$ & $\lambda_1=3\sqrt{r}$ & $\lambda_1=11\sqrt{r}$ & $\lambda_1=50\sqrt{r}$ & $\lambda_1=100\sqrt{r}$ \\ 
  \hline
$r=0.01$ & 0.00 & 0.00 & 0.00 & 0.00 & 0.00 & 0.36 & 0.00 & 0.00 & 0.00 & 0.00 \\ 
  $r=0.1$ & 0.00 & 0.00 & 0.00 & 0.00 & 0.00 & 0.00 & 0.00 & 0.00 & 0.00 & 0.00 \\ 
  $r=0.3$ & 0.00 & 0.00 & 0.00 & 0.00 & 0.00 & 0.04 & 0.00 & 0.00 & 0.00 & 0.00 \\ 
  $r=0.5$ & 0.00 & 0.00 & 0.00 & 0.00 & 0.00 & 1.00 & 0.00 & 0.00 & 0.00 & 0.00 \\ 
   \hline
\end{tabular}

	 }
\\
	\subfloat[][$Z\sim$ Ellip Exp]{
\begin{tabular}{rrrrrr|rrrrr}
  \hline
 & \multicolumn{5}{c}{Percentile}&\multicolumn{5}{c}{Normal}\\ & $\lambda_1=1$ & $\lambda_1=3\sqrt{r}$ & $\lambda_1=11\sqrt{r}$ & $\lambda_1=50\sqrt{r}$ & $\lambda_1=100\sqrt{r}$ & $\lambda_1=1$ & $\lambda_1=3\sqrt{r}$ & $\lambda_1=11\sqrt{r}$ & $\lambda_1=50\sqrt{r}$ & $\lambda_1=100\sqrt{r}$ \\ 
  \hline
$r=0.01$ & 0.00 & 0.00 & 0.00 & 0.00 & 0.00 & 0.99 & 0.92 & 0.04 & 0.00 & 0.00 \\ 
  $r=0.1$ & 0.00 & 0.00 & 0.00 & 0.00 & 0.00 & 1.00 & 1.00 & 0.73 & 0.00 & 0.00 \\ 
  $r=0.3$ & 0.00 & 0.00 & 0.00 & 0.00 & 0.00 & 1.00 & 1.00 & 1.00 & 0.01 & 0.00 \\ 
  $r=0.5$ & 0.00 & 0.00 & 0.00 & 0.00 & 0.00 & 1.00 & 1.00 & 1.00 & 0.04 & 0.00 \\ 
   \hline
\end{tabular}

	 }\\
	 	\subfloat[][$Z\sim$ Ellip Norm]{
\begin{tabular}{rrrrrr|rrrrr}
  \hline
 & \multicolumn{5}{c}{Percentile}&\multicolumn{5}{c}{Normal}\\ & $\lambda_1=1$ & $\lambda_1=3\sqrt{r}$ & $\lambda_1=11\sqrt{r}$ & $\lambda_1=50\sqrt{r}$ & $\lambda_1=100\sqrt{r}$ & $\lambda_1=1$ & $\lambda_1=3\sqrt{r}$ & $\lambda_1=11\sqrt{r}$ & $\lambda_1=50\sqrt{r}$ & $\lambda_1=100\sqrt{r}$ \\ 
  \hline
$r=0.01$ & 0.00 & 0.00 & 0.00 & 0.00 & 0.00 & 0.88 & 0.47 & 0.00 & 0.00 & 0.00 \\ 
  $r=0.1$ & 0.00 & 0.00 & 0.00 & 0.00 & 0.00 & 1.00 & 1.00 & 0.00 & 0.00 & 0.00 \\ 
  $r=0.3$ & 0.00 & 0.00 & 0.00 & 0.00 & 0.00 & 1.00 & 1.00 & 0.21 & 0.00 & 0.00 \\ 
  $r=0.5$ & 0.00 & 0.00 & 0.00 & 0.00 & 0.00 & 1.00 & 1.00 & 0.77 & 0.00 & 0.00 \\ 
   \hline
\end{tabular}

	 	 }
		 \\
		 	 	\subfloat[][$Z\sim$ Ellip Uniform]{
		 \begin{tabular}{rrrrrr|rrrrr}
		   \hline
		  & \multicolumn{5}{c}{Percentile}&\multicolumn{5}{c}{Normal}\\ & $\lambda_1=1$ & $\lambda_1=3\sqrt{r}$ & $\lambda_1=11\sqrt{r}$ & $\lambda_1=50\sqrt{r}$ & $\lambda_1=100\sqrt{r}$ & $\lambda_1=1$ & $\lambda_1=3\sqrt{r}$ & $\lambda_1=11\sqrt{r}$ & $\lambda_1=50\sqrt{r}$ & $\lambda_1=100\sqrt{r}$ \\ 
		   \hline
$r=0.01$ & 0.00 & 0.00 & 0.00 & 0.00 & 0.00 & 0.53 & 0.01 & 0.00 & 0.00 & 0.00 \\ 
  $r=0.1$ & 0.00 & 0.00 & 0.00 & 0.00 & 0.00 & 0.15 & 0.00 & 0.00 & 0.00 & 0.00 \\ 
  $r=0.3$ & 0.00 & 0.00 & 0.00 & 0.00 & 0.00 & 1.00 & 0.01 & 0.00 & 0.00 & 0.00 \\ 
  $r=0.5$ & 0.00 & 0.00 & 0.00 & 0.00 & 0.00 & 1.00 & 0.37 & 0.00 & 0.00 & 0.00 \\ 
   \hline
\end{tabular}

		 	 	 }

\caption{\textbf{Top Eigenvalue: Median value of 95\% CI Coverage of null value $1$ for $n=1000$.} This tables give the percentage of CI intervals (out of 1,000 simulations) that cover the value $\lambda_1=1$ for different values of the true $\lambda_1$.}
\label{tab:bootCINull}
\end{table}
\end{landscape}

\begin{table}
	\centering
	\subfloat[][Bootstrap Median Estimate of Bias]{
\begin{tabular}{rrrrrr}
  \hline
 & $\lambda_1=1$ & $\lambda_1=1+ 3 \sqrt{r}$ & $\lambda_1=1+ 11 \sqrt{r}$ & $\lambda_1=1+ 50 \sqrt{r}$ & $\lambda_1=1+ 100 \sqrt{r}$ \\ 
  \hline
$ r=0.01 $ & 0.03 & -0.02 & -0.05 & -0.06 & -0.06 \\ 
  $ r=0.1 $ & 0.05 & -0.17 & -0.28 & -0.30 & -0.31 \\ 
  $ r=0.3 $ & 0.10 & -0.41 & -0.67 & -0.72 & -0.73 \\ 
  $ r=0.5 $ & 0.18 & -0.61 & -1.08 & -1.17 & -1.18 \\ 
   \hline
\end{tabular}
		
	 }\\
	\subfloat[][True Bias]{
\begin{tabular}{rrrrrr}
  \hline
 & $\lambda_1=1$ & $\lambda_1=1+ 3 \sqrt{r}$ & $\lambda_1=1+ 11 \sqrt{r}$ & $\lambda_1=1+ 50 \sqrt{r}$ & $\lambda_1=1+ 100 \sqrt{r}$ \\ 
  \hline
$ r=0.01 $ & 0.05 & -0.12 & -0.14 & -0.14 & -0.15 \\ 
  $ r=0.1 $ & 0.04 & -0.49 & -0.58 & -0.58 & -0.58 \\ 
  $ r=0.3 $ & 0.05 & -0.88 & -1.01 & -1.08 & -0.98 \\ 
  $ r=0.5 $ & 0.05 & -1.15 & -1.33 & -1.40 & -1.43 \\ 
   \hline
\end{tabular}
		
	 }
\caption{\textbf{Gap Statistic: Median value of Bootstrap and True values of Bias, $Z\sim$ Normal} }
\label{tab:bootBiasNormGap}
\end{table}

\begin{table}
	\centering
	\subfloat[][Bootstrap Median Estimate of Bias]{
\begin{tabular}{rrrrrr}
  \hline
 & $\lambda_1=1$ & $\lambda_1=1+ 3 \sqrt{r}$ & $\lambda_1=1+ 11 \sqrt{r}$ & $\lambda_1=1+ 50 \sqrt{r}$ & $\lambda_1=1+ 100 \sqrt{r}$ \\ 
  \hline
$ r=0.01 $ & 0.14 & 0.13 & -0.04 & -0.12 & -0.14 \\ 
  $ r=0.1 $ & 0.93 & 0.92 & 0.40 & -1.33 & -1.47 \\ 
  $ r=0.3 $ & 2.74 & 2.76 & 2.54 & -3.50 & -4.27 \\ 
  $ r=0.5 $ & 4.56 & 4.44 & 4.41 & -4.82 & -6.76 \\ 
   \hline
\end{tabular}

	 }\\
	\subfloat[][True Bias]{
\begin{tabular}{rrrrrr}
  \hline
 & $\lambda_1=1$ & $\lambda_1=1+ 3 \sqrt{r}$ & $\lambda_1=1+ 11 \sqrt{r}$ & $\lambda_1=1+ 50 \sqrt{r}$ & $\lambda_1=1+ 100 \sqrt{r}$ \\ 
  \hline
$ r=0.01 $ & 0.19 & -0.07 & -0.30 & -0.37 & -0.41 \\ 
  $ r=0.1 $ & 0.85 & -0.11 & -1.88 & -2.63 & -2.41 \\ 
  $ r=0.3 $ & 2.32 & 0.64 & -3.51 & -6.70 & -7.02 \\ 
  $ r=0.5 $ & 3.76 & 1.71 & -3.74 & -10.42 & -11.33 \\ 
   \hline
\end{tabular}
		
	 }
	 \caption{\textbf{Gap Statistic: Median value of Bootstrap and True values of Bias, $Z\sim$ Ellip Exp} This tables give the median values of the boxplots plotted in Figure \ref{fig:bootBias}, as well as the true bias values (*) in the plots. See figure caption for more details. }
	 \label{tab:bootBiasEllipExpGap}
	 \end{table}

\begin{table}
	 	\centering
	 	\subfloat[][Bootstrap Median Estimate of Bias]{
\begin{tabular}{rrrrrr}
  \hline
 & $\lambda_1=1$ & $\lambda_1=1+ 3 \sqrt{r}$ & $\lambda_1=1+ 11 \sqrt{r}$ & $\lambda_1=1+ 50 \sqrt{r}$ & $\lambda_1=1+ 100 \sqrt{r}$ \\ 
  \hline
$ r=0.01 $ & 0.08 & 0.05 & -0.07 & -0.10 & -0.11 \\ 
  $ r=0.1 $ & 0.44 & 0.42 & -0.66 & -0.99 & -1.02 \\ 
  $ r=0.3 $ & 1.28 & 1.29 & -1.00 & -2.99 & -3.10 \\ 
  $ r=0.5 $ & 2.10 & 2.11 & -0.34 & -4.94 & -5.17 \\ 
   \hline
\end{tabular}

	 	 }\\
	 	\subfloat[][True Bias]{
\begin{tabular}{rrrrrr}
  \hline
 & $\lambda_1=1$ & $\lambda_1=1+ 3 \sqrt{r}$ & $\lambda_1=1+ 11 \sqrt{r}$ & $\lambda_1=1+ 50 \sqrt{r}$ & $\lambda_1=1+ 100 \sqrt{r}$ \\ 
  \hline
$ r=0.01 $ & 0.11 & -0.13 & -0.24 & -0.25 & -0.23 \\ 
  $ r=0.1 $ & 0.21 & -0.70 & -1.22 & -1.28 & -1.51 \\ 
  $ r=0.3 $ & 0.56 & -1.11 & -2.83 & -3.25 & -3.13 \\ 
  $ r=0.5 $ & 0.95 & -1.21 & -4.35 & -4.80 & -4.70 \\ 
   \hline
\end{tabular}
		
	 	 }\\
\caption{\textbf{Gap Statistic: Median value of Bootstrap and True values of Bias, $Z\sim$ Ellip Norm}}
\label{tab:bootBiasEllipNormGap}
\end{table}

	 \begin{table}
	 	\centering
	 	\subfloat[][Bootstrap Median Estimate of Bias]{
\begin{tabular}{rrrrrr}
  \hline
 & $\lambda_1=1$ & $\lambda_1=1+ 3 \sqrt{r}$ & $\lambda_1=1+ 11 \sqrt{r}$ & $\lambda_1=1+ 50 \sqrt{r}$ & $\lambda_1=1+ 100 \sqrt{r}$ \\ 
  \hline
$ r=0.01 $ & 0.03 & -0.02 & -0.06 & -0.07 & -0.07 \\ 
  $ r=0.1 $ & 0.08 & -0.17 & -0.37 & -0.41 & -0.41 \\ 
  $ r=0.3 $ & 0.24 & -0.30 & -1.04 & -1.14 & -1.15 \\ 
  $ r=0.5 $ & 0.43 & -0.28 & -1.77 & -1.94 & -1.96 \\ 
   \hline
\end{tabular}

	 	 }\\
	 	\subfloat[][True Bias]{
\begin{tabular}{rrrrrr}
  \hline
 & $\lambda_1=1$ & $\lambda_1=1+ 3 \sqrt{r}$ & $\lambda_1=1+ 11 \sqrt{r}$ & $\lambda_1=1+ 50 \sqrt{r}$ & $\lambda_1=1+ 100 \sqrt{r}$ \\ 
  \hline
$ r=0.01 $ & 0.06 & -0.12 & -0.16 & -0.17 & -0.17 \\ 
  $ r=0.1 $ & 0.05 & -0.55 & -0.66 & -0.70 & -0.60 \\ 
  $ r=0.3 $ & 0.06 & -1.02 & -1.20 & -1.20 & -1.37 \\ 
  $ r=0.5 $ & 0.07 & -1.34 & -1.60 & -1.70 & -1.69 \\ 
   \hline
\end{tabular}
		
	 	 }\\
\caption{\textbf{Gap Statistic: Median value of Bootstrap and True values of Bias, $Z\sim$ Ellip Uniform}}
\label{tab:bootBiasEllipUnifGap}
\end{table}

\begin{table}
	\centering
	\subfloat[][ $Z\sim$ Normal]{
\begin{tabular}{rrrrrr}
  \hline
 & $\lambda_1=1$ & $\lambda_1=1+ 3 \sqrt{r}$ & $\lambda_1=1+ 11 \sqrt{r}$ & $\lambda_1=1+ 50 \sqrt{r}$ & $\lambda_1=1+ 100 \sqrt{r}$ \\ 
  \hline
$ r=0.01 $ & 2.05 & 1.08 & 1.14 & 0.98 & 1.03 \\ 
  $ r=0.1 $ & 4.23 & 1.23 & 1.15 & 0.96 & 0.96 \\ 
  $ r=0.3 $ & 13.33 & 1.62 & 1.25 & 1.03 & 1.06 \\ 
  $ r=0.5 $ & 40.50 & 1.72 & 1.42 & 1.02 & 1.02 \\ 
   \hline
\end{tabular}

	 }
\\
	\subfloat[][$Z\sim$ Ellip Exp]{
\begin{tabular}{rrrrrr}
  \hline
 & $\lambda_1=1$ & $\lambda_1=1+ 3 \sqrt{r}$ & $\lambda_1=1+ 11 \sqrt{r}$ & $\lambda_1=1+ 50 \sqrt{r}$ & $\lambda_1=1+ 100 \sqrt{r}$ \\ 
  \hline
$ r=0.01 $ & 2.12 & 2.00 & 1.02 & 0.72 & 0.73 \\ 
  $ r=0.1 $ & 2.73 & 3.11 & 2.77 & 1.11 & 0.89 \\ 
  $ r=0.3 $ & 3.32 & 3.68 & 5.25 & 1.30 & 0.94 \\ 
  $ r=0.5 $ & 3.24 & 3.09 & 2.91 & 1.52 & 1.13 \\ 
   \hline
\end{tabular}

	 }\\
	 	\subfloat[][$Z\sim$ Ellip Norm]{
\begin{tabular}{rrrrrr}
  \hline
 & $\lambda_1=1$ & $\lambda_1=1+ 3 \sqrt{r}$ & $\lambda_1=1+ 11 \sqrt{r}$ & $\lambda_1=1+ 50 \sqrt{r}$ & $\lambda_1=1+ 100 \sqrt{r}$ \\ 
  \hline
$ r=0.01 $ & 2.67 & 1.68 & 1.11 & 0.99 & 0.91 \\ 
  $ r=0.1 $ & 12.22 & 10.60 & 1.92 & 1.01 & 1.02 \\ 
  $ r=0.3 $ & 11.52 & 14.83 & 3.22 & 1.44 & 0.97 \\ 
  $ r=0.5 $ & 11.07 & 12.41 & 5.36 & 1.67 & 1.15 \\ 
   \hline
\end{tabular}

	 	 }
\\
	 	\subfloat[][$Z\sim$ Ellip Uniform]{
\begin{tabular}{rrrrrr}
  \hline
 & $\lambda_1=1$ & $\lambda_1=1+ 3 \sqrt{r}$ & $\lambda_1=1+ 11 \sqrt{r}$ & $\lambda_1=1+ 50 \sqrt{r}$ & $\lambda_1=1+ 100 \sqrt{r}$ \\ 
  \hline
$ r=0.01 $ & 2.19 & 1.07 & 1.08 & 0.97 & 0.92 \\ 
  $ r=0.1 $ & 8.14 & 1.31 & 1.23 & 1.08 & 0.99 \\ 
  $ r=0.3 $ & 47.97 & 2.72 & 1.67 & 0.95 & 1.07 \\ 
  $ r=0.5 $ & 111.75 & 6.25 & 2.22 & 1.09 & 0.99 \\ 
   \hline
\end{tabular}

	 	 }

\caption{\textbf{Gap Statistic: Median value of ratio of bootstrap estimate of variance to true variance for $n=1000$} This tables give the median values of the boxplots plotted in SupplementaryFigure \ref{fig:bootVarGap}.}
\label{tab:bootVarGap}
\end{table}

\begin{landscape}
\begin{table}
	\centering
	\subfloat[][ $Z\sim$ Normal]{
\begin{tabular}{rrrrrr|rrrrr}
  \hline
 & \multicolumn{5}{c}{Percentile}&\multicolumn{5}{c}{Normal}\\ & $\lambda_1=1$ & $\lambda_1=3\sqrt{r}$ & $\lambda_1=11\sqrt{r}$ & $\lambda_1=50\sqrt{r}$ & $\lambda_1=100\sqrt{r}$ & $\lambda_1=1$ & $\lambda_1=3\sqrt{r}$ & $\lambda_1=11\sqrt{r}$ & $\lambda_1=50\sqrt{r}$ & $\lambda_1=100\sqrt{r}$ \\ 
  \hline
$r=0.01$ & 0.00 & 0.44 & 0.53 & 0.87 & 0.93 & 0.90 & 0.61 & 0.85 & 0.91 & 0.93 \\ 
  $r=0.1$ & 0.00 & 0.00 & 0.02 & 0.77 & 0.90 & 0.97 & 0.15 & 0.69 & 0.92 & 0.95 \\ 
  $r=0.3$ & 0.00 & 0.00 & 0.00 & 0.71 & 0.89 & 1.00 & 0.14 & 0.84 & 0.93 & 0.96 \\ 
  $r=0.5$ & 0.00 & 0.00 & 0.00 & 0.65 & 0.87 & 1.00 & 0.21 & 0.95 & 0.95 & 0.95 \\ 
   \hline
\end{tabular}

	 }
\\
	\subfloat[][$Z\sim$ Ellip Exp]{
\begin{tabular}{rrrrrr|rrrrr}
  \hline
 & \multicolumn{5}{c}{Percentile}&\multicolumn{5}{c}{Normal}\\ & $\lambda_1=1$ & $\lambda_1=3\sqrt{r}$ & $\lambda_1=11\sqrt{r}$ & $\lambda_1=50\sqrt{r}$ & $\lambda_1=100\sqrt{r}$ & $\lambda_1=1$ & $\lambda_1=3\sqrt{r}$ & $\lambda_1=11\sqrt{r}$ & $\lambda_1=50\sqrt{r}$ & $\lambda_1=100\sqrt{r}$ \\ 
  \hline
$r=0.01$ & 0.00 & 1.00 & 0.75 & 0.81 & 0.85 & 1.00 & 0.79 & 0.76 & 0.86 & 0.87 \\ 
  $r=0.1$ & 0.00 & 1.00 & 0.92 & 0.58 & 0.76 & 1.00 & 0.97 & 0.63 & 0.88 & 0.90 \\ 
  $r=0.3$ & 0.00 & 1.00 & 1.00 & 0.47 & 0.62 & 1.00 & 1.00 & 0.75 & 0.86 & 0.90 \\ 
  $r=0.5$ & 0.00 & 1.00 & 1.00 & 0.45 & 0.57 & 1.00 & 1.00 & 0.86 & 0.83 & 0.90 \\ 
   \hline
\end{tabular}
		
	 }\\
	 	\subfloat[][$Z\sim$ Ellip Norm]{
\begin{tabular}{rrrrrr|rrrrr}
  \hline
 & \multicolumn{5}{c}{Percentile}&\multicolumn{5}{c}{Normal}\\ & $\lambda_1=1$ & $\lambda_1=3\sqrt{r}$ & $\lambda_1=11\sqrt{r}$ & $\lambda_1=50\sqrt{r}$ & $\lambda_1=100\sqrt{r}$ & $\lambda_1=1$ & $\lambda_1=3\sqrt{r}$ & $\lambda_1=11\sqrt{r}$ & $\lambda_1=50\sqrt{r}$ & $\lambda_1=100\sqrt{r}$ \\ 
  \hline
$r=0.01$ & 0.00 & 1.00 & 0.64 & 0.86 & 0.91 & 0.96 & 0.59 & 0.83 & 0.91 & 0.93 \\ 
  $r=0.1$ & 0.00 & 1.00 & 0.12 & 0.65 & 0.82 & 1.00 & 0.42 & 0.85 & 0.94 & 0.93 \\ 
  $r=0.3$ & 0.00 & 1.00 & 0.45 & 0.42 & 0.72 & 1.00 & 0.83 & 0.81 & 0.97 & 0.95 \\ 
  $r=0.5$ & 0.00 & 1.00 & 0.91 & 0.31 & 0.67 & 1.00 & 0.97 & 0.75 & 0.99 & 0.97 \\ 
   \hline
\end{tabular}

	 	 }
\\
	 	\subfloat[][$Z\sim$ Ellip Uniform]{
\begin{tabular}{rrrrrr|rrrrr}
  \hline
 & \multicolumn{5}{c}{Percentile}&\multicolumn{5}{c}{Normal}\\ & $\lambda_1=1$ & $\lambda_1=3\sqrt{r}$ & $\lambda_1=11\sqrt{r}$ & $\lambda_1=50\sqrt{r}$ & $\lambda_1=100\sqrt{r}$ & $\lambda_1=1$ & $\lambda_1=3\sqrt{r}$ & $\lambda_1=11\sqrt{r}$ & $\lambda_1=50\sqrt{r}$ & $\lambda_1=100\sqrt{r}$ \\ 
  \hline
$r=0.01$ & 0.00 & 0.57 & 0.53 & 0.87 & 0.92 & 0.92 & 0.61 & 0.83 & 0.92 & 0.93 \\ 
  $r=0.1$ & 0.00 & 0.00 & 0.02 & 0.74 & 0.91 & 1.00 & 0.17 & 0.77 & 0.93 & 0.94 \\ 
  $r=0.3$ & 0.00 & 0.00 & 0.00 & 0.65 & 0.87 & 1.00 & 0.10 & 0.98 & 0.94 & 0.94 \\ 
  $r=0.5$ & 0.00 & 0.00 & 0.00 & 0.56 & 0.82 & 1.00 & 0.16 & 0.99 & 0.96 & 0.95 \\ 
   \hline
\end{tabular}

	 	 }
\caption{\textbf{Gap Statistic: Median value of 95\% CI Coverage of true Gap for $n=1000$.} This tables give the percentage of CI intervals (out of 1,000 simulations) that cover the true $\lambda_1-\lambda_2$ as plotted in Supplementary Figure \ref{fig:bootCIGap}.}
\label{tab:bootCITrueGap}
\end{table}
\end{landscape}

\begin{landscape}
\begin{table}
	\centering
	\subfloat[][ $Z\sim$ Normal]{
\begin{tabular}{rrrrrr|rrrrr}
  \hline
 & \multicolumn{5}{c}{Percentile}&\multicolumn{5}{c}{Normal}\\ & $\lambda_1=1$ & $\lambda_1=3\sqrt{r}$ & $\lambda_1=11\sqrt{r}$ & $\lambda_1=50\sqrt{r}$ & $\lambda_1=100\sqrt{r}$ & $\lambda_1=1$ & $\lambda_1=3\sqrt{r}$ & $\lambda_1=11\sqrt{r}$ & $\lambda_1=50\sqrt{r}$ & $\lambda_1=100\sqrt{r}$ \\ 
  \hline
$r=0.01$ & 0.00 & 0.00 & 0.85 & 0.00 & 0.00 & 0.00 & 0.00 & 0.95 & 0.00 & 0.00 \\ 
  $r=0.1$ & 0.00 & 0.00 & 0.00 & 0.00 & 0.00 & 0.00 & 0.08 & 0.00 & 0.00 & 0.00 \\ 
  $r=0.3$ & 0.00 & 0.00 & 0.00 & 0.00 & 0.00 & 0.00 & 0.78 & 0.00 & 0.00 & 0.00 \\ 
  $r=0.5$ & 0.00 & 0.01 & 0.00 & 0.00 & 0.00 & 0.00 & 0.12 & 0.00 & 0.00 & 0.00 \\ 
   \hline
\end{tabular}		
		
	 }
\\
	\subfloat[][$Z\sim$ Ellip Exp]{
\begin{tabular}{rrrrrr|rrrrr}
  \hline
 & \multicolumn{5}{c}{Percentile}&\multicolumn{5}{c}{Normal}\\ & $\lambda_1=1$ & $\lambda_1=3\sqrt{r}$ & $\lambda_1=11\sqrt{r}$ & $\lambda_1=50\sqrt{r}$ & $\lambda_1=100\sqrt{r}$ & $\lambda_1=1$ & $\lambda_1=3\sqrt{r}$ & $\lambda_1=11\sqrt{r}$ & $\lambda_1=50\sqrt{r}$ & $\lambda_1=100\sqrt{r}$ \\ 
  \hline
$r=0.01$ & 0.28 & 0.33 & 0.87 & 0.00 & 0.00 & 0.07 & 0.08 & 0.84 & 0.00 & 0.00 \\ 
  $r=0.1$ & 1.00 & 1.00 & 1.00 & 0.00 & 0.00 & 0.95 & 0.95 & 0.96 & 0.01 & 0.00 \\ 
  $r=0.3$ & 1.00 & 1.00 & 0.99 & 0.00 & 0.00 & 1.00 & 1.00 & 1.00 & 0.04 & 0.00 \\ 
  $r=0.5$ & 0.97 & 0.97 & 0.92 & 0.00 & 0.00 & 1.00 & 1.00 & 1.00 & 0.10 & 0.01 \\ 
   \hline
\end{tabular}

	 }\\
	 	\subfloat[][$Z\sim$ Ellip Norm]{
\begin{tabular}{rrrrrr|rrrrr}
  \hline
 & \multicolumn{5}{c}{Percentile}&\multicolumn{5}{c}{Normal}\\ & $\lambda_1=1$ & $\lambda_1=3\sqrt{r}$ & $\lambda_1=11\sqrt{r}$ & $\lambda_1=50\sqrt{r}$ & $\lambda_1=100\sqrt{r}$ & $\lambda_1=1$ & $\lambda_1=3\sqrt{r}$ & $\lambda_1=11\sqrt{r}$ & $\lambda_1=50\sqrt{r}$ & $\lambda_1=100\sqrt{r}$ \\ 
  \hline
$r=0.01$ & 0.00 & 0.00 & 0.79 & 0.00 & 0.00 & 0.00 & 0.00 & 0.92 & 0.00 & 0.00 \\ 
  $r=0.1$ & 1.00 & 1.00 & 0.96 & 0.00 & 0.00 & 0.27 & 0.35 & 0.07 & 0.00 & 0.00 \\ 
  $r=0.3$ & 1.00 & 1.00 & 0.99 & 0.00 & 0.00 & 1.00 & 1.00 & 0.39 & 0.00 & 0.00 \\ 
  $r=0.5$ & 1.00 & 1.00 & 1.00 & 0.00 & 0.00 & 1.00 & 1.00 & 0.78 & 0.00 & 0.00 \\ 
   \hline
\end{tabular}

	 	 }
		 \\
		 	 	\subfloat[][$Z\sim$ Ellip Uniform]{
		 \begin{tabular}{rrrrrr|rrrrr}
		   \hline
		  & \multicolumn{5}{c}{Percentile}&\multicolumn{5}{c}{Normal}\\ & $\lambda_1=1$ & $\lambda_1=3\sqrt{r}$ & $\lambda_1=11\sqrt{r}$ & $\lambda_1=50\sqrt{r}$ & $\lambda_1=100\sqrt{r}$ & $\lambda_1=1$ & $\lambda_1=3\sqrt{r}$ & $\lambda_1=11\sqrt{r}$ & $\lambda_1=50\sqrt{r}$ & $\lambda_1=100\sqrt{r}$ \\ 
		   \hline
$r=0.01$ & 0.00 & 0.00 & 0.83 & 0.00 & 0.00 & 0.00 & 0.00 & 0.94 & 0.00 & 0.00 \\ 
  $r=0.1$ & 0.00 & 0.00 & 0.00 & 0.00 & 0.00 & 0.00 & 0.10 & 0.00 & 0.00 & 0.00 \\ 
  $r=0.3$ & 0.07 & 0.03 & 0.00 & 0.00 & 0.00 & 0.00 & 0.90 & 0.00 & 0.00 & 0.00 \\ 
  $r=0.5$ & 1.00 & 1.00 & 0.00 & 0.00 & 0.00 & 0.00 & 0.98 & 0.00 & 0.00 & 0.00 \\ 
   \hline
\end{tabular}

		 	 	 }

\caption{\textbf{Gap Statistic: Median value of 95\% CI Coverage of null value $0$ for $n=1000$.} This tables give the percentage of CI intervals (out of 1,000 simulations) that cover the value $\lambda_1-\lambda_2=0$ for different values of the true $\lambda_1$.}
\label{tab:bootCINullGap}
\end{table}
\end{landscape}
\clearpage

\begin{center}
\textbf{\textsc{Bibliography}}
\end{center}

\bibliographystyle{plain}
\bibliography{research,Bioresearch,addBoot}

\end{document}